\documentclass[a4paper,UKenglish,cleveref, autoref, thm-restate]{lipics-v2021}
\pdfoutput=1
\usepackage{kbordermatrix}
\usepackage{todonotes}
\usepackage{hyperref}

\usepackage{enumerate}
\usepackage{amsthm}
\usepackage{verbatim}
\usepackage{amsmath}
\usepackage{amssymb}
\usepackage{mathtools}
\usepackage[capitalise]{cleveref}
\usepackage{euler}

\usepackage{macros}
\allowdisplaybreaks


\hideLIPIcs  


\bibliographystyle{plainurl}

\title{Tight Bounds for Connectivity Problems Parameterized by Cutwidth} 

\author{Narek Bojikian}{Humboldt-Universität zu Berlin, Germany}{bojikian@hu-berlin.de}{https://orcid.org/0000-0003-1072-4873}{}

\author{Vera Chekan}{Humboldt-Universität zu Berlin, Germany}{vera.chekan@informatik.hu-berlin.de}{https://orcid.org/0000-0002-6165-1566}{Supported by the DFG Research Training Group 2434 “Facets of Complexity”.}

\author{Falko Hegerfeld}{Humboldt-Universität zu Berlin, Germany}{hegerfeld@informatik.hu-berlin.de}{https://orcid.org/0000-0003-2125-5048}{}

\author{Stefan Kratsch}{Humboldt-Universität zu Berlin, Germany}{kratsch@informatik.hu-berlin.de}{https://orcid.org/0000-0002-0193-7239}{}

\authorrunning{N.\ Bojikian, V.\ Chekan, F.\ Hegerfeld, and S.\ Kratsch} 


\Copyright{Narek Bojikian, Vera Chekan, Falko Hegerfeld, and Stefan Kratsch} 

\ccsdesc[100]{Theory of computation~Parameterized complexity and exact algorithms} 

\keywords{Parameterized complexity, connectivity problems, cutwidth}

\ArticleNo{0}




\nolinenumbers 

\begin{document}

\maketitle

\begin{abstract}
In this work we start the investigation of tight complexity bounds for connectivity problems parameterized by cutwidth assuming the Strong Exponential-Time Hypothesis (SETH).
Van Geffen et al.~\cite{GeffenJKM20} posed this question for \Poct\ and \Pfvs. We answer it for these two and four further problems, namely \Pcvc, \Pcds, \Pst, and \Pcoct. For the latter two problems it sufficed to prove lower bounds that match the running time inherited from parameterization by treewidth; for the others we provide faster algorithms than relative to treewidth and prove matching lower bounds.
For upper bounds we first extend the idea of Groenland et al.~[STACS~2022] to solve what we call \emph{coloring-like problems}. Such problems are defined by a symmetric matrix $M$ over $\FF_2$ indexed by a set of colors. The goal is to count the number (modulo some prime $p$) of colorings of a graph such that $M$ has a $1$-entry if indexed by the colors of the end-points of any edge. We show that this problem can be solved faster if $M$ has small rank over $\FF_p$. We apply this result to get our upper bounds for \pcvc\ and \pcds.
The upper bounds for \poct\ and \pfvs\ use a subdivision trick to get below the bounds that matrix rank would yield.
\end{abstract}

\newpage

\section{Introduction}

Parameterized complexity studies the complexity of (typically $\mathsf{NP}$-hard) computational problems
in a finer way where aside from the input size $n$, other values, called parameters and frequently denoted by $k$, are considered in the running time. This might be for example the solution size or some structural properties of input. The class of fixed-parameter tractable problems ($\mathsf{FPT}$) contains problems that admit an
algorithm with running time $\mathcal{O}\big(f(k)n^c\big)$ for a computable
function $f$ and a constant $c$.
Since $f$ is only required to be computable, it might grow very rapidly, e.g.,
$f(k) = 100^k$ or $k^k$, or even the power tower function are allowed by this
definition. 
Therefore, from a practical point of view, a problem being $\mathsf{FPT}$ does not say much about the solvability of the problem in reasonable running time.

This inspired the search for better functions $f$. Since the problems we deal
with are already $\mathsf{NP}$-hard, stronger hardness conjectures than $\mathsf{P} \neq \mathsf{NP}$ have been assumed to show that
some function $f$ is essentially optimal for a problem and a specific parameter.
For example, assuming the Exponential Time Hypothesis (ETH)
\cite{ImpagliazzoP01}, it has been shown that many problems do not admit
algorithms with single-exponential running time \cite{CyganNPPRW11,
LokshtanovMS182}, i.e., a running time of the form $\ostar\big( c^k\big)$ for some
constant $c$. Since some problems in $\mathsf{FPT}$ are known to admit an algorithm with
single-exponential running time, an even stronger conjecture known as the Strong
Exponential Time Hypothesis (SETH) was stated and used to prove that some base $c$ is optimal for such a problem under this conjecture. This conjecture claims, roughly
speaking, that \textsc{SAT} cannot be solved much better than brute-forcing. Such
lower bounds were easier to prove for structural parameters on graphs, where the
value of the parameter reflects how well-structured or interconnected a graph
is. Upper bounds for such parameters usually rely on dynamic programming
employing some tricks and advanced techniques like fast subset convolution
\cite{BjorklundHKK07}, rank-based methods \cite{BodlaenderCKN15,CyganKN18,CurticapeanLN18}, the isolation lemma~\cite{MulmuleyVV87} etc.
For many classical problems parameterized by \twdth{}, it has been shown that
the optimal running time under SETH is single-exponential and the base of the
exponent is known, e.g.,~\cite{LokshtanovMS18}. There is a special class of problems related to this question.
These are the \emph{connectivity problems}. Even though the class is not well-defined, all
problems in this class impose connectivity constraints on the structure of a
solution. In these problems we usually look for  a set of vertices (or edges)
that it is either connected itself and has some
further properties (e.g., \Pcvc{} or \Pcds{}), or such that the input graph satisfies a certain
disconnectivity requirement after its removal (e.g., \Pfvs{}). For a
long time, the existence of single-exponential algorithms for connectivity
problems parameterized by \twdth{} or other structural parameters remained open.
But then a breakthrough work of Cygan et al.\ introduced a new view on
such problems called \emph{cut \& count}~\cite{CyganNPPRW11}. 
This technique is randomized and it reduces connectivity problems to counting certain bipartitioned solutions modulo two.

Tight bounds for problems parameterized by treewidth and pathwidth have been widely studied (e.g., \cite{CyganNPPRW11,CyganKN18,CurticapeanLN18,LokshtanovMS18}). An optimal dynamic programming algorithm traverses a tree or a
path decomposition in a bottom-up manner and utilizes the fact that every bag of the decomposition is a small vertex
separator. 
Therefore, it is also natural to study parameters that are based on edge separators.
Imagine that the vertices of the graph are put on the line in some fixed order and the edges are drawn as $x$-monotone
curves. The \ctwdth{} of this arrangement is then the maximal number of edges crossing any vertical line. The \ctwdth{}
of the graph is then the smallest \ctwdth{} of such an arrangement. Note that for any vertical line, the set of edges
crossing it separates vertices lying on different sides of this line from each other. Therefore, \ctwdth{} is an analogue of pathwidth based on edge separators. In fact, pathwidth can be defined in an analogous way to \ctwdth{}, that is,
we count the number of vertices on one side of the cut that have neighbors on the other side of the cut
\cite{Kinnersley92}. This also shows that pathwidth is upper-bounded by \ctwdth{}.

Since \ctwdth{} is an upper bound for pathwidth, an algorithm running in time $\ostar\big(f(\pw)\big)$ also runs in time $\ostar\big(f(\ctw)\big)$. In particular, single-exponential solvability transfers from pathwidth to \ctwdth{}. However, it is possible that the optimal dependence on cutwidth is smaller than for pathwidth.
Tight bounds for problems parameterized by \ctwdth{} have already been determined for example for \Pcol\ \cite{JansenN18}, \Pqcol{} \cite{GroenlandMNS22},
\Pis\ and \Pds\ \cite{GeffenJKM20}, and for various factor problems \cite{MarxSS21}. 
However, for connectivity problems, this question remained open. Van Geffen et al.~asked for the complexity of \Poct, \Pfvs, and \Phc\ ~\cite{GeffenJKM20}. We start this investigation in our work. 
Generally, tight bounds have also been studied for other decompositional parameters as well: treedepth (e.g., \cite{HegerfeldK22}),
clique-width (e.g., \cite{Lampis20}) etc. 

\subparagraph*{Our contribution}

Van Geffen et al.\ \cite{GeffenJKM20} asked for the exact complexity of \Poct\ (\poct) and \Pfvs\ (\pfvs)
parameterized by \ctwdth{} under SETH. In this work, we answer this question for these two problems in addition to four
other connectivity problems.
For two of the problems, we show the optimal base of exponent is the same for the parameterizations by treewidth and cutwidth. For the remaining problems, the base is smaller for cutwidth.
The right column of \cref{table:results} contains the tight bounds for these problems parameterized by cutwidth and summarizes the results of our work. 
The middle column contains the analogous results for treewidth for comparison \cite{CyganNPPRW11,LokshtanovMS18}. 

\begin{table}
\begin{center}
    \begin{tabular}{ |c|c|c|c| }
     \hline
     \Pcvc{} (CVC) & $3^{\tw}$ & $2^{\ctw}$\\
     \Pcds{} (CDS) & $4^{\tw}$ &$3^{\ctw}$\\
     \Poct{} (OCT) & $3^{\tw}$ &$2^{\ctw}$\\
     \Pfvs{} (FVS) & $3^{\tw}$ & $2^{\ctw}$\\
     \Pst{} (ST)       & $3^{\tw}$ & $3^{\ctw}$\\
     \Pcoct{} (COCT) & $4^{\tw}$ & $4^{\ctw}$\\
     \hline
    \end{tabular}
    \end{center}
    \caption{Tight bounds for parameterizations by treewidth and cutwidth.\label{table:results}}
\end{table}

\subparagraph{Organization}
We begin this work with a brief summary of the used notation. 
In \cref{app:general-approach} we define coloring-like problems and provide a general framework to solve these problems efficiently. 
In \cref{subsec:cvc-upper-bound} and \cref{app:cds} we apply this framework to solve \Pcvc{} and \Pcds{}, respectively.
After that, in \cref{app:oct} we present an algorithm for \Poct{} based on edge-subdivision.
In \cref{app:fvs-ub} we also apply edge-subdivision to accelerate the \cnc{} approach for the \Pfvs{} problem.Next in \cref{app:lb} we provide our lower-bound constructions for \Pst{} (\cref{append:st}), 
\Pcds{} (\cref{app:cds-lb}), \Pcvc{} (\cref{app:cvc-lb}), \Poct{} (\cref{app:oct-lb}), \Pfvs{} (\cref{app:fvs-lb}), and \Pcoct{} (\cref{app:coct}).
We conclude in \cref{sec:conclusion} by providing possible directions of further research in this area. 

\section{Preliminaries}\label{sec:notation}

For $n \in \NNN$, let $[n] = \{1, 2, \dots, n\}$ and $[n]_0 = [n] \cup \{0\}$. For a function $f\colon X \to Y$ and a set $S
\subseteq X$, with $f_{|_{S}}$ we denote the function $f_{|_{S}}\colon S \to Y$ such that $f_{|_{S}}(x) = f(x)$ for all $x \in
S$. If $Y \not\subseteq \mathbb{Z}$ we define $f(S) = \{y \in Y \mid \exists x \in S\colon f(x) = y\}$. With $f^{-1}(y)$ we denote the set
$\{x \in X \mid f(x)= y\}$. We abuse the notation for injective functions and consider the singleton~$\{x\}$ as an element $x \in X$.
For a function $f\colon X \to Y$ with $Y \subseteq \ZZ$ and a subset $S \subseteq X$, with $f(S)$ we denote the value $f(S) = \sum_{s \in S}
f(s)$. Further, let $f\colon X \times \ZZ^2 \to Y$ be a function, sometimes we call it a \emph{table}. We call $X$ the
\emph{domain} of $f$ and denote it with $\dom(f)$. If $0 \in Y$, the \emph{support} of $f$ is the set $\supp(f) = \bigl\{x \in X \bigm\vert \exists \order, \weight \in \ZZ\colon f(x, \order, \weight) \neq 0\bigr\}$.

Apart from $[n]$, we will use several other interpretations of square brackets. First, Iverson's bracket notation: for a
predicate $p$, the value $[p]$ is equal to $1$ if $p$ is true and $0$ otherwise. 
Sometimes, for space reasons, we will also write $1_p$ instead of $[p]$.
Second, for a function $f\colon X \to Y$,
element $x^*$, and element $y \in Y$, with $f[x^* \mapsto y]$ we denote the function  $f[x^* \mapsto y]\colon X \cup \{x^*\}
\to Y$, where $f[x^* \mapsto y](x^*) = y$ and $f[x^* \mapsto y](x) = f(x)$ for $x\neq x^*$.
Note that both $x^* \in X$ and $x^* \notin X$ are allowed by this definition.
Finally, for two elements $x, y$, with~$[x \mapsto y]$ we denote the unique function in $\{y\}^{\{x\}}$.

In this work, we will only consider simple loopless undirected graphs.
A \emph{linear arrangement}~$\ell$ of a graph $G = (V, E)$ is a bijection $\ell\colon [n] \to V$. For $i \in [n]$, let $v_i = \ell(i)$ and
with~$E_i$ we refer to the set of edges $E_i = \bigl\{\{v_j, v_k\} \in E \bigm\vert j \leq i, k > i\bigm\}$ called the \emph{$i$th cut}
(also called the \emph{cut at} $v_i$ or \emph{between} $v_i$ and $v_{i+1}$). If an edge $e$ belongs to $E_i$, we also say that it
\emph{crosses} the $i$th cut. We say that two edges $e \neq e'$ \emph{overlap} on $\ell$, if there exists $i \in [n]$ such that $e, e' \in E_i$.
The \emph{cutwidth}
$\ctw(\ell)$ of $\ell$ is defined as $\ctw(\ell) = \max_{i \in [n]} |E_i|$. The \emph{cutwidth} $\ctw(G)$ of a
graph $G$ is then the smallest cutwidth over all linear arrangements of $G$. Let $\ell$ now denote a fixed linear arrangement of $G$. For $i \in [n]$, we use the following notation:
	\begin{itemize}
		\item $V_i = \{v_1, v_2, \dots, v_i\}$, $G_i = G[V_i]$,
		\item $X_i = \bigl\{v_j \bigm\vert j \leq i, \exists k > i \colon \{v_j, v_k\} \in E_i\bigr\} \cup \{v_i\}$,
		\item $Y_i = \bigl\{v_j \bigm\vert j > i, \exists k \leq i\colon \{v_j, v_k\} \in E_i\bigr\}$,
		\item $H_i = (X_i \cup Y_i, E_i)$ the \emph{cut-graph at} $v_i$ (also called the \emph{$i$th cut-graph}; note that it is bipartite with the \emph{left side} $X_i$ and the \emph{right side} $Y_i$), 
		\item and if $i \geq 2$, then $Z_i = X_{i-1} \cup \{v_i\}$.
	\end{itemize}
	Observe that every vertex $v \neq v_i \in X_i$ has an incident edge in the $i$th cut whose other end-vertex belongs
	to $Y_i$. Therefore, the size of $X_i$ is at most $\ctw+1$. Further, for $i \in [n - 1]$, we have $X_{i+1} \subseteq
	X_i \cup \{v_{i+1}\}$ (for any vertex $v \neq v_{i+1} \in X_{i+1}$, an incident edge crossing the $(i+1)$th cut,
	also crosses the $i$th cut). Hence, $X_{i+1} \cap X_i = X_{i+1} \setminus \{v_{i+1}\}$, and $X_{i+1} \setminus X_i =
	\{v_{i+1}\}$.
To prove the tightness of our bounds, we will sometimes rely on results for problems parameterized by treewidth or pathwidth.

	A path decomposition of a graph $G$ is a sequence $\mathcal{B} = B_1, \dots, B_r$ of the so-called \emph{bags} such that:
    \begin{itemize}
        \item It holds that $B_1 \cup \dots \cup B_r = V$.
        \item For each $\{u, v\}\in E$, there exists an index $i \in [r]$ such that $u, v \in B_i$.
        \item And for every $i < j \in [r]$, the property $v \in B_i \cap B_j$ implies $v \in B_k$ for every $i \leq k \leq j$.
    \end{itemize}
    The \emph{pathwidth} of $\mathcal{B}$ is defined as $\pw = \max_{i \in [r]}|B_i| - 1$.
    The pathwidth of $G$ is the smallest $\pw(\mathcal{B})$ over all path decompositions $\mathcal{B}$ of $G$.
    For $i \in [r]_0$, we denote with $V_i = \cup_{j\in\{1, \dots i\}}B_j$ the union of the first $i$ bags and with $G_i = G[V_i]$ the subgraph induced by these vertices, we also denote the set of edges of this subgraph with $E_i$. 
    The \emph{pathwidth} of $\mathcal{B}$ is defined as $\pw = \max_{i \in [r]}|B_i| - 1$.
    The pathwidth of a graph $G$ is the smallest $\pw(\mathcal{B})$ over all path decompositions $\mathcal{B}$ of $G$.	
	
    A path decomposition $\mathcal{B} = B_1, \dots, B_r$ of a graph $G$ is \emph{nice} if $B_1 = B_r = \emptyset$ holds, and for every $i \in [r-1]$, we have $|B_i \triangle B_{i+1}| = 1$ where $\triangle$ denotes the symmetric difference. 
	In this case, for $2 \leq i \leq r$, we call~$B_i$ an \emph{introduce} bag if $B_{i-1} \subseteq B_i$ holds and we call it a \emph{forget} bag otherwise.

We skip the definition of treewidth since we do not work with it explicitly. We will mainly use the following result:
\begin{lemma}[\cite{Kinnersley92}]\label{lem:ctw-ub}
	The cutwidth of a graph is an upper bound for its tree- and pathwidth.
\end{lemma}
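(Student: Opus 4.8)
The plan is to turn a cutwidth‑optimal linear arrangement of $G$ into a path decomposition of the same width; since every path decomposition is in particular a tree decomposition, this yields $\tw(G)\le\pw(G)\le\ctw(G)$ in one stroke. So fix a linear arrangement $\ell$ with $\ctw(\ell)=\ctw(G)$ and keep the notation of \cref{sec:notation}; in particular $v_i=\ell(i)$, and $E_i$ is the $i$th cut. For $j\in[n-1]$ let
\[
L_j = \bigl\{v_a \bigm\vert a\le j,\ \exists b>j\colon \{v_a,v_b\}\in E\bigr\}
\]
be the set of left end‑points of the edges crossing the $j$th cut (so that $X_j = L_j\cup\{v_j\}$), and set $L_0=\emptyset$. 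I would then show that $\mathcal B = B_1,\dots,B_n$ with $B_i := \{v_i\}\cup L_{i-1}$ is a path decomposition of $G$ of width at most $\ctw(\ell)$.

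Checking the three defining properties of a path decomposition is the bulk of the work. Vertex coverage is immediate from $v_i\in B_i$. For an edge $\{v_j,v_k\}$ with $j<k$, both end‑points lie in $B_k$: indeed $v_k\in B_k$ by definition, and $v_j\in L_{k-1}$ since $j\le k-1$ and $v_j$ has the neighbour $v_k$ with $k>k-1$. For the interval property, suppose $v_j\in B_i\cap B_{i'}$ with $i<i'$. Since $B_i\subseteq V_i$ we get $j\le i<i'$, so $v_j\ne v_{i'}$, and together with $v_j\in B_{i'}$ this forces $v_j\in L_{i'-1}$, i.e.\ $v_j$ has a neighbour $v_m$ with $m\ge i'$. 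Now for any $q$ with $i\le q\le i'$ we have $q\ge i\ge j$; either $q=j$, which forces $j=i$ and then $v_j=v_q\in B_q$, or $q>j$, and then $j\le q-1$ together with $m\ge i'\ge q>q-1$ gives $v_j\in L_{q-1}\subseteq B_q$. Hence $v_j$ lies in every bag between $B_i$ and $B_{i'}$.

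Finally, for the width: since $L_{i-1}\subseteq V_{i-1}$ and $v_i\notin V_{i-1}$ we have $|B_i| = 1 + |L_{i-1}|$, and choosing for each vertex of $L_{i-1}$ one incident edge crossing the $(i-1)$th cut defines an injection $L_{i-1}\hookrightarrow E_{i-1}$ (two distinct vertices among $v_1,\dots,v_{i-1}$ cannot both be end‑points of the same crossing edge, since one end‑point of such an edge lies in $\{v_i,\dots,v_n\}$). Thus $|B_i|\le 1+|E_{i-1}|\le \ctw(\ell)+1$, so $\pw(\mathcal B)\le\ctw(\ell)=\ctw(G)$, which gives $\pw(G)\le\ctw(G)$, and $\tw(G)\le\pw(G)$ finishes the proof. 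The one point that needs care — and essentially the only obstacle — is to build $B_i$ from the cut immediately to the \emph{left} of $v_i$ together with $v_i$ itself, rather than from a set such as $Z_i=X_{i-1}\cup\{v_i\}$: the latter keeps $v_{i-1}$ in the bag even when $v_{i-1}$ has no edge crossing the $(i-1)$th cut and could inflate the width to $\ctw(G)+1$, whereas the choice above stays tight.
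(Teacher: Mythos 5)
Your proof is correct and matches the construction the paper itself uses later in the proof of \cref{lem:append-linear-pathdecomp} (the bags $B_i = \{v_i\} \cup \{v_j \mid j \le i-1,\ \exists k \ge i: \{v_j,v_k\} \in E\}$ are identical to your $\{v_i\}\cup L_{i-1}$, with the same injection into $E_{i-1}$ for the width bound). The paper cites \cref{lem:ctw-ub} from Kinnersley without a standalone proof, but you have in effect reconstructed the same argument.
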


All lower bounds in this work assume the Strong Exponential Time Hypothesis (SETH). We use the following equivalent formulation of SETH:

\begin{conjecture}[\cite{ImpagliazzoP01,ImpagliazzoPZ01}]
	For any positive value $\delta$ there exists an integer $d$ such that $d$-\Psat\ cannot be solved in time
	$\ostar\bigl((2-\delta)^n\bigr)$ where $n$ denotes the number of variables.
\end{conjecture}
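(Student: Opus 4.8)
The statement displayed above is the Strong Exponential Time Hypothesis, a \emph{conjecture} adopted here (and throughout fine-grained complexity) as an unproven working assumption; no proof is expected or attempted, since a proof would in particular yield $\mathsf{P}\neq\mathsf{NP}$ together with strong circuit lower bounds. Accordingly, the only thing to justify is the parenthetical claim that this is \emph{an equivalent formulation} of the hypothesis originally stated by Impagliazzo and Paturi~\cite{ImpagliazzoP01}, and that equivalence is what I sketch below.

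The plan is to unwind both formulations against the standard family of constants $s_k$. For $k \ge 3$, let $s_k$ be the infimum of all real numbers $c \ge 0$ such that $k$-\Psat\ is solvable in time $\ostar(2^{cn})$, where $n$ denotes the number of variables. The sequence $(s_k)_{k\ge 3}$ is nondecreasing --- since $k$-\Psat\ trivially generalizes $(k{-}1)$-\Psat\ by padding --- and bounded above by $1$ by exhaustive search, so $s_\infty := \lim_{k\to\infty} s_k = \sup_k s_k$ exists, and the hypothesis of~\cite{ImpagliazzoP01} reads $s_\infty = 1$. First I would verify the elementary equivalence of $s_\infty = 1$ with the assertion that for every $\varepsilon \in (0,1)$ there is a $k \ge 3$ for which $k$-\Psat\ is not solvable in time $\ostar(2^{(1-\varepsilon)n})$; this uses only that the set of $c$'s for which $k$-\Psat\ admits an $\ostar(2^{cn})$ algorithm is upward closed, so that ``$1-\varepsilon$ lies strictly below the infimum $s_k$'' and ``$1-\varepsilon$ is not such a $c$'' differ only according to whether that infimum is attained, which becomes immaterial once one quantifies over all $\varepsilon$. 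Then I would apply the order isomorphism $\varepsilon \mapsto \delta := 2 - 2^{1-\varepsilon} = 2(1 - 2^{-\varepsilon})$, which maps $(0,1)$ monotonically onto $(0,1)$ and satisfies $2^{(1-\varepsilon)n} = (2-\delta)^n$; this rewrites the previous assertion verbatim as: for every $\delta \in (0,1)$ there is a $d$ such that $d$-\Psat\ is not solvable in time $\ostar((2-\delta)^n)$. Finally, to obtain the statement exactly as worded (for all positive $\delta$), I would note that for $\delta \ge 1$ the budget $\ostar((2-\delta)^n)$ is at most polynomial and that any $d$ witnessing the assertion for some $\delta' \in (0,1)$ also witnesses it for every larger $\delta$ (a smaller time budget is only harder to meet); hence restricting to $\delta \in (0,1)$ loses nothing.

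The only delicate points --- and the closest thing here to a genuine obstacle --- are bookkeeping ones: one must fix a single machine model and a single error regime (deterministic versus bounded-error randomized) and use it consistently on both sides, and one must keep in mind that $\ostar(\cdot)$ suppresses factors polynomial in the input length (polynomial in $n$ and the number of clauses), which is precisely why the equivalence is phrased through the strict inequalities $\varepsilon > 0$ and $\delta > 0$ rather than through the critical values $s_k$ themselves --- the defining infimum need not be attained. None of this is deep; it is the standard unfolding of the definitions in~\cite{ImpagliazzoP01}, and I anticipate no real difficulty beyond this routine care.
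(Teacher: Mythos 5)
This statement is a conjecture (SETH), which the paper adopts as an unproven hypothesis and for which it supplies no proof; you correctly recognize this, so there is nothing to compare at the level of proof technique. Your additional sketch of why this formulation is equivalent to the original one of Impagliazzo and Paturi --- via the constants $s_k$, their limit $s_\infty$, and the reparameterization $2^{(1-\varepsilon)n} = (2-\delta)^n$ --- is the standard argument and is sound, including the care about the infimum possibly not being attained and about the $\ostar$ convention.
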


As mentioned in the introduction, there is a class of the so-called connectivity problems. In their seminal paper Cygan
et al.\ provided a breakthrough approach called \emph{\cnc{}} to solve connectivity problems in single-exponential time
based on the so-called consistent cuts~\cite{CyganNPPRW11}. Let us sketch the idea on an example of \Pcvc{}. 
A \emph{vertex cover} of a graph is a set of vertices such that every edge of the graph is incident to some vertex in this set. So the \Pcvc{} is defined as follows.

\begin{quote}
	\textbf{Input}: A graph $G = (V, E)$ and an integer $k$.
	
	\textbf{Question}: Is there a vertex cover $S \subseteq V$ of cardinality at most $k$ such that $G[S]$ is connected.
\end{quote}

Let $v \in V$ be a fixed vertex contained in some fixed vertex cover
$S$. A \emph{consistent cut} of $S$ is a partition $L \cup R = S$ such that $v \in L$ and there is no edge between
$L$ and $R$ in $G$. Let $\cc(S)$ denote the number of connected components of $G[S]$. By definition, every component is
completely contained either in $L$ or in $R$ and the component containing $v$ is in $L$. Therefore, the number of
consistent cuts of $S$ is $2^{\cc(S) - 1}$. Crucially, this number is odd if and only if $G[S]$ is connected. So if we
could assume that the solution is unique, then counting the number of consistent cuts modulo 2 would suffice to solve
the problem. However, a graph might contain several solutions so that the corresponding cuts cancel out. To overcome
this issue Cygan et al.\ assign weights to vertices and employ the Isolation Lemma of Mulmuley et al.\ \cite{MulmuleyVV87}
to ensure that with high probability the minimum-weight solution is unique. The following theorem summarizes this result
in a suitable for us form applied to the parameterization by cutwidth:
\begin{theorem}[\cite{CyganNPPRW11}]\label{thm:cut-and-count-short} Let $G = (V, E)$ be a graph, $\ell$ a linear
	arrangement of $G$ of cutwidth $\ctw$, $v \in V$ a fixed vertex, $\omega\colon V \to \bigl[2|V|\bigr]$ a weight function,
	and $\order \in [n]_0, \weight \in \bigl[2n|V|\bigr]_0$ integers. With $\CCC^\order_\weight$ (resp.\ $\DDD^\order_\weight$) we
	denote the family of pairs $\bigl(S, (L, R)\bigr)$ such that $S$ is a vertex cover (resp.\ dominating set) of $G$, $|S|
	= \order$, $\omega(S) = \weight$, $v \in L$, and $(L, R)$ is a consistent cut of $S$.
	If there exists an algorithm $\AAA$ that given the above input computes the size of $\CCC^\order_\weight$ (resp.\ $\DDD^\order_\weight$) modulo 2 in time $\OO^*\bigl(\beta(\ctw)\bigr)$ for some computable function $\beta$, then
	there exists a randomized algorithm that given a graph $G$ and its linear arrangement of cutwidth $\ctw$ solves the
	\pcvc{} (resp.\ \pcds{}) problem in time $\OO^*\bigl(\beta(\ctw)\bigr)$. The algorithm cannot give false positives and may
	give false negatives with probability at most $1/2$.	
\end{theorem}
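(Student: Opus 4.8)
The plan is to wrap the standard \cnc{} argument of Cygan et al.~\cite{CyganNPPRW11} around the hypothesised counting algorithm $\AAA$, combined with the Isolation Lemma~\cite{MulmuleyVV87}. I spell out the reduction for \pcvc{}; the one for \pcds{} is word‑for‑word identical after replacing ``vertex cover'' by ``dominating set'', since the domination requirement is merely a membership condition on $S$ and does not interact with the counting of consistent cuts. Observe that any pair $\bigl(S, (L, R)\bigr)$ counted by $\CCC^\order_\weight$ satisfies $v \in L \subseteq S$, so $\CCC^\order_\weight$ only ever detects solutions that contain the fixed vertex $v$. After discarding the degenerate instances (an edgeless graph for \pcvc{}, an empty graph for \pcds{}), every connected vertex cover, resp.\ connected dominating set, is nonempty and hence contains some vertex; therefore the reduction may simply try every $v \in V$ as the root — a polynomial overhead — running the procedure below for each and answering ``yes'' iff some root succeeds.

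Fix a root $v \in V$ and let $\mathcal{F}_v$ be the family of connected vertex covers of $G$ of size at most $k$ that contain $v$. Sample $\omega\colon V \to \bigl[2|V|\bigr]$ uniformly at random, reusing the same $\omega$ for all roots. By the Isolation Lemma, if $\mathcal{F}_v \neq \emptyset$ then with probability at least $1/2$ there is a \emph{unique} minimum‑weight member $S^*_v$ of $\mathcal{F}_v$. For every $\order \in [n]_0$ and every $\weight \in \bigl[2n|V|\bigr]_0$ — a range that contains $\omega(S)$ for all $S \subseteq V$ — invoke $\AAA$ on $\bigl(G, \ell, v, \omega, \order, \weight\bigr)$ to obtain $|\CCC^\order_\weight| \bmod 2$, and declare that the root $v$ \emph{succeeds} if this value is odd for some $\order \le k$ and some $\weight$.

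Correctness hinges on the identity $|\CCC^\order_\weight| = \sum_{S} 2^{\cc(S) - 1}$, where $S$ ranges over the vertex covers of $G$ with $v \in S$, $|S| = \order$, and $\omega(S) = \weight$: for such an $S$ the consistent cuts $(L, R)$ with $v \in L$ are exactly the assignments of the connected components of $G[S]$ to the two sides with $v$'s component forced into $L$, of which there are $2^{\cc(S) - 1}$ (and for $S \not\ni v$ there are none, consistently with the requirement $v \in L \subseteq S$), as recalled in the sketch preceding the theorem. Since $2^{\cc(S) - 1}$ is odd precisely when $G[S]$ is connected, $|\CCC^\order_\weight|$ equals, modulo $2$, the number of connected vertex covers $S$ with $v \in S$, $|S| = \order$, $\omega(S) = \weight$. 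Hence, if $\AAA$ ever returns an odd value for some $\order \le k$, the graph genuinely has a connected vertex cover of size $\le k$ — so the algorithm never produces a false positive, regardless of $\omega$. Conversely, assume $G$ has a connected vertex cover of size at most $k$ and let $v$ be one of its vertices, so $\mathcal{F}_v \neq \emptyset$; with probability at least $1/2$ the random $\omega$ isolates $S^*_v$, and then with $\order := |S^*_v| \le k$ and $\weight := \omega(S^*_v)$ the sum above consists of the single odd term $2^0$ contributed by $S^*_v$ together with even terms contributed by the other vertex covers of that size and weight through $v$, all of which must be disconnected (otherwise they would be further minimum‑weight members of $\mathcal{F}_v$). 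Thus $|\CCC^\order_\weight|$ is odd and root $v$ succeeds, so a false negative occurs with probability at most $1/2$. Finally, the reduction makes $\mathrm{poly}(n)$ calls to $\AAA$, each on the given arrangement $\ell$ of cutwidth $\ctw$ and each costing $\OO^*\bigl(\beta(\ctw)\bigr)$, plus polynomial overhead, for total running time $\OO^*\bigl(\beta(\ctw)\bigr)$.

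I do not anticipate a genuine obstacle: the statement merely repackages the \cnc{} machinery for the cutwidth parameterisation, and once $\AAA$ is available the reduction is mechanical. The only delicate points are (i) recognising that $\CCC^\order_\weight$ counts \emph{all} vertex covers $S$ weighted by $2^{\cc(S) - 1}$, so that spurious (disconnected) solutions are annihilated by parity rather than corrupting the answer; (ii) choosing the universe $\bigl[2|V|\bigr]$ for $\omega$ so that the Isolation Lemma yields failure probability at most $1/2$, and checking that $\bigl[2n|V|\bigr]_0$ contains every attainable weight; and (iii) the iteration over the root $v$, which is forced because $\CCC^\order_\weight$ only ``sees'' a solution through a vertex placed into $L$.
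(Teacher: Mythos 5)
The paper does not prove \cref{thm:cut-and-count-short}; it imports it from Cygan et al.~\cite{CyganNPPRW11} as a black box. Your proposal reconstructs the standard cut-and-count argument, and it is correct: the identity
\[
\bigl|\CCC^\order_\weight\bigr| \;=\; \sum_{\substack{S \text{ vertex cover},\, v \in S\\ |S| = \order,\; \omega(S) = \weight}} 2^{\cc(S)-1}
\]
is exactly the ``consistent cuts = components with $v$'s side forced'' count sketched in the paper just before the theorem, the parity of $2^{\cc(S)-1}$ correctly filters for connectedness, and the Isolation Lemma with universe $V$ and weight range $\bigl[2|V|\bigr]$ gives failure probability at most~$1/2$. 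Your handling of the root is also sound: iterating over all $v \in V$ and reusing the same $\omega$ works because correctness only requires that \emph{one} pre-determined root $v^*$ (a vertex of the putative solution) produce a unique minimum-weight member of $\mathcal{F}_{v^*}$, which the Isolation Lemma guarantees for that fixed family. The observation that every other vertex cover of the same size and weight through $v^*$ must be disconnected — since a connected one would be a second minimum-weight member of $\mathcal{F}_{v^*}$ — is the key step and you state it cleanly. The transfer to \pcds{} is indeed mechanical since the domination condition is just a different filter on $S$, enforced by $\AAA$. In short, this is a correct self-contained derivation of the cited theorem along the same lines as the original source; the only cosmetic difference from Cygan et al.\ is iterating over all roots rather than gadgeting a forced vertex, which costs only a polynomial factor.
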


\section{Coloring-like Problems}\label{app:general-approach}
In this section we define a coloring-like problem, present an algorithm solving such a problem in time depending on the rank of a certain consistency matrix, and then apply this approach to solve \Pcvc{} (\cref{subsec:cvc-upper-bound}) and \Pcds{} (\cref{app:cds}).
The latter problem requires combining this idea with inclusion-exclusion to obtain an optimal algorithm.

\subsection{Problem Setting}

We consider the so-called \emph{coloring-like} problems. Such a problem $P$ is defined by a finite set $C = \bigl\{1, \dots, |C|\bigr\}$ of colors, a set of \emph{special} colors $Q \subseteq C$, and a symmetric so-called \emph{consistency
matrix} $M \in \{0, 1\}^{|C| \times |C|}$. Additionally, the problem might contain a prime number $p$. An instance of $P$ consists of a graph $G = (V, E)$ with a linear arrangement $\ell: \bigl[|V|\bigr] \to V$ of cutwidth $\ctw$, a list function $a: V \to 2^C$ of allowed colors, 
numbers $N, \totalorder, \totalweight \in \NNN$, and a weight function $\omega: V \to [N]$.
The goal is to count the number of proper list-colorings (with respect to $a$) of $G$ such that the end-vertices of every edge are colored with consistent colors (with respect to $M$), exactly $\totalorder$ vertices are colored with special colors (i.e., the coloring has \emph{order} $\totalorder$), and these vertices have total weight of $\totalweight$ in $\omega$ (i.e., the coloring has \emph{weight} $\totalweight$). So we are asked to compute the number
\begin{align*}
	\biggl|\Bigl\{ c \in C^{V(G)} \Bigm\vert & \bigl|c^{-1}(Q)\bigr| = \totalorder, \omega\bigl(c^{-1}(Q)\bigr) = \totalweight, \\
	 &\forall v \in V: c(v) \in a(v), \forall \{u, v\} \in E(G): M\bigl[c(u), c(v)\bigr] = 1 \Bigr\}\biggr|.
\end{align*}
If a prime $p$ is given, the goal is to compute the above value modulo $p$. 

Let us mention the following notions related to coloring-like problems. Let $H$ be a graph (possibly with loops) with vertex set $C$ and adjacency matrix $M$. 
Then a coloring $c$ satisfying 
\[
	\forall v \in V: c(v) \in a(v), \forall \{u, v\} \in E(G): M\bigl[c(u), c(v)\bigr] = 1
\]
is a \emph{list $H$-coloring} of $G$ and a \emph{list homomorphism} from $G$ to $H$. 
The complexity of the \textsc{$H$-Coloring} problem parameterized by cutwidth has been studied by Piecyk and Rzazewski~\cite{PiecykR21}.
They provide lower and upper bounds however they are not matching so their results are not tight yet.
Moreover, we emphasize that a coloring-like problem differs from \textsc{$H$-coloring} and refines it. 
The differences are the following:
\begin{enumerate}
	\item We are interested in counting (modulo $p$ if it is part of the input) the colorings.
	\item We restrict the cardinality as well as the weight of the set of vertices mapped to $Q$.
\end{enumerate}
So in essence, a coloring-like problem is a counting version of ``twice-weighted'' \textsc{$H$-coloring}.

Observe that for $q = |C|$, $Q = \emptyset$, and the consistency matrix $M$ defined by
\[
	M[i, j] = 
	\begin{cases}
		1 & \mbox{if $i \neq j$,} \\
		0 & \mbox{if $i = j$,}
	\end{cases}
\]
for all $i, j \in [q]$, the coloring-like problem $P$ is exactly the problem of counting the number of proper list $q$-colorings of a graph (modulo $p$).
Groenland et al.\ showed that for a prime number $p$, this matrix has (full) rank of $q$ over $\FF_p$ if $p$ does not divide $q-1$ and rank of $q-1$ otherwise \cite{GroenlandMNS22}. Based on this property, they developed an $\mathcal{O}^*\bigl((q-1)^{\ctw}\bigr)$ (if $p$ divides $q-1$) resp.\ $\mathcal{O}^*(q^{\ctw})$ (otherwise) algorithm to solve the problem and also proved matching lower bounds in both cases. We generalize their idea to our notion of a coloring-like problem (i.e., defined by an arbitrary consistency matrix) and apply it to \Pcvc\ and (with additional tricks) to \Pcds.

\subsection{Dynamic Programming}\label{subsection:DP-for-CVC-short}

A \emph{coloring} of a vertex set $X \subseteq V$ is an assignment $x: X \to C$, i.e., $x \in C^X$.
A coloring $x\in C^X$ is \emph{valid} if for every vertex $v \in V$, it holds that $x(v) \in a(v)$ and for every edge $\{v, w\}$ of $G[X]$, it holds that $M\bigl[x(v), x(w)\bigr] = 1$. 
For sets $X \subseteq Y \subseteq V$ and colorings $x \in C^X$, $y\in C^Y$, we say that $y$ \emph{extends} $x$ if $y_{|_X} = x$. 
For sets $X, Y \subseteq V$ and colorings $x \in C^X$ and $y \in C^Y$ we say that $x$ is \emph{compatible} with $y$ and write $x \sim y$ if $x_{|_{X \cap Y}} = y_{|_{X \cap Y}}$ holds and for every edge $\{u, v\} \in E$ with $u \in X, v \in Y \setminus X$ it holds that $M\bigl[x(u), y(v)\bigr] = 1$. We emphasize that compatibility is not symmetrical in general.
The definition of compatibility is indeed not completely intuitive but it is motivated by the technical details of the correctness proof of our algorithm provided later: we allow $x$ and $y$ to be compatible even if $x$ has conflicts along the edges of $G[X]$.
Also note that if $X$ and $Y$ are disjoint, then it suffices to check the edges between $X$ and $Y$ for compatibility. For $2 \leq i \leq n$ recall that $X_i \cap X_{i-1} = X_i \setminus \{v_i\}$ and $X_i \setminus X_{i-1} = \{v_i\}$ hold. Thereby, we immediately obtain the following observation:

\begin{observation}\label{obs:compatibility}
	Let $2 \leq i \leq n$, $z \in C^{X_{i-1}}$, and $c \in C^{X_i}$. Then it holds that $z \sim c$ if and only if $z_{|_{X_i \setminus \{v_i\}}} = c_{|_{X_i \setminus \{v_i\}}}$ and $z \sim \bigl[v_i \mapsto c(v_i)\bigr]$ hold. 
\end{observation}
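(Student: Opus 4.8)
The plan is to unfold the definition of compatibility for the two instances $z \sim c$ and $z \sim \bigl[v_i \mapsto c(v_i)\bigr]$ and to match the resulting conditions using the structural facts about the sets $X_j$ recorded in the preliminaries. First I would recall that $X_i \cap X_{i-1} = X_i \setminus \{v_i\}$ and $X_i \setminus X_{i-1} = \{v_i\}$, and moreover that $v_i \notin X_{i-1}$ since $X_{i-1} \subseteq V_{i-1} = \{v_1, \dots, v_{i-1}\}$.

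Applying the definition of $\sim$ with $X = X_{i-1}$ and $Y = X_i$, the statement $z \sim c$ is by definition the conjunction of two conditions: (i) the colorings agree on $X \cap Y = X_i \setminus \{v_i\}$, i.e., $z_{|_{X_i \setminus \{v_i\}}} = c_{|_{X_i \setminus \{v_i\}}}$; and (ii) for every edge $\{u, w\} \in E$ with $u \in X_{i-1}$ and $w \in Y \setminus X = X_i \setminus X_{i-1} = \{v_i\}$ we have $M\bigl[z(u), c(w)\bigr] = 1$. Since $v_i$ is the only vertex of $Y \setminus X$, condition (ii) simplifies to: $M\bigl[z(u), c(v_i)\bigr] = 1$ for every edge $\{u, v_i\} \in E$ with $u \in X_{i-1}$.

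Next I would unfold $z \sim \bigl[v_i \mapsto c(v_i)\bigr]$, this time with $X = X_{i-1}$ and $Y = \{v_i\}$. Because $v_i \notin X_{i-1}$, the intersection $X \cap Y$ is empty, so the agreement condition is vacuously true, while $Y \setminus X = \{v_i\}$, so the edge condition becomes exactly: $M\bigl[z(u), c(v_i)\bigr] = 1$ for every edge $\{u, v_i\} \in E$ with $u \in X_{i-1}$ — that is, precisely condition (ii) above. Combining the two unfoldings, $z \sim c$ holds iff (i) holds and $z \sim \bigl[v_i \mapsto c(v_i)\bigr]$ holds, which is the claim.

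The argument is a direct translation of the definitions, so there is no real obstacle; the only points requiring a moment of care are to use the correct — and deliberately asymmetric — orientation of compatibility (checking only edges from the left set $X_{i-1}$ to the single new right vertex $v_i$, not the other way), and to invoke the preliminary identities for the sets $X_i$ rather than re-deriving them here.
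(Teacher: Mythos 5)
Your argument is correct and is exactly the implicit one in the paper: the paper states this as an observation following immediately from the set identities $X_{i-1}\cap X_i = X_i\setminus\{v_i\}$ and $X_i\setminus X_{i-1}=\{v_i\}$ plus the definition of $\sim$, without writing out the unfolding you perform. Your care with the asymmetric orientation (edges from $X_{i-1}$ into the new right-side vertex only) and with $v_i\notin X_{i-1}$ making the agreement condition of $z\sim[v_i\mapsto c(v_i)]$ vacuous is precisely what makes the equivalence work.
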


The following technical lemma will be used several times later. A similar version of this lemma was already proven by Groenland et al., we only adapt it for our notion of compatibility.
\begin{lemma}\label{lem:compatibility-y}
	Let $2 \leq i \leq n$, $z \in C^{X_{i-1}}$, $c \in C^{X_i}$, and $y \in C^{Y_i}$. Then $c \sim y$ and $z \sim c$ hold if and only if $z_{|_{X_i \setminus \{v_i\}}} = c_{|_{X_i \setminus \{v_i\}}}$, $z \sim \bigl[v_i \mapsto c(v_i)\bigr]$, $\bigl[v_i \mapsto c(v_i)\bigr] \sim y$, and $z \sim y$ hold.	
\end{lemma}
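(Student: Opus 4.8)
The statement is an "unpacking" of the two compatibility relations $c \sim y$ and $z \sim c$ into edge-by-edge conditions, so the plan is to prove it by carefully translating definitions and then invoking \cref{obs:compatibility}. First I would record the relevant set identities: by the preliminaries, for $2 \le i \le n$ we have $X_i \setminus X_{i-1} = \{v_i\}$, hence $X_{i-1} \cap X_i = X_i \setminus \{v_i\}$; moreover $X_{i-1} \cap Y_i \subseteq X_i \cap Y_i$ need not hold, but what matters is that the edges from $X_{i-1}$ to $Y_i$ split according to whether their $X_{i-1}$-endpoint lies in $X_i \setminus \{v_i\}$ or not — and in fact every edge of $E_i$ with one endpoint in $X_{i-1}$ has that endpoint in $X_i$ (since such an edge crosses the $i$th cut). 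The key structural fact to extract is: an edge $\{u,w\}$ with $u \in X_{i-1}$, $w \in Y_i$ satisfies either $u \in X_i \setminus \{v_i\}$, or $u = v_i$ is impossible (as $v_i \notin X_{i-1}$), so $u \in X_i \setminus \{v_i\}$ always. This lets me relate "edges from $X_{i-1}$ to $Y_i$" with "edges from $X_i \setminus \{v_i\}$ to $Y_i$" plus the single edge-set incident to $v_i$.

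For the forward direction, assume $c \sim y$ and $z \sim c$. From \cref{obs:compatibility} applied to $z \sim c$ I immediately get $z_{|_{X_i \setminus \{v_i\}}} = c_{|_{X_i \setminus \{v_i\}}}$ and $z \sim [v_i \mapsto c(v_i)]$, which is two of the four desired conclusions. For $[v_i \mapsto c(v_i)] \sim y$: its domain is $\{v_i\}$, so I only need to check edges $\{v_i, w\}$ with $w \in Y_i \setminus \{v_i\} = Y_i$ (since $v_i \in X_i$, $v_i \notin Y_i$), and these are exactly the edges from $v_i$ into $Y_i$; but $c \sim y$ gives $M[c(v_i), y(w)] = 1$ for every such edge, so this holds. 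For $z \sim y$: the domain overlap is $X_{i-1} \cap Y_i$, on which $z$ and $y$ agree because $X_{i-1} \cap Y_i \subseteq X_i \setminus \{v_i\}$ (a vertex of $Y_i$ is $v_j$ with $j > i$, hence $\ne v_i$) and there $z = c = y$ using $z_{|} = c_{|}$ and $c \sim y$; and for each edge $\{u, w\}$, $u \in X_{i-1}$, $w \in Y_i \setminus X_{i-1}$, we showed $u \in X_i \setminus \{v_i\}$, so $z(u) = c(u)$ and $M[c(u), y(w)] = 1$ by $c \sim y$, giving $M[z(u), y(w)] = 1$.

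For the backward direction, assume all four conditions on the right-hand side. Applying \cref{obs:compatibility} to the first two of them yields $z \sim c$. It remains to show $c \sim y$, i.e.\ that $c$ and $y$ agree on $X_i \cap Y_i$ and that $M[c(u), y(w)] = 1$ for every edge $\{u, w\}$ with $u \in X_i$, $w \in Y_i \setminus X_i$. Agreement on $X_i \cap Y_i$: split a vertex $u \in X_i \cap Y_i$ into $u = v_i$ or $u \in X_i \setminus \{v_i\}$; in the first case $v_i \in Y_i$ is impossible, and in the second case $c(u) = z(u) = y(u)$ follows from $z_{|} = c_{|}$ together with $z \sim y$ (which forces $z$ and $y$ to agree on $X_{i-1} \cap Y_i \supseteq (X_i \setminus \{v_i\}) \cap Y_i$). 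For the edge condition: given $\{u,w\}$ with $u \in X_i$, $w \in Y_i \setminus X_i$, either $u = v_i$, and then $M[c(v_i), y(w)] = 1$ by $[v_i \mapsto c(v_i)] \sim y$; or $u \in X_i \setminus \{v_i\} \subseteq X_{i-1}$, and then $c(u) = z(u)$ and $M[z(u), y(w)] = 1$ by $z \sim y$ (the edge $\{u,w\}$ has $u \in X_{i-1}$ and $w \in Y_i \setminus X_{i-1}$, the latter since $w \notin X_i \supseteq X_{i-1} \setminus \{v_i\}$ and $w \ne v_i$).

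The main obstacle I anticipate is bookkeeping the asymmetry of $\sim$ and the precise domain overlaps — in particular being careful that $z \sim y$ only constrains edges from $X_{i-1}$ into $Y_i \setminus X_{i-1}$, not all of $Y_i$, and checking that the vertices of $Y_i$ in question really do lie in $X_{i-1}$ so that the hypothesis $z_{|_{X_i \setminus \{v_i\}}} = c_{|_{X_i \setminus \{v_i\}}}$ can be applied to transfer conditions between $z$ and $c$. Everything else is a routine, if slightly tedious, case analysis driven by \cref{obs:compatibility} and the cut-graph structure from the preliminaries.
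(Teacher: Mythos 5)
Your proof is correct and follows essentially the same route as the paper: both directions hinge on the same structural fact that every edge from $X_{i-1}$ to $Y_i$ crosses the $i$-th cut and hence has its left endpoint in $X_i \setminus \{v_i\}$, combined with \cref{obs:compatibility} to handle the $z \sim c$ part. The only stylistic difference is that the paper begins by noting $X_i \cap Y_i = X_{i-1} \cap Y_i = \emptyset$ (indices in $X_{i-1}, X_i$ are at most $i$ while indices in $Y_i$ exceed $i$), which makes the domain-agreement clauses of $\sim$ vacuous and lets you skip the bookkeeping you do around $(X_i \setminus \{v_i\}) \cap Y_i$ entirely.
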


\begin{proof}

	Note that $X_i \cap Y_i = \emptyset = X_{i-1} \cap Y_i$ holds.

	``$\Rightarrow$'': Suppose $c \sim y$ and $z \sim c$ hold. The properties $z_{|_{X_i \setminus \{v_i\}}} = c_{|_{X_i \setminus \{v_i\}}}$ and $z \sim \bigl[v_i \mapsto c(v_i)\bigr]$ hold by the definition of $z \sim c$. Then, we obtain that 
	\[
		c = z_{|_{X_i \setminus \{v_i\}}} \cup \bigl[v_i \mapsto c(v_i)\bigr] = z_{|_{X_i \setminus \{v_i\}}}\bigl[v_i \mapsto c(v_i)\bigr].
	\]
	This together with $c \sim y$ implies that $\bigl[v_i \mapsto c(v_i)\bigr] \sim y$ and $z_{|_{X_i \setminus \{v_i\}}} \sim y$ hold. Finally, for every edge $\{uv\} \in E$ with $u \in X_{i-1}$ and $v \in Y_i$ it holds that $u \in X_i$ since $uv$ crosses the $i$-th cut. Due to $u \neq v_i$, the property $z_{|_{X_i \setminus \{v_i\}}} \sim y$ implies that $z \sim y$ holds. So all four claimed properties are implied.

	``$\Leftarrow$'': Suppose $z_{|_{X_i \setminus \{v_i\}}} = c_{|_{X_i \setminus \{v_i\}}}$, $z \sim \bigl[v_i \mapsto c(v_i)\bigr]$, $\bigl[v_i \mapsto c(v_i)\bigr] \sim y$, and $z \sim y$ hold. The first two properties imply that $z \sim c$ holds. To prove that $c \sim y$ holds as well, consider an edge $\{uv\} \in E$ with $u \in X_i$ and $v \in Y_i$. If $u = v_i$, then we obtain
	\[
		M\bigl[c(u), y(v)\bigr] = M\bigl[c(v_i), y(v)\bigr] = M\Bigl[\bigl[v_i \mapsto c(v_i)\bigr](v_i), y(v)\Bigr] \stackrel{\bigl[v_i \mapsto c(v_i)\bigr] \sim y}{=} 1.
	\]
	Otherwise, it holds that $u \in X_i \setminus \{v_i\}$. Then 
	\[
		M\bigl[c(u), y(v)\bigr] \stackrel{c_{|_{X_i \setminus \{v_i\}}} = z_{|_{X_i \setminus \{v_i\}}}}{=} M\bigl[z(u), y(v)\bigr] \stackrel{z \sim y}{=} 1.
	\]
	So $c \sim y$ indeed holds as well.
\end{proof}

For $i \in [n]$, $c \in C^{X_i}$, and $\order, \weight \in \ZZ$, let the value $T_i[c, \order, \weight]$ be defined as
\begin{align*}
	T_i[c, \order, \weight] = \biggl| \Bigl\{ & \phi \in C^{V_i} \Bigm\vert \phi_{|_{X_i}} = c, \bigl|\phi^{-1}({Q})\bigr| = \order, \omega\bigl(\phi^{-1}(Q)\bigr) = \weight, \\
	&\forall v \in V_i: \phi(v) \in a(v), \forall \{u, v\} \in E\bigl(G[V_i]\bigr): M\bigl[\phi(u), \phi(v)\bigr] = 1\Bigr\} \biggr|.
\end{align*}
This is the number of possibilities to extend a coloring of $X_i$ to a valid coloring of $V_i$ of order $\order$ and weight $\weight$.
Note that this value is zero whenever $\order < 0$ or $\weight < 0$ holds.
Recall that $X_n = \{v_n\}$ and $V_n = V$. Therefore, the number of list colorings that we are looking for as a final result is given by $\sum_{s \in C} T_n\bigl[[v_n \mapsto s], \totalorder, \totalweight\bigr]$.

The following lemma generalizes an analogous result of Groenland et al.\ 
\begin{lemma}\label{lem:dp-equality}
	For $2 \leq i \leq n$, $c \in C^{X_i}$, and $\order, \weight \in \ZZ$, the following holds
		\[
			T_i[c, \order, \weight] = \bigl[c(v_i) \in a(v_i)\bigr] \sum\limits_{\substack{z \in C^{X_{i-1}} \\ z \sim c}} T_{i-1} \bigl[z, \order - [c(v_i) \in Q], \weight - \omega(v_i)\cdot[c(v_i) \in Q]\bigr].
		\]
\end{lemma}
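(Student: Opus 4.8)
The plan is to prove the identity by a direct bijective argument. First observe that if $c(v_i) \notin a(v_i)$, then the indicator factor on the right-hand side is $0$, while on the left-hand side every $\phi \in C^{V_i}$ extending $c$ has $\phi(v_i) = c(v_i) \notin a(v_i)$ and is therefore not counted, so $T_i[c,\order,\weight] = 0$ as well; hence the identity holds trivially, and we may assume $c(v_i) \in a(v_i)$, so that the indicator equals $1$ and it remains to show $T_i[c,\order,\weight] = \sum_{z \in C^{X_{i-1}},\, z \sim c} T_{i-1}[z, \order - [c(v_i)\in Q], \weight - \omega(v_i)\cdot[c(v_i)\in Q]]$. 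Write $\order' = \order - [c(v_i)\in Q]$ and $\weight' = \weight - \omega(v_i)\cdot[c(v_i)\in Q]$ for brevity. I would record the two structural facts we use: $V_i = V_{i-1}\cup\{v_i\}$ with $v_i\notin V_{i-1}$ and $X_i\setminus\{v_i\}\subseteq X_{i-1}$; and that every edge of $G[V_i]$ incident to $v_i$ has its other endpoint in $X_{i-1}$ (such an edge is $\{v_j,v_i\}$ with $j\le i-1$, so it crosses the $(i-1)$th cut and thus $v_j\in X_{i-1}$).

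For the forward direction, take $\phi$ counted by $T_i[c,\order,\weight]$ and set $\phi' := \phi_{|_{V_{i-1}}}$ and $z := \phi_{|_{X_{i-1}}} = \phi'_{|_{X_{i-1}}}$. Restricting the defining conditions of $\phi$ from $V_i$ to $V_{i-1}$ shows that $\phi'$ is a valid coloring of $G[V_{i-1}]$ satisfying all list constraints and with $\phi'_{|_{X_{i-1}}} = z$. Since $v_i\notin V_{i-1}$ and $\phi(v_i)=c(v_i)$, the set $\phi^{-1}(Q)$ is $\phi'^{-1}(Q)$ together with $v_i$ precisely when $c(v_i)\in Q$, so $|\phi'^{-1}(Q)| = \order - [c(v_i)\in Q] = \order'$ and $\omega(\phi'^{-1}(Q)) = \weight - \omega(v_i)\cdot[c(v_i)\in Q] = \weight'$; hence $\phi'$ is counted by $T_{i-1}[z,\order',\weight']$. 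Finally, $z\sim c$ holds: the equality $\phi_{|_{X_i}}=c$ gives $z_{|_{X_i\setminus\{v_i\}}} = c_{|_{X_i\setminus\{v_i\}}}$ (using $X_i\setminus\{v_i\}\subseteq X_{i-1}$), and for every edge $\{u,v_i\}\in E$ with $u\in X_{i-1}$ we get $M[z(u),c(v_i)] = M[\phi(u),\phi(v_i)] = 1$, i.e.\ $z\sim[v_i\mapsto c(v_i)]$; by \cref{obs:compatibility} these two facts together imply $z\sim c$.

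For the backward direction, given $z\in C^{X_{i-1}}$ with $z\sim c$ and $\phi'$ counted by $T_{i-1}[z,\order',\weight']$, set $\phi := \phi'[v_i\mapsto c(v_i)]\in C^{V_i}$. Then $\phi_{|_{V_{i-1}}}=\phi'$, so this map is inverse to the forward one; and since $z = \phi'_{|_{X_{i-1}}}$ is determined by $\phi'$, the sets summed over distinct $z$ are pairwise disjoint, so it suffices to check that $\phi$ is counted by $T_i[c,\order,\weight]$. We have $\phi_{|_{X_i}}=c$: for $v_i$ this is $\phi(v_i)=c(v_i)$, and for $u\in X_i\setminus\{v_i\}\subseteq X_{i-1}$ we get $\phi(u)=\phi'(u)=z(u)=c(u)$, since $z\sim c$ implies $z_{|_{X_i\setminus\{v_i\}}}=c_{|_{X_i\setminus\{v_i\}}}$ (\cref{obs:compatibility}). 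Validity of $\phi$ on $G[V_i]$ follows from validity of $\phi'$ on $G[V_{i-1}]$, from $c(v_i)\in a(v_i)$, and from the fact that every edge of $G[V_i]$ at $v_i$ has the form $\{u,v_i\}$ with $u\in X_{i-1}$, for which $M[\phi(u),\phi(v_i)] = M[z(u),c(v_i)] = 1$ because $z\sim[v_i\mapsto c(v_i)]$ (again \cref{obs:compatibility}); the list constraints for the remaining vertices come from $\phi'$. The order and weight of $\phi$ are $\order$ and $\weight$ by the same count as above, read in the other direction. Composing the two maps yields a bijection, which proves the identity.

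The only delicate point is the non-symmetric compatibility relation $\sim$: the argument hinges on $z\sim c$ encoding precisely the constraints along the edges of $G[V_i]$ incident to $v_i$ (and nothing more), which is exactly what \cref{obs:compatibility} together with the observation that the $V_{i-1}$-neighbours of $v_i$ all lie in $X_{i-1}$ provides; the order/weight bookkeeping is then routine.
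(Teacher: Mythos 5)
Your proof is correct and follows essentially the same route as the paper's: the core is the restriction map $\phi\mapsto(\phi_{|_{X_{i-1}}},\phi_{|_{V_{i-1}}})$ and its inverse $(z,\phi')\mapsto\phi'[v_i\mapsto c(v_i)]$, together with the observation that every $V_{i-1}$-neighbor of $v_i$ lies in $X_{i-1}$ so that $z\sim c$ captures exactly the new edge constraints at $v_i$. The only cosmetic difference is that you package the two directions as a single explicit bijection, whereas the paper presents them as two separate injective maps yielding matching inequalities.
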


\begin{proof}
	Let $2 \leq i \leq n$, $c \in C^{X_i}$, and $\weight, \order \in \ZZ$ be arbitrary but fixed. Recall that every extension of $c$ assigns the color $c(v_i)$ to $v_i$. So if $c(v_i) \notin a(v_i)$, then there is no valid extension of $c$ to a coloring of $V_i$ and the claim holds. So we may now assume that $c(v_i) \in a(v_i)$ holds, i.e., $\bigl[c(v_i) \in a(v_i)\bigr] = 1$.
	
	Further, we may assume that $\order \geq 0$ and $\weight \geq 0$ hold: otherwise both sides of the equality are zero and the claim trivially holds. Similarly, if $c(v_i) \in Q$ and either $\order = 0$ or $\weight < \omega(v_i)$ holds, then both sides of the equality are zero as well: The left side is zero since every extension of $c$ would color $v_i$ with a color from $Q$, the right side is zero since $\order - 1 < 0$ or $\weight - \omega(v_i) < 0$ holds. From now on, we assume that whenever $c(v_i)$ belongs to $Q$, we have $\order > 0$ and $\weight \geq \omega(v_i)$.
        
		``$\leq$'': We consider a mapping $\xi: C^{V_i} \to C^{X_{i-1}} \times C^{V_{i-1}}$ given by $\phi \mapsto (\phi_{|_{X_{i-1}}}, \phi_{|_{V_{i-1}}})$. First of all, note that for two different colorings $\psi^1 \neq \psi^2 \in C^{V_i}$ extending $c$, it holds that 
		\[
			\psi^1(v_i) = c(v_i) = \psi^2(v_i) 
		\]
		and hence $\psi^1_{|_{V_{i-1}}} \neq \psi^2_{|_{V_{i-1}}}$. So the mapping $\xi$ restricted to colorings extending $c$ is injective.
		
		Now let $\phi \in C^{V_i}$ be a valid coloring of $V_i$ extending $c$ such that 
		\begin{equation}\label{eq:1}
			\bigl|\phi^{-1}(Q)\bigr| = \order \text{ and } \omega\bigl(\phi^{-1}(Q)\bigr) = \weight. 
		\end{equation}
		Then $\phi_{|_{V_{i-1}}}$ is a valid coloring of $V_{i-1}$ extending $\phi_{|_{X_{i-1}}}$ such that we have
		\begin{equation}\label{eq:a}
			\bigl|\phi_{|_{V_{i-1}}}^{-1} (Q)\bigr| = \bigl|\phi^{-1} (Q)\bigr| - \bigl[\phi(v_i) \in Q\bigr] \stackrel{\eqref{eq:1}}{=} \order - \bigl[c(v_i) \in Q\bigr]
		\end{equation}
		and
		\begin{equation}\label{eq:b}
			\omega\bigl(\phi_{|_{V_{i-1}}}^{-1} (Q)\bigr) = \omega\bigr(\phi^{-1} (Q)\bigr) - \omega(v_i)\cdot\bigl[\phi(v_i) \in Q\bigr] \stackrel{\eqref{eq:1}}{=} \weight - \omega(v_i)\cdot\bigl[c(v_i) \in Q\bigr]
		\end{equation}
		Since $\phi$ extends $c$, we have
		\[
			(\phi_{|_{X_{i-1}}})_{|_{X_i \setminus \{v_i\}}} = \phi_{|_{X_i \setminus \{v_i\}}} = c_{|_{X_i \setminus \{v_i\}}}.
		\]		
		So to prove that $\phi_{|_{X_{i-1}}} \sim c$ holds, by \cref{obs:compatibility} it suffices to check the edges $\{u, v_i\} \in E$ with $u \in X_{i-1}$. Indeed, 
		\[
			M\bigl[\phi_{|_{X_{i-1}}}(u), c(v_i)\bigr] = M\bigl[\phi(u), \phi(v_i)\bigr] = 1
		\]
		holds because $\phi$ is a valid coloring of $V_i$. 
		
		Altogether, $\xi$ injectively maps a coloring $\phi$ extending $c$ such that \eqref{eq:1} holds to a pair 
		\[
			\left(z = \phi_{|_{X_{i-1}}}, \phi' = \phi_{|_{V_{i-1}}}\right) 
		\]
		such that $z \in C^{X_{i-1}}$, $z \sim c$, and $\phi'$ is a valid coloring of $V_{i-1}$ extending $z$ for which~\eqref{eq:a} and~\eqref{eq:b} hold. 
		Hence,
		\[
            		T_i[c, \order, \weight] \leq \sum\limits_{\substack{z \in S^{X_{i-1}} \\ z \sim c}} T_{i-1} \Bigl[z, \order - \bigl[c(v_i) \in Q\bigr], \weight - \omega(v_i)\cdot\bigl[c(v_i) \in Q\bigr]\Bigr]
		\]
		holds. 
		
		``$\geq$'': Here we consider a mapping $\chi: C^{X_{i-1}} \times C^{V_{i-1}} \mapsto C^{V_i}$ defined by $(z, \phi) \mapsto \phi\bigl[v_i \mapsto c(v_i)\bigr]$. Consider two pairs $(z^1, \phi^1) \neq (z^2, \phi^2) \in C^{X_{i-1}} \times C^{V_{i-1}}$ such that $\phi^1_{|_{X_{i-1}}} = z^1$ and $\phi^2_{|_{X_{i-1}}} = z^2$. Distinctness implies that $\phi^1 \neq \phi^2$ holds and therefore $\phi^1\bigl[v_i \mapsto c(v_i)\bigr] \neq \phi^2\bigl[v_i \mapsto c(v_i)\bigr]$ holds as well. So $\chi$, restricted to pairs $(z, \phi)$ such that $\phi$ is an extension of $z$, is injective.
		
		Now let $(z, \phi) \in C^{X_{i-1}} \times C^{V_{i-1}}$ be such that $\phi$ is a valid coloring of $V_{i-1}$, $\phi$ is an extension of $z$, and $z \sim c$, 
		\begin{equation}\label{eq:2}
			\bigl|\phi^{-1}(Q)\bigr| = \order - \bigl[c(v_i) \in Q\bigr], \text{ and } 			\omega\bigl(\phi^{-1}(Q)\bigr) = \weight - \omega(v_i) \cdot \bigl[c(v_i) \in Q\bigr]
		\end{equation}
		hold.
		
		Since $\phi$ is an extension of $z$ and $z \sim c$ holds, we have 
		\[
			c_{|_{X_i \setminus \{v_i\}}} = z_{|_{X_i \setminus \{v_i\}}} = \phi_{|_{X_i \setminus \{v_i\}}}. 
		\]
		Thereby, the coloring $\phi\bigl[v_i \mapsto c(v_i)\bigr]$ is an extension of $c$. Recall that $\phi$ is a valid coloring of $V_{i-1}$. To see that $\phi\bigl[v_i \mapsto c(v_i)\bigr]$ is a valid coloring of $V_i$ as well, consider an edge $\{u, v_i\} \in E$ with $u \in V_{i-1}$. Note that this edge crosses the $(i-1)th$ cut, i.e., $u \in X_{i-1}$. So
		\[
			M\Bigl[\phi\big[v_i \mapsto c(v_i)\bigr](u), \phi\bigl[v_i \mapsto c(v_i)\bigr](v_i)\Bigr] = M\bigl[z(u), c(v_i)\bigr] \stackrel{z \sim c}{=} 1.
		\]
		Hence, $\phi\bigl[v_i \mapsto c(v_i)\bigr]$ is a valid coloring of $V_i$ extending $c$. For the sake of simplicity, we denote $\phi\bigl[v_i \mapsto c(v_i)\bigr]$ with $\tilde\phi$ for a moment. Then it holds that
		\begin{equation}\label{eq:c}
			\bigl|\tilde\phi^{-1}(Q)\bigr| = \bigl|\phi^{-1}(Q)\bigr| + \bigl[c(v_i) \in Q\bigr] \stackrel{\eqref{eq:2}}{=} \Big(\order - \bigl[c(v_i) \in Q\bigr]\Bigr) + \bigl[c(v_i) \in Q\bigr] = \order
		\end{equation}
		and
		\begin{align}
			\omega\bigl(\tilde\phi^{-1}(Q)\bigr) &= \omega\bigl(\phi^{-1}(Q)\bigr) + \omega(v_i) \cdot \bigl[c(v_i) \in Q\bigr] \nonumber \\
			&\stackrel{\eqref{eq:2}}{=} \Bigl(\weight - \omega(v_i) \cdot \bigl[c(v_i) \in Q\bigr]\Bigr) + \omega(v_i) \cdot \bigl[c(v_i) \in Q\bigr] = \weight \label{eq:d}.
		\end{align}
				
		Altogether, $\chi$ injectively maps pairs $(z, \phi) \in C^{X_{i-1}} \times C^{V_{i-1}}$ such that $\phi$ is a valid coloring of $V_{i-1}$ extending $z$ and $z \sim c$ and \eqref{eq:2} hold to a valid coloring $\tilde\phi$ of $V_i$ extending $c$ for which \eqref{eq:c} and \eqref{eq:d} hold. Hence,
		\[
			 T_i[c, \order, \weight] \geq \sum\limits_{\substack{z \in C^{X_{i-1}} \\ z \sim c}} T_{i-1} \Bigl[z, \order - \bigl[c(v_i) \in Q\bigr], \weight - \omega(v_i) \cdot \bigl[c(v_i) \in Q\bigr]\Bigr]
		\]
		holds and this concludes the proof.
\end{proof}

With that we can already obtain an $\OO^* \bigl(|C|^{\ctw} N\bigr)$-algorithm for $P$ as follows. 

\begin{lemma}\label{lemma:direct-DP-long}
	A coloring-like problem $P$ can be solved in $\OO^* \bigl(|C|^{\ctw} N\bigr)$.
\end{lemma}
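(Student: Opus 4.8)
The plan is to turn \cref{lem:dp-equality} into a left-to-right sweep over the arrangement $v_1,\dots,v_n$ that, at step $i$, stores the whole table $T_i[c,\order,\weight]$ for every $c\in C^{X_i}$ and all relevant $\order,\weight$. First I would note that we may assume $\totalorder\le n$ and $\totalweight\le nN$, since any valid coloring maps at most $n$ vertices into $Q$, each of weight at most $N$; outside this range the answer is $0$. Hence it suffices to keep the entries with $\order\in[\totalorder]_0$ and $\weight\in[\totalweight]_0$, treating entries with a negative coordinate as $0$; the recursion of \cref{lem:dp-equality} then never leaves this range. Using $|X_i|\le\ctw+1$ and that $|C|$ is a constant of $P$, the table at level $i$ has at most $|C|^{\ctw+1}(\totalorder+1)(\totalweight+1)=\OO^*(|C|^{\ctw}N)$ entries. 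The base case $i=1$ is immediate: $V_1=X_1=\{v_1\}$ carries no edge, so $T_1\bigl[[v_1\mapsto s],\order,\weight\bigr]=[s\in a(v_1)]\cdot[\order=[s\in Q]]\cdot[\weight=\omega(v_1)\cdot[s\in Q]]$. At the end $X_n=\{v_n\}$ and $V_n=V$, so the sought quantity — which is exactly the number of valid colorings of $V$ of order $\totalorder$ and weight $\totalweight$, i.e.\ the count in the problem definition — equals $\sum_{s\in C}T_n\bigl[[v_n\mapsto s],\totalorder,\totalweight\bigr]$.

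The one delicate point, and the main obstacle, is carrying out each transition $T_{i-1}\to T_i$ within the budget $\OO^*(|C|^{\ctw}N)$: a plain double loop over $c\in C^{X_i}$ and over all $z\in C^{X_{i-1}}$ with $z\sim c$ would cost about $|C|^{2\ctw}$. Instead I would initialize $T_i\equiv 0$ and iterate over all pairs $(z,s)\in C^{X_{i-1}}\times C$. Since $X_i\setminus\{v_i\}\subseteq X_{i-1}$, each such pair determines the coloring $c:=z_{|X_i\setminus\{v_i\}}[v_i\mapsto s]\in C^{X_i}$; if $s\in a(v_i)$ and $z\sim[v_i\mapsto s]$ — both checkable in polynomial time directly from the definitions — then for all $\order,\weight$ one adds the shifted value $T_{i-1}\bigl[z,\order-[s\in Q],\weight-\omega(v_i)[s\in Q]\bigr]$ onto $T_i[c,\order,\weight]$. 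By \cref{obs:compatibility}, the map $(z,s)\mapsto(z,c)$ sends the pairs processed here bijectively onto the pairs $(z,c)$ with $z\sim c$ and $c(v_i)\in a(v_i)$, so once all pairs are handled $T_i[c,\order,\weight]$ equals the right-hand side of \cref{lem:dp-equality} for every $c$ (the $[c(v_i)\in a(v_i)]$-factor being accounted for by the entries left at $0$). There are at most $|C|^{|X_{i-1}|+1}\le|C|^{\ctw+2}=\OO^*(|C|^{\ctw})$ such pairs, and for each the shifted-vector update costs $\OO\bigl((\totalorder+1)(\totalweight+1)\bigr)=\OO^*(N)$, so one transition costs $\OO^*(|C|^{\ctw}N)$ and the whole sweep costs $\OO^*(|C|^{\ctw}N)$.

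Correctness then follows by induction on $i$: the base case is checked above and each inductive step is \cref{lem:dp-equality}; reading off $\sum_{s\in C}T_n\bigl[[v_n\mapsto s],\totalorder,\totalweight\bigr]$ adds only $\OO(|C|)$. If the prime $p$ is part of $P$, every arithmetic operation above is performed modulo $p$, which is sound since all intermediate values are non-negative integers combined only by addition and only the final count modulo $p$ is required. I expect the running-time accounting of the transition to be the only step that needs genuine care; the base case, the final read-off, and the modular variant are routine.
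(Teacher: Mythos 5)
Your proof is correct and follows essentially the same route as the paper: both use \cref{lem:dp-equality} for the recurrence, both handle the transition by iterating over $(z,s)\in C^{X_{i-1}}\times a(v_i)$ rather than over pairs $(c,z)$ (the key step that avoids the $|C|^{2\ctw}$ blowup), and both justify correctness through \cref{obs:compatibility}. The only cosmetic difference is that you make the bijectivity of $(z,s)\mapsto(z,c)$ and the bound on the $(\order,\weight)$-range slightly more explicit, but the underlying argument is the same.
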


\begin{proof}
	To solve the problem, we utilize \cref{lem:dp-equality} to compute the entries of the tables $T_i$ in the order of increasing $i$. We initialize the table $T_1$ via brute-force: we go through all $s \in a(v_1)$ and all $\order \in \{0, 1\}, \weight \in \bigl\{0, \omega(v_1)\bigr\}$ and compute the corresponding value $T_1\bigl[[v_1 \mapsto s], \order, \weight\bigr] \in \{0, 1\}$. All other entries of $T_1$ are implicitly set to $0$ and there is no need to store them. After that, for $2 \leq i \leq n$, we proceed as follows. Generally, we just want to apply \cref{lem:dp-equality}. Recall that for every $i \in [n]$, it holds that $|X_i| \leq \ctw + 1$. Since every vertex is allowed to be colored one of $|C|$ colors, there are at most $|C|^{\ctw + 1}$ colorings of $X_i$. However, if we proceed in a naive way, for every coloring of $X_i$, we would consider every coloring of $X_{i-1}$ resulting in $\OO^* \bigl(|C|^{2\cdot\ctw}\bigr)$ running time. So we want to avoid considering the same coloring of $X_{i-1}$ several times.
The main idea of this approach was introduced by Groenland et al.\ as well.

	Recall that by \cref{obs:compatibility}, a coloring $z \in C^{X_{i-1}}$ is compatible with a coloring $c \in C^{X_i}$ if and only if $z_{|_{X_i \setminus \{v_i\}}} = c_{|_{X_i \setminus \{v_i\}}}$ and $z \sim \bigl[v_i \mapsto c(v_i)\bigr]$ hold. This results in the following procedure. 
	Let $S_i: \dom(T_i) \times \ZZ^2 \to \NNN$ denote a table which will be equal to $T_i$ by the end of the following process. Initially, all values of this table are implicitly treated as zero. We iterate over all colorings $z \in C^{X_{i-1}}$ and all $s \in a(v_i)$. We check if $z \sim \bigl[v_i \mapsto c(v_i)\bigr]$ holds by looking at all edges having one end-vertex in $X_{i-1}$ and the other being $v_i$ and verifying that their colors are consistent with respect to $M$. If this holds, for all $0 \leq \order \leq i - 1$ and all $0 \leq \weight \leq (i-1)N$, we increase the value
	\[
		S_i\bigl[z_{|_{X_i \setminus \{v_i\}}}[v_i \mapsto s], \order + [s \in Q], \weight + \omega(v_i)\cdot[s \in Q]\bigr]
	\]  
	(if it was already initialized) by the value $T_{i-1}[z, \order, \weight]$ or initialize it with this value (otherwise). Finally, we interpret the non-initialized values as zero. Note that in particular, for every coloring $c \in C^{X_i}$ such that $c(v_i) \notin a(v_i)$ all entries $T_i[c, \cdot, \cdot]$ are zero. By the end of the iteration, the table $S_i$ contains exactly the values of $T_i$.

	Since the number of colorings of $G$ is at most $|C|^n$, each table entry is bounded by this value at most.
	Hence, a table entry has bit size at most $\poly(n)$ and in time $\poly(n)$ we can apply basic arithmetic
	on these entries. In total, the running time of the computation of $T_i$ given $T_{i-1}$ is bounded by
	\[
		\OO\Bigl(|C|^{\ctw + 1} \bigl|a(v_i)\bigr| i (iN) \poly(n)\Bigr) = \OO^* \Bigl(|C|^{\ctw} N\Bigr).
	\]
	Hence, in time $\OO^* \bigl(|C|^{\ctw}N\bigr)$ the entries of $T_n$ can be computed. Note that $X_n = \{v_n\}$ so for $s \in C$, the value $T_n\bigl[[v_n \mapsto s], \totalorder, \totalweight\bigr]$ is by definition the number of valid colorings $\phi$ of $V_n = V$ of order $\totalorder$, weight $\totalweight$ assigning the color $s$ to $v_n$.
	Thereby, we output the value $\sum_{s \in C} T_n\bigl[[v_n \mapsto s], \totalorder, \totalweight\bigr]$.
	If the input contains the prime number $p$, we return the above value modulo $p$.
\end{proof}

Now based on the straight-forward idea of this algorithm, we provide a more efficient one that profits from reducing the number of considered colorings.

\subsection{Computing Reduced Representative}

In this section, we recall that the input might contain a prime number $p$ and if so, we want to count the number of certain colorings modulo $p$. 
We will utilize this fact and accelerate our algorithm depending on this number.
For this reason, let $\FF$ denote the field $\FF = \FF_p$ if $p$ is a part of the input and the ring $\FF = \QQ$ otherwise. With ``$\equiv$'' we denote equality over $\FF$.

Let $X, Y \subseteq V$ be disjoint subsets of the vertex set and let $\widehat{f}, f: C^X \times \ZZ^2 \to \FF$ be tables. We say that $\widehat{f}$ \emph{$(X, Y)$-represents} $f$ if for every coloring $y \in C^Y$ and every $\order, \weight \in \ZZ$, it holds that
\[
	 \sum\limits_{\substack{x \in C^X \\ x \sim y}} \widehat{f}(x, \order, \weight) \equiv \sum\limits_{\substack{x \in C^X \\ x \sim y}} f(x,  \order, \weight).
\]

For $i \in [n]$ and table $f, \widehat{f}: C^{X_i} \times \ZZ^2 \to \FF$, we say that $\widehat{f}$ \emph{$i$-represents} $f$ if $\widehat{f}$ $(X_i, Y_i)$-represents $f$. Note that $(X, Y)$-representation is an equivalence relation and in particular, it is transitive.

Let $\rank(M)$ be the rank of the consistency matrix $M$ over $\FF$.
We may assume that the first $\rank(M)$ rows of the consistency matrix $M$ form its row basis over $\FF$ (otherwise we may permute the colors in $C$ and consider the corresponding consistency matrix). We call the colors $\bigl\{\rank(M)+1, \dots, |C|\bigr\}$ \emph{reduced}. 

For a set $X \subseteq V$ of vertices and a table $f: C^X \times \ZZ^2 \to \FF$ we say that a vertex $v \in X$ is \emph{reduced} (in $f$) if for every coloring $x \in C^X$ with $x(v) > \rank(M)$ (i.e., $v$ has a reduced color), it holds that $f(x, \order, \weight) \equiv 0$ for all $ \order, \weight \in \ZZ$.  

\begin{lemma}\label{lemma:reduce-algorithm}
	Let $X, Y \subseteq V$ be disjoint and let $i \in [n]$.
	Further, let $f: C^X \times \ZZ^2 \to \FF$ be a table with reduced vertices $R \subseteq X$ such that $f^{-1}\bigl(\ZZ \setminus \{0\}\bigr) \subseteq C^X \times [i]_0 \times [iN]_0$.
	Finally, let $v \in X \setminus R$ be a vertex that has exactly one neighbor in $Y$. 
	Then there is an algorithm \textbf{Reduce} that, given this information as input, in time $\mathcal{O}^*\bigl(\rank(M)^{|R|} |C|^{|X| - |R|} N\bigr)$ outputs a function $\widehat{f}: C^X \times \ZZ^2 \to \FF$ $(X, Y)$-representing $f$ with reduced vertices $R \cup \{v\}$.
	Moreover, it holds that $\widehat{f}^{-1}\bigl(\ZZ \setminus \{0\}\bigr) \subseteq C^{X_i} \times [i]_0 \times [iN]_0$. 
\end{lemma}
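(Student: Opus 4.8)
The plan is to exploit that $v$ has a single neighbor in $Y$, call it $w$. The guiding observation is that for any $y \in C^Y$ the compatibility relation $x \sim y$ constrains the color of $v$ only through the single equation $M\bigl[x(v), y(w)\bigr] = 1$: writing $x = x'[v \mapsto s]$ with $x' \in C^{X \setminus \{v\}}$ and $s \in C$, one has $x \sim y$ if and only if $x' \sim y$ (as a coloring of $X \setminus \{v\}$, which only involves edges not incident to $v$) and $M[s, y(w)] = 1$, precisely because $w$ is the unique neighbor of $v$ in $Y$. Splitting the sum $\sum_{x \sim y} f(x, \order, \weight)$ according to $x' := x_{|_{X \setminus \{v\}}}$ and using that $M$ is symmetric with $\{0,1\}$-entries interpreted over $\FF$,
\[
	\sum_{\substack{x \in C^X \\ x \sim y}} f(x, \order, \weight)
	= \sum_{\substack{x' \in C^{X \setminus \{v\}} \\ x' \sim y}}\ \sum_{s \in C} M\bigl[s, y(w)\bigr]\, f\bigl(x'[v \mapsto s], \order, \weight\bigr)
	= \sum_{\substack{x' \in C^{X \setminus \{v\}} \\ x' \sim y}} \bigl(M\, b^{x', \order, \weight}\bigr)_{y(w)},
\]
where $b^{x', \order, \weight} \in \FF^{C}$ is the vector whose $s$-th entry is $f\bigl(x'[v \mapsto s], \order, \weight\bigr)$. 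The same identity holds verbatim with $\widehat f$ in place of $f$, so it suffices to build $\widehat f$ so that the vectors $M\, b^{x', \order, \weight}$ are all unchanged while the $v$-coordinate of $\widehat f$ only ever uses non-reduced colors.

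The key linear-algebraic ingredient, which I would isolate as a short claim, is: for every $b \in \FF^{C}$ there is a \emph{unique} $u \in \FF^{C}$ supported on the non-reduced colors $\{1, \dots, \rank(M)\}$ with $M u = M b$. Existence holds since $M b$ lies in the column space of $M$, which by the symmetry of $M$ together with our normalization (the first $\rank(M)$ rows form a row basis) is spanned by the first $\rank(M)$ columns of $M$; uniqueness holds since those columns are linearly independent, so $M$ restricted to vectors supported on $\{1, \dots, \rank(M)\}$ is injective. As $b \mapsto u$ is $\FF$-linear, $u = \Pi b$ for a fixed matrix $\Pi \in \FF^{C \times C}$ computable from $M$ in time $\poly(|C|)$ by Gaussian elimination (pick an invertible $\rank(M) \times \rank(M)$ submatrix among the first $\rank(M)$ columns, invert it, and compose with the appropriate coordinate projections and with $M$); in particular $M\Pi b = M b$ for all $b$. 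This is the analogue for a general consistency matrix of the reduction step of Groenland et al.~\cite{GroenlandMNS22}, and it is the only part that is not routine bookkeeping.

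Given $\Pi$, the algorithm \textbf{Reduce} works as follows. Since $R$ consists of reduced vertices of $f$, only colorings $x' \in C^{X \setminus \{v\}}$ assigning non-reduced colors to all vertices of $R$ can contribute, so we iterate over the at most $\rank(M)^{|R|}\,|C|^{|X| - |R| - 1}$ such colorings; for each of them and each $(\order, \weight) \in [i]_0 \times [iN]_0$ we read $b^{x', \order, \weight}$ off the table $f$, compute $u := \Pi\, b^{x', \order, \weight}$, and set $\widehat f\bigl(x'[v \mapsto j], \order, \weight\bigr) := u_j$ for every $j \in C$. Every remaining entry of $\widehat f$ is set to $0$, which is consistent: if a vertex of $R$ has a reduced color in $x'$ then $b^{x', \order, \weight} = 0$, hence $u = 0$, and if $(\order, \weight) \notin [i]_0 \times [iN]_0$ then again $b^{x', \order, \weight} = 0$. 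Processing one triple $(x', \order, \weight)$ costs $\mathcal{O}(|C| \cdot \rank(M))$ field operations on numbers whose bit length is polynomial in that of the input, and there are $\mathcal{O}^*\bigl(\rank(M)^{|R|}\,|C|^{|X| - |R| - 1}\, N\bigr)$ of them, so the total running time is $\mathcal{O}^*\bigl(\rank(M)^{|R|}\,|C|^{|X| - |R|}\, N\bigr)$, using that $|C|$ and $\rank(M)$ are constants of the problem $P$.

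Finally I would check the three asserted properties. Since $u_j = 0$ for every reduced color $j$, the vertex $v$ is reduced in $\widehat f$; since $\widehat f$ vanishes whenever some vertex of $R$ has a reduced color, the vertices of $R$ stay reduced; hence $R \cup \{v\}$ are reduced vertices of $\widehat f$. Each entry $\widehat f(\cdot, \order, \weight)$ is an $\FF$-linear combination of entries $f(\cdot, \order, \weight)$ with the \emph{same} $(\order, \weight)$, so $\widehat f$ vanishes outside $C^{X} \times [i]_0 \times [iN]_0$, matching the bound assumed for $f$. For the $(X, Y)$-representation, fix $y \in C^Y$ and $(\order, \weight)$, write $t = y(w)$, and note that by construction $\widehat f\bigl(x'[v \mapsto s], \order, \weight\bigr) = \bigl(\Pi\, b^{x', \order, \weight}\bigr)_s$ for all $s \in C$ (also for $s > \rank(M)$, where both sides vanish); then, using the displayed identity above, the symmetry $M[s,t] = M[t,s]$, and $M\Pi b = M b$,
\[
	\sum_{\substack{x \in C^X \\ x \sim y}} \widehat f(x, \order, \weight)
	= \sum_{\substack{x' \in C^{X \setminus \{v\}} \\ x' \sim y}}\ \sum_{s \in C} M[s, t]\, \bigl(\Pi\, b^{x', \order, \weight}\bigr)_s
	= \sum_{\substack{x' \in C^{X \setminus \{v\}} \\ x' \sim y}} \bigl(M\, b^{x', \order, \weight}\bigr)_{t}
	= \sum_{\substack{x \in C^X \\ x \sim y}} f(x, \order, \weight).
\]
The step I expect to be the main obstacle is establishing the linear-algebra claim about $M$ and then carrying it correctly through these sums, where the asymmetry of the compatibility relation has to be respected; everything else is straightforward.
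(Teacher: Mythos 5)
Your proof is correct and takes essentially the same approach as the paper: the matrix $\Pi$ you construct is, by the uniqueness you establish, precisely the linear map encoded by the paper's coefficients $d_{b,j}$ (define $\Pi_{j,s}=[j=s]$ for $s\leq\rank(M)$ and $\Pi_{j,s}=d_{s,j}$ for $s>\rank(M)$, both for $j\leq\rank(M)$, and $\Pi_{j,\cdot}=0$ otherwise), and your identity $M\Pi=M$ is exactly the matrix form of the basis expansion $m_b\equiv\sum_{j\leq\rank(M)}d_{b,j}\,m_j$ used in the paper. The paper verifies $(X,Y)$-representation by counting how often each summand $f(x,\order,\weight)$ occurs on either side, whereas you push the same factorization of the compatibility relation through $v$ in a direct chain of matrix-algebra equalities --- a cosmetic rather than substantive difference.
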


\begin{proof}
	For $b \in \bigl[|C|\bigr]$, let $m_b$ denote the $b$th row of $M$ and let $k = \rank(M)$. Since the set $\{m_1, \dots, m_k\}$ forms a row basis of $M$, for every $b \in \bigl\{k+1, \dots, |C|\bigr\}$ and $j \in [k]$, there exists $d_{b, j}\in \FF$ such that
	\begin{equation}\label{eq:basis-representation}
		m_b \equiv \sum\limits_{j = 1}^{k} d_{b,j} m_j.
	\end{equation}

	We define the values of $\widehat{f}$ as
	\[
		\widehat{f}(x, \order, \weight) \equiv
		\begin{cases}
			0 & \mbox{if $x(v) \geq k+1$,} \\
			f(x, \order, \weight) + \sum\limits_{b = k+1}^{|C|} d_{b, x(v)} f\bigl(x[v \mapsto b], \order, \weight\bigr) & \mbox{otherwise,}  \\
		\end{cases}
	\] 
	for all $x \in C^X$, $\order, \weight \in \ZZ$. 
		
	We first prove that $R \cup \{v\}$ is a set of reduced vertices of $\widehat{f}$. Let $\order, \weight \in \ZZ$ and let $x \in C^X$. By construction, for every coloring $x \in C^X$ such that $x(v) > k$ (i.e., the color of $v$ is reduced), we have $\widehat{f}(x, \order, \weight) \equiv 0$. So $v$ is a reduced vertex. In the following, we may assume that $x(v) \leq k$ holds. If $\order \notin [i]_0$ or $\weight \notin [iN]_0$, then we have $\widehat{f}(x, \order, \weight) \equiv 0$ by assumption that $f^{-1}\bigl(\ZZ \setminus \{0\}\bigr) \subseteq C^X \times [i]_0 \times [iN]_0$ holds.
	Otherwise, if for some vertex $u \in R \subseteq X$, it holds that $x(u) > k$, then we have
	\[
		\widehat{f}(x, \order, \weight) \equiv f(x, \order, \weight) + \sum\limits_{b = k+1}^{|C|} d_{b, x(v)} f\bigl(x[v \mapsto b], \order, \weight\bigr) \equiv 0
	\]
	where the last equality holds since $x[v \mapsto b](u) = x(u) > k$ for every $k+1 \leq b \leq |C|$ and $u$ is a
	reduced vertex of $f$. Thereby, $R \cup \{v\}$ is a set of reduced vertices of $\widehat{f}$. In particular, this implies
	that to compute and output the function $\widehat{f}$, we only need to compute and explicitly store the values $\widehat{f}(x, \order, \weight)$ for $\order \in [i]_0$, $\weight \in [iN]_0$, and such $x \in C^X$ that for all $u \in R \cup \{v\}$ we have $x(u) \leq k = \rank(M)$. So this can be done in
	\[
		\mathcal{O}^*\bigl(\rank(M)^{|R|+1} |C|^{|X| - \bigl(|R| + 1\bigr)} i (iN) |C| \poly(n) \bigr) = \mathcal{O}^*\bigl(\rank(M)^{|R|} |C|^{|X| - |R|} N\bigr). 
	\]
	
	We now need to show that $\widehat{f}$ $(X, Y)$-represents $f$, i.e., for all $y \in C^Y$ and all $\order, \weight \in \ZZ$, it holds that
	\[
		\sum\limits_{\substack{x \in C^X \\ x \sim y}} f(x, \order, \weight) \equiv \sum\limits_{\substack{z \in C^X \\ z \sim y}} \widehat{f}(z, \order, \weight).
	\]
	We denote the colorings on the left and right side of this equality with $x$ and $z$, respectively, in order to be able to distinguish them.
	Clearly, this holds if $\order \notin [i]_0$ or $\weight \notin [iN]_0$: in this case both sides are equal to zero. So we may assume that $\order \in [i]_0$ and $\weight \in [iN]_0$ hold. By the definition of $\widehat{f}$, we need to show that 
	\[
		\sum\limits_{\substack{x \in C^X \\ x \sim y}} f(x, \order, \weight) \equiv \sum\limits_{\substack{z \in C^X \\ z \sim y \\ 1 \leq z(v) \leq k}} \left( f(z, \order, \weight) + \sum\limits_{b = k + 1}^{|C|} d_{b, z(v)} f\bigl(z[v \mapsto b], \order, \weight\bigr) \right),
	\]
	i.e.,
	\begin{equation}\label{eq:proof-of-representation-general-approach}
		\sum\limits_{\substack{x \in C^X \\ x \sim y \\  k + 1 \leq x(v) \leq |C|}} f(x, \order, \weight) \equiv \sum\limits_{\substack{z \in C^X \\ z \sim y \\ 1 \leq z(v) \leq k}} \sum\limits_{b = k + 1}^{|C|} d_{b, z(v)} f\bigl(z[v \mapsto b], \order, \weight\bigr).	
	\end{equation}
	
	We will now show that for every coloring $x \in C^X$, the addend $f(x, \order, \weight)$ occurs the same number of times on the left and on the right side of the above equality. Observe that on both sides only the summands $f(x, \order, \weight)$ with $x(v) \geq k+1$ occur. So from now on, we assume that $x(v) \geq k+1$ holds. 
	Let $w$ be the unique neighbor of $v$ in $Y$. 
		 	
	\textbf{Case 1}: $x \sim y$. Then $f(x, \order, \weight)$ appears once on the left side. On the right side, $f(x, \order, \weight)$ appears $d_{x(v), z(v)}$ times for every coloring $z \in C^X$ such that:
	\begin{itemize}
		\item $x$ and $z$ only differ on the color of $v$,
		\item for $j = z(v)$, it holds that $j \leq k$,
		\item and $z$ is still compatible with $y$ (i.e., $z \sim y$).
	\end{itemize}
	Since we only flip the color of $v$ and this vertex has exactly one neighbor of $w$ in $Y$, the last property holds if and only if there is no conflict on the edge $\{v, w\}$ after this flipping, i.e., if 
	\[
		M\bigl[j, y(w)\bigr] = 1
	\]
	holds.
	So there is 
	\[
		\sum\limits_{j = 1}^{k} d_{x(v), j} M\bigl[j, y(w)\bigr] \stackrel{\eqref{eq:basis-representation}}{\equiv} M\bigl[x(v), y(w)\bigr] \stackrel{x \sim y}{=} 1 
	\]
	occurrence of $f(x, \order, \weight)$ on the right side as well. 
	
	\textbf{Case 2}: $x \not\sim y$. In this case $f(x, \order, \weight)$ does not occur on the left side. First, suppose there is a conflict on some edge other than $\{v, w\}$, i.e., there exists an edge $\{v', w'\} \neq \{v, w\} \in E$ (with $v' \in X, w' \in Y$) such that $M\bigl[x(v'), y(w')\bigr] = 0$. Since $v$ only has one neighbor in $Y$, we have $v \neq v'$. Therefore, if we flip the color of $v$, the conflict on the edge $\{v', w'\}$ would remain and the arising coloring would still be incompatible with $y$. Hence, $f(x, \order, \weight)$ does not occur on the right side as well. 
	
	Now we may assume that the unique conflict is on the edge $\{v, w\}$, namely $M\bigl[x(v), y(w)\bigr] = 0$ but for every $v' \in X, w' \in Y$ such that $\{v', w'\} \neq \{v, w\} \in E$, we have $M\bigl[x(v'), y(w')\bigr] = 1$. Hence, flipping the color of $v$ to some $j \in [k]$ makes the coloring $x$ compatible with $y$ if and only if $j$ is compatible with the color of $w$, i.e., $M\bigl[j, y(w)\bigr] = 1$. Thereby, $f(x, \order, \weight)$ appears
	\[
		\sum\limits_{j = 1}^{k} d_{x(v),j} M\bigl[j, y(w)\bigr] \stackrel{\eqref{eq:basis-representation}}{\equiv} M\bigl[x(v), y(w)\bigr] = 0
	\]
	times on the right side as well (the last equality holds due to the conflict on $\{v, w\}$). 
	
	Therefore, every summand occurs the same number of times both sides of \eqref{eq:proof-of-representation-general-approach} and $\widehat{f}$ indeed $(X, Y)$-represents $f$.
\end{proof}

\subsection{Dynamic Programming on Reduced Tables}\label{app:subsec:reduced-dp}

We will later need to construct a sufficiently large set of reduced vertices for some tables. The following simple lemma proven by Groenland et al.\ \cite{GroenlandMNS22} will be useful for that.

\begin{lemma}\label{lem:sets-A-i-B-i-R-i}
	Let $2 \leq i \leq n$ and let $R_i = X_i \setminus \bigl(A_i \cup B_i \cup \{v_i\}\bigr)$ where
	\[
		A_i = \Bigl\{u \in X_i \setminus \{v_i\} \Bigm\vert \bigl| N(u) \cap Y_i \bigr| \geq 2 \Bigr\}
	\]
	and
	\[
		B_i = \Bigl\{u \in X_i \setminus \{v_i\} \Bigm\vert \bigl| N(u) \cap Y_i \bigr| = 1 \text{ and } \{u, v_i\} \in E \Bigr\}.
	\]
	Then it holds that 
	\[
		|X_i \setminus R_i| \leq \bigl(\ctw - |R_i|\bigr) / 2 + 1.
	\]
\end{lemma}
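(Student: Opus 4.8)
The plan is to bound $|X_i \setminus R_i| = |A_i| + |B_i| + 1$ (the three sets on the right are pairwise disjoint by construction, and $v_i \notin A_i \cup B_i$) by comparing against the number of edges in the $i$th cut $E_i$, whose cardinality is at most $\ctw$. The key observation is that each vertex of $A_i \cup B_i$ uses up a certain "budget" of cut-edges that the vertices of $R_i$ are forced to share more cheaply. Concretely, I would first count how many edges of $E_i$ are incident to $X_i \setminus \{v_i\}$ on the left: every vertex $u \in X_i \setminus \{v_i\}$ has at least one neighbor in $Y_i$, so it contributes at least one edge of $E_i$; a vertex of $A_i$ contributes at least two such edges; and the vertex $v_i$ itself need not be counted among these but may contribute further edges of $E_i$. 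This gives the inequality $|E_i| \geq |A_i| \cdot 2 + |R_i| \cdot 1 + (\text{edges from } B_i \text{ and } v_i)$, and I will need to account for the edges from $B_i$ carefully since each $u \in B_i$ has exactly one neighbor in $Y_i$ (contributing one cut-edge) plus the edge $\{u, v_i\}$, which does \emph{not} cross the $i$th cut.

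So the edge $\{u,v_i\}$ for $u \in B_i$ is an edge of $G_i$ but not of $E_i$; it is the edge $\{v_j, v_i\}$ with $j < i$, hence it crosses the $(i-1)$th cut, i.e.\ it lies in $E_{i-1}$. This is where the "$+1$" slack and the division by $2$ come from: I would instead work with $E_{i-1}$, or argue directly. Let me set it up cleanly. Write $k = \ctw$. For each $u \in A_i$ pick two edges of $E_i$ incident to $u$; for each $u \in B_i$ the edge $\{u, v_i\}$ crosses the $(i-1)$th cut and the edge from $u$ to its unique neighbor in $Y_i$ crosses the $i$th cut; for each $u \in R_i$ pick one edge of $E_i$ incident to $u$. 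All the "right-going" picked edges (those in $E_i$) are distinct since they have distinct left endpoints in $X_i \setminus \{v_i\}$, except we must be careful that an $A_i$-vertex's two edges are genuinely two distinct edges — which holds since $G$ is simple. Counting: the $E_i$-edges incident to $A_i \cup B_i \cup R_i$ number at least $2|A_i| + |B_i| + |R_i|$, so $2|A_i| + |B_i| + |R_i| \le |E_i| \le k$. Then
\[
	|X_i \setminus R_i| = |A_i| + |B_i| + 1 \le |A_i| + |B_i| + 1.
\]
From $2|A_i| + |B_i| \le k - |R_i|$ we get $|A_i| + |B_i| \le |A_i| + |B_i| + |A_i| \cdot 0 \le \dots$ — this only gives $|A_i| + |B_i| \le k - |R_i|$, which is too weak. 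The correct route: since $2|A_i| + |B_i| + |R_i| \le k$ and also (trivially, each $u \in X_i \setminus\{v_i\}$ contributes a distinct $E_i$-edge) $|A_i| + |B_i| + |R_i| \le k$, I actually want to exploit that the $B_i$ vertices also "spend" their edge $\{u,v_i\}$. I would count edges incident to $v_i$: the edges $\{u, v_i\}$ for $u \in B_i$ are $|B_i|$ distinct edges of $E_{i-1}$, and together with the cut-edges of $X_{i-1} = X_i \setminus \{v_i\}$ (minus adjustments) one derives $|B_i| + (\text{stuff}) \le \ctw$. Combining the bound $2|A_i| + |B_i| + |R_i| \le \ctw$ with $|A_i| + |B_i| + |R_i| \ge $ ... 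Hmm — the cleanest is: $2(|A_i| + |B_i|) + |R_i| \le 2|A_i| + |B_i| + |R_i| + |B_i| \le \ctw + |B_i|$, not obviously helpful.

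Let me restart the counting more carefully, as this is the crux. I would show $2|A_i| + 2|B_i| + |R_i| \le \ctw$, which immediately yields $2(|X_i \setminus R_i| - 1) = 2|A_i| + 2|B_i| \le \ctw - |R_i|$, i.e.\ $|X_i \setminus R_i| \le (\ctw - |R_i|)/2 + 1$ as desired. To get the factor $2$ on $|B_i|$: each $u \in B_i$ has exactly one neighbor $w_u \in Y_i$, contributing the cut-edge $\{u, w_u\} \in E_i$. Now I want a \emph{second} distinct edge of $E_i$ to charge to $u$. Consider the neighbor $w_u \in Y_i$: since $w_u \in Y_i$, it has at least one neighbor in $V_i$, namely $u$; but that is not a new edge. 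Alternative: use that $\{u, v_i\} \in E$ with $u \in X_{i-1}$ shows $v_i$ has a neighbor to its left, and this edge is in $E_{i-1}$ but not $E_i$. The trick must be to relate $E_i$ and $E_{i-1}$ via the symmetric difference: $E_{i-1} \triangle E_i$ consists exactly of edges incident to $v_i$, with edges $\{v_j, v_i\}$, $j<i$, going from $E_i^c$-side into $E_{i-1}$, and edges $\{v_i, v_k\}$, $k > i$, lying in $E_i$. I expect the intended argument charges, for each $u \in B_i$, the cut-edge $\{u, w_u\} \in E_i \cap E_{i-1}$ and separately uses that $\{u, v_i\}$ contributes to making $|E_{i-1}|$ larger, ultimately comparing both $|E_i| \le \ctw$ and $|E_{i-1}| \le \ctw$ and averaging; the "$+1$" absorbs the single vertex $v_i$. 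The main obstacle is exactly pinning down this double-counting so that each vertex of $B_i$ is charged two distinct cut-edges (across the two cuts $E_{i-1}, E_i$) while each vertex of $A_i$ is charged two within $E_i$ and each vertex of $R_i$ one — and I expect the bookkeeping with $v_i$'s incident edges and the identity $X_i \setminus \{v_i\} = X_{i-1}$ to be the delicate part. I would carry this out, then conclude by rearranging the resulting inequality $2|A_i| + 2|B_i| + |R_i| \le \ctw$ into the claimed form.
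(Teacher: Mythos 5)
There is a genuine gap: you correctly identify the target inequality $2|A_i| + 2|B_i| + |R_i| \leq \ctw$ (and deduction from it), but you never actually prove it, and your proposed routes are either wrong or left as "delicate bookkeeping to be carried out." In particular, the "averaging $|E_i| \leq \ctw$ and $|E_{i-1}| \leq \ctw$" idea cannot work: from $E_i$ you can only extract $2|A_i| + |B_i| + |R_i| \leq \ctw$, and averaging this against any bound from $E_{i-1}$ only weakens whichever one was stronger. Your other framing — charging each $B_i$ vertex to one edge of $E_i$ and one edge of $E_{i-1}\setminus E_i$ — could be made to work, but only after you observe that the $A_i$- and $R_i$-edges (and the $B_i$-to-$Y_i$ edges) also lie in $E_{i-1}$, so that everything accumulates in $E_{i-1}$ alone. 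You never make that observation; instead you persistently treat the $Y_i$-edges as $E_i$-edges only, which is exactly why you see two incompatible cuts and "delicate" bookkeeping where there is none.

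The missing observation is this: $A_i, B_i, R_i \subseteq X_i \setminus \{v_i\} \subseteq X_{i-1}$, so each such vertex $u$ sits at position $\leq i-1$; meanwhile every vertex of $Y_i$ sits at position $> i > i-1$ and $v_i$ sits at position $i > i-1$. Hence every edge from $u$ to $Y_i$ and the edge $\{u,v_i\}$ all cross the $(i-1)$th cut. Every vertex of $A_i$ therefore contributes $\geq 2$ edges to $E_{i-1}$ (two to $Y_i$), every vertex of $B_i$ contributes $\geq 2$ edges to $E_{i-1}$ (one to $Y_i$, one to $v_i$), every vertex of $R_i$ contributes $\geq 1$ (since $R_i \subseteq X_{i-1}$ and each such $u$ has a neighbor in $Y_i \subseteq Y_{i-1}$), and all these edges are distinct because they have distinct left endpoints in $X_{i-1}$ and for a fixed left endpoint distinct right endpoints. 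So $2|A_i| + 2|B_i| + |R_i| \leq |E_{i-1}| \leq \ctw$, and the lemma follows as you described. This is the paper's proof; it uses $E_{i-1}$ only, with no comparison of two cuts.
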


\begin{proof}
	It holds that $A_i \cap B_i = \emptyset$ and every vertex in $A_i \cup B_i$ has at least two incident edges in the $(i-1)$th cut while every vertex in $R_i \subseteq X_i \setminus \{v_i\} \subseteq X_{i-1}$ has at least one incident edge in the $(i-1)$th cut. Thereby,
	\[
		|R_i| + 2\bigl(|A_i| + |B_i|\bigr) \leq \ctw
	\]
	holds and hence we have
	\[
		|X_i \setminus R_i| = |A_i \dot\cup B_i \dot\cup \{v_i\}| \leq |A_i \dot\cup B_i| + 1 \leq \bigl(\ctw - |R_i|\bigr) / 2 + 1
	\]
	as desired.
\end{proof}

For $i \in [n]$, a function $f: C^{X_i} \times \ZZ^2 \to \FF$ is \emph{fully reduced} if for every $v \in X_i$ that has exactly one neighbor in $Y_i$ (i.e., $v$ has degree one in the subgraph $H_i$), the vertex $v$ is reduced.

\begin{lemma}\label{lemma:dp-reduced-version}
	Let $2 \leq i \leq n$. Suppose that a table $\widehat{T_{i-1}}$ is fully reduced, the table $\widehat{T_{i-1}}$ $(i-1)$-represents $T_{i-1}$, and it holds that $(\widehat{T_{i-1}})^{-1}\bigl(\ZZ \setminus \{0\}\bigr) \subseteq C^{X_{i-1}} \times [i-1]_0 \times \bigl[(i-1)N\bigr]_0$.
	Given $\widehat{T_{i-1}}$ and a set $R_{i-1}$ of reduced vertices for $\widehat{T_{i-1}}$, it is possible to compute a function $\widehat{T_i}$ that $i$-represents $T_i$ in time 
	\[
		\mathcal{O}^*\bigl(\rank(M)^{|R_{i-1}|} |C|^{|X_{i-1}| - |R_{i-1}|} N\bigr),
	\]
along with a set $R_i$ of reduced vertices for $\widehat{T_i}$ such that 
	\[
		|X_i \setminus R_i| \leq \bigl(\ctw - |R_i|\bigr) / 2 + 1
	\]
	and it holds that $(\widehat{T_i})^{-1}\bigl(\ZZ \setminus \{0\}\bigr) \subseteq C^{X_i} \times [i]_0 \times [iN]_0$.
\end{lemma}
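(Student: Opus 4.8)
The plan is to carry out the inductive step in two phases. In the first phase I push the table $\widehat{T_{i-1}}$ through the recurrence of \cref{lem:dp-equality} to obtain an intermediate table $\widetilde{T_i}\colon C^{X_i}\times\ZZ^2\to\FF$ that $i$-represents $T_i$; in the second phase I repeatedly invoke the algorithm \textbf{Reduce} of \cref{lemma:reduce-algorithm} to turn $\widetilde{T_i}$ into a fully reduced table $\widehat{T_i}$, and then take $R_i := X_i\setminus(A_i\cup B_i\cup\{v_i\})$ as in \cref{lem:sets-A-i-B-i-R-i}.

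For the first phase I set
\[
	\widetilde{T_i}[c,\order,\weight] := \bigl[c(v_i)\in a(v_i)\bigr]\sum_{\substack{z\in C^{X_{i-1}}\\ z\sim c}}\widehat{T_{i-1}}\Bigl[z,\,\order-[c(v_i)\in Q],\,\weight-\omega(v_i)\cdot[c(v_i)\in Q]\Bigr],
\]
i.e.\ the right-hand side of \cref{lem:dp-equality} with $\widehat{T_{i-1}}$ substituted for $T_{i-1}$. To show that $\widetilde{T_i}$ $i$-represents $T_i$, I fix $y\in C^{Y_i}$ and $\order,\weight\in\ZZ$, expand both sides of the $i$-representation identity via \cref{lem:dp-equality}, and group the summands by the color $s := c(v_i)\in a(v_i)$. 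By \cref{obs:compatibility} and \cref{lem:compatibility-y}, the inner double sum over all $(c,z)$ with $c\sim y$, $c(v_i)=s$ and $z\sim c$ equals $\bigl[\,[v_i\mapsto s]\sim y\,\bigr]$ times a single sum over the colorings $z\in C^{X_{i-1}}$ satisfying $z\sim[v_i\mapsto s]$ and $z\sim y$, with $c$ recovered as $z_{|_{X_i\setminus\{v_i\}}}[v_i\mapsto s]$ (the map $(c,z)\mapsto z$ is a bijection onto this set of $z$'s). The conjunction of $z\sim[v_i\mapsto s]$ and $z\sim y$ tests exactly the edges from $X_{i-1}$ to $Y_{i-1}$ — this uses $Y_{i-1}\setminus\{v_i\}\subseteq Y_i$ together with the fact that a vertex $v_j$ with $j>i-1$ has a neighbor in $X_{i-1}$ iff it lies in $Y_{i-1}$ — and is therefore equivalent to $z\sim y'$ for the coloring $y'\in C^{Y_{i-1}}$ that coincides with $y$ on $Y_{i-1}\setminus\{v_i\}$ and maps $v_i\mapsto s$ when $v_i\in Y_{i-1}$. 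Invoking the $(i-1)$-representation hypothesis for $\widehat{T_{i-1}}$ replaces $\sum_{z\sim y'}\widehat{T_{i-1}}[z,\cdot,\cdot]$ by $\sum_{z\sim y'}T_{i-1}[z,\cdot,\cdot]$, and running the same manipulation backwards with $T_{i-1}$ in place of $\widehat{T_{i-1}}$ yields the $i$-representation identity for $\widetilde{T_i}$. The support bound $(\widetilde{T_i})^{-1}(\ZZ\setminus\{0\})\subseteq C^{X_i}\times[i]_0\times[iN]_0$ is immediate from that of $\widehat{T_{i-1}}$, since the order and weight are shifted by at most $1$ and $N$. Moreover, every $u\in X_i\setminus\{v_i\}$ that is reduced in $\widehat{T_{i-1}}$ remains reduced in $\widetilde{T_i}$ (as $z\sim c$ forces $z(u)=c(u)$), so $R_{i-1}\cap X_i$ is a set of reduced vertices for $\widetilde{T_i}$. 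Finally, $\widetilde{T_i}$ is computed by the scatter procedure from the proof of \cref{lemma:direct-DP-long}, iterating only over the at most $\rank(M)^{|R_{i-1}|}|C|^{|X_{i-1}|-|R_{i-1}|}$ colorings $z$ with $\widehat{T_{i-1}}[z,\cdot,\cdot]\not\equiv 0$; this costs $\mathcal{O}^*\bigl(\rank(M)^{|R_{i-1}|}|C|^{|X_{i-1}|-|R_{i-1}|}N\bigr)$.

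In the second phase, let $D_i$ be the set of vertices of $X_i$ that have exactly one neighbor in $Y_i$, i.e.\ degree one in $H_i$ (note $v_i$ may lie in $D_i$). Starting from $\widetilde{T_i}$ with reduced set $R_{i-1}\cap X_i$, I apply \textbf{Reduce} once for every vertex of $D_i$ that is not yet reduced; each such call is admissible (the chosen vertex is currently non-reduced and has exactly one neighbor in $Y_i$, and the support bound holds), preserves the support bound, and keeps all previously reduced vertices reduced. By transitivity of $(X_i,Y_i)$-representation, the resulting table $\widehat{T_i}$ still $i$-represents $T_i$, and it is now fully reduced. Since $\rank(M)\le|C|$, the quantity $\rank(M)^{|R|}|C|^{|X_i|-|R|}N = |C|^{|X_i|}(\rank(M)/|C|)^{|R|}N$ is non-increasing in $|R|$, so the at most $|X_i|\le n$ calls are dominated by the first one (with $|R| = |R_{i-1}\cap X_i|$); plugging in $|X_i| = |X_{i-1}| + 1 - |X_{i-1}\setminus X_i|$ and $|R_{i-1}\cap X_i| = |R_{i-1}| - |R_{i-1}\cap(X_{i-1}\setminus X_i)|$ and using $|R_{i-1}\cap(X_{i-1}\setminus X_i)| \le |X_{i-1}\setminus X_i|$ shows this first call costs $\mathcal{O}^*\bigl(\rank(M)^{|R_{i-1}|}|C|^{|X_{i-1}|-|R_{i-1}|}N\bigr)$, which is the claimed bound. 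It remains to justify the choice $R_i := X_i\setminus(A_i\cup B_i\cup\{v_i\})$: every $u\in R_i$ has exactly one neighbor in $Y_i$ (each vertex of $X_i\setminus\{v_i\}$ has at least one, and $u\notin A_i$ rules out two or more), so $u\in D_i$ and hence $u$ is reduced in the fully reduced table $\widehat{T_i}$; the inequality $|X_i\setminus R_i| \le (\ctw - |R_i|)/2 + 1$ is exactly \cref{lem:sets-A-i-B-i-R-i}; and the support bound for $\widehat{T_i}$ is the one carried through from $\widetilde{T_i}$.

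I expect the correctness argument of the first phase to be the main obstacle: one has to match, over $\FF$, the $Y_i$-compatibility constraints that occur in the $i$-representation identity for $\widetilde{T_i}$ with the $Y_{i-1}$-compatibility constraints for which a representation guarantee on $\widehat{T_{i-1}}$ is available. These two families differ exactly because edges incident to $v_i$ are tested by $Y_{i-1}$-colorings but not by $Y_i$-colorings, whereas in the recurrence of \cref{lem:dp-equality} those edges are tested separately via the factor $[v_i\mapsto c(v_i)]\sim y$; reconciling this through \cref{lem:compatibility-y}, while keeping track of whether $v_i$ itself belongs to $Y_{i-1}$, is the delicate point. The remaining parts are essentially bookkeeping — in particular, one must verify that passing from $X_{i-1}$ to $X_i$ (and from $R_{i-1}$ to $R_{i-1}\cap X_i$, and to $R_i$) does not worsen the exponent, which reduces to the elementary identity $|X_i| = |X_{i-1}| + 1 - |X_{i-1}\setminus X_i|$ combined with $\rank(M)\le|C|$.
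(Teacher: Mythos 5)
Your proof is correct, and the first phase (establishing that $\widetilde{T_i}$ defined by plugging $\widehat{T_{i-1}}$ into the recurrence of \cref{lem:dp-equality} $i$-represents $T_i$) follows essentially the same chain of re-indexings via \cref{obs:compatibility} and \cref{lem:compatibility-y} that the paper uses, reducing the $Y_i$-compatibility constraint to a $Y_{i-1}$-compatibility constraint for which the $(i-1)$-representation hypothesis applies. Where you genuinely diverge is the second phase. The paper \emph{does not} invoke \textbf{Reduce} inside this lemma at all: it takes $\widehat{T_i}$ to be the raw table produced by the recurrence and proves by contradiction that $R_i = X_i\setminus(A_i\cup B_i\cup\{v_i\})$ is \emph{already} a set of reduced vertices of that raw table, using the full-reducedness of $\widehat{T_{i-1}}$ — specifically, for any $u\in R_i$, the fact that $u$ has exactly one neighbor in $Y_{i-1}$ means a nonzero entry with $c(u)>\rank(M)$ in $\widehat{T_i}$ would back-propagate through $z\sim c$ (forcing $z(u)=c(u)$) to a nonzero entry of $\widehat{T_{i-1}}$ contradicting its full-reducedness. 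The actual \textbf{Reduce} calls happen only in step~2 of \cref{thm:whole-algorithm-reduced}. You instead fold those \textbf{Reduce} calls into the lemma itself, which means you output a fully reduced table and obtain $R_i\subseteq D_i$ trivially; the trade-off is that you pay a chain of up to $|X_i|$ \textbf{Reduce} invocations inside the lemma, whose cost you then need to dominate carefully (which you do correctly via the monotonicity of $\rank(M)^{|R|}|C|^{|X|-|R|}$ in $|R|$ and the identity $|X_i|=|X_{i-1}|+1-|X_{i-1}\setminus X_i|$). One noteworthy consequence of your route: you never actually use the hypothesis that $\widehat{T_{i-1}}$ is fully reduced — only that $R_{i-1}$ is a set of reduced vertices — so you prove a slightly more general statement; the paper's leaner argument is exactly what makes that hypothesis necessary there.
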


\begin{proof} 
	For $c \in C^{X_i}$ and $\order, \weight \in \ZZ$, we define 
	\begin{equation}\label{eq:reduced-dp}
		\widehat{T_i}[c, \order, \weight] \equiv \bigl[c(v_i) \in a(v_i)\bigr] \sum\limits_{\substack{z \in C^{X_{i-1}} \\ z \sim c}} \widehat{T_{i-1}} \Bigl[z, \order - [c(v_i) \in Q], \weight - \omega(v_i)\cdot\bigl[c(v_i) \in Q\bigr]\Bigr].
	\end{equation}
	Note that this is the same recurrence as in \cref{lem:dp-equality} but applied to a fully reduced table $\widehat{T_{i-1}}$ $(i-1)$-representing $T_{i-1}$.
	Observe that properties $(\widehat{T_{i-1}})^{-1}\bigl(\ZZ \setminus \{0\}\bigr) \subseteq C^{X_{i-1}} \times [i-1]_0 \times \bigl[(i-1)N\bigr]_0$ and  $\omega(v_i) \in [N]$ imply that $\widehat{T_i}$ satisfies the property $\widehat{T_i}^{-1}\bigl(\ZZ \setminus \{0\}\bigr) \subseteq C^{X_i} \times [i]_0 \times [iN]_0$.

	We will now show how to compute $\widehat{T_i}$ from $\widehat{T_{i-1}}$ efficiently by proceeding similarly to the proof of \cref{lemma:direct-DP-long}. Recall that by \cref{obs:compatibility}, a coloring $z \in C^{X_{i-1}}$ is compatible with a coloring $c \in C^{X_i}$ if and only if
	\[
		c_{|_{X_i \setminus \{v_i\}}} = z_{|_{X_i \setminus \{v_i\}}} \text{ and } z \sim \bigl[v_i \mapsto c(v_i)\bigr]
	\]
	hold. First, note that we do not need to store the entries $\widehat{T_i}[c, \order, \weight] \equiv 0$ for $\order \notin [i]_0$ or $\weight \notin[iN]_0$ explicitly. So from now on, we may assume that $\order \in [i]_0$ and $\weight \in [iN]_0$ hold.   
	Since $\widehat{T_{i-1}}$ is fully reduced, to compute the sum in \eqref{eq:reduced-dp}, it suffices to only consider such colorings $z \in C^{X_{i-1}}$ that for all $v \in R_{i-1}$ we have $z(v) \leq \rank(M)$.
	There are $\rank(M)^{R_{i-1}} |C|^{|X_{i-1}| - |R_{i-1}|}$ colorings of $X_{i-1}$ with this property.
	To compute the table $\widehat{T_i}$, we iterate over all such colorings $z \in C^{X_{i-1}}$ and all $s \in a(v_i)$. Then we determine if $z \sim [v_i \mapsto s]$ holds by looking at the edges having one end-vertex in $X_{i-1}$ and the other being $v_i$ and checking if their colors are consistent with respect to $M$. If so, for all $\order \in [i-1]_0$ and all $\weight \in \bigl[(i-1)N\bigr]_0$, we increase the value
	\[
		\widehat{T_i}\bigl[z_{|_{X_i \setminus \{v_i\}}}[v_i \mapsto s], \order + [s \in Q], \weight + \omega(v_i)\cdot[s \in Q]\bigr]
	\]
	by the value
	\[
		\widehat{T_{i-1}}[z, \order, \weight] 
	\]
	(if it was already initialized) or we initialize it with this value (otherwise). The non-initialized entries are implicitly treated as zero. Note that in particular, for any $c \in C^{X_i}$ with $c(v_i) \notin a(v_i)$, the value $\widehat{T_i}[c, \order, \weight]$ is zero. So by the end of this iteration, the table $\widehat{T_i}$ contains exactly the values as defined in \eqref{eq:reduced-dp}. The running time of the computation of $\widehat{T_i}$ is then bounded by
	\[
		\mathcal{O}\Bigl(\rank(M)^{|R_{i-1}|} |C|^{|X_{i-1}| - |R_{i-1}|} \bigl|a(v_i)\bigr| n (nN) \poly(n)\Bigr)
	\]
	and hence, the table can be computed within the claimed bound.
	
	Next we prove that $\widehat{T_i}$ indeed $i$-represents $T_i$. Let $y \in C^{Y_i}$ and $\order, \weight \in \ZZ$ be arbitrary but fixed. We need to show that 
	\[
		\sum\limits_{\substack{c \in C^{X_i} \\ c \sim y}} \widehat{T_i}[c, \order, \weight] \equiv \sum\limits_{\substack{c \in C^{X_i} \\ c \sim y}} T_i[c, \order, \weight]
	\]
	holds. 
	Recall that for any coloring $c \in C^{X_i}$ with $c(v_i) \notin a(v_i)$, we have
	\[
		\widehat{T_i}[c, \order, \weight] \equiv 0 \equiv T_i[c, \order, \weight].
	\]
	So it suffices to prove that
	\begin{equation}\label{eq:representation-for-gen-framework}
		\sum\limits_{\substack{c \in C^{X_i} \\ c(v_i) \in a(v_i) \\c \sim y}} \widehat{T_i}[c, \order, \weight] \equiv \sum\limits_{\substack{c \in C^{X_i} \\ c(v_i) \in a(v_i) \\ c \sim y}} T_i[c, \order, \weight]
	\end{equation}
	holds.
	
	For $\order \notin [i]_0$ or $\weight \notin [iN]_0$, the equality holds since both sides are equal to zero. So assume $\order \in [i]_0$ and $\weight \in [iN]_0$. For shortness of notation, we define a function $g: \ZZ^2 \times C \to \ZZ^2$ as
	\[
		g(\order', \weight', s) = \bigl(\order' - [s \in Q], \weight' - \omega(v_i) \cdot [s \in Q]\bigr).
	\]
	for all $\order', \weight' \in \ZZ$ and $s \in C$. 
	Then it holds that
	\begin{multline*}
		\sum\limits_{\substack{c \in C^{X_i} \\ c(v_i) \in a(v_i) \\ c \sim y}} T_i[c, \order, \weight] \\ 
		\stackrel{\eqref{eq:reduced-dp}}{\equiv} \sum\limits_{\substack{c \in C^{X_i} \\ c(v_i) \in a(v_i) \\ c \sim y}} \sum\limits_{\substack{z \in C^{X_{i-1}} \\ z \sim c}} T_{i-1}\Bigl[z, g\bigl(\order, \weight, c(v_i)\bigr)\Bigr] \\
		\equiv \sum\limits_{z \in C^{X_{i-1}}}\sum\limits_{\substack{c \in C^{X_i} \\ c(v_i) \in a(v_i) \\ c \sim y \\ z \sim c}} T_{i-1}\Bigl[z, g\bigl(\order, \weight, c(v_i)\bigr)\Bigr] \\
		\stackrel{\text{\cref{lem:compatibility-y}}}{\equiv} \sum\limits_{z \in C^{X_{i-1}}} \sum_{s \in a(v_i)}  T_{i-1}\Big[z, g\bigl(\order, \weight, z_{|_{X_i \setminus \{v_i\}}}[v_i \mapsto s](v_i)\bigr)\Bigr] 1_{z \sim [v_i \mapsto s]} 1_{[v_i \mapsto s] \sim y} 1_{z \sim y} \\
		\equiv \sum\limits_{z \in C^{X_{i-1}}} \sum_{s \in a(v_i)} T_{i-1}\bigl(z, g(\order, \weight, s)\bigr) 1_{z \sim [v_i \mapsto s]} 1_{[v_i \mapsto s] \sim y} 1_{z \sim y} \\
		\equiv \sum\limits_{s \in a(v_i)} 1_{[v_i \mapsto s] \sim y} \sum\limits_{\substack{z \in C^{X_{i-1}} \\ z \sim \bigl(y[v_i \mapsto s]\bigr)}} T_{i-1}\bigl[z, g(\order, \weight, s)\bigr] \\
		\stackrel{N(X_{i-1}) \cap \bigl(\{v_i\} \cup Y_i\bigr) \subseteq Y_{i-1}}{\equiv} \sum\limits_{s \in a(v_i)} 1_{[v_i \mapsto s] \sim y} \sum\limits_{\substack{z \in C^{X_{i-1}} \\ z \sim \bigl(y[v_i \mapsto s]\bigr)_{|_{Y_{i-1}}}}} T_{i-1}\bigl[z, g(\order, \weight, s)\bigr] \\
		\stackrel{(i-1)-\text{representation}}{\equiv} \sum\limits_{s \in a(v_i)} 1_{[v_i \mapsto s] \sim y} \sum\limits_{\substack{z \in C^{X_{i-1}} \\ z \sim \bigl(y[v_i \mapsto s]\bigr)_{|_{Y_{i-1}}}}} \widehat{T_{i-1}}\bigl[z, g(\order, \weight, s)\bigr]. \\
	\end{multline*}
	Observe that all of the above congruencies still hold if we replace $T_i$ with $\widehat{T_i}$ and $T_{i-1}$ with $\widehat{T_{i-1}}$. Therefore, the equality~\eqref{eq:representation-for-gen-framework} holds and $\widehat{T_i}$ indeed $i$-represents $T_i$.

	Now we need to provide a sufficiently large set of vertices reduced in $\widehat{T_i}$. 
	Let $A_i$, $B_i$, and $R_i$ be as in \cref{lem:sets-A-i-B-i-R-i}.
	By this lemma, we know that $|X_i \setminus R_i| \leq \bigl(\ctw - |R_i|\bigr) / 2 + 1$ holds.	
	It remains to prove that the vertices in $R_i$ are indeed reduced in $\widehat{T_i}$. 
	The idea of the following proof is the same as in \cite{GroenlandMNS22}.
	Suppose there is a not reduced vertex $u \in R_i$, i.e., there exist numbers $\order, \weight \in \ZZ$, and a coloring $c \in C^{X_i}$ such that $c(u) > \rank(M)$ and $\widehat{T_i}[c, \order, \weight] \not\equiv 0$. Due to \eqref{eq:reduced-dp}, there exists a coloring $z \in C^{X_{i-1}}$ such that $z \sim c$ and 
	\begin{equation}\label{eq:proof-reduced-vtx}
		\widehat{T_{i-1}}\Bigl[z, \order - \bigl[c(v_i) \in Q\bigr], \weight - \omega(v_i)\bigl[c(v_i) \in Q\bigr]\Bigr] \not\equiv 0
	\end{equation}
	hold.
	Recall that $u \in R_i \subseteq X_i \setminus \{v_i\} \subseteq X_{i-1}$ holds. Due to $z \sim c$, we have $z(u) = c(u) > \rank(M)$. Since $u \notin A_i \cup B_i$, the vertex $u$ has at most one incident edge going across the $(i-1)$th cut. Moreover, $u \in X_{i-1}$ so it has at least one edge going across the $(i-1)$th cut. Thus, the vertex $u$ has exactly one neighbor in $Y_{i-1}$. But then \eqref{eq:proof-reduced-vtx} implies that $\widehat{T_{i-1}}$ is not fully reduced -- a contradiction. Therefore $R_i$ is a set of reduced vertices of $\widehat{T_i}$. Clearly, $R_i$ can be computed in polynomial time.
\end{proof}

Now we are ready to assemble a faster algorithm for the problem $P$.

\begin{theorem}\label{thm:whole-algorithm-reduced}
	Let $P$ be a coloring-like problem with $\sqrt{|C|} \leq \rank(M)$ holds. Then $P$ can be solved in $\mathcal{O}^*\bigl(\rank(M)^{\ctw} N\bigr)$ time.
\end{theorem}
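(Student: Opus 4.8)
The plan is to run the dynamic program of \cref{subsection:DP-for-CVC-short} along the linear arrangement, but to carry, for every $i$, a \emph{fully reduced} table $\widehat{T_i}$ that $i$-represents $T_i$ together with a set $R_i$ of reduced vertices for $\widehat{T_i}$ satisfying $|X_i\setminus R_i|\le(\ctw-|R_i|)/2+1$ and the support bound $\widehat{T_i}^{-1}\bigl(\ZZ\setminus\{0\}\bigr)\subseteq C^{X_i}\times[i]_0\times[iN]_0$. For $i=1$ we set $\widehat{T_1}:=T_1$ (computable by brute force since $X_1=\{v_1\}$) and $R_1:=\emptyset$; if $v_1$ has exactly one neighbour in $Y_1$ we call \textbf{Reduce} (\cref{lemma:reduce-algorithm}) once to make $v_1$ reduced, so $\widehat{T_1}$ becomes fully reduced, and the support bound is met because $\omega(v_1)\le N$. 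For the step $i-1\to i$ we first apply \cref{lemma:dp-reduced-version} to the fully reduced table $\widehat{T_{i-1}}$ and (the largest known) reduced set; this yields a table $\widehat{T_i}$ that $i$-represents $T_i$, a reduced set $R_i$ as above, and the claimed support bound. This $\widehat{T_i}$ need not be fully reduced: the only degree-one vertices of $H_i$ missing from $R_i=X_i\setminus(A_i\cup B_i\cup\{v_i\})$ are those in $B_i$, together with $v_i$ itself when $v_i$ has a single neighbour in $Y_i$ (every other degree-one vertex of $H_i$ is neither in $A_i$ nor equal to $v_i$, hence lies in $R_i$). Each of these vertices has exactly one neighbour in $Y_i$ and is not yet reduced, so we call \textbf{Reduce} once for each of them (in any order), enlarging the reduced set each time; by transitivity of $(X_i,Y_i)$-representation the outcome still $i$-represents $T_i$ and is now fully reduced, with the support bound preserved. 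Finally, since $Y_n=\emptyset$ every $x\in C^{X_n}=\bigl\{[v_n\mapsto s]:s\in C\bigr\}$ is compatible with the empty colouring, so $n$-representation gives $\sum_{s\in C}\widehat{T_n}\bigl[[v_n\mapsto s],\totalorder,\totalweight\bigr]\equiv\sum_{s\in C}T_n\bigl[[v_n\mapsto s],\totalorder,\totalweight\bigr]$, and we output this value (modulo $p$ if $p$ is part of the input).

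For the running time, in step $i-1\to i$ we make one call to \cref{lemma:dp-reduced-version}, costing $\mathcal{O}^*\bigl(\rank(M)^{|R_{i-1}|}|C|^{|X_{i-1}|-|R_{i-1}|}N\bigr)$, and at most $|B_i|+1=\mathcal{O}(\ctw)$ calls to \textbf{Reduce} with ambient set $X_i$, each costing $\mathcal{O}^*\bigl(\rank(M)^{|R|}|C|^{|X_i|-|R|}N\bigr)$ for some reduced set $R\supseteq R_i$. Since $\rank(M)\le|C|$, enlarging a reduced set never increases such a bound, so every one of these calls costs at most $\mathcal{O}^*\bigl(\rank(M)^{|R_j|}|C|^{|X_j|-|R_j|}N\bigr)$ for $j\in\{i-1,i\}$. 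The proof of \cref{lem:sets-A-i-B-i-R-i} gives $|R_j|\le\ctw$, and \cref{lemma:dp-reduced-version} gives $|X_j|-|R_j|\le(\ctw-|R_j|)/2+1$, so
\[
	\rank(M)^{|R_j|}\,|C|^{|X_j|-|R_j|} \;\le\; \rank(M)^{|R_j|}\,|C|^{(\ctw-|R_j|)/2}\,|C| \;\le\; \rank(M)^{|R_j|}\,\rank(M)^{\ctw-|R_j|}\,|C| \;=\; |C|\cdot\rank(M)^{\ctw},
\]
where the second inequality is exactly where the hypothesis $\sqrt{|C|}\le\rank(M)$ is used, together with $\ctw-|R_j|\ge0$. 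Hence each of the $n$ steps, and therefore the whole algorithm, runs in $\mathcal{O}^*\bigl(\rank(M)^{\ctw}N\bigr)$ time.

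I expect the main obstacle to be the bookkeeping of the invariant rather than any new idea: one has to verify that \cref{lemma:dp-reduced-version} leaves exactly (a subset of) $B_i\cup\{v_i\}$ to be re-reduced, that every subsequent \textbf{Reduce} call meets its preconditions (disjointness of $X_i$ and $Y_i$, the exactly-one-neighbour-in-$Y_i$ condition, and the inductively maintained support bound $\subseteq C^{X_i}\times[i]_0\times[iN]_0$), and that these extra polynomially many calls stay within the per-step budget because shrinking the "unreduced" part can only help when $\rank(M)\le|C|$. The one genuinely quantitative point is the displayed inequality, which shows that $\sqrt{|C|}\le\rank(M)$ is precisely the threshold making $|C|^{(\ctw-|R_j|)/2}\le\rank(M)^{\ctw-|R_j|}$. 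Finally, a word is needed on the ground field: if $p$ is part of the input all arithmetic is in $\FF_p$ and numbers stay on $\mathcal{O}(\log p)$ bits, whereas over $\QQ$ one keeps numbers polynomially bounded by instead fixing a prime $q>|C|^{|V|}$ with $\rank_{\FF_q}(M)=\rank_\QQ(M)$ (all but finitely many $q$ work, since some $\rank_\QQ(M)\times\rank_\QQ(M)$ minor of the $\{0,1\}$-matrix $M$ is a nonzero integer) and running the computation in $\FF_q$; the result then equals the exact integer count, as that count is a non-negative integer below $q$.
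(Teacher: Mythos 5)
Your proof is correct and follows essentially the same plan as the paper: initialize $\widehat{T_1}$, then alternate between the dynamic-programming step of \cref{lemma:dp-reduced-version} and repeated calls to \textbf{Reduce} from \cref{lemma:reduce-algorithm} until fully reduced, and close with the $|X_i\setminus R_i|\le(\ctw-|R_i|)/2+1$ bound and the hypothesis $\sqrt{|C|}\le\rank(M)$. Your identification of the vertices still needing reduction after \cref{lemma:dp-reduced-version} as exactly $B_i$ (plus possibly $v_i$) is a small sharpening of the paper's formulation, which simply loops until no degree-one vertex of $H_i$ remains unreduced; both give the same $\mathcal{O}(|X_i|)$ bound on the number of \textbf{Reduce} calls. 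The one place you go beyond the paper is the remark on the ground field when $p$ is not part of the input: the paper's bit-size argument in \cref{lemma:direct-DP-long} is stated only for the non-reduced tables and does not explicitly cover the rational entries produced by the coefficients $d_{b,j}$ in \textbf{Reduce}. Your workaround of switching to $\FF_q$ for a large enough prime $q$ preserving $\rank_\QQ(M)$ is sound (and you correctly note that any prime not dividing a fixed nonzero $\rank_\QQ(M)\times\rank_\QQ(M)$ minor works), though you should also indicate how such a $q$ is found in polynomial time, e.g.\ by bounding the minor via Hadamard's inequality and taking the smallest admissible prime; alternatively one can argue directly that over $\QQ$ the denominators compound only polynomially across the $n$ reduction steps, so either route closes the gap. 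None of this affects the paper's intended applications, which always supply $p=2$.
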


\begin{proof}
	We initialize the table $T_1$ via brute-force: we go through all $s \in C$, all $\order \in \{0, 1\}$, and all $\weight \in \bigl\{0, \omega(v_1)\bigr\}$ and compute the corresponding value $T_1\bigl[[v_1 \mapsto s], \order, \weight\bigr]$. All other entries of $T_1$ are implicitly set to $0$. If $v_1$ has exactly one neighbor in $G$, apply the \textbf{Reduce} algorithm from \cref{lemma:reduce-algorithm} or keep $T_1$ unchanged otherwise. As a result, obtain a fully reduced function $\widehat{T_1}$ that $1$-represents $T_1$ with some set of reduced vertices $R_1$. For $i = 2, \dots, n$ we repeat the following two steps.
	\begin{enumerate}
		\item Apply \cref{lemma:dp-reduced-version} with inputs $(\widehat{T_{i-1}}, R_{i-1})$ in order to obtain the table $\widehat{T_i}$ that $i$-represents $T_i$ and a set of reduced vertices $R_i$ for $\widehat{T_i}$.
		\item While $X_i \setminus R_i$ has a vertex that has exactly one neighbor in $Y_i$, apply the \textbf{Reduce} algorithm from \cref{lemma:reduce-algorithm} to $i$ and $(\widehat{T_i}, R_i)$ to obtain a new table $\widehat{\widehat{T_i}}$ that $i$-represents $\widehat{T_i}$ and whose reduced vertices are $R_i \cup \{v\}$ and set $\widehat{T_i} := \widehat{\widehat{T_i}}$ and $R_i := R_i \cup \{v\}$.
	\end{enumerate}

	At the end of step 2, we obtain a fully reduced table $\widehat{T_i}$ that $i$-represents $T_i$. Moreover, the set $R_i$ of reduced vertices has only increased in size compared to the set we obtained in step 1. Hence, there are at most $|X_i|$ repetitions of step 2. 

	As a result of this process, we have computed the entries of a table $\widehat{T_n}$ that is fully reduced and $n$-represents $T_n$ and a set $R_n$ of reduced vertices of $\widehat{T_n}$. We output the value 
	\[
		\sum\limits_{c \in C^{X_n}} \widehat{T_n}[c, \totalorder, \totalweight].
	\]  
	Note that $X_n = \{v_n\}$ and hence, this sum contains only $|C|$ addends. Next observe that $Y_n = \emptyset$ holds, so there is a unique (empty) coloring $y^*: Y_n \to C$ of $Y_n$ and every coloring of $X_n$ is compatible with $y^*$. Thereby, the following holds:
	\begin{align*}
		&\sum\limits_{c \in C^{X_n}} \widehat{T_n}[c, \totalorder, \totalweight] \\
		=& \sum\limits_{\substack{c \in C^{X_n} \\ c \sim y^*}} \widehat{T_n}[c, \totalorder, \totalweight] \\
		\stackrel{n-\text{representation}}{\equiv}& \sum\limits_{\substack{c \in C^{X_n} \\ c \sim y^*}} T_n[c, \totalorder, \totalweight] \\
		=& \sum\limits_{c \in C^{X_n}} T_n[c, \totalorder, \totalweight].
	\end{align*}
	So the output value is the number of valid colorings of $G$ 
 	of order $\totalorder$ and weight $\totalweight$ as desired.

	The total running time is now bounded by 
	\[
		\sum\limits_{i=1}^{n} \mathcal{O}^*\bigl(|X_i| \rank(M)^{|R_i|} |C|^{|X_i| - |R_i|} N\bigr) = \sum\limits_{i=1}^{n} \mathcal{O}^*\bigl(\rank(M)^{|R_i|} |C|^{|X_i| - |R_i|} N\bigr).
	\]
	Recall that by \cref{lemma:dp-reduced-version}, for every $2 \leq i \leq n$, it holds that $|X_i \setminus R_i| \leq \bigl(\ctw - |R_i|\bigr) / 2 + 1$. 
	Then we have:
	\begin{align*}
		& \rank(M)^{|R_i|} |C|^{|X_i| - |R_i|} \\
		\leq &\rank(M)^{|R_i|} |C|^{\bigl(\ctw - |R_i|\bigr) / 2 + 1} \\
		= &\rank(M)^{|R_i|} \sqrt{|C|}^{\ctw - |R_i|} |C| \\
		\stackrel{\sqrt{|C|} \leq \rank(M)}{\leq} &\rank(M)^{|R_i|} \rank(M)^{\ctw - |R_i|} |C| \\
		= &\rank(M)^{\ctw} |C|.
	\end{align*}
	Hence, the running time is bounded by $\mathcal{O}^*\bigl(\rank(M)^{\ctw} N\bigr)$.
\end{proof}

\subsection{\textsc{Connected Vertex Cover}} \label{subsec:cvc-upper-bound}

A \emph{vertex cover} of a graph is a set of vertices such that every edge of the graph is incident to some vertex in this set. 
So the \Pcvc{} is defined as follows.

\begin{quote}
	\textsc{Connected Vertex Cover}

	\textbf{Input}: A graph $G = (V, E)$ and an integer $k$.
	
	\textbf{Question}: Is there a vertex cover $S \subseteq V$ of size $k$ such that $G[S]$ is connected.
\end{quote}

\begin{theorem}\label{thm:cvc-ub}
	There exists an algorithm that given a graph $G$ and its linear arrangement of cutwidth $\ctw$ in time $\OO^*(2^{\ctw})$ solves the \textsc{Connected Vertex Cover} problem. The algorithm cannot give false positives and may give false negatives with probability at most~$1/2$.
\end{theorem}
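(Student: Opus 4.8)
The plan is to apply \cref{thm:cut-and-count-short}, which reduces \Pcvc{} to the task of computing, for a fixed vertex $v$, a weight function $\omega\colon V \to \bigl[2|V|\bigr]$, and integers $\order$ and $\weight$, the parity of $|\CCC^\order_\weight|$ — i.e. the number of pairs $(S, (L, R))$ such that $S$ is a vertex cover of $G$ with $|S| = \order$ and $\omega(S) = \weight$, $v \in L$, and $(L, R)$ is a consistent cut of $S$ — in time $\OO^*(2^{\ctw})$. I would phrase this counting task as a coloring-like problem and then invoke \cref{thm:whole-algorithm-reduced}.

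Concretely, I would use the color set $C = \{1, 2, 3\}$, where color $1$ means ``in $L$'', color $2$ means ``in $R$'', and color $3$ means ``outside of $S$'', set the special colors to $Q = \{1, 2\}$, and let the prime be $p = 2$. On an edge $\{x, y\}$ two things must hold: the edge must be covered, so not both endpoints may receive color $3$; and the edge must not cross the cut, so colors $1$ and $2$ may not occur on its two endpoints simultaneously. All other combinations are allowed, which yields the symmetric consistency matrix
\[
	M = \begin{pmatrix} 1 & 0 & 1 \\ 0 & 1 & 1 \\ 1 & 1 & 0 \end{pmatrix}.
\]
The key observation — and the only step that is not a routine translation — is that $\det M = -2$, so $M$ has full rank $3$ over $\QQ$, while over $\FF_2$ the third row equals the sum of the first two, so its rank over $\FF_2$ is $2$. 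Hence the first $\rank(M) = 2$ rows already form a row basis and $\sqrt{|C|} = \sqrt{3} \le 2 = \rank(M)$, so \cref{thm:whole-algorithm-reduced} applies with the field $\FF = \FF_2$; this is precisely why the prime $p = 2$ is needed and why the base $2$, rather than $3$, shows up.

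It remains to wire up the instance: keep $G$ and its linear arrangement, take the list function $a(v) = \{1\}$ for the fixed vertex $v$ (forcing $v$ into $L$) and $a(u) = C$ for all $u \neq v$, use the given $\omega$ with $N = 2|V|$, and put $\totalorder = \order$ and $\totalweight = \weight$. A coloring $c \in C^V$ is then valid, has order $\totalorder$, and has weight $\totalweight$ exactly when $S := c^{-1}(\{1,2\})$ is a vertex cover with $|S| = \order$ and $\omega(S) = \weight$, $v \in c^{-1}(1)$, and $\bigl(c^{-1}(1), c^{-1}(2)\bigr)$ is a consistent cut of $S$ — that is, exactly when $c$ encodes a member of $\CCC^\order_\weight$. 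So the coloring-like problem counts $|\CCC^\order_\weight|$, and modulo $p = 2$ this is its parity; by \cref{thm:whole-algorithm-reduced} it is computed in time $\OO^*\bigl(\rank(M)^{\ctw} N\bigr) = \OO^*\bigl(2^{\ctw}\cdot 2|V|\bigr) = \OO^*(2^{\ctw})$. Feeding this algorithm into \cref{thm:cut-and-count-short} produces the claimed randomized $\OO^*(2^{\ctw})$ algorithm for \Pcvc{} with one-sided error at most $1/2$. The main (and essentially only) obstacle is verifying the rank drop of $M$ over $\FF_2$; the rest is bookkeeping within the coloring-like framework.
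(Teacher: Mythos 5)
Your proposal is correct and takes essentially the same approach as the paper: you define the same coloring-like problem (three colors for $L$, $R$, and ``not in the cover'', $Q$ the two cut colors, $p = 2$, a list forcing the fixed vertex into $L$), observe the same rank drop of the consistency matrix from $3$ over $\QQ$ to $2$ over $\FF_2$, and invoke \cref{thm:whole-algorithm-reduced} followed by \cref{thm:cut-and-count-short}. The only cosmetic difference is that the paper labels the colors $X, L, R$ (in that order) while you use $1, 2, 3$ with the rank-deficient row last; both orderings place a row basis in the first two rows, as the framework requires.
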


\begin{proof}
	Recall that by \cref{thm:cut-and-count-short}, to prove the claim, it suffices to provide an algorithm $\AAA$ that given $G$, $\ell$, a fixed vertex $v \in V$, a weight function $\omega: V \to \bigl[2|V|\bigr]$, and numbers $\totalorder \in [n]_0, \totalweight \in \bigl[2n|V|\bigr]_0$ computes the size of $\CCC^\totalorder_\totalweight$ modulo 2 in time $\OO^*(2^{\ctw})$.
	
	We fit this task into a coloring-like problem as follows. Let $C = \{X, L, R\}$ ($X$ stands for not belonging to a vertex cover while $L$ and $R$ are the sides of a consistent cut), let $Q = \{L, R\}$ (to count the number of vertices in a vertex cover), and let the consistency matrix $M$ have zero-entries $M[X, X] = M[L, R] = M[R, L] = 0$ whilst the remaining values are equal to one, i.e.,
	\[
		M = \kbordermatrix{
			& X & L & R  \\
			X & 0 & 1 & 1 \\
 			L & 1 & 1 & 0  \\
			R & 1 & 0 & 1  
 		}
 		.
	\]
	The consistency matrix reflects that for every edge, there must be an end-vertex in a vertex cover and there is no edge between $L$ and $R$. 
	Further, we define the list function $a$ as follows. For every $w \in V$, we set
	\[
		a(w) =
		\begin{cases}
			S & \mbox{if $w \neq v$,} \\
			\{L\} & \mbox{otherwise.}
		\end{cases}
	\]

	The instance of a coloring-like problem defined by $C$, $Q$, $M$, $p=2$, $a$, $G$, $\ell$, $\totalorder$, and $\totalweight$ then corresponds to the problem of counting the number of consistent cuts of $G$ of order $\totalorder$ and weight $\totalweight$. Observe that the consistency matrix $M$ has rank of two over $\FF_2$: the sum of the first and the second row is equal to the third row (modulo 2). Thereby, by \cref{thm:whole-algorithm-reduced}, the size of $\CCC^\order_\weight$ modulo 2 can be determined in time $O^*\bigl(2^{\ctw} 2|V|\bigr) = \OO^*(2^{\ctw})$ and this concludes the proof.
\end{proof}

\subsection{Connected Dominating Set} \label{app:cds}

A \emph{dominating set} of a graph is a set of vertices such that for every vertex of the graph not in this set, at least one of its neighbors belongs to this set. \Pcds{} is defined as follows.

\begin{quote}
	\Pcds{}

	\textbf{Input}: A graph $G = (V, E)$ and an integer $k$.

	\textbf{Question}: Is there a dominating set $S \subseteq V$ of $G$ of cardinality at most $k$ such that~$G[S]$ is connected.
\end{quote}

By \cref{thm:cut-and-count-short}, the \pcds\ problem again reduces to the problem of counting the number of certain consistent cuts (i.e., vertex-colorings). Unfortunately, the conditions on these colorings do not yet fit in our framework.
The conditions defining a general coloring-like problem are \emph{universal properties} determined by a consistency matrix $M$: \textbf{all} neighbors of a vertex of a color $a$ are only allowed to get one of colors from some set $B$. The satisfaction of these conditions can be verified by checking every edge of the graph independently. The trick from \cref{lemma:reduce-algorithm} that allowed to reduce the vertices with only one incident edge in the $i$th cut (for $i \in [n]$) utilized this property. 
However, \pcds\ contains an \emph{existential property}: for every vertex not in the dominating set, there \textbf{exists} a neighbor in the dominating set, i.e., on the left or right side of a consistent cut.
Therefore, the edges cannot be checked independently. 

We will overcome this issue with the help of the inclusion-exclusion approach.
In addition to the colors $L$, $R$ (for the sides of a consistent cut), and $D$ (for dominated vertices), to handle partial solutions we introduce two further colors $A$ and $F$. 
The color $A$ (stands for ``allowed'') denotes that a vertex is not in the partial solution but it is possibly dominated by it. The color $F$ (stands for ``forbidden'') denotes that a vertex is not in a partial solution and it is not dominated by it. 
Then the number of partial solutions to which a vertex does not belong but in which it is dominated is then equal to the number of solutions in which it is possibly dominated minus the number of solutions where it is not dominated (informally: $D = A - F$).
This is an application of the well-known inclusion-exclusion paradigm. In this form, it has been used by Pilipczuk and Wrochna to solve the \Pds\ problem parameterized by treedepth \cite{PilipczukW18}.

In this way, the neighbors of a vertex colored with $A$ are allowed to have any color and the neighbors of a vertex colored with $F$ are not allowed to have the color $L$ or $R$.
Thus, we obtain the desired universal property and can solve the problem with almost the same techniques as we used for a general coloring-like problem. 
However, we can not yet directly apply an algorithm solving a general coloring-like problem. In the end, we aim to obtain the number of certain colorings with colors $L$, $R$, and $D$. So for every vertex, we need to carry out the conversion from colors $A$ and $F$ into the color $D$ at some point. But we can only ``access'' the color of a certain vertex only as long as it belongs to the left side $X_i$ of the current cut $E_i$. Therefore, we cannot first solve the above-mentioned coloring-like problem with colors $L$, $R$, $A$, and $F$ and then convert the colors $A$ and $F$ into the color $D$. Instead, we will proceed similarly to the previous subsections but every time a vertex appears in some $X_i$ for the last time, we will ``finalize'' its colors, namely we will carry out the aforementioned conversion for this vertex and from that moment only count the colorings using a color $L$, $R$, or $D$ on it.
Now we provide the technical details of this idea. 

Here let $G = (V, E)$ again be a fixed graph with a linear arrangement $\ell$ of cutwidth $\ctw$. Let $N = 2|V|$, let $\omega: V \to [N]$ be a weight function and let $\totalorder \in [n]_0$ and $\totalweight \in [nN]_0$. Finally, let $v^* \in V$ be a distinguished vertex.
Let $\finalcolors = \{L, R, D\}$ be the so-called \emph{final} colors since they are used in the colorings we want to count, let $\colors = \{L, R, A, F\}$ be the so-called \emph{intermediate} colors as these are the colors of partial solutions, and let $\allcolors = \{L, R, D, A, F\}$ denote the set of all colors. 
For a subset $X \subseteq V$ and a coloring $f \in \allcolors^X$, we say that $f$ has \emph{order} $\Bigl|f^{-1}\bigl(\{L, R\}\bigr)\Bigr|$ and \emph{weight} $\omega\Bigl(f^{-1}\bigr(\{L, R\}\bigl)\Bigr)$.
Let 
\[
	M = \kbordermatrix{
		& L & R & A & F  \\
		L & 1 & 0 & 1 & 0  \\
 		R & 0 & 1 & 1 & 0  \\
		A & 1 & 1 & 1 & 1 \\
		F & 0 & 0 & 1 & 1
 	}
\]
be the consistency matrix indexed by intermediate colors and capturing the interpretation sketched above.

For $i \in [n]$, we say that a coloring $\phi \in \allcolors^{V_i}$ is a \emph{valid extension} of a coloring $x \in \allcolors^{S}$ (for some $S \subseteq V_i$) if the following properties hold:
\begin{enumerate}
	\item $\phi_{|_S} = x$, in this case we say that $\phi$ is an \emph{extension} of $x$,
	\item for all $v \in V_i \setminus S\colon \phi(v) \in \finalcolors$,
	\item for every edge $\{u, v\} \in E$ such that $u, v \in V_i$ and $\phi(u), \phi(v) \in \colors$, it holds that $M\bigl[\phi(u), \phi(v)\bigr] = 1$,
	\item for every vertex $u \in V_i$ such that $\phi(u) = D$, there exists a vertex $v \in V_i$ such that $\{u, v\} \in E$ and $\phi(v) \in \{L, R\}$,
	\item and if $v^* \in V_i$, then it holds that $\phi(v^*) = L$.
\end{enumerate}
The main idea behind it is that at the $i$th cut we will be interested in colorings of $V_i$ such that the vertices from $V_i \setminus X_i$ have already been ``finalized'', namely they only have colors $L$, $R$, and $D$; whilst the vertices of $X_i$ might still have intermediate colors $L$, $R$, $A$, and $F$.

Recall that for $2 \leq i \leq n$, the set $Z_i$ is defined as $Z_i = X_{i-1} \cup \{v_i\}$.
For $1 \leq i \leq n$ (resp.\ $2 \leq i \leq n$) and $x \in \colors^{X_i}$ (resp.\ $z \in \colors^{Z_i}$), we define $T_i[x, K, W]$ (resp.\ $P_i[z, K, W]$) as the number of valid extensions $\phi \in \allcolors^{V_i}$ of $x$ (resp.\ $z$) of order $K$ and weight $W$.

For $2 \leq i \leq n$, let $\{w_i^1, \dots, w_i^{t_i}\} = Z_i \setminus X_i = X_{i-1} \setminus X_i$ and let us fix this ordering. These vertices do not have any neighbors in $V \setminus V_i$. So for any dominating set of $G$, any of these vertices either belongs to the dominating set itself or it is dominated by a vertex in $V_i$. 
Now let $0 \leq l \leq t_i$ and let 
\[
	z \in \colors^{X_i} \times \finalcolors^{\{w_i^1, \dots, w_i^l\}} \times \colors^{\{w_i^{l+1}, \dots, w_i^{t_i}\}}.
\]
Then the value $Q_i^l[z, K, W]$ is defined as the number of valid extensions of $z$ of order $K$ and weight $W$. 
Finally, we set $Q_i = Q_i^{t_i}$ for simplicity.

By the definition of these tables, we obtain the following observation.
\begin{observation}
	For $i \in [n]$, it holds that the functions $P_i$ and $Q^0_i$ have the same domain and $P_i = Q^0_i$.
\end{observation}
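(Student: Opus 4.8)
The plan is to unpack the definitions of $P_i$ and $Q_i^0$ and observe that they describe counting the very same objects. First I would recall that, for $2 \leq i \leq n$, we have $Z_i = X_{i-1} \cup \{v_i\}$ and that the vertices $\{w_i^1, \dots, w_i^{t_i}\} = Z_i \setminus X_i = X_{i-1} \setminus X_i$ partition $Z_i$ as $Z_i = X_i \,\dot\cup\, \{w_i^1, \dots, w_i^{t_i}\}$ (here we use that $X_i \subseteq Z_i$, which follows from $X_i \subseteq X_{i-1} \cup \{v_i\} = Z_i$). Plugging $l = 0$ into the definition of the domain of $Q_i^l$ gives exactly
\[
	\colors^{X_i} \times \finalcolors^{\emptyset} \times \colors^{\{w_i^1, \dots, w_i^{t_i}\}} = \colors^{X_i} \times \colors^{\{w_i^1, \dots, w_i^{t_i}\}} = \colors^{Z_i},
\]
which coincides with $\dom(P_i) = \colors^{Z_i}$ (extended by the two integer coordinates $K, W$ in both cases). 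This establishes that the two functions have the same domain.

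Next I would argue that the values agree on this common domain. Both $P_i[z, K, W]$ and $Q_i^0[z, K, W]$ are, by definition, the number of valid extensions $\phi \in \allcolors^{V_i}$ of $z$ of order $K$ and weight $W$, where "valid extension" refers to the same five-item list of conditions (items~1--5) and the notions of order and weight are the ones fixed for colorings in $\allcolors^{V_i}$. Since both counts range over the identical set of colorings $\phi$ satisfying the identical constraints, the numbers are equal, i.e. $P_i[z, K, W] = Q_i^0[z, K, W]$ for every $z \in \colors^{Z_i}$ and all $K, W \in \ZZ$. Hence $P_i = Q_i^0$.

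There is essentially no obstacle here: the statement is a bookkeeping observation whose only content is that when the "finalized prefix" $\{w_i^1, \dots, w_i^l\}$ is empty ($l = 0$), the auxiliary table $Q_i^l$ degenerates to $P_i$. The one point that must be checked carefully — and which I would spell out explicitly — is that $X_i \subseteq Z_i$ so that $\{w_i^1, \dots, w_i^{t_i}\}$ is precisely $Z_i \setminus X_i$ and the product decomposition of the domain is correct; after that, matching the defining phrases of $P_i$ and $Q_i^0$ is immediate. I would therefore present the proof as a short two-sentence verification: one sentence for the domains, one for the values.
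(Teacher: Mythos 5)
Your proof is correct and matches the paper's (implicit) reasoning: the paper states this as an immediate observation ``by the definition of these tables'' and omits the argument, and your unpacking—checking $\dom(Q_i^0)=\colors^{X_i}\times\colors^{Z_i\setminus X_i}=\colors^{Z_i}=\dom(P_i)$ via $X_i\subseteq X_{i-1}\cup\{v_i\}=Z_i$, and then noting both tables count the same valid extensions—is precisely the bookkeeping that makes it true. The only minor point worth noting is that the paper's statement says ``for $i\in[n]$'' but $P_i$, $Z_i$, and $Q_i^l$ are all only defined for $2\le i\le n$; you correctly work in that range.
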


In the next three lemmas, we provide the dynamic programming equalities for the straight-forward algorithm solving the problem in $\mathcal{O}^*(4^{\ctw})$. In the next section we accelerate it to achieve the desired running time of $\mathcal{O}^*(3^{\ctw})$. 
Given the table $T_{i-1}$ (for some $2 \leq i \leq n$), the algorithm first computes the table $P_i = Q_i^0$ from it, then the tables $Q_i^1$, $Q_i^2$, $\dots$, $Q_i^{t_i} = Q_i$ in this order, and finally it computes the table $T_i$ using the equalities from the following three lemmas.
To prove the correctness of each of these equalities, we will provide a bijection between the objects counted on the left and on the right side for each case.

\begin{lemma}\label{lem:CDS-non-reduced-DP-a}
	Let $2 \leq i \leq n$, $\order, \weight \in \ZZ$, and $z \in \dom(P_i)$. 
	Let $\order^z$ and $\widetilde{\weight}^z$ denote the values $\order^z = \order - \bigl[z(v_i) \in \{L, R\}\bigr]$ and $\widetilde{\weight}^z = \weight - \omega(v_i)\cdot \bigl[z(v_i) \in \{L, R\}\bigr]$.
	Then it holds that 
	\begin{align*}
		&P_i[z, \order, \weight] \\
		=&1_{v^* \neq v_i \lor z(v_i) = L} \cdot
		\begin{cases}
			T_{i-1}[z_{|_{X_{i-1}}}, \order^z, \widetilde{\weight}^z] & \mbox{if $z_{|_{X_{i-1}}} \sim z$,} \\
			0 & \mbox{otherwise,}
		\end{cases}
		\\
		= &1_{v^* \neq v_i \lor z(v_i) = L} \sum\limits_{\substack{x \in \dom(T_{i-1}) \\ x \sim z}} T_{i-1}[x, \order^z, \widetilde{\weight}^z]
			.
	\end{align*} 
\end{lemma}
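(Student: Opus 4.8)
\textbf{Proof plan for \cref{lem:CDS-non-reduced-DP-a}.}
The plan is to unpack the definitions of $P_i$ and $T_{i-1}$ and exhibit an explicit bijection between the valid extensions counted by $P_i[z, \order, \weight]$ and the valid extensions counted by $T_{i-1}[z_{|_{X_{i-1}}}, \order^z, \widetilde{\weight}^z]$, under the assumption $z_{|_{X_{i-1}}} \sim z$; the two displayed equalities then follow, the second one because compatibility $x \sim z$ forces $x_{|_{X_{i-1} \cap Z_i}} = z_{|_{X_{i-1} \cap Z_i}}$ and $X_{i-1} \cap Z_i = X_{i-1}$ (note $X_{i-1} \subseteq Z_i$), so $x = z_{|_{X_{i-1}}}$ is the only possible summand. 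First I would dispose of the trivial cases: if $v^* = v_i$ and $z(v_i) \neq L$, then property~5 of a valid extension fails for every extension of $z$, so $P_i[z,\order,\weight] = 0$ and the indicator kills both sides. So from now on assume $v^* \neq v_i$ or $z(v_i) = L$.

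The heart of the argument is the bijection. Recall $Z_i = X_{i-1} \cup \{v_i\}$ and $V_i = V_{i-1} \cup \{v_i\}$, and $v_i \notin V_{i-1}$. Given a valid extension $\phi \in \allcolors^{V_i}$ of $z$, I would map it to its restriction $\phi_{|_{V_{i-1}}}$. I claim this is a valid extension of $z_{|_{X_{i-1}}}$: property~1 is immediate; property~2 holds on $V_{i-1} \setminus X_{i-1} \subseteq V_i \setminus Z_i$ since those vertices already got final colors in $\phi$ (here one uses $X_{i-1} = Z_i \setminus \{v_i\}$, so $V_{i-1} \setminus X_{i-1} = V_i \setminus Z_i$, up to the placement of $v_i$); property~3 is inherited as we only dropped edges; property~5 is inherited since $v^* \in V_{i-1}$ implies $v^* \in V_i$. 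The subtle point is property~4 for $\phi_{|_{V_{i-1}}}$: a vertex $u \in V_{i-1}$ with $\phi(u) = D$ has some neighbor $w \in V_i$ with $\phi(w) \in \{L,R\}$; if $w = v_i$ this witness is lost. But here the hypothesis $z_{|_{X_{i-1}}} \sim z$ enters: if $u$ is adjacent to $v_i$ then the edge $\{u, v_i\}$ crosses the $(i-1)$th cut, hence $u \in X_{i-1}$, so $\phi(u) = z(u) \in \colors$, contradicting $\phi(u) = D$. Thus no $D$-colored vertex of $V_{i-1}$ relies on $v_i$, and property~4 survives. Conversely, given a valid extension $\psi$ of $z_{|_{X_{i-1}}}$, I would set $\phi = \psi[v_i \mapsto z(v_i)]$; one checks it is a valid extension of $z$: property~3 for edges incident to $v_i$ with the other endpoint in $V_{i-1}$ reduces, since such an endpoint lies in $X_{i-1}$, to exactly the condition $z_{|_{X_{i-1}}} \sim z$ (which bounds $M$-consistency along edges from $X_{i-1}$ to $v_i$), and property~4 for a possible $D$ at $v_i$ cannot occur because $z(v_i) \in \colors$ is not $D$. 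These two maps are mutually inverse, and they clearly shift order by $[z(v_i) \in \{L,R\}]$ and weight by $\omega(v_i)\cdot[z(v_i)\in\{L,R\}]$, matching $\order^z$ and $\widetilde{\weight}^z$.

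The main obstacle I anticipate is precisely verifying property~4 in both directions and making the interface between $X_{i-1}$, $Z_i$, $X_i$ and $V_{i-1}$, $V_i$ completely clean — in particular the observation that every neighbor of $v_i$ inside $V_{i-1}$ must lie in $X_{i-1}$ (because such an edge crosses the $(i-1)$th cut), which is what both makes the forward map well defined and makes the compatibility check $z_{|_{X_{i-1}}} \sim z$ exactly the right condition. Everything else — the trivial-case bookkeeping, the order/weight shifts, and rewriting the single-term case as the sum over $x \sim z$ — is routine. I would also remark, for the ``$z_{|_{X_{i-1}}} \not\sim z$'' case, that then there is an edge from some $u \in X_{i-1}$ to $v_i$ with $M\bigl[z(u), z(v_i)\bigr] = 0$, so no extension of $z$ can satisfy property~3, giving $P_i[z,\order,\weight] = 0$, and on the right-hand side the sum over $x \sim z$ is empty; hence both displayed formulas agree in this case as well.
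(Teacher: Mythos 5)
Your proof is correct and follows essentially the same approach as the paper: a restriction/extension bijection between valid extensions of $z$ and valid extensions of $z_{|_{X_{i-1}}}$, with the key geometric fact that every neighbor of $v_i$ inside $V_{i-1}$ lies in $X_{i-1}$. One cosmetic point: in the forward direction, the preservation of property~4 does not actually rely on the compatibility hypothesis $z_{|_{X_{i-1}}} \sim z$ as you claim; it relies only on $z \in \dom(P_i) = \colors^{Z_i}$, which forces $\phi(u) = z(u) \in \colors$ (hence $\neq D$) for every $u \in X_{i-1} \subseteq Z_i$. The compatibility hypothesis is genuinely needed only in the reverse direction, when checking $M$-consistency along the new edges from $v_i$.
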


\begin{proof}
	First of all, recall that $X_{i-1} \subsetneq Z_i$ holds and therefore, a coloring $x \in \dom(T_{i-1}) = \colors^{X_{i-1}}$ of $X_{i-1}$ can only be compatible with $z$ if $x = z_{|_{X_{i-1}}}$ holds so the second equality is true. Now we prove that the first equality holds as well.
	Note that if $v_i = v^*$ and $z(v_i) \neq L$, then no extension of $z$ is valid so the equality holds. So from now on we assume that either $v_i \neq v^*$ or $z(v_i) = L$ holds.  
	Recall that $x$ only uses colors from $\colors$.
	So if $z_{|_{X_{i-1}}} \not\sim z$ holds (i.e., there is one edge the colors of whose end-vertices violate the consistency matrix), then there is no valid extension of $z$ and the first equality holds. 
	Further, if $z(v_i) \in \{L, R\}$ and $\weight < \omega(v_i)$ or $\order \leq 0$, then both sides of the equality are trivially zero and the first equality is true. So we may further assume that this does not apply.
	
	Let $\phi \in \allcolors^{V_i}$ be a valid extension of $z$ counted on the left side in $P_i[z, K, W]$. 
	We claim that $\phi_{|_{V_{i-1}}}$ is a valid extension of $z_{|_{X_{i-1}}}$ counted on the right side in $T_{i-1}[z_{|_{X_{i-1}}}, \order^z, \widetilde{\weight}^z]$. 
	The order and weight conditions are clearly satisfied and the consistency with respect to $M$ still holds.
	The only non-trivial condition to prove is that every vertex $v \in V_{i-1}$ of color $D$ in $\phi$ still has a neighbor in color $L$ or $R$ when $\phi$ is restricted to $V_{i-1}$. By the definition of $\dom(P_i)$, such a vertex $v$ belongs to $V_i \setminus Z_i$, i.e., $v \notin X_{i-1}$ holds. 
	So $v$ does not have an incident edge crossing the $(i-1)$th cut and in particular, $v$ is not adjacent to $v_i$. So this condition is still satisfied on $V_{i-1}$.
		
	Now let $\phi \in \allcolors^{V_{i-1}}$ be a valid extension of $z_{|_{X_{i-1}}}$ counted in $T_{i-1}[z_{|_{X_{i-1}}}, \order^z, \widetilde{\weight}^z]$. We claim that $\phi' = \phi\bigl[v_i \mapsto z(v_i)\bigr] \in \allcolors^{V_i}$ is a valid extension of $z$ counted in $P_i[z, \order, \weight]$. 
	The order and weight conditions are clearly satisfied. 
	The consistency with respect to $M$ still holds for two reasons. First, it holds for edges induced by $V_{i-1}$ since $\phi$ is a valid extension. Second, for edges incident to $v_i$ and having their other end-vertex $v$ in $V_{i-1}$, it holds that $v \in X_{i-1}$ so for such edges the property $z_{|_{X_{i-1}}} \sim z$ implies the consistency in $\phi'$ as well.
	Finally, recall that $\dom(P_i) = \colors^{Z_i}$ so every vertex $v$ in $V_i$ colored with $D$ in $\phi'$ is not equal to $v_i$ and therefore it has a neighbor in color $L$ or $R$ in $\phi'$ since it already had one in $\phi$.
\end{proof}

\begin{lemma}\label{lem:CDS-non-reduced-DP-b}
	Let $2 \leq i \leq n$, $0 \leq l \leq t_i-1$, $\order, \weight \in \ZZ$, and $z \in \dom(Q_i^{l+1})$.
	Then it holds that
	\[
		Q_i^{l+1}[z, \order, \weight] = 
		\begin{cases}
			Q_i^l[z, \order, \weight] & \mbox{if $z(w_i^{l+1}) \in \{L, R\}$,} \\
			Q_i^l\bigl[z[w_i^{l+1} \mapsto A], \order, \weight\bigr] - Q_i^l\bigl[z[w_i^{l+1} \mapsto F], \order, \weight\bigr] & \mbox{if $z(w_i^{l+1}) = D$.}
		\end{cases}
	\]
\end{lemma}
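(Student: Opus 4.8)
The plan is to prove both cases by exhibiting explicit bijections between the families of colorings counted on the two sides, using only the shape of $M$ and of the color sets $\colors$, $\finalcolors$, $\allcolors$. Throughout write $u = w_i^{l+1}$; recall $u \in X_{i-1} \setminus X_i$, so $u$ has no neighbor of index larger than $i$ and hence every neighbor of $u$ lies in $V_i$ and is assigned a color by any extension $\phi \in \allcolors^{V_i}$. The key structural facts are: (i) none of $A, F, D$ lies in $\{L,R\}$, so no matter which of these colors $u$ receives, $u$ never helps dominate a neighbor, and the order $\bigl|\phi^{-1}(\{L,R\})\bigr|$ and weight $\omega\bigl(\phi^{-1}(\{L,R\})\bigr)$ of an extension do not depend on $u$'s color; (ii) $M[A, \cdot] \equiv 1$, so every edge incident to an $A$-colored vertex is automatically $M$-consistent; (iii) $D \notin \colors$, so edges incident to a $D$-colored vertex are never subject to the consistency check; (iv) $M[F,L] = M[F,R] = 0$ while $M[F,A] = M[F,F] = 1$, so if $u$ is colored $F$ then the edges incident to $u$ are consistent precisely when no neighbor of $u$ is colored $L$ or $R$, i.e.\ precisely when $u$ is \emph{not} dominated. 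Finally note that $z^A := z[u \mapsto A]$ and $z^F := z[u \mapsto F]$ indeed lie in $\dom(Q_i^l)$, since replacing $z(u) \in \finalcolors$ by $A$ or $F$ just returns $u$ to an intermediate-colored coordinate.

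First I would dispose of the degenerate case $u = v^*$: then a valid extension must color $u$ with $L$, so if $z(u) \neq L$ all quantities in the claimed identity vanish and it reads $0 = 0$ (Case~1 with $z(u) = R$) or $0 = 0 - 0$ (Case~2), while if $z(u) = L$ we are in Case~1 and fall under the general argument for that case. For Case~1 in general (so $z(u) \in \{L, R\} \subseteq \colors$), the \emph{same} coloring $z$ also lies in $\dom(Q_i^l)$, and the two sides count valid extensions of this single coloring subject to identical conditions: the definition of a valid extension refers only to $V_i \setminus Z_i$, not to how $Z_i$ is split into finalized and intermediate coordinates, and the color $z(u)$ — the same on both sides — contributes the same to order and weight. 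Hence $Q_i^{l+1}[z, \order, \weight] = Q_i^l[z, \order, \weight]$ with no shift of the indices.

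For Case~2 ($z(u) = D$, and now $u \neq v^*$) I would introduce the common ground set $\Psi$ of all $\phi' \in \allcolors^{V_i \setminus \{u\}}$ that: agree with $z$ on $Z_i \setminus \{u\}$; use only colors of $\finalcolors$ on $V_i \setminus Z_i$; are $M$-consistent along every edge of $G[V_i \setminus \{u\}]$ with both endpoints colored from $\colors$; dominate every $D$-colored vertex of $V_i \setminus \{u\}$; map $v^*$ to $L$ if $v^* \in V_i$; and have order $\order$ and weight $\weight$. Using (i)--(iv) I would show: the map $\phi' \mapsto \phi'[u \mapsto A]$ is a bijection from $\Psi$ onto the valid extensions counted by $Q_i^l[z^A, \order, \weight]$ (the conditions at $u$ are vacuous by (ii) and because $u$ is not a $D$-vertex, and the domination status of the other $D$-vertices is unchanged by (i)); the map $\phi' \mapsto \phi'[u \mapsto F]$ is a bijection from $\{\phi' \in \Psi : u \text{ is not dominated by } \phi'\}$ onto the valid extensions counted by $Q_i^l[z^F, \order, \weight]$ (by (iv)); and the map $\phi' \mapsto \phi'[u \mapsto D]$ is a bijection from $\{\phi' \in \Psi : u \text{ is dominated by } \phi'\}$ onto the valid extensions counted by $Q_i^{l+1}[z, \order, \weight]$ (edges at $u$ vacuous by (iii), and the $D$-constraint on $u$ is exactly the requirement that $u$ be dominated, which by (i) depends only on $\phi'$). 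Since ``$u$ is dominated'' and ``$u$ is not dominated'' partition $\Psi$, we conclude $Q_i^{l+1}[z,\order,\weight] = |\Psi| - |\{\phi' \in \Psi : u \text{ not dominated}\}| = Q_i^l[z^A,\order,\weight] - Q_i^l[z^F,\order,\weight]$, as claimed.

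The substantive part of the proof is the bookkeeping in the last step: one must check, one condition of ``valid extension'' at a time, that flipping the color of $u$ among $\{A, F, D\}$ affects nothing except the constraints located \emph{at} $u$ — namely the $M$-consistency of the edges incident to $u$ and the domination requirement for $u$ itself — which is exactly the content of (i)--(iv), and that the order and weight indices really are unshifted because none of $A, F, D$ lies in $\{L,R\}$. I do not expect any genuine difficulty beyond this case analysis.
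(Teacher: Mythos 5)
Your proof is correct and rests on the same core idea as the paper's: valid extensions of $z[u\!\mapsto\!A]$ split, according to whether $u$ is dominated, into those corresponding to valid extensions of $z$ with $z(u)=D$ and those corresponding to valid extensions of $z[u\!\mapsto\!F]$. The only difference is presentational — you factor all three counts through the common ground set $\Psi$ and invoke a partition, whereas the paper constructs the forward and backward injections between the two sides directly and checks disjointness of their images; this is the same bijection, just exposed more cleanly.
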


\begin{proof}
	For the simplicity of notation, we use $v = w_i^{l+1}$.
	First of all, note that by the definition of $Q_i^{l+1}$, we have $z(v) \in \{L, R, D\}$. 
	If we have $z(v) \in \{L, R\}$, then $z$ belongs to the domain of $Q^l_i$ as well and the claim is trivially true. 
	So we may assume $z(v) = D$.
	We prove that 
	\[
		Q_i^{l+1}[z, \order, \weight] + Q_i^l\bigl[z[v \mapsto F], \order, \weight\bigr] = Q_i^l\bigl[z[v \mapsto A], \order, \weight\bigr]
	\]
	holds. 
	
	Let $\phi \in \allcolors^{V_i}$ be a valid extension of $z[v \mapsto A]$ counted on the right side in $Q_i^l\bigl[z[v \mapsto A], \order, \weight\bigr]$. Note that $\bigl(z[v \mapsto A]\bigr)[v \mapsto F] = z[v \mapsto F]$ and $\bigl(z[v \mapsto A]\bigr)[v \mapsto D] = z$ holds.
	If there is a neighbor $u \in V_i$ of $v$ such that $\phi(u) \in \{L, R\}$, then $\phi[v \mapsto D]$ is a valid extension of $z$ counted in $Q_i^{l+1}[z, \order, \weight]$. Otherwise, $\phi$ is a valid extension of $z[v \mapsto F]$ counted in $Q_i^l\bigl[z[v \mapsto F]\bigr]$.
		
	Now we consider the colorings counted on the left side.
	First, let $\phi \in \allcolors^{V_i}$ be a valid extension of $z$ counted in $Q_i^{l+1}[z, \order, \weight]$. 
	Then we claim that $\phi[v \mapsto A]$ is a valid extension of $z$ counted in $Q_i^l\bigl[z[v \mapsto A], \order, \weight\bigr]$. First, a vertex of color $A$ is allowed to have neighbors of any color. Second, consider a vertex $u \neq v$ such that $\phi(u) = \phi[v \mapsto A](u) = D$. Since $\phi$ is valid, the vertex $u$ had a neighbor $w \in V_i$ in color $L$ or $R$ in $\phi$. Then it holds that $w \neq v$ and hence, the vertex $w$ still has the same color $L$ or $R$ in $\phi[v \mapsto A]$. Clearly, the order and weight conditions hold as well.
	So $\phi[v \mapsto A]$ is indeed counted in $Q_i^l\bigl[z[v \mapsto A], \order, \weight\bigr]$.
		
	Second, let $\phi'$ be a valid extension of $z[v \mapsto F]$ counted in $Q_i^l\bigl[z[v \mapsto F], \order, \weight\bigr]$. 
	Then $\phi'$ is also a valid extension of $z$ counted in $Q_i^l\bigl[z[v \mapsto A], \order, \weight\bigr]$: again, since we only flipped the color of $v$ from $F$ to $A$, no vertex could have lost a neighbor in $L$ or $R$.
		
	Let $\phi$ resp.\ $\phi'$ be as before.
	Note that because of $\phi(v) = z(v) = D$, there is a neighbor $u \in V_i$ of $v$ such that $\phi(u) \in \{L, R\}$. On the other hand, because of $\phi'(v) = F$, there is no neighbor $u \in V_i$ of $v$ such that $\phi'(u) \in \{L, R\}$. Therefore, for any such $\phi$ and $\phi'$, it holds $\phi[v \mapsto A] \neq \phi'[v \mapsto A]$ and hence, every coloring counted on the left side is injectively mapped to a coloring counted on the right side.
\end{proof}

\begin{lemma}\label{lem:CDS-non-reduced-DP-c}
	Let $2 \leq i \leq n$, $\order, \weight \in \ZZ$, and $x \in \dom(T_i)$.
	Then it holds that
	\[
		T_i[x, \order, \weight] = \sum\limits_{\substack{z \in \dom(Q_i) \\ z_{|_{X_i}} = x}} Q_i[z, \order, \weight].
	\]
\end{lemma}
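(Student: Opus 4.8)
The plan is to prove the identity by a partition argument: the valid extensions of $x$ counted by $T_i[x,\order,\weight]$ are grouped according to their restriction to $Z_i=X_{i-1}\cup\{v_i\}$, and each group is counted by exactly one summand $Q_i[z,\order,\weight]$ on the right-hand side. First I would record the relevant domains. By definition $\dom(Q_i)=\dom(Q_i^{t_i})=\colors^{X_i}\times\finalcolors^{Z_i\setminus X_i}$, where $Z_i\setminus X_i=X_{i-1}\setminus X_i=\{w_i^1,\dots,w_i^{t_i}\}$. Since $X_i\subseteq X_{i-1}\cup\{v_i\}=Z_i\subseteq V_i$ and $X_i\cap(X_{i-1}\setminus X_i)=\emptyset$, we get the disjoint decomposition $V_i\setminus X_i=(V_i\setminus Z_i)\dot\cup(Z_i\setminus X_i)$, which will be the crux of the bookkeeping.

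The key claim I would establish is: for $\phi\in\allcolors^{V_i}$ and $z\in\dom(Q_i)$ with $z_{|_{X_i}}=x$, the coloring $\phi$ is a valid extension of $z$ of order $\order$ and weight $\weight$ if and only if $\phi$ is a valid extension of $x$ of order $\order$ and weight $\weight$ with $\phi_{|_{Z_i}}=z$. To prove this I would walk through the five conditions defining a valid extension. The consistency condition w.r.t.\ $M$, the condition that every $D$-vertex has an $\{L,R\}$-neighbour in $V_i$, and the condition on $v^*$, as well as the order and weight requirements, are properties of $\phi$ alone (they do not refer to the coloring being extended), so they transfer verbatim between the two readings. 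The condition that $\phi$ extends the given coloring is, for $z$, exactly $\phi_{|_{Z_i}}=z$; since $z_{|_{X_i}}=x$ this is compatible with and implies $\phi_{|_{X_i}}=x$, and conversely. The one condition that needs care is the second: as an extension of $z$ it demands $\phi(v)\in\finalcolors$ for all $v\in V_i\setminus Z_i$, while as an extension of $x$ it demands it for all $v\in V_i\setminus X_i$. Using $V_i\setminus X_i=(V_i\setminus Z_i)\dot\cup(Z_i\setminus X_i)$ together with the fact that $z\in\dom(Q_i)$ assigns colors from $\finalcolors$ to every vertex of $Z_i\setminus X_i$ (so that $\phi_{|_{Z_i}}=z$ already forces $\phi(v)\in\finalcolors$ there), the two formulations are equivalent. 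This settles the claim.

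Finally, the claim immediately yields the lemma. Every valid extension $\phi$ of $x$ of order $\order$ and weight $\weight$ determines the unique coloring $z:=\phi_{|_{Z_i}}$, which by the second condition lies in $\dom(Q_i)$ and satisfies $z_{|_{X_i}}=x$, and $\phi$ is then counted in $Q_i[z,\order,\weight]$; conversely, for each $z\in\dom(Q_i)$ with $z_{|_{X_i}}=x$, every coloring counted in $Q_i[z,\order,\weight]$ is a valid extension of $x$ of order $\order$ and weight $\weight$ whose restriction to $Z_i$ is $z$. Hence the valid extensions of $x$ of order $\order$ and weight $\weight$ partition over the choices of $z\in\dom(Q_i)$ with $z_{|_{X_i}}=x$, and summing cardinalities gives $T_i[x,\order,\weight]=\sum_{z\in\dom(Q_i),\,z_{|_{X_i}}=x}Q_i[z,\order,\weight]$. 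I expect the only real obstacle to be the verification of the second condition described above; the rest is a direct unwinding of the definitions.
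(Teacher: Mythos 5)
Your proof is correct and follows essentially the same approach as the paper: both decompose the valid extensions of $x$ according to their restriction to $Z_i$, and both identify the only nontrivial step as checking that condition~(2) of a valid extension (vertices outside the relevant set receive final colors) is preserved, using the disjoint split $V_i\setminus X_i=(V_i\setminus Z_i)\dot\cup(Z_i\setminus X_i)$ together with the fact that $\dom(Q_i)$ already forces final colors on $Z_i\setminus X_i$. The paper phrases it as two directed injections between the counted sets, while you phrase it as a single biconditional followed by a partition; these are the same argument.
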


\begin{proof}
	We recall that by definition it holds that $Q_i = Q_i^{t_i}$.
	Let $\phi \in \allcolors^{V_i}$ be counted on the left side in $T_i[x, \order, \weight]$ and let $z = \phi_{|_{Z_i}}$.
	First, it holds that $z \in \dom(Q_i) = \colors^{X_i} \times \finalcolors^{Z_i \setminus X_i}$ holds: since $\phi$ is a valid extension of $x \in \dom(T_i) = \colors^{X_i}$, all vertices of $X_i$ are colored with $\colors$ in $\phi$ and all vertices in $Z_i \setminus X_i \subseteq V_i \setminus X_i$ are colored with $\finalcolors$ in $\phi$.  
	Second, since $\phi$ is a valid extension of $x$ and $z$ is an extension of $x$, $\phi$ is a valid extension of $z$ and it is counted on the right side in $Q_i[z, \order, \weight]$.
				
	Let $\phi \in \allcolors^{V_i}$ be counted on the right side in $Q_i[z, \order, \weight]$ for some $z \in \dom(Q_i)$ such that $z_{|_{X_i}} = x$. 
	Recall that by the definition of the domain of $Q_i$, for every vertex $v \in Z_i \setminus X_i$, it holds that $z(v) \in \finalcolors$. 
	Since $\phi$ is a valid extension of $z$, it also holds that $\phi(V_i \setminus Z_i) \subseteq \finalcolors$. Therefore, it holds that $\phi(V_i \setminus X_i) \subseteq \finalcolors$ and $\phi$ is also a valid extension of $x$ counted in $T_i[x, \order, \weight]$.
\end{proof}

Note that for the domain of every table considered in \cref{lem:CDS-non-reduced-DP-a,lem:CDS-non-reduced-DP-b,lem:CDS-non-reduced-DP-c}, the following holds. For every vertex $v$, the domain contains only colorings assigning $v$ a color from $\{L, R, A, F\}$ or only colorings assigning $v$ a color from $\{L, R, D\}$. Therefore, the size of the domain is always upper bounded by $4^{|X_i|}$ resp.\ $4^{|Z_i|}$ (for $i \in [n]$), i.e., it is bounded by $\mathcal{O}(4^{\ctw})$. 
This almost immediately results in an $\mathcal{O}^*(4^{\ctw})$ algorithm solving the \textsc{Connected Dominating Set} problem. We omit the details here since our main result is an $\mathcal{O}^*(3^{\ctw})$ algorithm provided next.

Now we proceed analogously to our algorithm solving a general coloring-like problem to compute smaller tables representing the partial solutions. 
Here with $\equiv$ we denote the congruency modulo the prime number $p = 2$.

As before, we need the notion of compatibility with respect to the consistency matrix $M$. However, in our approach the color $D \notin \colors$ will sometimes also appear in partial solutions so we have to adapt the definition of compatibility to make sure it is well-defined. At the same time, we would like to apply the results from \cref{app:subsec:reduced-dp}, therefore we want the new definition to be consistent with the old one.

\begin{definition}
	Let $X, Y, Z \subseteq V$ be vertex sets such that $X \subseteq Z$ and $(Z \setminus X) \cap Y = \emptyset$ hold and there is no edge between $Z \setminus X$ and $Y$ in $G$. Further, let $x \in \finalcolors^{Z \setminus X} \times \colors^X$ and $y \in \colors^Y$ be colorings. We say that $x$ is \emph{compatible} with $y$ and write $x \sim y$ if $x_{|_X} \sim y$ holds. Namely, we have $x_{|_{X \cap Y}} = y_{|_{X \cap Y}}$ and for every edge $\{u, v\} \in E$ with $u \in Z$ and $v \in Y \setminus X$ (implies $u \in X$), it holds that $M\bigl[x(u), y(v)\bigr] = 1$. (Note that according to this definition, the colors of vertices in $Z \setminus X$ in $x$ do not influence the compatibility with $y$.)
	
	Further, let $A \cup B = Z \setminus X$ be a partition of $Z \setminus X$ and let $f, \widehat{f}: (\finalcolors^A \times \colors^B \times \colors^X) \times \ZZ^2 \to \ZZ$. We say that $f$ $(X, Y, Z)$-represents $\widehat{f}$ if for every coloring $y \in \colors^Y$ and every $K, W \in \ZZ$ it holds that
	\[
		\sum\limits_{\substack{c \in \dom(f) \\ c \sim y}} f(c, K, W) \equiv \sum\limits_{\substack{c \in \dom(f) \\ c \sim y}} \widehat{f}(c, K, W),
	\]
	i.e., it holds that
	\[
		\sum\limits_{\substack{c \in \dom(f) \\ c_{|_X} \sim y}} f(c, K, W) \equiv \sum\limits_{\substack{c \in \dom(f) \\ c_{|_X} \sim y}} \widehat{f}(c, K, W).
	\]
\end{definition}
It is crucial, that by setting $Z = X$ we obtain exactly the notions of compatibility and representation as for a general coloring-like problem. Clearly, this new notion of the representation is transitive as well.
Now the key observation for the construction of a faster algorithm is that the consistency matrix 
\[
	M = \kbordermatrix{
		& L & R & A & F  \\
		L & 1 & 0 & 1 & 0  \\
 		R & 0 & 1 & 1 & 0  \\
		A & 1 & 1 & 1 & 1 \\
		F & 0 & 0 & 1 & 1
 	}
\]
has rank $3$ in $\mathbb{F}_2$: the first three rows sum up to the fourth one and they build a row basis. So let $F$ be the (only) \emph{reduced} color. 
Let $f: \colors^{X_i} \times \ZZ^2 \to \ZZ$ with $i \in [n]$ be a table.
As before, we say that a vertex $v$ is \emph{reduced} in $f$ if for every coloring $c \in \colors^{X_i}$ such that $c(v) = F$ and every $K, W \in \ZZ$ it holds that $f(c, K, W) = 0$.
We say that $f: \colors^{X_i} \times \ZZ^2 \to \ZZ$ is \emph{fully reduced} if every vertex $v \in X_i$ that has one neighbor in $Y_i$ is reduced in $f$.
Since all of these definitions are consistent with the previous section, \cref{lemma:reduce-algorithm} can be applied to reduce vertices that have only one incident edge in some cut. The following lemma is therefore just a special case of it.
\begin{lemma}\label{lem:degree-1-vts-CDS}
	Let $X, Y \subseteq V$ be disjoint and let $i \in [n]$. 
	Further, let $g: \colors^X \times \ZZ^2 \to \ZZ$ be a table with reduced vertices $R \subseteq X$ such that $g^{-1}\bigl(\ZZ \setminus \{0\}\bigr) \subseteq \colors^X \times [i]_0 \times [iN]_0$.
	Finally, let $v \in X \setminus R$ be a vertex such that $v$ has exactly one neighbor in $Y$. 
	Then there is an algorithm \textbf{Reduce} that, given this information as input, in time $\mathcal{O}^*(3^{|R|} |S|^{|X| - |R|} N)$ outputs a function $\widehat{g}: \colors^X \times \ZZ^2 \to \ZZ$ $(X, Y, X)$-representing $g$ with reduced vertices $R \cup \{v\}$.
	Moreover, it holds that $\widehat{g}^{-1}\bigl(\ZZ \setminus \{0\}\bigr) \subseteq \colors^X \times [i]_0 \times [iN]_0$.
\end{lemma}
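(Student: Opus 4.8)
The plan is to derive this lemma directly from \cref{lemma:reduce-algorithm}, instantiated for the coloring-like framework of \cref{app:subsec:reduced-dp} with color set $C := \colors$ (so $|C| = 4$, matching the $|S|$ of the statement) and field $\FF := \FF_2$. The first step is to compute the rank of the displayed consistency matrix $M$ over $\FF_2$: its rows $m_L, m_R, m_A$ are linearly independent while $m_F \equiv m_L + m_R + m_A$, so $\{m_L, m_R, m_A\}$ is a row basis and $\rank(M) = 3$. Under the natural ordering $L, R, A, F$ these are already the first $\rank(M) = 3$ rows, so the convention of \cref{app:subsec:reduced-dp} (that the first $\rank(M)$ rows of the consistency matrix form a basis) holds without relabelling colors, and the unique reduced color is $F$, exactly as intended here.

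Next I would note that, with this identification, the notions of \emph{reduced vertex}, \emph{fully reduced table}, and \emph{$(X,Y)$-representation} used in the present section are verbatim those of \cref{app:subsec:reduced-dp}: a vertex $v$ is reduced in $g$ precisely when $g(c,K,W) \equiv 0$ whenever $c(v) = F$, i.e. whenever $c(v) > \rank(M)$; and, as observed right after the definition of $(X,Y,Z)$-representation, taking $Z = X$ makes $(X,Y,X)$-representation coincide with $(X,Y)$-representation. Having matched the setting, the proof then consists of a single application of \cref{lemma:reduce-algorithm} with this $C$, this $M$, the disjoint sets $X, Y$, the index $i$, the table $f := g$ with its set $R$ of reduced vertices, and the vertex $v \in X \setminus R$ having exactly one neighbor in $Y$. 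The hypothesis $g^{-1}(\ZZ \setminus \{0\}) \subseteq \colors^X \times [i]_0 \times [iN]_0$ is precisely the support precondition required there, and the conclusion hands us a table $\widehat{g} := \widehat{f}$ that $(X,Y)$-represents $g$ (equivalently, $(X,Y,X)$-represents it), has reduced vertices $R \cup \{v\}$, still satisfies $\widehat{g}^{-1}(\ZZ \setminus \{0\}) \subseteq \colors^X \times [i]_0 \times [iN]_0$, and is produced in time $\mathcal{O}^*\bigl(\rank(M)^{|R|} |C|^{|X|-|R|} N\bigr) = \mathcal{O}^*\bigl(3^{|R|} 4^{|X|-|R|} N\bigr)$, as claimed.

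I do not expect any genuine obstacle: all the work is already done in \cref{lemma:reduce-algorithm}, and what remains is the bookkeeping of verifying that the abstract framework's reduced color coincides with $F$ (equivalently, the rank computation above together with the position of $F$ in the color ordering) and that the domains occurring here — in which every vertex of $X$ always receives a color from $\colors$ — are of the form the general lemma expects. The only mildly delicate point is cosmetic, namely keeping the ambient ring/field consistent (integer-valued tables read modulo $p = 2$, with the basis coefficients $d_{b,j}$ taken in $\FF_2$), which does not affect any of the stated bounds.
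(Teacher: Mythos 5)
Your proposal is correct and matches the paper's treatment exactly: the paper also presents this lemma as an immediate specialization of \cref{lemma:reduce-algorithm}, deriving it by the same observation that $\rank(M) = 3$ over $\FF_2$ with $\{m_L, m_R, m_A\}$ as a row basis (so $F$ is the unique reduced color) and that the definitions of reduced vertex and $(X,Y,X)$-representation here coincide with those of \cref{app:subsec:reduced-dp} once $Z = X$. The only discrepancy you silently corrected — the $|S|$ in the statement's running time should read $|\colors| = 4$ — is a typo in the lemma as printed.
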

This lemma is crucial to achieve the desired running time. 
Our algorithm will run the dynamic programming defined by equalities in \cref{lem:CDS-non-reduced-DP-a,lem:CDS-non-reduced-DP-b,lem:CDS-non-reduced-DP-c} and then apply \cref{lem:degree-1-vts-CDS} to reduce in $T_i$ all vertices in $X_i$ having exactly one neighbor in $Y_i$. Now we prove that this approach is correct and after that, we provide more details to achieve the desired running time.
In the next three lemmas, for $2 \leq i \leq n$, we will write about $(X_i, Y_i, X_i)$- and $(X_i, Y_i, Z_i)$-representation. Recall that $X_i \cap Y_i = \emptyset$ and there is no edge $\{u, v\} \in E$ with $u \in Z_i \setminus X_i$ and $v \in Y_i$ since such an edge would cross the $i$th cut and $u$ would then belong to $X_i$.
So the representation is well-defined in these cases.
	
\begin{lemma}\label{lem:reduced-dp-CDS-a}
	Let $2 \leq i \leq n$ and let a table $\widehat{T_{i-1}}$ $(X_{i-1}, Y_{i-1}, X_{i-1})$-represent $T_{i-1}$.
	For all $z \in \dom(P_i)$ and $\order, \weight \in \ZZ$ let $\order^z = \order - \bigl[z(v_i) \in \{L, R\}\bigr]$, $\widetilde{\weight}^z = \weight - \omega(v_i)\cdot \bigl[z(v_i) \in \{L, R\}\bigr]$, and 
	\[
		\widehat{P_i}[z, \order, \weight] \equiv 1_{v^* \neq v_i \lor z(v_i) = L} \sum\limits_{\substack{x \in \dom(T_{i-1}) \\ x \sim z}} \widehat{T_{i-1}}[x, \order^z, \widetilde{\weight}^z].
	\]
	Then $\widehat{P_i}$ $(X_i, Y_i, Z_i)$-represents $P_i$.
\end{lemma}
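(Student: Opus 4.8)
The plan is to imitate the chain of congruences from the proof of \cref{lemma:dp-reduced-version}, now carried out over the palette $\colors$ with the intermediate consistency matrix $M$ and over the slightly enlarged domain $\dom(P_i) = \colors^{Z_i}$. Unwinding the definition of $(X_i,Y_i,Z_i)$-representation, and recalling that for $z \in \dom(P_i)$ the relation $z \sim y$ is by definition nothing but $z_{|_{X_i}} \sim y$ (the colours of the vertices in $Z_i \setminus X_i$ are irrelevant for compatibility), the claim reduces to showing that for every $y \in \colors^{Y_i}$ and all $\order, \weight \in \ZZ$
\[
	\sum_{\substack{z \in \colors^{Z_i} \\ z_{|_{X_i}} \sim y}} P_i[z, \order, \weight] \equiv \sum_{\substack{z \in \colors^{Z_i} \\ z_{|_{X_i}} \sim y}} \widehat{P_i}[z, \order, \weight].
\]
Since $\widehat{P_i}$ is defined by precisely the recurrence that \cref{lem:CDS-non-reduced-DP-a} establishes for $P_i$, only with $\widehat{T_{i-1}}$ in place of $T_{i-1}$, it suffices to perform a single computation on the left-hand side that stays valid verbatim after replacing $T_{i-1}$ by $\widehat{T_{i-1}}$ and $P_i$ by $\widehat{P_i}$, and which invokes the representation hypothesis only at its very last step.

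First I would substitute the recurrence of \cref{lem:CDS-non-reduced-DP-a} for $P_i[z,\order,\weight]$ and record two simplifications: the factor $1_{v^* \neq v_i \lor z(v_i) = L}$ and the shifted parameters $\order^z, \widetilde{\weight}^z$ depend on $z$ only through $s := z(v_i)$, so writing $\order^s := \order - \bigl[s \in \{L, R\}\bigr]$ and $\widetilde{\weight}^s := \weight - \omega(v_i)\cdot\bigl[s \in \{L, R\}\bigr]$ we have $\order^z = \order^{z(v_i)}$ and likewise for the weight; and, as was already observed inside the proof of \cref{lem:CDS-non-reduced-DP-a}, the inner sum over $x \in \dom(T_{i-1})$ with $x \sim z$ collapses to the single term $x = z_{|_{X_{i-1}}}$ exactly when $z_{|_{X_{i-1}}} \sim \bigl[v_i \mapsto s\bigr]$ and is empty otherwise, because $X_{i-1} \subsetneq Z_i$ already forces $x = z_{|_{X_{i-1}}}$. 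Decomposing each $z$ as $z = x\bigl[v_i \mapsto s\bigr]$ with $x := z_{|_{X_{i-1}}} \in \colors^{X_{i-1}}$ and $s \in \colors$, swapping the order of summation, and using $z_{|_{X_i}} = x_{|_{X_i \setminus \{v_i\}}}\bigl[v_i \mapsto s\bigr]$ (valid since $v_i \in X_i$ and $X_i \setminus \{v_i\} \subseteq X_{i-1}$), the left-hand side becomes
\[
	\sum_{s \in \colors} 1_{v^* \neq v_i \lor s = L} \sum_{\substack{x \in \colors^{X_{i-1}} \\ x \sim [v_i \mapsto s] \\ x_{|_{X_i \setminus \{v_i\}}}[v_i \mapsto s] \sim y}} T_{i-1}\bigl[x, \order^s, \widetilde{\weight}^s\bigr].
\]

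Next I would apply \cref{lem:compatibility-y} with $x$ in the role of its ``$z$'' and $z_{|_{X_i}}$ in the role of its ``$c$'' — its first equivalent condition, $x_{|_{X_i\setminus\{v_i\}}} = (z_{|_{X_i}})_{|_{X_i\setminus\{v_i\}}}$, holds automatically here — which rewrites ``$z_{|_{X_i}} \sim y$ and $x \sim z_{|_{X_i}}$'' (equivalently, given $x = z_{|_{X_{i-1}}}$, ``$z_{|_{X_i}} \sim y$ and $x \sim [v_i \mapsto s]$'') as the conjunction ``$x \sim [v_i \mapsto s]$, $[v_i \mapsto s] \sim y$, and $x \sim y$''. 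Pulling the $s$-only factor $1_{[v_i \mapsto s] \sim y}$ out of the inner sum, merging the two remaining conditions on $x$ into $x \sim y\bigl[v_i\mapsto s\bigr]$, and using that every edge leaving $X_{i-1}$ towards a vertex outside $V_{i-1}$ crosses the $(i-1)$th cut and hence has its other endpoint in $Y_{i-1}$ (so $Y_{i-1} \subseteq \{v_i\}\cup Y_i$ and only $\bigl(y[v_i\mapsto s]\bigr)_{|_{Y_{i-1}}}$ is relevant), the left-hand side equals
\[
	\sum_{s \in \colors} 1_{v^* \neq v_i \lor s = L}\, 1_{[v_i \mapsto s] \sim y} \sum_{\substack{x \in \colors^{X_{i-1}} \\ x \sim \bigl(y[v_i \mapsto s]\bigr)_{|_{Y_{i-1}}}}} T_{i-1}\bigl[x, \order^s, \widetilde{\weight}^s\bigr].
\]
Every equality used so far is an identity over $\ZZ$, hence survives the replacement $T_{i-1} \to \widehat{T_{i-1}}$, $P_i \to \widehat{P_i}$. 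Finally, since $\dom(T_{i-1}) = \colors^{X_{i-1}}$, the $(X_{i-1},Y_{i-1},X_{i-1})$-representation of $T_{i-1}$ by $\widehat{T_{i-1}}$ is the ordinary notion of representation, so for each $s$ the inner sum over $x$ with $x \sim \bigl(y[v_i\mapsto s]\bigr)_{|_{Y_{i-1}}}$ is unchanged modulo $2$ after replacing $T_{i-1}$ by $\widehat{T_{i-1}}$; this gives the claimed congruence. Degenerate parameter choices ($\order^s < 0$, $\widetilde{\weight}^s < 0$, or $\order, \weight$ outside the stored ranges) need no separate treatment, as every table entry involved then vanishes.

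The step I expect to be the main obstacle is not any single computation but the bookkeeping around the enlarged domain and the two palettes: one has to check that \cref{lem:compatibility-y}, phrased inside the single-palette coloring-like framework, transfers unchanged to the palette $\colors$ with the intermediate matrix $M$; that ``$z \sim y$'' for $z \in \dom(P_i)$ genuinely coincides with ``$z_{|_{X_i}} \sim y$''; and that the decomposition $z = z_{|_{X_{i-1}}}\bigl[v_i \mapsto s\bigr]$ commutes with all of the restriction operations used above. Once these identifications are made, the computation is exactly the one in the proof of \cref{lemma:dp-reduced-version} (equivalently, Groenland et al.'s argument); the only genuinely new ingredients are the $v^*$-indicator and the order/weight shifts, and since both depend on $z$ only through $z(v_i)$ they ride through every reordering without difficulty.
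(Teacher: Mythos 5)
Your proof is correct and follows the paper's own argument essentially step-for-step: unfold the recurrence from \cref{lem:CDS-non-reduced-DP-a}, decompose $z$ as $z_{|_{X_{i-1}}}\bigl[v_i \mapsto s\bigr]$ over $Z_i = X_{i-1} \dot\cup \{v_i\}$, split the joint compatibility ``$x \sim z$ and $z \sim y$'' into the three factors ``$x \sim [v_i \mapsto s]$'', ``$[v_i \mapsto s] \sim y$'', ``$x \sim y$'', restrict $y[v_i \mapsto s]$ to $Y_{i-1}$, invoke the $(X_{i-1}, Y_{i-1}, X_{i-1})$-representation only at the very last congruence, and observe that all prior steps are $\ZZ$-identities that survive the substitution $T_{i-1} \rightsquigarrow \widehat{T_{i-1}}$, $P_i \rightsquigarrow \widehat{P_i}$. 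The only cosmetic difference is that you cite \cref{lem:compatibility-y} by name for the compatibility split where the paper labels that step ``$Z_i = X_{i-1}\dot\cup\{v_i\}$'' (but does cite that lemma in the parallel proof of \cref{lemma:dp-reduced-version}); your observation that the inner sum over $x \sim z$ collapses to a single term is a harmless reformulation of the same sum exchange.
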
	
	
\begin{proof}
	Let $y \in \colors^{Y_i}$. 
	Recall that we have $\dom(P_i) = \colors^{Z_i}$ and $\dom(T_{i-1}) = \colors^{X_{i-1}}$.
	Let $a(v_i) = \colors$ if $v_i \neq v^*$ and $a(v_i) = a(v^*) = \{L\}$ 
	denote the set of colors $v_i$ can have in a coloring extendable to a valid coloring.
	Then we have
	\begin{multline*}
		\sum\limits_{\substack{z \in \dom(P_i) \\ z \sim y}} P_i[z, \order, \weight] \\
		\equiv \sum\limits_{\substack{z \in \dom(P_i) \\ z \sim y}} 1_{v^* \neq v_i \lor z(v_i) = L} \sum\limits_{\substack{x \in \dom(T_{i-1}) \\ x \sim z}} T_{i-1}[x, \order^z, \widetilde{\weight}^z] \\
		= \sum\limits_{x \in \dom(T_{i-1})} \sum\limits_{\substack{z \in \dom(P_i) \\ x \sim z, z \sim y}} 1_{v^* \neq v_i \lor z(v_i) = L} T_{i-1}[x, \order^z, \widetilde{\weight}^z] \\			
		= \sum\limits_{x \in \dom(T_{i-1})} \sum\limits_{z \in \dom(P_i)} 1_{v^* \neq v_i \lor z(v_i) = L} 1_{x \sim z} 1_{z \sim y} T_{i-1}[x, \order^z, \widetilde{\weight}^z] \\	
		\stackrel{Z_i = X_{i-1} \dot\cup \{v_i\}}{=} \sum\limits_{x \in \dom(T_{i-1})} \sum\limits_{s \in \colors} 1_{v^* \neq v_i \lor s = L} 1_{x \sim [v_i \mapsto s]} 1_{[v_i \mapsto s] \sim y} 1_{x \sim y} T_{i-1}[x, \order^z, \widetilde{\weight}^z] \\
		= \sum\limits_{x \in \dom(T_{i-1})} \sum\limits_{s \in a(v_i)} 1_{x \sim [v_i \mapsto s]} 1_{[v_i \mapsto s] \sim y} 1_{x \sim y} T_{i-1}[x, \order^z, \widetilde{\weight}^z] \\
		= \sum\limits_{s \in a(v_i)} 1_{[v_i \mapsto s] \sim y} \sum\limits_{x \in \dom(T_{i-1})} 1_{x \sim [v_i \mapsto s]}  1_{x \sim y} T_{i-1}[x, \order^z, \widetilde{\weight}^z] \\
		= \sum\limits_{s \in a(v_i)} 1_{[v_i \mapsto s] \sim y} \sum\limits_{\substack{x \in \dom(T_{i-1}) \\ x \sim \bigl(y[v_i \mapsto s]\bigr)}} T_{i-1}[x, \order^z, \widetilde{\weight}^z] \\
		\stackrel{N(X_{i-1}) \cap \bigl(\{v_i\} \cup Y_i\bigr) \subseteq Y_{i-1}}{=} \sum\limits_{s \in a(v_i)} 1_{[v_i \mapsto s] \sim y} \sum\limits_{\substack{x \in \dom(T_{i-1}) \\ x \sim \bigl(y[v_i \mapsto s]\bigr)_{|_{Y_{i-1}}}}} T_{i-1}[x, \order^z, \widetilde{\weight}^z] \\
		\stackrel{(X_{i-1}, Y_{i-1}, X_{i-1})-\text{repr.}}{\equiv} \sum\limits_{s \in a(v_i)} 1_{[v_i \mapsto s] \sim y} \sum\limits_{\substack{x \in \dom(T_{i-1}) \\ x \sim \bigl(y[v_i \mapsto s]\bigr)_{|_{Y_{i-1}}}}} \widehat{T_{i-1}}[x, \order^z, \widetilde{\weight}^z]. \\
	\end{multline*}
	Observe that the above equalities and congruencies still hold if we replace $P_i$ with $\widehat{P_i}$ and $T_{i-1}$ with $\widehat{T_{i-1}}$. The first congruency is true by \cref{lem:CDS-non-reduced-DP-a} for $T_i$ and by the definition of $\widehat{P_i}$ for $\widehat{T_i}$. The remaining congruencies hold since they only use the definition of the compatibility and not the table whose entries are summed up.
	Therefore, we obtain that 
	\[
		\sum\limits_{\substack{z \in \dom(P_i) \\ z \sim y}} P_i[z, \order, \weight] \equiv \sum\limits_{\substack{z \in \dom(P_i) \\ z \sim y}} \widehat{P_i}[z, \order, \weight]
	\]
	holds.
	So $\widehat{P_i}$ $(X_i, Y_i, Z_i)$-represents $P_i$.
\end{proof}	
	
\begin{lemma}\label{lem:reduced-dp-CDS-b}
		Let $2 \leq i \leq n$ and let a table $\widehat{P_i}$ $(X_i, Y_i, Z_i)$-represent $P_i$.
		Let $\widehat{Q_i^0} \equiv \widehat{P_i}$ and for all $0 \leq l \leq t_i-1$ and $z \in \dom(Q_i^{l+1})$, let
		\[
			\widehat{Q_i^{l+1}}[z, \order, \weight] \equiv 
			\begin{cases}
				\widehat{Q_i^l}[z, \order, \weight] & \mbox{if $z(w_i^{l+1}) \in \{L, R\}$,} \\
				\widehat{Q_i^l}\bigl[z[w_i^{l+1} \mapsto A], \order, \weight\bigr] - \widehat{Q_i^l}\bigl[z[w_i^{l+1} \mapsto F], \order, \weight\bigr] & \mbox{if $z(w_i^{l+1}) = D$.}
			\end{cases}
		\]
		Then for all $0 \leq l \leq t_i$, the table $\widehat{Q_i^{l+1}}$ $(X_i, Y_i, Z_i)$-represents $Q_i^{l+1}$.
\end{lemma}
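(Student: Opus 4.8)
The plan is to proceed by induction on $l$ and show that $\widehat{Q_i^l}$ $(X_i, Y_i, Z_i)$-represents $Q_i^l$ for every $0 \leq l \leq t_i$; up to the evident reindexing this is the claimed statement. The base case $l = 0$ is immediate: since $Q_i^0 = P_i$ (as observed above), the hypothesis that $\widehat{P_i}$ $(X_i, Y_i, Z_i)$-represents $P_i$ together with $\widehat{Q_i^0} \equiv \widehat{P_i}$ gives that $\widehat{Q_i^0}$ $(X_i, Y_i, Z_i)$-represents $Q_i^0$. For the inductive step I would fix $0 \leq l \leq t_i - 1$, assume $\widehat{Q_i^l}$ represents $Q_i^l$, and abbreviate $v = w_i^{l+1}$. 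Fixing an arbitrary $y \in \colors^{Y_i}$ and $\order, \weight \in \ZZ$, we must show
\[
	\sum_{\substack{c \in \dom(Q_i^{l+1}) \\ c \sim y}} Q_i^{l+1}[c, \order, \weight] \equiv \sum_{\substack{c \in \dom(Q_i^{l+1}) \\ c \sim y}} \widehat{Q_i^{l+1}}[c, \order, \weight].
\]

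The key observation is that $v \in Z_i \setminus X_i$, hence $v$ has no incident edge crossing the $i$th cut and, in particular, no neighbor in $Y_i$; consequently, for any coloring $c$ in the domain of $Q_i^l$ or $Q_i^{l+1}$, whether $c \sim y$ holds depends only on $c_{|_{X_i}}$ and is therefore unchanged if we modify the color of $v$. Recall moreover that $\dom(Q_i^l)$ and $\dom(Q_i^{l+1})$ differ precisely in that $v$ ranges over $\colors = \{L, R, A, F\}$ in the former and over $\finalcolors = \{L, R, D\}$ in the latter. I would split the left-hand sum according to $c(v) \in \finalcolors$ and apply \cref{lem:CDS-non-reduced-DP-b} termwise: the summands with $c(v) \in \{L, R\}$ equal the corresponding $Q_i^l$-values (and these $c$ already lie in $\dom(Q_i^l)$), while each summand with $c(v) = D$ equals $Q_i^l[c[v \mapsto A], \order, \weight] - Q_i^l[c[v \mapsto F], \order, \weight]$. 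Using the bijections $c \mapsto c[v \mapsto A]$ and $c \mapsto c[v \mapsto F]$ between colorings in $\dom(Q_i^{l+1})$ with $c(v) = D$ and colorings in $\dom(Q_i^l)$ with $v$ colored $A$, respectively $F$ — bijections that preserve the predicate $c \sim y$ by the key observation — the left-hand side equals
\begin{multline*}
	\sum_{\substack{c \in \dom(Q_i^l) \\ c \sim y,\, c(v) \in \{L, R\}}} Q_i^l[c, \order, \weight]
	+ \sum_{\substack{c \in \dom(Q_i^l) \\ c \sim y,\, c(v) = A}} Q_i^l[c, \order, \weight] \\
	- \sum_{\substack{c \in \dom(Q_i^l) \\ c \sim y,\, c(v) = F}} Q_i^l[c, \order, \weight],
\end{multline*}
which, since $-1 \equiv 1$ modulo $p = 2$ and $\{L, R\}, \{A\}, \{F\}$ partition $\colors$, is congruent to $\sum_{c \in \dom(Q_i^l),\, c \sim y} Q_i^l[c, \order, \weight]$.

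Exactly the same chain of equalities applies verbatim with $\widehat{Q_i^{l+1}}$ in place of $Q_i^{l+1}$ and $\widehat{Q_i^l}$ in place of $Q_i^l$, since the recurrence defining $\widehat{Q_i^{l+1}}$ in the statement mirrors the one in \cref{lem:CDS-non-reduced-DP-b} and every remaining step only uses the definition of compatibility, not the particular table. Hence the right-hand side of the displayed congruence is $\equiv \sum_{c \in \dom(Q_i^l),\, c \sim y} \widehat{Q_i^l}[c, \order, \weight]$, and the two sides are congruent by the induction hypothesis. I expect the only point requiring care to be the bookkeeping around the two domains and the verification that the color-flip bijections respect $c \sim y$; both follow from $v \notin X_i$, so no genuine obstacle is anticipated.
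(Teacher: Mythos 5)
Your proposal is correct and follows essentially the same route as the paper's own proof: induction on $l$ with the same base case, the same observation that $v = w_i^{l+1} \notin X_i$ so compatibility with $y$ is invariant under recoloring $v$, the same split on $c(v)$, the same application of \cref{lem:CDS-non-reduced-DP-b}, the same $-1 \equiv 1$ step over $\FF_2$, and the same "replace $Q$ by $\widehat Q$ throughout and invoke the induction hypothesis" closing move.
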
	
	
\begin{proof}
	We prove the statement by induction over $l$. Base case $l = 0$ is true by the previous item because of $\widehat{Q_i^0} \equiv \widehat{P_i}$. So suppose the statement holds for some $0 \leq l < t_i$, we will show that this also holds for $l+1$.
	Let $y \in \colors^{Y_i}$. For the simplicity of notation, we denote $v = w_i^{l+1}$. Then we have
	\begin{multline*}
		\sum\limits_{\substack{x \in \dom(Q_i^{l+1}) \\ x \sim y}} Q_i^{l+1}[x, K, W] \\
		\equiv \sum\limits_{\substack{x \in \dom(Q_i^{l+1}) \\ x_{|_{X_i}} \sim y}} Q_i^{l+1}[x, \order, \weight] \\
		= \sum\limits_{\substack{x \in \dom(Q_i^{l+1}) \\ x_{|_{X_i}} \sim y, x(v) = D}} Q_i^{l+1}[x, \order, \weight] + \sum\limits_{\substack{x \in \dom(Q_i^{l+1}) \\ x_{|_{X_i}} \sim y, x(v) \in \{L, R\}}} Q_i^{l+1}[x, \order, \weight] \\
		= \sum\limits_{\substack{x \in \dom(Q_i^{l+1}) \\ x_{|_{X_i}} \sim y, x(v) = D}} \Bigl(Q_i^l\bigl[x[v \mapsto A], K, W\bigr] - Q_i^l\bigl[x[v \mapsto F], \order, \weight\bigr]\Bigr)\\
		\hspace{.6\linewidth}+ \sum\limits_{\substack{x \in \dom(Q_i^{l+1}) \\ x_{|_{X_i}} \sim y, x(v) \in \{L, R\}}} Q_i^l[x, \order, \weight] \\
		= \sum\limits_{\substack{x \in \dom(Q_i^{l+1}) \\ x_{|_{X_i}} \sim y, x(v) = D}} Q_i^l\bigl[x[v \mapsto A], \order, \weight\bigr] - \sum\limits_{\substack{x \in \dom(Q_i^{l+1}) \\ x_{|_{X_i}} \sim y, x(v) = D}} Q_i^l\bigl[x[v \mapsto F], \order, \weight\bigr] \\
		\hspace{.6\linewidth} + \sum\limits_{\substack{x \in \dom(Q_i^{l+1}) \\ x_{|_{X_i}} \sim y, x(v) \in \{L, R\}}} Q_i^l[x, \order, \weight] \\
		\stackrel{v \notin X_i}{=} \sum\limits_{\substack{x \in \dom(Q_i^l) \\ x_{|_{X_i}} \sim y, x(v) = A}} Q_i^l[x, \order, \weight] - \sum\limits_{\substack{x \in \dom(Q_i^l) \\ x_{|_{X_i}} \sim y, x(v) = F}} Q_i^l[x, \order, \weight]\\
		\hspace{.6\linewidth} + \sum\limits_{\substack{x \in \dom(Q_i^l) \\ x_{|_{X_i}} \sim y, x(v) \in \{L, R\}}} Q_i^l[x, \order, \weight] \\
		\stackrel{\text{in } \FF_2}{\equiv} \sum\limits_{\substack{x \in \dom(Q_i^l) \\ x_{|_{X_i}} \sim y, x(v) = A}} Q_i^l[x, \order, \weight] + \sum\limits_{\substack{x \in \dom(Q_i^l) \\ x_{|_{X_i}} \sim y, x(v) = F}} Q_i^l[x, \order, \weight]\\
		\hspace{.6\linewidth} + \sum\limits_{\substack{x \in \dom(Q_i^l) \\ x_{|_{X_i}} \sim y, x(v) \in \{L, R\}}} Q_i^l[x, \order, \weight] \\
		= \sum\limits_{\substack{x \in \dom(Q_i^l) \\ x_{|_{X_i}} \sim y}} Q_i^l[x, K, W] = \sum\limits_{\substack{x \in \dom(Q_i^l) \\ x \sim y}} Q_i^l[x, \order, \weight]\\
		\stackrel{(X_i, Y_i, Z_i)-\text{repr.}}{\equiv} \sum\limits_{\substack{x \in \dom(Q_i^l) \\ x \sim y}} \widehat{Q_i^l}[x, \order, \weight].
	\end{multline*}
	Again, observe that all of the above equalities and congruencies again still hold if we replace $Q_i^l$ with $\widehat{Q_i^l}$ and $Q_i^{l+1}$ with $\widehat{Q_i^{l+1}}$. 
	Therefore, we obtain
	\[
		\sum\limits_{\substack{x \in \dom(Q_i^{l+1}) \\ x \sim y}} Q_i^{l+1}[x, \order, \weight] \equiv \sum\limits_{\substack{x \in \dom(Q_i^{l+1}) \\ x \sim y}} \widehat{Q_i^{l+1}}[x, \order, \weight]
	\]
	and this proves the claim for $l + 1$.
\end{proof}	

\begin{lemma}\label{lem:reduced-dp-CDS-c}
	Let $2 \leq i \leq n$ and let a table $\widehat{Q_i}$ $(X_i, Y_i, Z_i)$-represent $Q_i$.
	For all $x \in \dom(T_i)$, let
	\[
		\widehat{T_i}[x, \order, \weight] \equiv \sum\limits_{\substack{z \in \dom(Q_i) \\ z_{|_{X_i}} = x}} \widehat{Q_i}[z, \order, \weight]
	\]
	Then the table $\widehat{T_i}$ $(X_i, Y_i, X_i)$-represents $T_i$.
\end{lemma}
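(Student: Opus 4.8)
The plan is to mirror the structure of the proofs of \cref{lem:reduced-dp-CDS-a,lem:reduced-dp-CDS-b}: fix an arbitrary coloring $y \in \colors^{Y_i}$ and arbitrary $\order, \weight \in \ZZ$, and show that
\[
	\sum_{\substack{x \in \dom(T_i) \\ x \sim y}} T_i[x, \order, \weight] \equiv \sum_{\substack{x \in \dom(T_i) \\ x \sim y}} \widehat{T_i}[x, \order, \weight].
\]
First I would rewrite the left-hand side using the combinatorial identity of \cref{lem:CDS-non-reduced-DP-c}, obtaining the double sum $\sum_{x \in \dom(T_i),\, x \sim y} \sum_{z \in \dom(Q_i),\, z_{|_{X_i}} = x} Q_i[z, \order, \weight]$. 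The key observation is that the restriction map $z \mapsto z_{|_{X_i}}$ sends $\dom(Q_i) = \colors^{X_i} \times \finalcolors^{Z_i \setminus X_i}$ onto $\dom(T_i) = \colors^{X_i}$, partitioning $\dom(Q_i)$ according to the value of $z_{|_{X_i}}$, and that by the definition of $(X_i, Y_i, Z_i)$-compatibility we have $z \sim y$ exactly when $z_{|_{X_i}} \sim y$. Hence the outer constraint $x \sim y$ together with $z_{|_{X_i}} = x$ is equivalent to the single constraint $z \sim y$, and the double sum collapses into $\sum_{z \in \dom(Q_i),\, z \sim y} Q_i[z, \order, \weight]$.

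Next I would invoke the hypothesis that $\widehat{Q_i}$ $(X_i, Y_i, Z_i)$-represents $Q_i$, which is precisely the statement that $\sum_{z \in \dom(Q_i),\, z \sim y} Q_i[z, \order, \weight] \equiv \sum_{z \in \dom(Q_i),\, z \sim y} \widehat{Q_i}[z, \order, \weight]$. Finally I would run the first part of the argument backwards: re-split the sum according to $x = z_{|_{X_i}}$ and apply the defining equation of $\widehat{T_i}$ (which is the verbatim analogue of \cref{lem:CDS-non-reduced-DP-c}) to recover $\sum_{x \in \dom(T_i),\, x \sim y} \widehat{T_i}[x, \order, \weight]$. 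Chaining these (in)equalities yields the claim. As in the earlier lemmas, every step in the chain uses either the identity of \cref{lem:CDS-non-reduced-DP-c} and its $\widehat{T_i}$-analogue, or purely a manipulation of the compatibility relation that is oblivious to which table is being summed; so the same manipulations apply simultaneously to the $T$-side and the $\widehat{T}$-side.

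I expect the argument to be essentially bookkeeping; the only points requiring care are (i) the equivalence $z \sim y \iff z_{|_{X_i}} \sim y$ for $z \in \dom(Q_i)$, which is exactly how $(X_i, Y_i, Z_i)$-compatibility was defined, and (ii) the well-definedness of $(X_i, Y_i, Z_i)$-representation in this situation, i.e.\ that $(Z_i \setminus X_i) \cap Y_i = \emptyset$ and that $G$ has no edge between $Z_i \setminus X_i$ and $Y_i$ (any such edge would cross the $i$th cut, forcing its $Z_i$-endpoint into $X_i$). Both facts were already observed immediately before the statement of the lemma, so once they are cited the remainder is routine.
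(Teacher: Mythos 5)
Your argument is correct and follows essentially the same route as the paper's proof: rewrite via the recurrence of \cref{lem:CDS-non-reduced-DP-c} to introduce the inner sum over $\dom(Q_i)$, swap the order of summation, use that $z \sim y$ is by definition equivalent to $z_{|_{X_i}} \sim y$ for $z \in \dom(Q_i)$, apply the $(X_i, Y_i, Z_i)$-representation hypothesis, and observe that every step works verbatim with the hatted tables. The paper presents this as a single chain of (in)equalities followed by the ``same manipulations apply to the hatted side'' remark, rather than explicitly running the chain backwards as you describe, but these are cosmetic variants of the same argument.
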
	
	
\begin{proof}
	Let $y \in \colors^{Y_i}$. Then we have
	\begin{align*}
		&\sum\limits_{\substack{x \in \dom(T_i) \\ x \sim y}} T_i[x, \order, \weight] \\ 
		\equiv& \sum\limits_{\substack{x \in \dom(T_i) \\ x \sim y}} \sum\limits_{\substack{z \in \dom(Q_i) \\ z_{|_{X_i}} = x}} Q_i[z, \order, \weight] \\
		=& \sum\limits_{z \in \dom(Q_i)} \sum\limits_{\substack{x \in \dom(T_i) \\ x \sim y, z_{|_{X_i}} = x}} Q_i[z, \order, \weight] \\
		\stackrel{\dom(T_i) = X_i \subset Z_i = \dom(Q_i)}{=}& \sum\limits_{z \in \dom(Q_i)} Q_i[z, \order, \weight] 1_{z_{|_{X_i}} \sim y} \\
		=& \sum\limits_{\substack{z \in \dom(Q_i) \\ z_{|_{X_i}} \sim y}} Q_i[z, \order, \weight] \\
		=& \sum\limits_{\substack{z \in \dom(Q_i) \\ z \sim y}} Q_i[z, \order, \weight] \\
		\stackrel{(X_i, Y_i, Z_i)-\text{repr.}}{\equiv}& \sum\limits_{\substack{z \in \dom(Q_i) \\ z \sim y}} \widehat{Q_i}[z, \order, \weight]. \\
	\end{align*}
	Again, observe that all of the above equalities and congruencies still hold if we replace $Q_i$ with $\widehat{Q_i}$ and $T_i$ with $\widehat{T_i}$. This proves that $\widehat{T_i}$ $(X_i, Y_i, Z_i)$-represents $T_i$.
	Therefore, it holds that:
	\[
		\sum\limits_{\substack{x \in \dom(T_i) \\ x \sim y}} T_i[x, \order, \weight] \equiv \sum\limits_{\substack{x \in \dom(T_i) \\ x \sim y}} \widehat{T_i}[x, \order, \weight].
	\]
	This proves that $\widehat{T_i}$ $(X_i, Y_i, Z_i)$-represents $T_i$.
\end{proof}	

\begin{lemma}\label{lem:reduced-dp-CDS}
	Let $2 \leq i \leq n$ and let a table $\widehat{T_{i-1}}$ and a set $R_{i-1} \subseteq X_{i-1}$ satisfying the following properties be given: 
	\begin{enumerate}[(i)]
		\item the table $\widehat{T_{i-1}}$ $(X_{i-1}, Y_{i-1}, X_{i-1})$-represents $T_{i-1}$,
		\item the table $\widehat{T_{i-1}}$ is fully reduced, 
		\item $\widehat{T_{i-1}}^{-1}\bigl(\ZZ \setminus \{0\}\bigr) \subseteq C^{X_{i-1}} \times [i-1]_0 \times \bigl[(i-1)N\bigr]_0$ holds,
		\item and the vertices in $R_{i-1}$ are reduced in $\widehat{T_{i-1}}$.
	\end{enumerate}
	Then a table $\widehat{T_i}$ that $(X_i, Y_i, Z_i)$-represents $T_i$ along with the set $R_i \subset X_i$ of reduced vertices of $\widehat{T_i}$ can be computed in time $\mathcal{O}^*(3^{|R_{i-1}|} 4^{|X_{i-1}| - |R_{i-1}|})$ such that $\widehat{T_i}^{-1} \bigl(\ZZ \setminus \{0\}\bigr) \subseteq C^{X_i} \times [i]_0 \times [iN]_0$ and $|X_i \setminus R_i| \leq \bigl(\ctw - |R_i|\bigr) / 2 + 1$ hold.
\end{lemma}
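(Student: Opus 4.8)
The plan is to chain the three dynamic-programming recurrences of \cref{lem:reduced-dp-CDS-a,lem:reduced-dp-CDS-b,lem:reduced-dp-CDS-c} and then read off the reduced-vertex set from \cref{lem:sets-A-i-B-i-R-i}. Concretely, starting from the given $\widehat{T_{i-1}}$, first define $\widehat{P_i}$ by the formula of \cref{lem:reduced-dp-CDS-a}; since $\widehat{T_{i-1}}$ $(X_{i-1},Y_{i-1},X_{i-1})$-represents $T_{i-1}$, that lemma yields that $\widehat{P_i}$ $(X_i,Y_i,Z_i)$-represents $P_i$. Then set $\widehat{Q_i^0}:=\widehat{P_i}$ and compute $\widehat{Q_i^1},\dots,\widehat{Q_i^{t_i}}=:\widehat{Q_i}$ by iterating the recurrence of \cref{lem:reduced-dp-CDS-b}; by that lemma each $\widehat{Q_i^l}$ $(X_i,Y_i,Z_i)$-represents $Q_i^l$, so $\widehat{Q_i}$ represents $Q_i$. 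Finally, define $\widehat{T_i}$ from $\widehat{Q_i}$ by the formula of \cref{lem:reduced-dp-CDS-c}, which then represents $T_i$ in the sense required there. No transitivity is needed, as each step passes directly from one table to its successor.

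For the reduced-vertex set, take $A_i$, $B_i$ and $R_i$ exactly as in \cref{lem:sets-A-i-B-i-R-i} at index $i$; that lemma already provides $|X_i\setminus R_i|\leq(\ctw-|R_i|)/2+1$. It remains to verify that every $u\in R_i$ is reduced in $\widehat{T_i}$. Since $u\in X_i\setminus\{v_i\}\subseteq X_{i-1}$ and $u\notin A_i\cup B_i$, the vertex $u$ has exactly one neighbor in $Y_i$ and is not adjacent to $v_i$, hence exactly one neighbor in $Y_{i-1}$; so by the assumed full reducedness of $\widehat{T_{i-1}}$ we have $\widehat{T_{i-1}}[x,\cdot,\cdot]\equiv 0$ whenever $x(u)=F$. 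Now trace this through the recurrences: the contributing $\widehat{T_{i-1}}$-entry in the formula for $\widehat{P_i}$ is the restriction of the argument to $X_{i-1}\ni u$; the step $\widehat{Q_i^l}\to\widehat{Q_i^{l+1}}$ only touches the coordinate $w_i^{l+1}\in Z_i\setminus X_i$, so the color of $u\in X_i$ is untouched even by the inclusion--exclusion subtraction; and $\widehat{T_i}[x,\cdot,\cdot]$ sums only $\widehat{Q_i}$-entries that agree with $x$ on $X_i\ni u$. Hence a color $F$ on $u$ forces the value $0$ all the way to $\widehat{T_i}$, so $R_i$ is a set of reduced vertices of $\widehat{T_i}$. (If one additionally wants $\widehat{T_i}$ fully reduced so as to feed the next iteration, one applies \textbf{Reduce} from \cref{lem:degree-1-vts-CDS} to each remaining degree-one vertex of $H_i$, which stays within the time bound below.)

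For the support and running time, the third hypothesis together with $\omega(v_i)\in[N]$ confines the nonzero arguments of $\widehat{P_i}$, of every $\widehat{Q_i^l}$, and of $\widehat{T_i}$ to $\order\in[i]_0$ and $\weight\in[iN]_0$, which gives the claimed containment $\widehat{T_i}^{-1}(\ZZ\setminus\{0\})\subseteq C^{X_i}\times[i]_0\times[iN]_0$ exactly as in \cref{lemma:dp-reduced-version}. Moreover, because the vertices of $R_{i-1}$ are reduced in $\widehat{T_{i-1}}$ (and this propagates to every intermediate table by the argument above) and the finalized coordinates $w_i^1,\dots,w_i^l$ range only over the three colors of $\finalcolors$, a routine count --- using $|Z_i|=|X_{i-1}|+1$ and that $3^a4^{m-a}$ is nonincreasing in $a$ --- bounds the number of entries that must be stored for each of these tables by $\mathcal{O}^*(3^{|R_{i-1}|}4^{|X_{i-1}|-|R_{i-1}|})$. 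Each of the $t_i+2=\mathcal{O}(n)$ recurrence passes inspects every stored source entry together with the $\mathcal{O}(1)$ colors of the single newly introduced or newly finalized vertex and the edges incident to it, all in polynomial time; hence the whole computation runs in $\mathcal{O}^*(3^{|R_{i-1}|}4^{|X_{i-1}|-|R_{i-1}|})$.

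The main obstacle is the reducedness bookkeeping of the second paragraph: because the \textsc{CDS} program threads the extra colors $A$, $F$, $D$ through tables $P_i$ and $Q_i^l$ whose domains differ from those of $T_{i-1}$ and $T_i$, and because \cref{lem:reduced-dp-CDS-b} subtracts entries, one has to argue carefully that the event ``$u$ carries the color $F$'' is never disturbed by the operations of the recurrences. This is exactly ensured by $u\in X_i\setminus\{v_i\}$ --- so that $u$ is neither $v_i$ nor any forgotten vertex $w_i^j$ --- together with the one-neighbor-in-$Y_{i-1}$ property coming from $u\notin A_i\cup B_i$; the remainder of the argument mirrors the proof of \cref{lemma:dp-reduced-version}.
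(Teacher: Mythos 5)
Your proof is correct and takes essentially the same route as the paper's: chain \cref{lem:reduced-dp-CDS-a,lem:reduced-dp-CDS-b,lem:reduced-dp-CDS-c} to build $\widehat{P_i}\to\widehat{Q_i^0}\to\dots\to\widehat{Q_i^{t_i}}\to\widehat{T_i}$, take $R_i$ from \cref{lem:sets-A-i-B-i-R-i}, and then propagate full-reducedness of $\widehat{T_{i-1}}$ at a vertex $u\in R_i$ (which indeed has exactly one neighbor in $Y_{i-1}$ because $u\in X_i\setminus\{v_i\}$ forces one neighbor in $Y_i$, $u\notin A_i$ caps it at one, and $u\notin B_i$ means non-adjacency to $v_i$) through the recurrences. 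The only cosmetic difference is the direction of the reducedness argument — you push the color $F$ on $u$ forward to zero every intermediate table, while the paper back-traces a nonzero $\widehat{T_i}$-entry with $c(u)=F$ to a nonzero $\widehat{T_{i-1}}$-entry with the same property; these are contrapositives of the same argument. Your support/running-time bound using $|Z_i|=|X_{i-1}|+1$ and the monotonicity of $3^a4^{m-a}$ matches the paper's bound on $\supp(\widehat{Q_i^l})$.
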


\begin{proof}
	First, let the table $\widehat{P_i}$ be as in \cref{lem:reduced-dp-CDS-a}, then it $(X_i, Y_i, Z_i)$-represents $P_i$. We let $\hat{Q_i^0} = \widehat{P_i}$ so that $\hat{Q_i^0}$ $(X_i, Y_i, Z_i)$-represents $Q_i^0 = P_i$.
	Next, in the order of increasing $l = 1, \dots, t_i$, let $\widehat{Q_i^l}$ be as in \cref{lem:reduced-dp-CDS-b}, then it $(X_i, Y_i, Z_i)$-represents $Q_i^l$.
	Finally, let $\hat{T_i}$ be as in \cref{lem:reduced-dp-CDS-c}, then we know that it $(X_i, Y_i, Z_i)$-represents $T_i$. We will now show that this table has the desired properties and describe how it and a set $R_i$ can be obtained in the claimed running time.
	
	In each of the equalities in \cref{lem:reduced-dp-CDS-a,lem:reduced-dp-CDS-b,lem:reduced-dp-CDS-c}, it suffices to only sum over the support (i.e., non-zero entries) of the table on the right side of the equality. We will now bound the size of the support of each table in terms of $|R_{i-1}|$ and $|X_{i-1}|$. 
	Recall that we have $\widehat{T_{i-1}}[\cdot, \order, \weight] \equiv 0$ whenever $\order \notin \bigl\{0, \dots, (i-1)\bigr\}$ or $\weight \notin \bigl\{0, \dots, (i-1)N\bigr\}$ holds. Therefore we also have $\widehat{P_i}[\cdot, \order, \weight] \equiv 0$, $\widehat{Q_i^l}[\cdot, \order, \weight] \equiv 0$, and $\widehat{T_i}[\cdot, \order, \weight] \equiv 0$ whenever $\order \notin \{0, \dots, i\}$ or $\weight \notin \{0, \dots, iN\}$ holds and we do not need to store these values explicitly. 
	From now we assume that $\order \in \{0, \dots, i\}$ and $\weight \in \{0, \dots, iN\}$ holds. Note that iterating through all such $\order, \weight$ is polynomial in the size of the graph due to $N = 2|V|$. This part of the running time will be hidden behind $\mathcal{O}^*$ notation so we may forget about it.
	
	First, we consider the computation of $\widehat{P_i}$. 
	Recall that by \cref{lem:CDS-non-reduced-DP-a}, for every $z \in \dom(T_i)$ there is at most one coloring $x$ in the summation, namely $x = z_{|_{X_{i-1}}}$. Therefore $\widehat{P_i}$ can be computed by iterating through $x \in \supp(\widehat{T_{i-1}})$, $s \in a(v_i)$, and $\order, \weight$: check if $x \sim x[v_i \mapsto s]$ holds and if so, set $\widehat{P_i}\bigl[x[v_i \mapsto s], \order, \weight\bigr] \equiv \widehat{T_{i-1}}[z, \order, \weight]$. The non-initialized values of $\widehat{P_i}$ are interpreted as zero.	
	For every coloring $c \in \supp(\widehat{T_{i-1}})$ and every vertex $r \in R_{i-1}$ it holds that $c(r) \neq F$, therefore we have
	\[
		\bigl|\supp(T_{i-1})\bigr| \leq \Bigl|\bigl(\colors \setminus \{F\}\bigr)^{R_{i-1}} \times \colors^{X_{i-1} \setminus R_{i-1}} \Bigr| = 3^{|R_{i-1}|} 4^{|X_{i-1}| - |R_{i-1}|}.
	\]	
	So it is enough to only iterate through $x \in \bigl(\colors \setminus \{F\}\bigr)^{R_{i-1}} \times \colors^{X_{i-1} \setminus R_{i-1}}$ and it will contain the whole support of $\widehat{T_{i-1}}$. Hence, $\widehat{P_i}$ can be computed within the desired running time. Moreover, by the construction of $\widehat{P_i}$, all vertices in $R_{i-1}$ are still reduced in $\widehat{P_i}$.
	
	Similarly, for $l \in \{0, \dots, t_i\}$, we claim that we can compute the table $\widehat{Q_i^l}$ in time $\mathcal{O}^*(3^{|R_{i-1}|} 4^{|X_{i-1}| - |R_{i-1}|})$ and the set $R_{i-1} \setminus \{w_i^1, w_i^2, \dots, w_i^l\}$ is a set of reduced vertices of $\widehat{Q_i^l}$. Base case $l = 0$ holds due to $\widehat{Q_i^0} = \widehat{P_i}$. So suppose the claim is true for some $0 \leq l < t_i$. To compute $Q_i^{l+1}$, it suffices to iterate through $z' \in \supp\bigl(\widehat{Q_i^l}\bigr)$ and $K, W$: the corresponding table entry of $\widehat{Q_i^l}$ is then added to or subtracted from the corresponding entry of $\widehat{Q_i^{l+1}}$ depending on the color of $w_i^{l+1}$ in $z'$. 
	For shortness, let $S_i^l$  denote the set $\{w_i^j, w_i^2, \dots, w_i^l\}$.
	Recall that we have $\dom(Q_i^l) = \dom\bigl(\widehat{Q_i^l}\bigr) = \finalcolors^{S_i^l} \times \colors^{Z_i \setminus S_i^l}$.	
	By induction hypothesis, the vertices of $R_{i-1} \setminus S_i^l$ are reduced in $\widehat{Q_i^l}$ and hence 
	\[
		\Bigl|\supp\bigl(\widehat{Q_i^l}\bigr)\Bigr| \leq \Bigl|\colors^{Z_i \setminus (R_{i-1} \cup S_i^l)} \times \finalcolors^{S_i^l} \times \bigl(\colors \setminus \{F\}\bigr)^{R_{i-1} \setminus S_i^l}\Bigr| \leq 4^{|Z_i| - |R_{i-1}|} 3^{|R_{i-1}|}
	\]
	Recall that $Z_i = X_{i-1} \cup \{v_i\}$ and hence $|Z_i| \leq \ctw + 2$. So $\widehat{Q_i^{l+1}}$ can be computed within the claimed bound. Finally, observe that by the construction of $\widehat{Q_i^{l+1}}$, the vertices in $(R_{i-1} \setminus S_i^l) \setminus \{v_i^{l+1}\}$ are still reduced in $\widehat{Q_i^{l+1}}$. So the claim holds for $l+1$. 
	
	Finally, to compute $\widehat{T_i}$ we can iterate through $z \in \supp\bigl(\widehat{Q_i} = \widehat{Q_i^{t_i}}\bigr)$ and $\order, \weight$ and increase the entry $\widehat{T_i}[z_{|_{X_i}}, \order, \weight]$ by $\widehat{Q_i}[z, \order, \weight]$. 
	Since the vertices in $R_{i-1} \setminus S_i^{t_i}$ are reduced in $\widehat{Q_i}$, it holds that 
	\[
		\Bigl|\supp\bigl(\widehat{Q_i^{t_i}}\bigr)\Bigr| \leq \Bigl|\colors^{Z_i \setminus (R_{i-1} \cup S_i^{t_i})} \times \finalcolors^{S_i^{t_i}} \times \bigl(\colors \setminus \{F\}\bigr)^{R_{i-1} \setminus S_i^{t_i}}\Bigr| \leq 4^{|Z_i| - |R_{i-1}|} 3^{|R_{i-1}|}
	\]
	So $\widehat{T_i}$ can be computed within the claimed bound as well.
	
	Now we provide a set $R_i$ of reduced vertices of $\widehat{T_i}$. 
	Let again $A_i$, $B_i$, and $R_i$ be as in \cref{lem:sets-A-i-B-i-R-i}.
	We recall that these sets are defined as 
	$R_i = X_i \setminus \bigl(A_i \cup B_i \cup \{v_i\}\bigr)$ where
	\[
		A_i = \Bigl\{u \in X_i \setminus \{v_i\} \Bigm\vert \bigl| N(u) \cap Y_i \bigr| \geq 2 \Bigr\}
	\]
	and 
	\[
		B_i = \Bigl\{u \in X_i \setminus \{v_i\} \Bigm\vert \bigl| N(u) \cap Y_i \bigr| = 1 \text{ and } \{u, v_i\} \in E \Bigr\}.
	\]
	By that lemma, we know that
	\[
		|X_i \setminus R_i| \leq \bigl(\ctw - |R_i|\bigr) / 2 + 1
	\]
	holds.
 
	It remains to prove that the vertices in $R_i$ are indeed reduced in $\widehat{T_i}$. 
	Here we again follow the argument similar to \cite{GroenlandMNS22} back-propagated over the tables of our dynamic programming.
	Suppose there is a not reduced vertex $u \in R_i$, i.e., there exist numbers $K, W$, and a coloring $x \in \dom(T_i)$ such that $x(u) = F$ and $\widehat{T_i}[c, K, W] \not\equiv 0$ hold. 
	Then by the construction of $\widehat{T_i}$, there exists a coloring $z \in \dom(Q_i)$ such that $z_{|_{X_i}} = x$ and $\widehat{Q_i}[z, K, W] \equiv \widehat{Q_i^{t_i}}[z, K, W] \not\equiv 0$. In particular, we have $z(u) = F$.
	Similarly, by the construction of $\widehat{Q_i^{t_i}}$, there is a coloring $z'$ such that $z'_{|_{Z_i \setminus \{w_i^{t_i}\}}} = z_{|_{Z_i \setminus \{w_i^{t_i}\}}}$ and $(Q_i^{t_i - 1})[z', K, W] \not\equiv 0$. Note that since $R_i \subset X_i$, it holds that $\{w_i^1, \dots, w_i^{t_i}\} \cap R_i = \emptyset$ so $z'(u) = z(u) = F$.
	So we can back-propagate this observation to $\widehat{Q_i^{t_i-1}}, \dots, \widehat{Q_i^1}, \widehat{Q_i^0} = \widehat{P_i}$ to obtain a coloring $z^* \in \dom(Q_i^0) = \dom(P_i)$ such that $z^*(u) = F$ and $\widehat{P_i}[z^*, K, W] \not \equiv 0$. 
	This, in turn, implies that we have $z^*_{|_{X_{i-1}}} \sim z^*$ and $\widehat{T_{i-1}}[z^*_{|_{X_{i-1}}}, \order^{z^*}, \widetilde{\weight}^{z^*}] \not\equiv 0$ and it holds that $z^*_{|_{X_{i-1}}}(u) = F$. 
	We now prove that $u$ has exactly one neighbor in $Y_{i-1}$ contradicting the fact that $\widehat{T_{i-1}}$ is fully reduced.  
	Since $u \notin A_i \cup B_i$, the vertex $u$ has at most one incident edge crossing the $(i-1)$th cut. On the other hand, we have $u \in X_{i-1}$ so $u$ has at least one incident edge going across the $(i-1)$th cut. Thus, the vertex $u$ indeed has exactly one neighbor in $Y_{i-1}$ so $\widehat{T_{i-1}}$ is not fully reduced -- a contradiction. So $R_i$ is a set of reduced vertices of $\widehat{T_i}$. Clearly, $R_i$ can be computed in polynomial time.
\end{proof}

Now we are able to provide an algorithm solving \pcds\ in $\ostar(3^{\ctw})$.

\begin{theorem}
	There exists an algorithm that given a graph $G = (V, E)$ and a linear arrangement $\ell$ of $G$ of cutwidth $\ctw$ in time $\ostar(3^{\ctw})$ solves the \Pcds\ problem. The algorithm cannot give false positives and may give false negatives with probability at most~$1/2$.
\end{theorem}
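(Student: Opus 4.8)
The plan is to invoke \cref{thm:cut-and-count-short}: it suffices to exhibit an algorithm $\AAA$ that, given $G$, $\ell$, the distinguished vertex $v^* := v\in V$, a weight function $\omega\colon V\to\bigl[2|V|\bigr]$, and integers $\totalorder\in[n]_0$, $\totalweight\in\bigl[2n|V|\bigr]_0$, computes $\bigl|\DDD^{\totalorder}_{\totalweight}\bigr|$ modulo $2$ in time $\ostar(3^{\ctw})$. I would realize $\AAA$ by the dynamic programming over the tables $T_i$, $P_i$, $Q_i^l$ introduced above, with $N = 2|V|$, intermediate colors $\colors = \{L, R, A, F\}$, final colors $\finalcolors = \{L, R, D\}$, and consistency matrix $M$, recalling that $M$ has rank $3$ over $\FF_2$ with $F$ the unique reduced color, and that $\sqrt{|\colors|} = 2 \leq 3 = \rank(M)$ --- exactly the inequality (as in \cref{thm:whole-algorithm-reduced}) that will make the running time come out to $3^{\ctw}$.

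First I would initialize $T_1$ by brute force over the constantly many colorings of $\{v_1\}$ and values $\order\in\{0,1\}$, $\weight\in\{0,\omega(v_1)\}$, then apply the \textbf{Reduce} algorithm of \cref{lem:degree-1-vts-CDS} to $v_1$ if $v_1$ has exactly one neighbor in $H_1$ (and leave it unchanged otherwise), obtaining a fully reduced table $\widehat{T_1}$ that $(X_1,Y_1,X_1)$-represents $T_1$ together with its set $R_1$ of reduced vertices. Then, for $i = 2, \dots, n$, I would (a) apply \cref{lem:reduced-dp-CDS} to $\bigl(\widehat{T_{i-1}}, R_{i-1}\bigr)$ to obtain $\widehat{T_i}$ that $(X_i,Y_i,X_i)$-represents $T_i$ together with a set $R_i \subseteq X_i$ of reduced vertices satisfying $|X_i \setminus R_i| \leq (\ctw - |R_i|)/2 + 1$ and $\widehat{T_i}^{-1}\bigl(\ZZ\setminus\{0\}\bigr) \subseteq \colors^{X_i}\times[i]_0\times[iN]_0$, and then (b) while there is a vertex in $X_i\setminus R_i$ with exactly one neighbor in $Y_i$, apply \cref{lem:degree-1-vts-CDS} to reduce it and add it to $R_i$. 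Step (b) only enlarges $R_i$, which keeps $\widehat{T_i}$ representing $T_i$ (representation is transitive), makes $\widehat{T_i}$ fully reduced, and preserves the invariant $|X_i\setminus R_i|\leq(\ctw-|R_i|)/2+1$ --- equivalently $2|X_i| - |R_i| \leq \ctw + 2$, which only gets easier as $|R_i|$ grows --- so the hypotheses required by \cref{lem:reduced-dp-CDS} at the next step are met; moreover there are at most $|X_i|\leq\ctw+1$ iterations of (b). The set $R_i$ obtained from $A_i$, $B_i$ in \cref{lem:sets-A-i-B-i-R-i} is a valid starting point for (a).

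For the running time, one step $i$ (including step (a) and all iterations of (b)) costs $\ostar\bigl(3^{|R|}\,4^{|X_i|-|R|}\,N\bigr)$ for the relevant reduced set $R$ (which, at every moment of the step, satisfies $|X_i\setminus R|\leq(\ctw-|R|)/2+1$, and $|X_{i-1}\setminus R_{i-1}|\leq(\ctw-|R_{i-1}|)/2+1$ for the call to \cref{lem:reduced-dp-CDS}), up to polynomial overhead from iterating over $\order,\weight$ in ranges of size $\poly(n)$, the $\poly(n)$ arithmetic on table entries, and the $\leq\ctw+1$ repetitions. Plugging in the bound on $|X_i\setminus R|$ and $2 = \sqrt{|\colors|}\leq 3$ gives
\[
	3^{|R|}\,4^{|X_i|-|R|} \;\leq\; 3^{|R|}\,4^{(\ctw-|R|)/2+1} \;=\; 4\cdot 3^{|R|}\,2^{\ctw-|R|} \;\leq\; 4\cdot 3^{|R|}\,3^{\ctw-|R|} \;=\; 4\cdot 3^{\ctw},
\]
and since $N = 2|V| = \poly(n)$, summing over $i\in[n]$ yields total running time $\ostar(3^{\ctw})$.

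Finally, for correctness of the output: at the end $\widehat{T_n}$ $(X_n,Y_n,X_n)$-represents $T_n$, and since $X_n=\{v_n\}$ and $Y_n=\emptyset$, there is a unique (empty) coloring of $Y_n$ with which every $c\in\colors^{X_n}$ is compatible, so $\sum_{c\in\colors^{X_n}}\widehat{T_n}[c,\totalorder,\totalweight] \equiv \sum_{c\in\colors^{X_n}}T_n[c,\totalorder,\totalweight]$ over $\FF_2$; the algorithm outputs this latter value modulo $2$. It remains to identify $\sum_{c\in\colors^{X_n}}T_n[c,\totalorder,\totalweight]$ with $\bigl|\DDD^{\totalorder}_{\totalweight}\bigr|$ modulo $2$. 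This is an inclusion--exclusion over the last vertex $v_n$, which --- unlike the vertices handled by the tables $Q_i^l$ --- is never ``finalized'' during the run: unwinding the definition of $T_n$ (and using $M[A,\cdot]\equiv(1,1,1,1)$, $M[F,L]=M[F,R]=0$), the valid extensions of $[v_n\mapsto L]$ (resp.\ of $[v_n\mapsto R]$) are exactly the consistent cuts of a dominating set of order $\totalorder$, weight $\totalweight$ with $v_n$ on the left (resp.\ right) side, while $T_n\bigl[[v_n\mapsto A],\totalorder,\totalweight\bigr] - T_n\bigl[[v_n\mapsto F],\totalorder,\totalweight\bigr]$ counts exactly those in which $v_n$ lies outside the set but is dominated ($A$-extensions impose no constraint on $v_n$, $F$-extensions are precisely those where $v_n$ has no $L/R$-neighbour, so the difference isolates the case $v_n$ plays the role of $D$). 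Hence $\bigl|\DDD^{\totalorder}_{\totalweight}\bigr| = T_n[[v_n\mapsto L]] + T_n[[v_n\mapsto R]] + T_n[[v_n\mapsto A]] - T_n[[v_n\mapsto F]]$ (entries at $\totalorder,\totalweight$), and modulo $2$ the minus sign disappears, yielding $\sum_{c\in\colors^{X_n}}T_n[c,\totalorder,\totalweight]$; combined with \cref{thm:cut-and-count-short}, this finishes the proof. The genuine content beyond the earlier lemmas is thus twofold: this last-vertex inclusion--exclusion argument (not covered by \cref{lem:reduced-dp-CDS-a,lem:reduced-dp-CDS-b,lem:reduced-dp-CDS-c}), and the bookkeeping that interleaving \cref{lem:reduced-dp-CDS} with the extra \textbf{Reduce} calls maintains all invariants so that the $4\cdot 3^{\ctw}$ estimate goes through; I expect the latter to be the main place where care is needed.
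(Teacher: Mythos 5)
Your proposal matches the paper's own proof essentially line for line: the same invocation of \cref{thm:cut-and-count-short}, the same initialization of $T_1$ followed by iterating (a) \cref{lem:reduced-dp-CDS} and (b) repeated \textbf{Reduce} calls via \cref{lem:degree-1-vts-CDS} for $i=2,\dots,n$, the same running-time bound $3^{|R_i|}4^{|X_i|-|R_i|}\leq 4\cdot 3^{\ctw}$, and the same final accounting that identifies $\sum_{c}\widehat{T_n}[c,\totalorder,\totalweight]$ with $|\DDD_L|+|\DDD_R|+|\DDD_D|$ where $|\DDD_D|=T_n[[v_n\mapsto A],\cdot,\cdot]-T_n[[v_n\mapsto F],\cdot,\cdot]$. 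The only cosmetic difference is that the paper first spells out the bijection between $\DDD^\totalorder_\totalweight$ and $\{L,R,D\}$-colorings before the DP, whereas you fold that observation into the final inclusion--exclusion step; the substance is identical.
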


\begin{proof} 
	By \cref{thm:cut-and-count-short}, it suffices to provide an algorithm that additionally to a graph and its linear arrangements gets a weight function $\omega: V \mapsto \bigl[2|V|\bigr]$ and numbers $\totalorder \in [n]_0, \totalweight \in \bigl[2n|V|\bigr]$ as input and computes the size (modulo 2) of the set $\DDD_{\totalweight}^{\totalorder}$.
	
	Observe that there is a bijection between the elements of $\DDD_\totalweight^\totalorder$ and the colorings $c: V \to \{L, R, D\}$ such that every vertex in $c^{-1}(D)$ has at least one neighbor in $c^{-1}\bigl(\{L, R\}\bigr)$ and there is no edge between $c^{-1}(L)$ and $c^{-1}(R)$. Now we provide an algorithm counting the number of such colorings modulo 2 and argue its correctness. So we will count such colorings.
	
	Recall that $X_1 = \{v_1\}$.
	We initialize the table $T_1$ via brute-force: iterate over all $s \in \colors$, all $\order \in \{0, 1\}$, and all $\weight \in \bigl\{0, \omega(v_1)\bigr\}$ and compute the corresponding entry $T_1\bigl[[v_1 \mapsto s], \order, \weight\bigr]$. All other entries of $T_1$ are implicitly set to $0$ and there is no need to store them. If $v_1$ has exactly one neighbor in $G$, apply the \textbf{Reduce} algorithm from \cref{lem:degree-1-vts-CDS} to $X = X_1$, $Y = Y_1$, and $T_1$, or $T_1$ remains unchanged otherwise. As a result, obtain a fully reduced function $\widehat{T_1}$ that $(X_1, Y_1, X_1)$-represents $T_1$ with some set of reduced vertices $R_1$. For $i = 2, \dots, n$, we repeat the following two steps.
	\begin{enumerate}
		\item Apply \cref{lem:reduced-dp-CDS} with inputs $(\widehat{T_{i-1}}, R_{i-1})$ in order to obtain a table $\widehat{T_i}$ that $(X_i, Y_i, X_i)$-represents $T_i$ and a set of reduced vertices $R_i$ for $\widehat{T_i}$.
		\item While $X_i \setminus R_i$ has a vertex with exactly one neighbor in $Y_i$, apply the \textbf{Reduce} algorithm from \cref{lem:degree-1-vts-CDS} to $i$ and $(\widehat{T_i}, R_i)$ to obtain a new table $\widehat{\widehat{T_i}}$ that $(X_i, Y_i, X_i)$-represents $\widehat{T_i}$ and whose reduced vertices are $R_i \cup \{v\}$ and set $\widehat{T_i} := \widehat{\widehat{T_i}}$ and $R_i := R_i \cup \{v\}$.
	\end{enumerate}

	At the end of step 2, we obtain a fully reduced table $\widehat{T_i}$ that $(X_i, Y_i, X_i)$-represents $T_i$. Moreover, the set $R_i$ of reduced vertices has only increased in size compared to the set we obtained in step 1. Hence, there are at most $|X_i|$ repetitions of step 2. 

	As a result of this process, we have computed the entries of the table $\widehat{T_n}$ that is fully reduced and $(X_n, Y_n, X_n)$-represents $T_n$ with a set $R_n$ of reduced vertices of $\widehat{T_{n}}$. 
	
	Finally, we output the value 
	\[
		\sum\limits_{x \in \dom(T_n)} \widehat{T_n}[x, \totalorder, \totalweight].
	\]
	Recall that $X_n = \{v_n\}$ and $Y_n = \emptyset$. 
	Let $y^*$ be the unique (i.e., empty) coloring of $Y_n$. Then every coloring of $X_n$ with an intermediate color is compatible with $y^*$. Therefore, it holds that
	\begin{multline*}
		 \sum\limits_{x \in \dom(T_n)}\widehat{T_n}[x, \totalorder, \totalweight] \\
		= \sum\limits_{\substack{x \in \dom(T_n) \\ x \sim y^*}} T_n[x, \totalorder, \totalweight] \\
		\stackrel{(X_n, Y_n, X_n)-\text{repr.}}{\equiv} \sum\limits_{\substack{x \in \dom(T_n) \\ x \sim y^*}} T_n[x, \totalorder, \totalweight] \\
		= \sum\limits_{x \in \dom(T_n)} T_n[x, \totalorder, \totalweight] \\
		= T_n\bigl[[v_n \mapsto L], \totalorder, \totalweight\bigr] + T_n\bigl[[v_n \mapsto R], \totalorder, \totalweight\bigr] + T_n\bigl[[v_n \mapsto A], \totalorder, \totalweight\bigr] \\
		 + T_n\bigl[[v_n \mapsto F], \totalorder, \totalweight\bigr] \\
		\stackrel{\text{in }\FF_2}{\equiv} T_n\bigl[[v_n \mapsto L], \totalorder, \totalweight\bigr] + T_n\bigl[[v_n \mapsto R], \totalorder, \totalweight\bigr] + \Bigl(T_n\bigl[[v_n \mapsto A], \totalorder, \totalweight\bigr] \\ 
		- T_n\bigl[[v_n \mapsto F], \totalorder, \totalweight\bigr]\Bigr).
	\end{multline*}
	We claim that the last value is the size of $\DDD_{\totalweight}^{\totalorder}$ as desired. 
	Let 
	\[
		\DDD_L = \Bigl\{\bigl(S, (L, R)\bigr) \in \DDD_{\totalweight}^{\totalorder} \Bigm\vert v_n \in L\Bigr\} \text{ and } \DDD_R = \Bigl\{\bigl(S, (L, R)\bigr) \in \DDD_{\totalweight}^{\totalorder} \mid v_n \in R\Bigr\}
	\]
	and let 
	\[
		\DDD_D = \Bigl\{\bigl(S, (L, R)\bigr) \in \DDD_{\totalweight}^{\totalorder} \Bigm\vert v_n \notin S\Bigr\}.
	\]
	Then $\DDD_L, \DDD_R, \DDD_D$ is a partition of $\DDD_{\totalweight}^{\totalorder}$.
	The value $T_n\bigl[[v_n \mapsto L], \totalorder, \totalweight\bigr]$ (resp.\ $T_n\bigl[[v_n \mapsto R], \totalorder, \totalweight\bigr]$) is the number of valid extensions of $[v_n \mapsto L]$ (resp.\ $[v_n \mapsto R]$) of order $\totalorder$ and weight $\totalweight$. Recall that in any valid extension $\phi \in \allcolors^{V_n = V}$ of a coloring $c \in \colors^{X_n = \{v_n\}}$, all vertices of $V_n \setminus \{v_n\} = V \setminus \{v_n\}$ are colored with $L$, $R$, or $D$. So by the definitions of a valid extension and a consistent cut, we obtain 
	\[
		|\DDD_L| = T_n\bigl[[v_n \mapsto L], \totalorder, \totalweight\bigr] \text{ and } |\DDD_R| = T_n\bigl[[v_n \mapsto R], \totalorder, \totalweight\bigr].
	\]	
	Similarly, by definition, the entry $T_n\bigl[[v_n \mapsto A], \totalorder, \totalweight\bigr]$ (resp.\ $T_n\bigl[[v_n \mapsto F], \totalorder, \totalweight\bigr]$) is the number of pairs $\bigl(S, (L, R)\bigr)$ such that $S$ is a dominating set of $G[V_{n-1}]$ of order $\totalorder$ and weight $\totalweight$, $(L, R)$ is a partition of $S$, and for the vertex $v_n$ it holds that $v_n \notin S$ and $v_n$ is not necessarily (but possibly) dominated by $S$ (resp.\ $v_n$ is not dominated by $S$). Therefore, we obtain 
	\[
		|\DDD_D| = T_n\bigl[[v_n \mapsto A], \totalorder, \totalweight\bigr] - T_n\bigl[[v_n \mapsto F], \totalorder, \totalweight\bigr].
	\]
	Altogether, it holds that
	\[
		\sum\limits_{x \in \dom(T_n)} \widehat{T_n}[x, \totalorder, \totalweight] \equiv |\DDD_L| + |\DDD_R| + |\DDD_D| = |\DDD_{\totalweight}^{\totalorder}|
	\]
	and the output value is correct.
	
	It remains to prove the claimed bound on the running time.
	By \cref{lem:reduced-dp-CDS} and \cref{lem:degree-1-vts-CDS}, the running time is bounded by 
	\[
		\sum\limits_{i=1}^{n} \mathcal{O}^*\bigl(|X_i| 3^{|R_i|} 4^{|X_i| - |R_i|}\bigr) = \sum\limits_{i=1}^{n} \mathcal{O}^*(3^{|R_i|} 4^{|X_i| - |R_i|}).
	\]
	Recall that by \cref{lem:reduced-dp-CDS}, it holds that $|X_i \setminus R_i| \leq (\ctw - |R_i|) / 2 + 1$ for any $i \in [n]$ so 
	\[
		3^{|R_i|} 4^{|X_i| - |R_i|} \leq 3^{|R_i|} 4^{\bigl(\ctw - |R_i|\bigr) / 2 + 1} =
		4 \cdot 3^{|R_i|} 2^{\ctw - |R_i|} \leq 4 \cdot 3^{|R_i|} 3^{\ctw - |R_i|} = 4 \cdot 3^{\ctw}.
	\]
	So the claimed bound on the running time holds.
\end{proof}

\section{Upper Bounds via Edge-Subdivision}\label{app:ub-edge-subdiv}
In this section, we solve the \Poct{} and \Pfvs{} problems in $\mathcal{O}(2^\ctw)$. 

Later, in \cref{app:lb}, we show that this bound is essentially tight under SETH in both cases.
For these problems, we employ edge-subdivision. 
Each edge is subdivided twice. 
We show that this operation does not increase cutwidth. 
Moreover, there exists a (natural) linear arrangement such that every cut contains only a constant number of vertices of the original graph. 
The key trick is reduce number of states of the dynamic programming for subdivision vertices so that the total number of states of a cut of a linear arrangement is small.
For the simplicity of the writeup, we reduce this problem to well-known algorithms for \Poct{} and \Pfvs{} parameterized by pathwidth where we additionally employ some special property of the constructed path decomposition.
 
\subsection{\Oct}\label{app:oct} 
In this section we approach the \Poct{} problem. 

\begin{quote}
	\Poct{}
	
	\textbf{Input}: A graph $G = (V, E)$ and an integer $k$.
	
	\textbf{Question}: Is there a subset $S \subseteq V$ of size $k$ such that $G - S$ is bipartite.
\end{quote}

We assume that a graph $G$ is provided with a linear arrangement $\ell$ of cutwidth $\ctw$.
In order to solve it in time $\ostar(2^{\ctw})$, we first subdivide each edge of the input graph $G$ twice and obtain a graph $\hat{G}$. 
The graph $\hat G$ has at most the same \ctwdth{} as $G$ since \ctwdth{}
is closed under edge subdivision: To see this, consider an optimal linear arrangement of $G$ and for every edge~$\{u, v\}$ of $G$, insert the subdivision-vertices of $\{u, v\}$ between $u$ and $v$ in such a way that the edges subdividing~$\{u, v\}$ do not overlap. 

Let $\ctw$ denote the cutwidth of $G$.
We show that $\hat{G}$ admits a linear arrangement of \ctwdth{} at most $\ctw$ such that each cut contains at most one vertex of $G$. We also show that this implies the existence of a path decomposition of $\hat G$ of pathwidth at most $\ctw$ such that each bag contains at most one vertex of $G$. 
We will show that solving \Poct{} on $G$ can be reduced to solving a similar problem \Poctp{} (defined later) on $\hat{G}$. 
We then use the derived path decomposition to provide a dynamic programming that solves this equivalent problem. 
First, we repeat the definitions of a path decomposition and a nice path decomposition.

\begin{definition}\label{append:pathdecomp}
    Let $G = (V, E)$ be an undirected graph. A path decomposition of $G$ is a sequence $\mathcal{B} = B_1, \dots, B_r$ such that:
    \begin{itemize}
        \item It holds that $B_1 \cup \dots \cup B_r = V$.
        \item For each $\{u, v\}\in E$, there exists an index $i \in [r]$ such that $u, v \in B_i$.
        \item For every $i < j \in [r]$, the property $v \in B_i \cap B_j$ implies $v \in B_k$ for every $i \leq k \leq j$.
    \end{itemize}
    The sets $B_1, \dots, B_r$ are called \emph{bags}.
    For $i \in [r]_0$, we denote with $V_i = \cup_{j\in\{1, \dots i\}}B_j$ the union of the first $i$ bags and with $G_i = G[V_i]$ the subgraph induced by these vertices, we also denote the set of edges of this subgraph with $E_i$. 
    The \emph{pathwidth} of $\mathcal{B}$ is defined as $\pw = \max\limits_{i \in [r]}|B_i| - 1$.
    The pathwidth of a graph $G$ is the smallest $\pw(\mathcal{B})$ over all path decompositions $\mathcal{B}$ of $G$.
\end{definition}

\begin{definition}
	A path decomposition $\mathcal{B} = B_1, \dots, B_r$ of a graph $G$ is \emph{nice} if $B_1 = B_r = \emptyset$ holds, and for every $i \in [r-1]$, we have $|B_i \triangle B_{i+1}| = 1$ where $\triangle$ denotes the symmetric difference.  
	In this case, for $2 \leq i \leq r$, we call~$B_i$ an \emph{introduce} bag if $B_{i-1} \subseteq B_i$ holds and we call it a \emph{forget} bag otherwise.
\end{definition}

We state the following well-known lemma without proof. 
We refer to \cite{CyganFKLMPPS15} for more details.
\begin{lemma}[Folklore]\label{lem:appned-nice-decomps-exists}
	Given a path decomposition $\mathcal{B} = B_1, \dots, B_r$ of a graph $G = (V, E)$, in polynomial time a nice path decomposition~$\mathcal{B}'$ of $G$ of the same width can be computed such that the following properties hold.
	Each bag of $\mathcal{B}'$ is a subset of some bag of $\mathcal{B}$, and the number of bags of $\mathcal{B'}$ is polynomial in $r + |V|$. 
\end{lemma}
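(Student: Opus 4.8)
The plan is to obtain $\mathcal{B}'$ from $\mathcal{B}$ by purely local surgery: first normalize $\mathcal{B}$ so that no two consecutive bags coincide, then between every pair of consecutive bags splice in a short forget-then-introduce sequence so that consecutive bags differ by exactly one vertex, and finally pad the two ends with an introduce sequence from $\emptyset$ up to $B_1$ and a forget sequence from the last bag down to $\emptyset$. The point of the construction is that every inserted bag is a subset of one of the two original bags it sits between, so neither the width nor the ``subset of a bag of $\mathcal{B}$'' property can be violated, while the length grows only by a factor $\mathcal{O}(\pw)$.

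Concretely, I would first delete every bag that equals its predecessor; this affects neither validity nor width and leaves at most $r$ bags, which we again write $B_1, \dots, B_{r'}$ with $B_i \ne B_{i+1}$. Between $B_i$ and $B_{i+1}$ I would insert the bags
\[
	B_i \setminus \{u_1\},\; \dots,\; B_i \cap B_{i+1},\; (B_i \cap B_{i+1}) \cup \{w_1\},\; \dots,\; (B_i \cap B_{i+1}) \cup \{w_1, \dots, w_{t-1}\},
\]
where $u_1, \dots, u_s$ enumerates $B_i \setminus B_{i+1}$ and $w_1, \dots, w_t$ enumerates $B_{i+1} \setminus B_i$; if this produces two equal consecutive bags (which can happen only when one of the two phases is empty) we delete one of them. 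Every bag in the forgetting phase is a subset of $B_i$, every bag in the introducing phase is a subset of $B_{i+1}$, and after the insertion any two consecutive bags differ in exactly one vertex. Finally, I would prepend the bags $\emptyset, \{a_1\}, \{a_1, a_2\}, \dots$ up to $B_1$ for some enumeration $a_1, \dots$ of $B_1$ (all subsets of $B_1$), and symmetrically append a forget sequence from $B_{r'}$ down to $\emptyset$ (all subsets of $B_{r'}$). Call the result $\mathcal{B}'$; by construction its first and last bags are empty and consecutive bags differ by one vertex, so once $\mathcal{B}'$ is shown to be a path decomposition it is automatically nice.

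To verify that $\mathcal{B}'$ is a path decomposition of $G$ of the same width: its bags cover $V$ since all bags of $\mathcal{B}$ are retained; every edge $\{u, v\} \in E$ is covered because the original bag witnessing it is still a bag of $\mathcal{B}'$; and the interval property holds because if $v$ occurs in $\mathcal{B}$ exactly in the contiguous block $B_a, \dots, B_b$, then in $\mathcal{B}'$ it is introduced once inside the sequence entering $B_a$ (or in the prefix if $a = 1$), it is never removed inside any spliced sequence between $B_i$ and $B_{i+1}$ with $a \le i < b$ because there $v \in B_i \cap B_{i+1}$, and it is forgotten once inside the sequence leaving $B_b$ (or in the suffix if $b = r'$), so its occurrences in $\mathcal{B}'$ again form a contiguous block. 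Since all inserted bags are subsets of original bags and all original bags are kept, the maximum bag size, hence the width, is unchanged, and each bag of $\mathcal{B}'$ is a subset of a bag of $\mathcal{B}$. For the length, each splice adds at most $|B_i| + |B_{i+1}| \le 2(\pw + 1) \le 2|V|$ bags, the prefix and the suffix add at most $|V|+1$ bags each, so $\mathcal{B}'$ has $\mathcal{O}(r|V|)$ bags, polynomial in $r + |V|$; and every step is carried out in polynomial time.

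The construction is routine, and essentially the only step needing care is the interval property of $\mathcal{B}'$: a vertex must never be ``forgotten too early'' inside a splice and reintroduced later. This is precisely why the forgetting phase of the splice between $B_i$ and $B_{i+1}$ only removes vertices of $B_i \setminus B_{i+1}$ and the introducing phase only adds vertices of $B_{i+1} \setminus B_i$. Everything else is bookkeeping.
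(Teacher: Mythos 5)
Your proof is correct, and it is the standard textbook construction for producing a nice path decomposition (forget down to the intersection, then introduce up to the next bag, pad the ends with introduce/forget ramps, and observe that each inserted bag sits inside one of the two original bags it splices). The paper itself gives no proof here -- it states the lemma as folklore and defers to \cite{CyganFKLMPPS15} -- so there is nothing to compare against; your argument fills the gap cleanly, including the two points that actually need care: deduplicating consecutive equal bags so the splice always changes exactly one vertex, and the interval property, which you correctly reduce to the observation that a vertex $v$ surviving from $B_i$ to $B_{i+1}$ lies in $B_i \cap B_{i+1}$ and is therefore never touched by the splice between them.
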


Let $G= (V, E)$ be a graph. With $\hat{G} = (\hat{V}, \hat{E})$ we denote the graph arising from $G$ after subdividing each edge twice. 
Let $V_0$ be the set of all subdivision vertices, i.e.\ $V_ 0 = \hat{V} \setminus V$.
For an edge $e$ of $\hat{G}$, let $o(e)$ denote the edge of $G$ subdivided by $e$.
The graph $\hat G$ results from $G$ by replacing each edge of $G$ by a simple path $P_4$ on four vertices. 
For a vertex $v \in V_0$, let $P(v)$ denote the unique path $P_4$ in $\hat{G}$ that contains $v$ and whose both end-vertices belong to $V$. We also denote the unique edge in $G$ subdivided by $v$ with $o(v)$.
We also denote the number of vertices (resp.\ edges) of $\hat G$ with $\hat n$ (resp.\ $\hat m$).
\begin{lemma} \label{lem:append-subdiv-arrangement} 
	Let $G = (V, E)$ be a graph and let $\ell$ be a linear arrangement of $G$ of \ctwdth{} at most $\ctw$. 
	Then $\hat G$ admits a linear arrangement $\hat{\ell}$ of width at most $\ctw$. 
	Moreover, $\hat G$ and $\hat \ell$ can be computed from $G$ and $\ell$ in polynomial time.
\end{lemma}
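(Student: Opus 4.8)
The plan is to build the linear arrangement $\hat\ell$ of $\hat G$ directly from $\ell = v_1, \dots, v_n$ by placing the two subdivision vertices of each edge right next to its endpoints. For an edge $e = \{v_a, v_b\} \in E$ with $a < b$, the subdivision of $e$ in $\hat G$ is a three-edge path $v_a - x_e - y_e - v_b$ with two new vertices $x_e, y_e \in V_0$; I put $x_e$ immediately after $v_a$ and $y_e$ immediately before $v_b$. Formally, for $i \in [n]$ let $A_i = \{ x_e \mid e = \{v_i, v_j\} \in E,\ i < j \}$ and $B_i = \{ y_e \mid e = \{v_j, v_i\} \in E,\ j < i \}$; these sets partition $V_0$, and $A_n = B_1 = \emptyset$ because no index exceeds $n$ or is smaller than $1$. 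Let $\hat\ell$ be the arrangement obtained by concatenating $B_1, v_1, A_1, B_2, v_2, A_2, \dots, B_n, v_n, A_n$, ordering the vertices inside each block arbitrarily. Computing $\hat G$ and $\hat\ell$ from $G$ and $\ell$ is clearly polynomial, so the only content is the bound $\ctw(\hat\ell) \le \ctw$.

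For the width bound I would use that $\hat E$ is the disjoint union, over $e \in E$, of the edge sets of the three-edge paths $P(e) := v_a - x_e - y_e - v_b$ (for $e = \{v_a,v_b\}$, $a<b$). The key observation is that these four vertices occur in $\hat\ell$ in exactly this left-to-right order: $x_e \in A_a$ sits directly after $v_a$, the vertex $y_e \in B_b$ directly before $v_b$, and since $a < b$ the block $A_a$ precedes the block $B_b$. For a path whose vertices appear in increasing order along the arrangement, every cut is crossed by at most one of its edges, and by exactly one if and only if its two endpoints lie on opposite sides of the cut. Combining this with the edge partition $\hat E = \bigcup_{e \in E} E(P(e))$, for any cut $c$ of $\hat\ell$ the number of edges of $\hat G$ crossing $c$ equals the number of edges $\{v_a, v_b\} \in E$ (with $a < b$) whose endpoints $v_a$ and $v_b$ lie on opposite sides of $c$.

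To conclude I would read this back off against $\ell$: the original vertices occur in $\hat\ell$ in the order $v_1, \dots, v_n$, and $v_1$ resp.\ $v_n$ is the first resp.\ last vertex of $\hat\ell$, so any cut $c$ of $\hat\ell$ has, for a uniquely determined $i \in [n-1]$, the vertices $v_1, \dots, v_i$ on its left and $v_{i+1}, \dots, v_n$ on its right. By the previous paragraph the edges of $\hat G$ crossing $c$ correspond exactly to the edges $\{v_a, v_b\} \in E$ with $a \le i < b$, i.e., to the edges of the $i$th cut $E_i$ of $\ell$; hence at most $\ctw$ edges cross $c$. As $c$ was arbitrary, $\ctw(\hat\ell) \le \ctw$, which is the claim.

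I expect the only delicate point to be the bookkeeping behind the second paragraph: one has to check that $e \mapsto E(P(e))$ genuinely partitions $\hat E$, that the four vertices of each $P(e)$ are truly in sorted order in $\hat\ell$ (this is precisely where the placement rule and the implication $a < b \Rightarrow A_a$ before $B_b$ enter, and where the degenerate blocks $A_n = B_1 = \emptyset$ must be handled so that the index $i$ in the third paragraph really lies in $[n-1]$), together with the folklore fact that a path is crossed by a cut at most once, and exactly once iff its endpoints are separated. I would also record, using the same arrangement, the stronger property announced just before the lemma: since the blocks $B_i, v_i, A_i$ are pairwise disjoint and contiguous in $\hat\ell$ and all $\hat G$-neighbours of $v_i$ lie in $B_i \cup A_i$, each cut of $\hat\ell$ is incident to at most one vertex of $G$, which is the property the subsequent reduction to the pathwidth algorithms for \Poct\ and \Pfvs\ will rely on.
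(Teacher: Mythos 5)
Your proposal is correct and follows essentially the same route as the paper: the arrangement you build (put $x_e$ just after the earlier endpoint, $y_e$ just before the later endpoint) is exactly the paper's insertion of $S^2_v$ after $v$ and $S^1_v$ before $v$, and your cut-counting argument — partition $\hat E$ into the three-edge paths $P(e)$, observe each appears in sorted order so any cut meets at most one of its edges, and match cuts of $\hat\ell$ with cuts of $\ell$ — is a more explicitly spelled-out version of the paper's observation that each cut of $\hat\ell$ is obtained from a cut of $\ell$ by replacing each edge with one of its three subdividing edges. Your closing remark about each cut touching at most one original vertex also matches the companion lemma the paper proves separately about the derived path decomposition.
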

\begin{proof}
	Let $v_1, \dots, v_n$ be the order in which the vertices of $G$ appear in $\ell$.
	We describe the construction of $\hat \ell$.
	For each vertex $v \in V$, let $S_v$ be the set of vertices incident to $v$ in $\hat G$. Let $S^1_v$ be the set of vertices $w \in S_v$ such that the end-vertex $u \neq v$ of $P(w)$ 
	lies before $v$ on $\ell$. Let $S^2_v = S_v \setminus S^1_v$. We form $\hat \ell$ from $\ell$ by inserting the vertices in $S^1_v$ directly before $v$ into $\ell$ and the vertices in $S^2_v$ directly after $v$ into $\ell$. 
	The relative order of vertices in $S^1_v$ (resp.\ $S^2_v$) is irrelevant so we just fix an arbitrary one. Clearly, $\hat G$ and $\ell$ can be constructed in polynomial time.
	
	By construction, for each vertex $w \in V_0$, the edges of $P(w)$ do not overlap in $\ell$ (see \cref{fig:subdiv} for illustration).
	Let $w_1, \dots, w_{\hat n}$ denote the order of vertices in $\ell$.
	Let $j \in [\hat n]$ be arbitrary, we are now interested in the cut graph at $w_j$. 
	Let $i$ be the largest index with $i \leq j$ such that $w_i \in V$ holds. Then the cut at $w_j$ in $\hat \ell$ results from the cut at $w_i$ in $\ell$ by replacing each edge $e$ by one of the three edges subdividing this edge (see \cref{fig:subdiv}). Hence, both cuts have the same size. Since $j$ was chosen arbitrarily, we obtain that $\ell$ and $\hat \ell$ have the same cutwidth. 
\end{proof}
\begin{figure}
	\center
	\includegraphics[width = .5\linewidth]{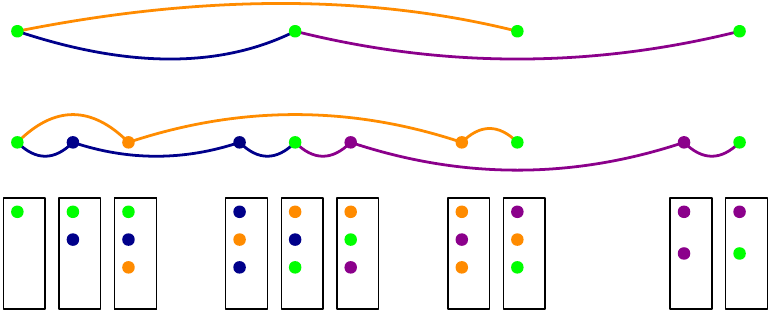}
	\caption{A linear arrangement $\ell$ of a graph $G$ (top) and the linear arrangement $\hat{\ell}$ of the graph $\hat G$ (middle). At the bottom we depict a path decomposition of $\hat{G}$ of width at most $\ctw(\ell)$ as in \cref{lem:append-linear-pathdecomp}. The subdivision vertices and edges are depicted in the same color as the edge of $G$ they subdivide. Note that each bag of the path decomposition contains at most one
		green vertex as proven by \cref{lem:append-linear-pathdecomp-bags}.
	}
	\label{fig:subdiv}
\end{figure}

\begin{lemma}\label{lem:append-linear-pathdecomp} 
	Let $G = (V, E)$ be a graph with at least one edge and let $\ell$ be a linear arrangement of $G$ of width at most $\ctw$. 
	Then $G$ admits a path decomposition $\mathcal{B}$ of width at most $\ctw$ that can be computed from $G$ and $\ell$ in polynomial time.
\end{lemma}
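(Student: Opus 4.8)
The plan is to write down the classical ``slice'' path decomposition induced by $\ell$ and to bound the size of each of its bags by the size of a cut $E_i$. Let $v_1, \dots, v_n$ be the vertices of $G$ in the order given by $\ell$ and set $E_0 = \emptyset$. For $i \in [n]$ define
\[
	B_i \;=\; \{v_i\} \;\cup\; \bigl\{\, v_j \bigm\vert j < i \text{ and } \exists\, k \geq i \colon \{v_j, v_k\} \in E \,\bigr\},
\]
that is, $v_i$ together with all earlier vertices having a neighbour at a position at least $i$; equivalently, $B_i \setminus \{v_i\}$ is the set of endpoints in $V_{i-1}$ of the edges of the cut $E_{i-1}$. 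Let $\mathcal{B} = B_1, \dots, B_n$, which is computable from $G$ and $\ell$ in polynomial time. It remains to check that $\mathcal{B}$ is a path decomposition of $G$ of width at most $\ctw$. (Compared to \cref{lem:ctw-ub}, the point is only that this is constructive and fixes a concrete decomposition whose bags are indexed by the positions of $\ell$, which is what a subsequent lemma exploits.)

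First, the three axioms. Vertex coverage is clear since $v_i \in B_i$. For edge coverage, let $\{v_j, v_k\} \in E$ with $j < k$; then $v_k \in B_k$, and $v_j \in B_k$ as well, because $j < k$ and $v_j$ has the neighbour $v_k$ at a position $\geq k$. For the interval property, note first that $v_j \in B_i$ forces $j \leq i$, and that for $j < i$ one has $v_j \in B_i$ if and only if $v_j$ has a neighbour at a position $\geq i$. Now suppose $v_j \in B_i \cap B_{i'}$ with $i < i'$; then $j \leq i < i'$, so from $v_j \in B_{i'}$ the vertex $v_j$ has a neighbour at a position $\geq i'$. For every $m$ with $i \leq m \leq i'$ we then have $j \leq i \leq m$, and either $j = m$ (so $v_j = v_m \in B_m$) or $j < m$ with a neighbour of $v_j$ at a position $\geq i' \geq m$ (so $v_j \in B_m$); hence each vertex occupies a contiguous block of bags.

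For the width, fix $i \geq 2$ and to each $v_j \in B_i \setminus \{v_i\}$ assign an edge $e_j = \{v_j, v_k\} \in E$ with $k \geq i$. Since $j \leq i-1 < i \leq k$, the edge $e_j$ crosses the cut $E_{i-1}$, i.e.\ $e_j \in E_{i-1}$. The assignment $v_j \mapsto e_j$ is injective, because the unique endpoint of $e_j$ in $V_{i-1}$ is $v_j$, so $e_j = e_{j'}$ would give $v_j = v_{j'}$. Therefore $|B_i| - 1 \leq |E_{i-1}| \leq \ctw$, and $|B_1| - 1 = 0 \leq \ctw$ trivially. Thus $\pw(\mathcal{B}) \leq \ctw$, as required. (The hypothesis that $G$ has an edge is not used here, though it guarantees $\ctw \geq 1$.)

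I do not expect a genuine obstacle: this is the textbook inequality $\pw(G) \leq \ctw(G)$ made explicit. The only points needing a little care are the interval property --- which I handle by distinguishing whether the witnessing index equals the bag index --- and the injectivity of $v_j \mapsto e_j$ in the width bound. When the lemma is later applied to $\hat G$ with the arrangement $\hat\ell$ of \cref{lem:append-subdiv-arrangement} (of the same cutwidth $\ctw$), the bags of the resulting decomposition are indexed by positions of $\hat\ell$; a short further argument, using that each subdivision vertex of $\hat G$ lies immediately next to an original vertex in $\hat\ell$, then shows every bag contains at most one vertex of $G$.
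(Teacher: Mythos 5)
Your construction is exactly the one used in the paper: the bags $B_i$ are the ``slices'' induced by the linear arrangement, consisting of $v_i$ together with the left endpoints of edges in the cut $E_{i-1}$, and the width bound follows by counting those endpoints against the edges of the cut. Your verification is, if anything, slightly more careful than the paper's (you spell out the injectivity of $v_j \mapsto e_j$ and handle the $j = m$ case in the interval property explicitly, where the paper is a bit terse), but the idea and all essential steps coincide.
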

\begin{proof}
	Let $v_1, \dots, v_n$ denote the order of vertices in $\ell$.
	We define a sequence $\mathcal{B} = B_1, \dots B_n$ by setting $B_1 = \{v_1\}$ and 
	\[
	B_i = \{v_i\}\cup\bigl\{v_j \bigm\vert j \leq i - 1, \exists k \geq i \colon \{v_j, v_k\} \in E\bigr\}
	\]
	for every $2 \leq i \leq n$. Clearly, the sequence $\mathcal{B}$ can be computed in polynomial time.
	We show that $\mathcal{B}$ is a path decomposition of pathwidth at most $k$.
	
	First, we show that $\mathcal{B}$ is a path decomposition. 
	Each vertex $v_i$ is contained in the bag $B_i$. Next, let $\{v_i, v_j\}$ be an edge of $G$ and without loss of generality assume that $i < j$ holds. Then both $v_i$ and $v_j$ are contained in $B_j$. 
	Let $i \in [n]$ be arbitrary but fixed. Observe that 
	\[
		i = \min \bigl\{j \in [n] \bigm\vert v_i \in B_j\bigr\}
	\]
	holds.
	Let $j$ be the largest index such that $v_i \in B_j$. We show that for every $i \leq k \leq j$, we also have $v_i \in B_k$. If $v_i$ is an isolated vertex, then we have $i = j$ and the claim is trivially true. Otherwise, $j$ is the largest index such that $\{v_i, v_j\}$ is an edge of $G$. 
	But then for every $i < k < j$, we have that the edge $\{v_i, v_j\}$ has the properties $i \leq k-1$ and $j \geq k$ so by definition, the vertex $v_i$ also belongs to $B_k$. It is clear Hence, $\mathcal{B}$ is a path decomposition.
	
	Observe that for $2 \leq i \leq n$, the set $\bigl\{v_j \bigm\vert j \leq i - 1, \exists k \geq i \colon \{v_j, v_k\} \in E_i\bigr\}$ has the size of at most $\ctw$ since these vertices are exactly the left-ends of edges crossing the $(i-1)$th cut.
	Therefore, the size of $B_i$ is at most $\ctw + 1$. Since the graph contains at least one edge, its cutwidth is at least one and this covers the case of the bag $B_1$ as well.
	So the pathwidth~of $\mathcal{B}$ is upper bounded by $\ctw$.
\end{proof}
\begin{lemma}\label{lem:append-linear-pathdecomp-bags}
	Let $G = (V, E)$ be a graph and let $\ell$ be a linear arrangement of $G$.
	Let $\hat G$ be the graph resulting from $G$ by subdividing each edge twice and let $\hat{\ell}$ be the linear arrangement of $\hat G$ satisfying the properties from \cref{lem:append-subdiv-arrangement}. Finally, let $\hat {\mathcal{B}}$ be the path decomposition of $\hat G$ constructed from $\hat \ell$ as in the proof of \cref{lem:append-linear-pathdecomp}. 
	Then each bag of $\hat{\mathcal{B}}$ contains at most one vertex of $V$.
\end{lemma}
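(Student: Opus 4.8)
The plan is to track the positions of the original vertices inside $\hat\ell$ and show that, for each $v\in V$, the indices $i$ with $v\in \hat B_i$ form a ``window'' that stays strictly between two consecutive original vertices of $\hat\ell$; since these windows are pairwise disjoint, no bag can contain two vertices of $V$.

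First I would fix notation. Let $v_1,\dots,v_n$ be the order of $V$ in $\ell$. By the construction of $\hat\ell$ in the proof of \cref{lem:append-subdiv-arrangement}, the vertices of $V$ occur in $\hat\ell$ in this same relative order, the block $S^1_{v_s}$ is placed directly before $v_s$ and the block $S^2_{v_s}$ directly after $v_s$, and these blocks partition $V_0$ (each edge contributes one subdivision vertex to the $S^2$-block of its earlier endpoint and one to the $S^1$-block of its later endpoint), so the insertions do not collide. Writing $w_1,\dots,w_{\hat n}$ for the order of $\hat\ell$ and $p_s$ for the position of $v_s$ (so $w_{p_s}=v_s$), we obtain $p_1<\dots<p_n$, and the vertices of $\hat\ell$ strictly between $v_s$ and $v_{s+1}$ are exactly those of $S^2_{v_s}$ followed by those of $S^1_{v_{s+1}}$; in particular $p_s+|S^2_{v_s}|<p_{s+1}$ for every $s\in[n-1]$, and the vertices of $S^2_{v_s}$ occupy precisely the positions $p_s+1,\dots,p_s+|S^2_{v_s}|$. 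I would also recall from the proof of \cref{lem:append-linear-pathdecomp} that $\hat B_i=\{w_i\}\cup\bigl\{w_j\bigm\vert j\le i-1,\ \exists k\ge i\colon \{w_j,w_k\}\in\hat E\bigr\}$.

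The key step is the claim: if $v_s\in\hat B_i$, then $p_s\le i$, and moreover $i<p_{s+1}$ whenever $s<n$. The inequality $p_s\le i$ is immediate since $\hat B_i$ contains only $w_i$ and vertices $w_j$ with $j\le i-1$, while $v_s=w_{p_s}$. For the upper bound, if $p_s=i$ we are done; otherwise $p_s<i$ forces $v_s$ to have a $\hat G$-neighbor $w_k$ with $k\ge i$. Here I would use that $V$ is independent in $\hat G$ (every edge of $G$ got subdivided), so every neighbor of $v_s$ lies in $S_{v_s}=S^1_{v_s}\cup S^2_{v_s}$; the vertices of $S^1_{v_s}$ sit at positions $<p_s<i$, hence $w_k\in S^2_{v_s}$, and then $i\le k\le p_s+|S^2_{v_s}|<p_{s+1}$.

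Finally I would conclude by contradiction: if some bag $\hat B_i$ contained distinct $v_s,v_t\in V$ with $s<t$, then applying the claim to $v_s$ (note $s\le t-1<n$) gives $i<p_{s+1}\le p_t$, while applying it to $v_t$ gives $p_t\le i$ -- impossible. Hence $|\hat B_i\cap V|\le 1$ for every $i$, which is the statement. I do not anticipate a genuine obstacle; the whole argument is elementary position bookkeeping, and the only point requiring a little care is to invoke the explicit constructions of $\hat\ell$ and of $\hat{\mathcal B}$ correctly -- in particular, that left-neighbours of $v_s$ (which precede $v_s$ in $\hat\ell$) never drag $v_s$ into a bag with index below $p_s$, a fact already subsumed by the observation that $\hat B_i$ contains only $w_i$ and earlier $w_j$'s.
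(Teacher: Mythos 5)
Your proof is correct and takes essentially the same route as the paper's. The paper phrases it in terms of the segments $L_j=\{v_j\}\cup S_{v_j}$ being consecutive blocks of $\hat\ell$ that partition $\hat V$, and shows any bag containing $v_j$ has its index inside $L_j$; your version tracks the positions $p_s$ directly and proves the equivalent confinement $p_s\le i<p_{s+1}$, using the same two facts (the shape of $\hat B_i$ and the independence of $V$ in $\hat G$). The bookkeeping differs slightly but the underlying argument is identical.
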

\begin{proof}
	Let $w_1, \dots, w_{\hat n}$ be the ordering of the vertices in $\hat \ell$ and $v_1, \dots, v_n$ be the ordering of the vertices in $\ell$. 
	Let $B_1, \dots, B_{\hat{n}}$ be the bags of $\mathcal{B}$ as defined in the proof of \cref{lem:append-linear-pathdecomp}. 
	Recall that the bag $B_i$ contains the vertex $w_i$, and all vertices $w_{i'}$, such that $i' < i$ and there exists $i'' \geq i$ such that $\{w_{i'}, w_{i''}\} \in \hat E$. 
	Also recall that we constructed $\hat{\ell}$ from $\ell$ by adding all vertices in $S^1_v$ directly before $v$, and all vertices in $S^2_v$ directly after $v$ for each vertex $v \in V$. 
	In particular, the ordering of the vertices of $V$ agree in both $\ell$ and $\hat{\ell}$.
	Lastly, observe that for every vertex $v \in V$, it holds that $N_{\hat{G}}[v] = S^1_v \cup S^2_v \cup \{v\}$.

	For $j \in [n]$, the set $L_j$ (also called the \emph{$i$th segment}) consists of the vertices $\{v_j\} \cup S_{v_j}$. 
	By the construction of $\ell$, for every $j \in [n]$, the set $L_i$ consists of consecutive vertices on $\ell$. 
	The segments are pairwise disjoint and they partition the set $\hat V$. 
	
	Now consider an arbitrary $j \in [n]$. 
	We are interested in the set of bags in which the vertex~$v_j \in V$ might occur.
	Let $i \in [\hat n]$ be such that $v_j = w_i$.
	Consider an arbitrary $i' \in [\hat n]$ such that the bag $B_{i'}$ contains $v_j$, i.e., $v_j = w_i \in B_{i'}$.
	By the definition of $B_{i'}$, it can occur in one of the following two cases. 
	First, if $v_j = w_i = w_{i'}$, then in particular, we have $w_{i'} = v_j \in L_j$.
	Otherwise, we have $i < i'$ and there exists some $i'' \geq i'$ with $\{w_{i}, w_{i''}\} \in \hat E$. 
	Therefore, it holds that $w_{i''} \in N_{\hat G}[w_i] = N_{\hat G}[v_j] = L_j$. 
	Recall that the vertex $w_{i'}$ lies between~$w_{i}$ and~$w_{i''}$ in~$\ell$ (i.e.,~$i < i' \leq i''$).
	Since both $w_{i} = v_j$ and $w_{i''}$ belong to the consecutive segment~$L_j$ of $\ell$, the vertex $w_{i'}$ also does.
	Altogether, in both cases, we obtain $w_{i'} \in L_j$. 
	Hence, a vertex $v_j$ can only occur in bags $B_{i'}$ such that $w_{i'} \in L_j$. 
	Since $L_1, \dots, L_n$ is a partition of $\hat V$, at most one vertex from $V$ occurs in a bag $B_{i'}$ for every $i' \in [\hat n]$.
\end{proof}

From now on, we always assume that the path decomposition $\mathcal{\hat B}$ of $\hat G$ satisfies \cref{lem:append-linear-pathdecomp-bags}.
\begin{remark}\label{rem:append-linear-pathdec-nice}
	Note that if we apply the standard construction (see \cite{CyganFKLMPPS15} for more details) transforming $\mathcal{\hat B}$ into a nice path decomposition of the same width, then the arising path decomposition still satisfies the property of \cref{lem:append-linear-pathdecomp-bags}. Moreover, this path decomposition contains $\mathcal{O}\Bigl(\pw\bigl(\mathcal{\hat B}\bigr)\poly(n)\Bigr) = \poly(n)$ bags.
\end{remark}

We summarize the above statements in a form of one lemma.

\begin{lemma}\label{lem:special-path-decomposition}
	Let $G = (V, E)$ be a graph and let $\ell$ be a linear arrangement of $G$ of cutwidth $\ctw$. Then the graph $\hat G$ admits a nice path decomposition $\mathcal{\hat B}$ of pathwidth at most $\ctw$ such that every bag of $\mathcal{\hat B}$ contains at most one vertex of $V$. Moreover, the path decomposition $\mathcal{\hat B}$ contains a polynomial number of bags and it can be computed in polynomial time from $G$ and~$\ell$.
\end{lemma}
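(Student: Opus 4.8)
The plan is to simply chain together the constructions established in \cref{lem:append-subdiv-arrangement,lem:append-linear-pathdecomp,lem:append-linear-pathdecomp-bags} and then invoke the folklore transformation to a nice decomposition; the statement is a bookkeeping summary of these results.

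First I would dispose of the degenerate case in which $G$ has no edges. Then $\hat G = G$ has no edges and $\ctw = 0$, so the sequence consisting of the singleton bags $\{v_1\}, \dots, \{v_n\}$, padded with empty bags at both ends and with a suitable singleton inserted between each pair of consecutive bags to make every symmetric difference have size one, is a nice path decomposition of $\hat G$ of width $0 = \ctw$ in which every bag contains at most one vertex and which has $\mathcal{O}(n)$ bags. So from now on I would assume that $G$ has at least one edge, hence so does $\hat G$.

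Next I would apply \cref{lem:append-subdiv-arrangement} to $G$ and $\ell$ to obtain, in polynomial time, a linear arrangement $\hat \ell$ of $\hat G$ of cutwidth at most $\ctw$. Applying \cref{lem:append-linear-pathdecomp} to $\hat G$ and $\hat \ell$ then yields, in polynomial time, a path decomposition $\hat{\mathcal B}$ of $\hat G$ of pathwidth at most $\ctw$ with $\hat n = |V| + 2|E|$ bags, which is polynomial in $n$. Since $\hat \ell$ is exactly the arrangement produced by \cref{lem:append-subdiv-arrangement} and $\hat{\mathcal B}$ is exactly the decomposition produced by the construction in the proof of \cref{lem:append-linear-pathdecomp}, \cref{lem:append-linear-pathdecomp-bags} applies verbatim and guarantees that every bag of $\hat{\mathcal B}$ contains at most one vertex of $V$.

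Finally I would turn $\hat{\mathcal B}$ into a nice decomposition via \cref{lem:appned-nice-decomps-exists}, obtaining in polynomial time a nice path decomposition $\hat{\mathcal B}'$ of $\hat G$ of the same width at most $\ctw$, every bag of which is a subset of some bag of $\hat{\mathcal B}$, and with a number of bags polynomial in $\hat n + |\hat V| = \poly(n)$. Because every bag of $\hat{\mathcal B}$ contains at most one vertex of $V$ and every bag of $\hat{\mathcal B}'$ is contained in such a bag, every bag of $\hat{\mathcal B}'$ also contains at most one vertex of $V$ — this is precisely the content of \cref{rem:append-linear-pathdec-nice}. Renaming $\hat{\mathcal B} := \hat{\mathcal B}'$ then proves the lemma. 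I do not anticipate a genuine obstacle: the only point that needs a line of care is verifying that the niceness transformation preserves the ``at most one vertex of $V$ per bag'' invariant, and this is immediate since that transformation only ever shrinks bags.
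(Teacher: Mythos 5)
Your proof is correct and matches the paper's own approach exactly: chain \cref{lem:append-subdiv-arrangement}, \cref{lem:append-linear-pathdecomp}, and \cref{lem:append-linear-pathdecomp-bags}, then pass to a nice decomposition via \cref{lem:appned-nice-decomps-exists} and note (as in \cref{rem:append-linear-pathdec-nice}) that this only shrinks bags, preserving the ``at most one vertex of $V$'' invariant. The paper introduces \cref{lem:special-path-decomposition} explicitly as a one-lemma summary of those preceding results, so your reconstruction — including the sensible handling of the edgeless degenerate case, which the paper leaves implicit — is exactly the intended argument.
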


Now we show how to use $\hat G$ and $\mathcal{B}$ to solve the \Poct{} problem on $G$ more efficiently. We start by showing a bijection between the cycles of both graphs.
\begin{lemma}
	\label{lem:subdivprescycles} Let $G$ be a graph. 
	Each cycle in $\hat{G}$ results from a cycle in $G$ by subdividing each edge of this cycle twice and vice versa. This defines a bijection between the cycles of both graphs.
\end{lemma}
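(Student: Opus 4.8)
The plan is to exploit that $\hat G$ arises from $G$ by replacing each edge with a path on four vertices, together with the fact that every subdivision vertex has degree exactly two in $\hat G$. First I would fix notation: for an edge $e = \{u,v\} \in E(G)$, write $P_4(e)$ for the $u$–$v$ path $u, a_e, b_e, v$ in $\hat G$, where $a_e, b_e \in V_0$ are the two subdivision vertices of $e$ (so $P(a_e) = P(b_e) = P_4(e)$ and $o(a_e) = o(b_e) = e$). The key local observation is that in $\hat G$ the vertex $a_e$ is adjacent only to $u$ and $b_e$, and $b_e$ only to $a_e$ and $v$. Hence if a cycle $\hat C$ of $\hat G$ contains $a_e$ (or $b_e$), it must use \emph{both} edges at that vertex, which forces it to contain all three edges of $P_4(e)$ and to traverse this path straight from $u$ to $v$; and if $\hat C$ contains neither $a_e$ nor $b_e$, then it contains none of the three edges of $P_4(e)$, since each such edge has a subdivision vertex as an endpoint. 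So for every $e \in E(G)$, either $P_4(e) \subseteq \hat C$ entirely or $P_4(e)$ is edge-disjoint from $\hat C$.

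Next I would define $C := \{\, e \in E(G) \mid P_4(e) \subseteq \hat C \,\}$ and show that $C$ is the edge set of a cycle in $G$. For any original vertex $w \in V(G)$, each edge $e \in C$ incident to $w$ contributes exactly one edge of $\hat C$ incident to $w$ (namely $\{w, a_e\}$ or $\{w, b_e\}$), while edges $e \notin C$ contribute none; therefore $\deg_{(V,C)}(w) = \deg_{\hat C}(w) \in \{0,2\}$ because $\hat C$ is a cycle. Thus the subgraph $(V(G), C)$ has all degrees in $\{0,2\}$, i.e.\ it is a vertex-disjoint union of cycles; moreover its non-isolated vertices are exactly the original vertices appearing on $\hat C$. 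Connectivity of this union follows from connectivity of $\hat C$: any $\hat C$-path between two original vertices decomposes into original vertices linked by complete $P_4(e)$-segments with $e \in C$, which projects to a walk in $(V(G), C)$. Hence $(V(G), C)$ restricted to its non-isolated vertices is connected, and a connected graph with all degrees equal to $2$ is a single cycle. By the local observation, $\hat C$ is exactly the graph obtained from the cycle $C$ by subdividing each of its edges twice.

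For the converse, subdividing each edge of an arbitrary cycle of $G$ twice visibly produces a connected $2$-regular subgraph of $\hat G$, that is, a cycle of $\hat G$. Finally I would check that these two assignments are mutually inverse: starting from $\hat C$, forming $C$ and then re-subdividing returns $\hat C$ by the local observation, and starting from a cycle $C$ of $G$, subdividing it and then collecting the edges whose $P_4$ lies in the result returns $C$. This yields the claimed bijection. I do not expect a genuine obstacle here; the only points needing mild care are the connectivity argument for $(V(G), C)$ and the remark that since $G$ is simple and loopless every cycle of $G$ has length at least three, so no degenerate short cycles arise on either side.
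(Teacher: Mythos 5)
Your proof is correct and uses the same central idea as the paper: since each subdivision vertex has degree two in $\hat G$, any cycle of $\hat G$ through a subdivision vertex must traverse the entire $P_4$-segment, so cycles of $\hat G$ correspond to cycles of $G$ by replacing each traversed $P_4(e)$ with $e$ and vice versa. You supply more detail than the paper (the degree-count and connectivity argument showing the projected edge set actually forms one cycle), but this is an elaboration of the same route rather than a different one.
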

\begin{proof}
	A cycle in $\hat{G}$ containing a vertex $w \in V_0$ must contain the whole path $P(w)$ as a subpath, since $w$ is a degree-two vertex.
	Therefore, we can replace every such path $P(w)$ by the edge $o(w)$ and obtain a cycle in $G$.
	For the other direction, given a cycle in $G$, we can replace every edge by its subdivision and obtain a cycle in $\hat G$. Both mappings are injective.
\end{proof}

\begin{corollary}
	\label{cor:subdivoddcycles}
	There is a bijection between odd cycles in a graph $G$ and odd cycles in the graph $\hat{G}$.
\end{corollary}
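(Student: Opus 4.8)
The plan is to derive this immediately from the bijection established in \cref{lem:subdivprescycles}. Recall that that lemma gives a bijection $C \mapsto \hat C$ between cycles of $G$ and cycles of $\hat G$, where $\hat C$ is obtained from $C$ by subdividing each edge of $C$ twice; conversely, a cycle of $\hat G$ is mapped back by contracting each subdivided path $P(w)$ into the single edge $o(w)$.

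The only thing to check is that this bijection preserves the parity of the cycle length. Subdividing an edge twice replaces it with a path consisting of three edges. Hence, if $C$ is a cycle of $G$ with exactly $k$ edges, then $\hat C$ has exactly $3k$ edges. Since $3k \equiv k \pmod 2$, the cycle $\hat C$ is odd if and only if $C$ is odd. Therefore, restricting the bijection of \cref{lem:subdivprescycles} to odd cycles of $G$ yields a bijection onto the odd cycles of $\hat G$, which is the claim.

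I do not expect any real obstacle here: the statement is a direct corollary, and the only (trivial) point is the bookkeeping that replacing each edge by a length-three path multiplies the cycle length by three and thus preserves parity.
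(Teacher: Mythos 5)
Your proof is correct and follows essentially the same route as the paper's: both invoke the bijection from \cref{lem:subdivprescycles} and observe that parity is preserved, with the only (cosmetic) difference being that you compute $3k \equiv k \pmod 2$ while the paper notes that each twice-subdivided edge adds $2$ to the length.
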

\begin{proof}
	This follows from \cref{lem:subdivprescycles} by noticing that subdividing an edge of a cycle twice increases its length by two and therefore, does not change the parity.
\end{proof}

\begin{lemma}
	\label{lem:app-oct-subdiv}
	 Let $G = (V, E)$ be a graph. 
	 Then an odd cycle transversal of $G$ is an odd cycle transversal of $\hat G$. Also an odd cycle transversal of $\hat G$
	 that only uses vertices from $V$ is an odd cycle transversal of $G$ as well. Finally, the graph $\hat{G}$ admits a
	 minimum odd cycle transversal that only contains vertices from $V$. In particular, both graphs have the same odd-cycle-transversal number. 
\end{lemma}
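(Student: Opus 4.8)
The plan is to deduce everything from \cref{cor:subdivoddcycles}, the bijection $C\mapsto\hat C$ between odd cycles of $G$ and odd cycles of $\hat G$ (where $\hat C$ is $C$ with each edge subdivided twice), combined with two elementary structural observations. First, along this bijection the \emph{original} vertices lying on $\hat C$ are exactly the vertices of $C$, since subdividing only inserts new degree-two vertices into the interior of each edge; equivalently $V(\hat C)\cap V=V(C)$. Second, because every vertex of $V_0$ has degree two in $\hat G$, any cycle of $\hat G$ that passes through a subdivision vertex $w$ must contain the whole path $P(w)$, and in particular both end-vertices of the edge $o(w)$, which lie in $V$.

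For the first assertion I would take $S\subseteq V$ with $G-S$ bipartite, view $S$ as a subset of $\hat V$, and check it meets every odd cycle $\hat C$ of $\hat G$: the corresponding $C$ is met by $S$ at some $v\in V$, and by the first observation this $v$ still lies on $\hat C$, so $\hat G-S$ has no odd cycle. For the second assertion I would argue symmetrically: given $\hat S\subseteq V$ with $\hat G-\hat S$ bipartite and an odd cycle $C$ of $G$, its image $\hat C$ is met by some $v\in\hat S$; since $\hat S\subseteq V$, the first observation places $v$ on $C$, so $G-\hat S$ is bipartite.

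For the third assertion — that $\hat G$ has a \emph{minimum} odd cycle transversal contained in $V$ — I would start from an arbitrary minimum odd cycle transversal $\hat S$ of $\hat G$ and ``push'' each subdivision vertex $w\in\hat S\cap V_0$ to a fixed end-vertex $\pi(w)\in V$ of $o(w)$, forming
\[
	S'=(\hat S\cap V)\cup\{\pi(w)\mid w\in\hat S\cap V_0\}.
\]
Then $S'\subseteq V$ and $|S'|\le|\hat S\cap V|+|\hat S\cap V_0|=|\hat S|$, and I claim $S'$ is still an odd cycle transversal of $\hat G$: any odd cycle $\hat C$ is met by some $v\in\hat S$; if $v\in V$ it survives in $S'$, while if $v\in V_0$ then by the second observation $\hat C$ passes through $\pi(v)\in S'$. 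Minimality of $\hat S$ then forces $|S'|=|\hat S|$, so $S'$ is the desired minimum transversal inside $V$. The final equality of the two transversal numbers follows by combining the pieces: the first assertion gives that the transversal number of $\hat G$ is at most that of $G$, while the third together with the second exhibit a minimum transversal of $\hat G$ lying in $V$ which is also a transversal of $G$, giving the reverse inequality.

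I do not expect a real obstacle; the only step needing care is the pushing argument in the third assertion, where one must be certain that replacing $w$ by $\pi(w)$ cannot ``uncover'' a previously-hit odd cycle — and this is precisely what the degree-two observation guarantees, since a cycle through $w$ is forced through both end-vertices of $o(w)$, one of which is $\pi(w)$.
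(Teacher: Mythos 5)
Your proof is correct and follows essentially the same approach as the paper: the first two assertions deduced from the cycle bijection via \cref{cor:subdivoddcycles}, and the third by pushing subdivision vertices in a minimum transversal onto vertices of $V$ using the degree-two observation. The only cosmetic difference is that you perform the pushing in one shot via a map $\pi$ rather than iteratively as in the paper, but the underlying idea is identical.
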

\begin{proof}
	Let $S$ be an \oct{} of $G$. Assume that $\hat{H} = \hat{G} - S$ is not bipartite. Let~$C$ by an odd cycle of $\hat{H}$, then it is an odd cycle of $\hat{G}$ that avoids $S$. By \cref{cor:subdivoddcycles}, this
	cycle results from an odd cycle $C_0$ in $G$ by subdividing each of its edges twice. Hence, $C_0$ is disjoint from
	$S$ as well -- a contradiction. 
	
	Now let $\hat{S}$ be an odd cycle transversal $\hat{S}$ of $\hat{G}$ that only uses
	vertices in $V$. Then $\hat S$ is an odd cycle transversal of $G$ as well, since we could turn an odd cycle $G - \hat S$ into an odd cycle of $\hat{G} - \hat S$ by subdividing each of its edges twice. 
	
	Finally, consider a minimum odd cycle transversal $S$ of $\hat G$.
	If it does not contain any vertex of $V_0$, we are done. Otherwise, consider a vertex $w \in S \cap V_0$ and let $v \in V$ be the unique neighbor of $w$ in $V$. 
	Since $w$ has the degree of two in $\hat G$, every cycle containing $w$ also contains $v$. Therefore, the set $\bigl(S \setminus \{w\}\bigr) \cup \{v\}$ is an odd cycle transversal of $\hat G$ of at most the same size as $S$. We can repeat this process until we obtain an odd cycle transversal $S'$ of $\hat G$ that has at most the same size as $S$ (and hence, it is also a minimum odd cycle transversal) and contains no vertices from $V_0$ as claimed.
\end{proof}

We define the \Poctp{} problem as follows:

\begin{quote}
	\Poctp{}

	\textbf{Input}: An undirected graph $G = (V, E)$, subsets $Y, Y' \subseteq V$, and an integer $k$.

	\textbf{Question}: Is there an odd cycle transversal $S$ of $G$ of size at most $k$ such that $Y \subseteq S$ and $Y' \cap S = \emptyset$.
	
\end{quote}

\begin{lemma} \label{lem:oct-alg}
	Given a graph $G = (V, E)$, two disjoint sets $Y, Y' \subseteq V$, a nice path decomposition $\mathcal{B} = B_1, \dots, B_r$ of $G$ of width at most $\pw$, and a positive integer $k$, the \Poctp{} problem on $(G, Y, Y', k)$ can be solved in time $\ostar(2^{\pw}3^{h} r)$ where 
	\[
		h = \max\limits_{i \in [r]} |B_i \setminus Y'|
	\]
	is the maximum number of vertices in a bag that do not belong to $Y'$.
\end{lemma}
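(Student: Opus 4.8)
The plan is to run a standard dynamic programming over the nice path decomposition, exploiting the classical fact that a graph is bipartite if and only if it has a proper $2$-coloring. An odd cycle transversal $S$ of $G$ together with a bipartition of $G-S$ corresponds precisely to a function $f\colon V\to\{0,1,2\}$ with the property that no edge $\{u,v\}\in E$ satisfies $f(u)=f(v)\neq 0$: here $f^{-1}(0)=S$ and the colors $1,2$ mark the two sides of the bipartition of $G-S$. Call such an $f$ a \emph{valid coloring}. Since $Y\cap Y'=\emptyset$, the requirements $Y\subseteq S$ and $Y'\cap S=\emptyset$ translate without conflict into $f(v)=0$ for all $v\in Y$ and $f(v)\neq 0$ for all $v\in Y'$; call a coloring respecting these \emph{admissible}. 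Thus it suffices to decide whether there exists an admissible valid coloring $f$ with $\bigl|f^{-1}(0)\bigr|\leq k$.

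Recall the notation $V_i=B_1\cup\dots\cup B_i$ and $G_i=G[V_i]$ from the definition of a path decomposition. For $i\in[r]$, a coloring $g\colon B_i\to\{0,1,2\}$, and $j\in\{0,\dots,|V_i|\}$, let $D_i[g,j]$ be true if and only if there is an admissible valid coloring $f_i$ of $G_i$ with $(f_i)_{|_{B_i}}=g$ and $\bigl|f_i^{-1}(0)\bigr|=j$. The transitions are the usual ones for a nice path decomposition. For the empty bag $B_1=\emptyset$ we set $D_1[\emptyset,0]$ to true. At an introduce bag $B_i=B_{i-1}\cup\{v\}$ we set $D_i[g,j]=D_{i-1}\bigl[g_{|_{B_{i-1}}},\,j-[g(v)=0]\bigr]$ if $g(v)$ is an allowed color for $v$ (that is, $g(v)=0$ when $v\in Y$ and $g(v)\neq 0$ when $v\in Y'$) and no edge $\{u,v\}\in E$ with $u\in B_{i-1}$ has $g(u)=g(v)\neq 0$, and we set $D_i[g,j]$ to false otherwise. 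At a forget bag $B_i=B_{i-1}\setminus\{v\}$ we set $D_i[g,j]=\bigvee_{s\in\{0,1,2\}}D_{i-1}\bigl[g[v\mapsto s],\,j\bigr]$. Finally, as $B_r=\emptyset$, we answer ``yes'' if and only if $D_r[\emptyset,j]$ is true for some $j\leq k$.

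Correctness is proved by induction on $i$. The base case is immediate. At a forget bag nothing changes about $G_i$, so the claim reduces to projecting away the color of $v$, and all edge- and admissibility constraints involving $v$ were already enforced at the bags up to $B_{i-1}$. At an introduce bag the only subtlety is that the edges of $G_i$ that are not edges of $G_{i-1}$ are precisely the edges joining $v$ to vertices of $B_{i-1}$: indeed every neighbor of $v$ in $V_{i-1}$ must share a bag with $v$, and the first bag containing $v$ is $B_i$, so by the connectivity axiom such a neighbor lies in $B_i\cap V_{i-1}=B_{i-1}$. Checking these edges together with the color restriction on $v$ therefore exactly maintains admissibility and validity when passing from $G_{i-1}$ to $G_i$.

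For the running time, the decisive observation is the bound on the number of states stored per bag. A coloring $g$ of $B_i$ can be relevant only if it uses one of the three colors $\{0,1,2\}$ on each vertex of $B_i\setminus Y'$ and one of the two colors $\{1,2\}$ on each vertex of $B_i\cap Y'$; hence the number of relevant colorings of $B_i$ is at most $3^{|B_i\setminus Y'|}\,2^{|B_i\cap Y'|}\leq 3^{h}\,2^{|B_i|}\leq 3^{h}\,2^{\pw+1}$, using $|B_i\setminus Y'|\leq h$ and $|B_i|\leq\pw+1$. The second index $j$ ranges over at most $|V|+1$ values, and every transition is evaluated in polynomial time. Summing over the $r$ bags yields total running time $\mathcal{O}\bigl(r\cdot 2^{\pw+1}\cdot 3^{h}\cdot\poly(|V|)\bigr)=\ostar\bigl(2^{\pw}3^{h}r\bigr)$. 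I expect the dynamic program itself to be entirely routine; the points requiring the most care in the write-up are the state-count bound — in particular the observation that $Y'$-vertices need only two colors rather than three — and the standard verification that at an introduce bag only the edges incident to the newly introduced vertex must be re-checked.
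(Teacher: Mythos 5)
Your proposal is correct and follows essentially the same route as the paper's proof: a dynamic program over the nice path decomposition using the states $\{0,1,2\}$ (deleted / side~$1$ / side~$2$), with $Y$-vertices forced to state $0$, $Y'$-vertices forbidden from state $0$, and the state count per bag bounded by $2^{|B_i \cap Y'|}\,3^{|B_i\setminus Y'|}\leq 2^{\pw+1}3^h$. The only cosmetic difference is that you store a Boolean feasibility table indexed by the size of the deleted set, whereas the paper stores counts of partial solutions, but both are standard and yield the same running time and correctness argument.
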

\begin{proof}
	We rely on a well-known dynamic programming scheme for \Poct{} and adapt it for the \Poctp{} problem. 
	Recall that a vertex set $S$ is an \oct{} of a graph $G$ if and only if $G - S$ admits a proper 2-coloring. 
	Therefore, the natural choice of states of the vertices in a bag for the dynamic programming is $\mathcal{C} = \{0, 1, 2\}$ where $0$ represents being in an \oct{} and $1$ and $2$ are colors in a proper 2-coloring.
	For $i \in [r]$, the table $T_i$ is indexed by assignments $\delta \colon B_i \to \mathcal{C}$ of states to vertices in $B_i$ and integers $K \in [k]_0$. The value $T_i[\delta, K]$ is the number of assignments $\sigma \colon V_i \to \mathcal{C}$ with the following properties:
	\begin{enumerate}
		\item $\sigma_{|_{_{B_i}}} = \delta$, i.e., $\sigma$ extends $\delta$,
		\item $\bigl|\sigma^{-1}(0)\bigr| = K$,
		\item $\sigma_{|_{V_i\setminus \sigma^{-1}(0)}}$ is a proper coloring of $G_i - \sigma^{-1}(0)$,
		\item $\sigma(v) = 0$ for all $v \in V_i \cap Y$,\label{it:octmustdelete}
		\item $\sigma(v) \neq 0$ for all $v \in V_i \cap Y'$.\label{it:octforbidden}
	\end{enumerate} 
	In other words, $T_i[\delta, K]$ is the number of partitions $(X, A, B)$ of $V_i$ such that $X$ is an \oct{} of $G_i$ of size $K$ with $V_i \cap Y \subseteq X$ and $Y \cap X = \emptyset$, $(A, B)$ is an (ordered) bipartition of $G_i - X$, and $(X, A, B)$ agrees with $\sigma$ on the assignment of $B_i$.
	Recall that since $\mathcal{B}$ is nice, we have $B_r = \emptyset$. Therefore, there exists exactly one assignment~$\sigma_r \colon B_r \to \mathcal{C}$, namely the empty assignment.
	For $K \in [k]_0$, the value $T_r[\sigma_r, K]$ is then exactly the number of tuples $(X, A, B)$ where $X$ is an odd cycle transversal of $G_r = G$ of size $K$ with $Y \subseteq X$ and $Y' \cap X = \emptyset$ and $(A, B)$ is a bipartition of $G_n - X = G - X$. Therefore, the tuple $(G, Y, Y', k)$ is a yes-instance of \Poctp{} if and only if $\sum_{0 \leq K \leq k}T_r[\sigma_r, k] > 0$ holds. 
	
	For $i \in [r]$, we call a state assignment $\delta \colon B_i \to \mathcal{C}$ over $B_i$ \emph{valid} if for all $v \in B_i \cap Y$, it holds that $\delta(v) = 0$ and for all $v \in B_i \cap Y'$, it holds that $\delta(v) \neq 0$. It suffices to explicitly compute the table entries indexed by valid state assignments, since $T_i[\delta, K] = 0$ holds whenever $\delta$ is not valid.

	The dynamic programming routine iterates over the bags $B_i$ of $\mathcal{B}$ in the ascending order of~$i \in [r]$ as follows.
	Recall that $B_1 = \emptyset$ holds, so there exists exactly one assignment $\sigma_1 \colon B_1 \to \mathcal{S}$, namely the empty assignment.
	Then it holds that $T_1[\sigma_1, 0] = 1$ and $T_1[\sigma_1, K] = 0$ for every~$K \in [k]$.
	Now suppose that for some $2 \leq i \leq r$ the table $T_{i-1}$ is already computed. 
	Then the table $T_i$ can be computed as follows. 
	As mentioned above, it suffices to compute the entries $T_i[\delta, K]$ for valid values of $\delta$.
	First, suppose that $B_i$ is an introduce bag and let $\{v\} = B_i \setminus B_{i-1}$. 
	For each valid state assignment $\delta \colon B_{i-1} \to \mathcal{S}$, each $x \in \{1,2\}$, and each $K \in [k]_0$, it holds that
	\[
		T_i\bigl[\delta[v\mapsto x], K\bigr] = T_{i-1}[\delta, K] \cdot [\delta \sim [v\mapsto x]\bigr], 
	\]
	where $\delta \sim [v\mapsto x]$ denotes that $\sigma(u) \neq x$ for all $u \in B_{i-1} \cap N(v)$. 
	It is well-known that for every neighbor $w \in V_i$ of $v$, it holds that $w \in B_{i-1}$. Therefore, if $\delta \sim [v\mapsto x]$ holds, then for every extension $\sigma$ counted in $T_{i-1}[\delta, K]$ the assignment $\sigma[v \mapsto x]$ is counted in $T_i\bigl[\delta[v \mapsto x], K\bigr]$. And vice versa: if $\delta[v \mapsto x]$ is valid and an assignment $\sigma[v \mapsto x]$ is counted in $T_i\bigl[\delta[v \mapsto x], K\bigr]$, then $\delta \sim [v\mapsto x]$ holds and $\sigma$ is counted in $T_{i-1}[\delta, K]$. 
	
	Also if $v \notin Y'$ holds, for every valid state assignment $\delta \colon B_{i-1} \to \mathcal{S}$ and each $K \in [k]$, it holds that
	\[
		T_i\bigl[\delta[v\mapsto 0], K\bigr] = T_{i-1}[\sigma, K - 1]
	\]
	and
	\[
		T_i\bigl[\delta[v\mapsto 0], 0\bigr] = 0.
	\]
	(Note that if $v \in Y'$, then $\delta[v \mapsto 0]$ is not valid.)
	Recall that the vertex $v$ does not belong to $G_{i-1}$.
	The correctness of the above equation follows by the observation that if $X \cup \{v\}$ is an \oct{} of $G_i$ such that $(A, B)$ is a bipartition of $G_i - \bigl(X \cup \{v\}\bigr)$, then $X$ is an \oct{} of $G_{i-1}$ such that $(A, B)$ is a bipartition of $G_{i-1} - X$.
	
	If $B_i$ is not an introduce bag, then it is a forget bag.
	So let $\{v\} = B_{i-1} \setminus B_i$.
	For each valid state assignment $\delta \colon B_i \to \mathcal{S}$ and each $k \in [K]_0$, we define 
	\[
		T_i[\delta, K] = \sum\limits_{x \in \mathcal{S}} T_{i-1}\bigl[\delta[v\mapsto x], K\bigr].
	\]
	To see that this is correct, observe that we have $G_{i-1} = G_i$. So the triples $(X, A, B)$ where $X$ is an \oct{} of $G_i$ (resp.\ $G_{i-1}$) and $(A, B)$ is a bipartition of the remainder are exactly the same for $G_i$ and $G_{i-1}$. 

	Finally, note that an assignment $\delta \colon B_i \to \mathcal{S}$ can only be valid if $\delta_{|_{B_{i-1}}}$ is valid. Therefore, the above equalities indeed define the table entries for all valid partial assignments.

	Now we bound the running time of this algorithm.
	The table $T_1$ can be computed in constant time. Then for every $2 \leq i \leq r$, given the table $T_{i-1}$, we compute the table $T_i$.
	We bound the number of valid assignments $\delta \colon B_i \to \mathcal{S}$. There are at most $1^{|B_i \cap Y|} 2^{|B_i \cap Y'|} 3^{|B_i \setminus (Y' \cup Y)|}$ of them: the vertices in $X_i \cap Y$ may only have state $0$, the vertices in $B_i \cap Y'$ are not allowed to get state $0$, and the remaining vertices might have any state.
	It holds that 
	\[
		1^{|B_i \cap Y|} 2^{|B_i \cap Y'|} 3^{|B_i \setminus (Y' \cup Y)|} \leq 2^{\ctw} 3^h.
	\]
	We also have that $K \in [k]_0$ and we may assume that $K < |V|$ holds since otherwise we have a trivial yes-instance of \Poctp{}.
	Therefore, the table $T_i$ has $\ostar(2^{\ctw} 3^h)$ states and each of them can be computed in polynomial time. Altogether, the running time of the algorithm is then bounded by $\ostar(2^{\ctw} 3^h r)$.
\end{proof}
\begin{theorem}
	Given a graph $G = (V, E)$ together with a linear arrangement $\ell$ of $G$ of \ctwdth{} at most $\ctw$, and a positive integer $k$, the \Poct{} problem can be solved in running time $\ostar(2^{\ctw})$.
\end{theorem}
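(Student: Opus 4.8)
The plan is to reduce \Poct{} on $G$ to the auxiliary problem \Poctp{} on the twice-subdivided graph $\hat G$, and then run the pathwidth-based algorithm of \cref{lem:oct-alg} on the special path decomposition constructed above. First I would compute $\hat G$ together with the linear arrangement $\hat\ell$ of \cref{lem:append-subdiv-arrangement} and the nice path decomposition $\mathcal{\hat B}$ of \cref{lem:special-path-decomposition}; this takes polynomial time and yields a decomposition of pathwidth at most $\ctw$, with polynomially many bags, in which every bag contains at most one vertex of the original vertex set $V$. Let $V_0 = \hat V \setminus V$ denote the subdivision vertices. I would then call the \Poctp{} algorithm of \cref{lem:oct-alg} on the instance $(\hat G, Y, Y', k)$ with $Y = \emptyset$ and $Y' = V_0$, using the decomposition $\mathcal{\hat B}$, and return its verdict.

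For correctness I would argue that $G$ has an odd cycle transversal of size at most $k$ if and only if $(\hat G, \emptyset, V_0, k)$ is a yes-instance of \Poctp{}. By \cref{lem:app-oct-subdiv}, $G$ and $\hat G$ have the same odd-cycle-transversal number and $\hat G$ has a minimum odd cycle transversal avoiding $V_0$; combining these two facts, $G$ has an odd cycle transversal of size at most $k$ iff $\hat G$ has one of size at most $k$ that moreover avoids $V_0$, which is exactly the condition checked by the \Poctp{} instance (note that $Y=\emptyset$ is disjoint from $Y'=V_0$, as \cref{lem:oct-alg} requires). The parity argument behind this equivalence, namely that subdividing a cycle twice preserves its length parity, is already isolated in \cref{cor:subdivoddcycles}.

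For the running time, \cref{lem:oct-alg} solves \Poctp{} in time $\ostar(2^{\pw}3^{h}r)$, where $h = \max_i |B_i \setminus Y'|$. Since $Y' = V_0$, we have $B_i \setminus Y' = B_i \cap V$, which by \cref{lem:append-linear-pathdecomp-bags} has size at most one, so $3^h \le 3$. As $\pw = \pw(\mathcal{\hat B}) \le \ctw$ and $r$ is polynomial in the input size, the overall running time is $\ostar(2^{\ctw})$.

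The step that does the real work is the correctness equivalence: a priori, insisting that the transversal avoid all subdivision vertices could increase the optimum, and the whole speed-up rests on the fact, established in \cref{lem:app-oct-subdiv} using that subdivision vertices have degree two, that a minimum transversal of $\hat G$ can always be chosen inside $V$. The remaining ingredients, that subdivision does not blow up cutwidth, that the resulting path decomposition has width at most $\ctw$, and that it places at most one original vertex per bag, are exactly what makes the $3^{h}$ factor in \cref{lem:oct-alg} collapse to a constant; but each of these has already been proven in the preceding lemmas, so only the assembly remains.
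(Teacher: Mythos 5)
Your proposal is correct and matches the paper's proof essentially line for line: subdivide twice, use the special path decomposition with at most one original vertex per bag, reduce to the annotated \Poctp{} instance $(\hat G,\emptyset,V_0,k)$ via \cref{lem:app-oct-subdiv}, and invoke \cref{lem:oct-alg} with $h\le 1$ to get $\ostar(2^{\ctw})$.
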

\begin{proof}
	First, we compute the 2-subdivision $\hat G = (\hat V, \hat E)$ of the graph $G$ and a path decomposition $\mathcal{\hat B} = B_1, \dots, B_r$ of $\hat G$ satifying the properties of \cref{lem:special-path-decomposition} in polynomial time. 
	
	By \cref{lem:app-oct-subdiv}, the instance $(G, k)$ of \Poct{} is equivalent to 
	the instance $(\hat{G}, \emptyset, V_0, k)$ of \Poctp{}, i.e., it suffices to consider odd cycle transversals of $\hat G$ that do not contain subdivision vertices. 
	Recall that by the properties from \cref{lem:special-path-decomposition}, each bag of $\mathcal{\hat B}$ contains at most one vertex of $V$, i.e., at most one vertex outside $V_0$.
	Also, the path decomposition $\mathcal{\hat B}$ contains a polynomial number $r$ of bags.
	We now apply \cref{lem:oct-alg} with $h \leq 1$ to solve this instance of \Poctp{} in time $\ostar(2^{\ctw} 3 r) = \ostar(2^{\ctw})$.
\end{proof}

In \cref{app:oct-lb} we provide a matching lower bound and show that this algorithm is essentially optimal under SETH.
\subsection{Feedback Vertex Set}\label{app:fvs-ub}

In this section we present an algorithm to solve the \Pfvs{} problem.

\begin{quote}
	\Pfvs

	\textbf{Input}: An undirected graph $G = (V, E)$ and an integer $k$.

	\textbf{Question}: Is there a set $Y \subseteq V$ of cardinality at most $k$ such that $G - Y$ is a forest.

\end{quote}
A set $Y$ of vertices such that $G - Y$ is a forest is called a \emph{feedback vertex set} of $G$.

We provide a randomized algorithm that employs the \cnc{} approach of Cygan et al.\ \cite{CyganNPPRW11} to solve the \Pfvs{} problem in $\ostar(2^{\ctw})$. 
Similarly to the previous section, we employ edge-subdivision to restrict the set of vertices that belong to a feedback vertex set thus reducing the number of states of the dynamic programming and accelerating the algorithm. 
In the dynamic programming they use the states $0, 1_L, 1_R$ whose precise interpretation will be explained later. 
But as before, the subscripts $L$ and $R$ denote the sides of a consistent cut.
If similarly to \cref{app:general-approach}, we define the consistency matrix
$M$ indexed by these states relying on the ideas of the original algorithm, then it looks as follows: 
\begin{equation}\label{equ:fvs-consist}
	M = \kbordermatrix{
		& 0 & 1_{L} & 1_{R} \\
		0   & 1 & 1     & 1     \\
 		1_L & 1 & 1     & 0     \\
		1_R & 1 & 0     & 1     \\
 	}
\end{equation}
It simply reflects the fact that there are no edges between different sides of a consistent cut.	
This matrix 
has the full rank of $3$ over $\mathbb{F}_2$. Hence, our framework for coloring-like problems from \cref{app:general-approach} would not yield an algorithm faster than $\ostar(3^\ctw)$. However, we show that similarly to the \Poct{} problem, the approach relying on two-subdivision of a graph results in an $\ostar(2^\ctw)$ algorithm.
The dynamic programming will be similar to the one defined in the \cnc{} paper \cite{CyganNPPRW11}. However, we will consider 
a slightly modified version of the \pfvs{} problem. 
Then we adapt the original algorithm to solve this new problem. 
This allows us to employ some special properties of the subdivided graph. 
We define the \Pfvsp{} problem:

\begin{quote}
	\Pfvsp

	\textbf{Input}: A graph $G = (V, E)$, a subset $S \subseteq V$, and an integer $k$.

	\textbf{Question}: Is there a set $Y \subseteq V \setminus S$ of cardinality at most $k$ such that $G - Y$ is a forest.
\end{quote}

So here we are additionally given a set $S \subseteq V$ of ``forbidden'' vertices and we seek a feedback vertex set that does not use any of these vertices.
The key observation that a vertex set $Y$ is a feedback vertex set if and only if the subgraph $G - Y$ is a forest. Therefore, the \Pfvs{} problem is equivalent to looking for an induced forest of a certain size in the the graph.
For a graph $G$, with $cc(G)$ we denote the number of connected components of $G$.
We use the following lemma from the \cnc{} paper:
\begin{lemma}[\cite{CyganNPPRW11}] \label{lem:forest-components}
	A graph $G = (V, E)$ is a forest if and only if $\cc(G) \leq |V| - |E|$.
\end{lemma}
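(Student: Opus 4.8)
The plan is to prove both directions from the elementary inequality $|E| \ge |V| - \cc(G)$, which holds for \emph{every} graph, together with a characterization of when it is tight. First I would establish this inequality: decompose $G$ into its connected components $G_1, \dots, G_c$ (where $c = \cc(G)$), write $n_j = |V(G_j)|$, and note that since each $G_j$ is connected it contains a spanning tree, so $|E(G_j)| \ge n_j - 1$; summing over $j \in [c]$ gives $|E| \ge |V| - c$, equivalently $\cc(G) \ge |V| - |E|$.

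For the forward direction, assume $G$ is a forest. Then every component $G_j$ is a tree, so $|E(G_j)| = n_j - 1$ exactly, and summing yields $|E| = |V| - \cc(G)$; in particular $\cc(G) = |V| - |E|$, so the desired inequality $\cc(G) \le |V| - |E|$ holds.

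For the backward direction, assume $\cc(G) \le |V| - |E|$, i.e.\ $|E| \le |V| - \cc(G)$. Combined with the inequality from the first step this forces $|E| = |V| - \cc(G)$, hence $\sum_{j} |E(G_j)| = \sum_{j} (n_j - 1)$; since each left-hand summand is at least the corresponding right-hand summand, equality must hold termwise: $|E(G_j)| = n_j - 1$ for every $j$. A connected graph on $n_j$ vertices with exactly $n_j - 1$ edges is a tree, hence acyclic, so each $G_j$ is acyclic and therefore $G$ is a forest.

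The only mildly delicate point — hardly an obstacle — is the claim that a connected graph on $n$ vertices with $n-1$ edges is acyclic. I would justify it by observing that deleting an edge lying on a cycle leaves the graph connected, so a connected graph containing a cycle has strictly more than $n-1$ edges; thus $|E(G_j)| = n_j - 1$ rules out any cycle in $G_j$. Everything else is routine bookkeeping with the component decomposition.
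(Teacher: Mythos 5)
Your proof is correct and complete. The paper itself does not supply a proof of this lemma—it is stated as a citation to Cygan et al.~\cite{CyganNPPRW11}—so there is nothing paper-internal to compare against; your argument via the component decomposition, the universal inequality $|E|\ge |V|-\cc(G)$, and the termwise-equality reasoning in the backward direction is the standard one, and your justification that a connected graph on $n$ vertices with $n-1$ edges is acyclic patches the only nontrivial step cleanly.
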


First of all, we repeat the used notation from \cref{app:oct}. From now on, let $G = (V, E)$ denote some fixed graph and let $\ell$ be a linear arrangement of cutwidth $\ctw$ of this graph. 
With $\hat{G} = (\hat{V}, \hat{E})$ we denote the graph arising from $G$ after subdividing each edge twice. 
Let $V_0$ be the set of all subdivision vertices, i.e., $V_0 = \hat{V} \setminus V$. 
We denote the number of vertices (resp.\ edges) of $\hat G$ with $\hat n$ (resp.\ $\hat m$).
Note that $\hat n \in \mathcal{O}(n)$ and $\hat m \in \mathcal{O}(m)$.

To make this section self-contained, we restate the results from \cref{app:oct} about the existence of a path decomposition of $\hat G$ with certain nice properties. The reader unfamiliar with path decompositions is referred to \cref{app:oct} for definitions.

\begin{lemma}\label{lem:special-path-decomposition-copy}
	Let $G = (V, E)$ be a graph and let $\ell$ be a linear arrangement of $G$ of cutwidth $\ctw$. Then the graph $\hat G$ admits a nice path decomposition $\mathcal{\hat B}$ of pathwidth at most $\ctw$ such that every bag of $\mathcal{\hat B}$ contains at most two vertices from $V$. Moreover, the path decomposition~$\mathcal{\hat B}$ contains $\mathcal{O}\bigl(\ctw \poly(n)\bigr)$ bags and it can be computed in polynomial time from $G$ and $\ell$.
\end{lemma}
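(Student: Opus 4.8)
The plan is to observe that the statement is strictly weaker than \cref{lem:special-path-decomposition}, which we already established in \cref{app:oct}, and hence follows from it verbatim. Concretely, \cref{lem:special-path-decomposition} produces a nice path decomposition $\hat{\mathcal B}$ of $\hat G$ of pathwidth at most $\ctw$, consisting of a polynomial number of bags, computable in polynomial time from $G$ and $\ell$, and with the property that every bag contains \emph{at most one} vertex of $V$. Any such decomposition a fortiori has at most two vertices of $V$ per bag, so the lemma follows. We state the weaker bound because that is all the subsequent \pfvs{} dynamic programming needs; the stronger bound of one comes for free.

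For self-containedness I would also recall the construction, since the \pfvs{} algorithm in the following subsections exploits its exact shape. First I would subdivide every edge of $G$ twice to obtain $\hat G$; since cutwidth is closed under edge subdivision, \cref{lem:append-subdiv-arrangement} yields a linear arrangement $\hat\ell$ of $\hat G$ of cutwidth at most $\ctw$ in which the three edges subdividing any fixed edge of $G$ are pairwise non-overlapping. From $\hat\ell$, \cref{lem:append-linear-pathdecomp} constructs a path decomposition of $\hat G$ of width at most $\ctw$ whose $i$-th bag consists of the $i$-th vertex of $\hat\ell$ together with the left end-vertices of the edges crossing the $(i-1)$-th cut. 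Then the segment argument from the proof of \cref{lem:append-linear-pathdecomp-bags} applies: each original vertex $v\in V$ appears only in bags whose index lies inside its segment $L_v = \{v\}\cup N_{\hat G}(v)$, and since the segments partition $\hat V$, every bag meets $V$ in at most one vertex.

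Finally I would make this path decomposition nice via the standard transformation (\cref{lem:appned-nice-decomps-exists}, cf.\ \cref{rem:append-linear-pathdec-nice}). The only two points needing a brief check are that this step preserves the bound on $|B_i \cap V|$ and keeps the number of bags polynomial: the former holds because every bag of the nice decomposition is a subset of some bag of the original one (so it still contains at most one, hence at most two, vertex of $V$), and the latter because the standard construction inflates the number of bags only by a factor polynomial in $\pw(\hat{\mathcal B}) + |\hat V| = \poly(n)$. There is essentially no obstacle here, as the whole statement is a bookkeeping consequence of results already proven in \cref{app:oct}; the mildly delicate part is merely ensuring the niceification does not destroy the per-bag property, which is immediate from the subset condition.
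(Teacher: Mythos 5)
Your proposal is correct and matches the paper's own approach: the paper presents this lemma explicitly as a \emph{restatement} of the result established in the \Poct{} subsection ("To make this section self-contained, we restate the results..."), and gives no independent proof. You reduce it to \cref{lem:special-path-decomposition} in exactly the same way, and your added recollection of the underlying construction (subdivide each edge twice to get $\hat G$, form $\hat\ell$ by \cref{lem:append-subdiv-arrangement}, extract a path decomposition by \cref{lem:append-linear-pathdecomp}, apply the segment argument of \cref{lem:append-linear-pathdecomp-bags}, and finally niceify by \cref{lem:appned-nice-decomps-exists}) is accurate. Your observation that the earlier lemma actually gives the stronger bound of at most \emph{one} $V$-vertex per bag, of which the stated "at most two" is a trivial consequence, is also right; the weaker constant is harmless for the subsequent \Pfvsp{} dynamic programming since it only appears as a factor $3^{h}$ with $h\le 2$, still hidden in $\ostar(2^{\ctw})$. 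Your argument that niceification preserves the per-bag bound because every bag of the nice decomposition is a subset of a bag of the original is exactly the reason recorded in \cref{rem:append-linear-pathdec-nice}, so there is no gap.
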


Similarly to odd cycles from \cref{app:oct}, the following lemma holds for general cycles:

\begin{lemma}
	\label{lem:app-fvs-subdiv}
	Let $G = (V, E)$ be a graph.
	Then a \fvs{} of $G$ is a \fvs{} of $\hat G$, and a \fvs{}
	of $\hat G$ that only uses vertices from $V$ is a \fvs{} of $G$ as well. Moreover, $\hat G$ admits a minimum \fvs{} that only contains vertices from $V$.
	In particular, both graphs have the same feedback-vertex-set number.
\end{lemma}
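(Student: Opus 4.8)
The plan is to transcribe the proof of \cref{lem:app-oct-subdiv} almost verbatim, replacing the bijection between odd cycles by the bijection between \emph{all} cycles provided by \cref{lem:subdivprescycles}; since a graph is a forest exactly when it is acyclic, the parity of cycles plays no role here.

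First I would argue that a \fvs{} of $G$ is a \fvs{} of $\hat G$. Let $S$ be a \fvs{} of $G$ and suppose towards a contradiction that $\hat G - S$ contains a cycle $C$. By \cref{lem:subdivprescycles}, $C$ arises from a cycle $C_0$ of $G$ by subdividing each of its edges twice; in particular the vertex set of $C_0$ is $V(C) \cap V$, so $C_0$ is disjoint from $S$ as well, contradicting the choice of $S$. Conversely, let $\hat S \subseteq V$ be a \fvs{} of $\hat G$ and suppose $G - \hat S$ contains a cycle $C_0$. Subdividing each edge of $C_0$ twice yields, again by \cref{lem:subdivprescycles}, a cycle $C$ of $\hat G$; its vertices in $V$ are exactly those of $C_0$ (hence not in $\hat S$) and its other vertices lie in $V_0$ (hence not in $\hat S \subseteq V$), so $C$ avoids $\hat S$ --- a contradiction.

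Next I would establish the existence of a minimum \fvs{} of $\hat G$ inside $V$ by an exchange argument. Let $S$ be a minimum \fvs{} of $\hat G$. As long as $S$ contains some subdivision vertex $w \in V_0$, pick one endpoint $v \in V$ of the path $P(w)$. Since every vertex of $V_0$ has degree two in $\hat G$, any cycle of $\hat G$ through $w$ traverses the entire path $P(w)$ and hence contains $v$. Therefore $\bigl(S \setminus \{w\}\bigr) \cup \{v\}$ is again a \fvs{} of $\hat G$, its size did not increase, and its number of subdivision vertices dropped by one. After at most $|V_0|$ such steps we obtain a \fvs{} $S' \subseteq V$ of $\hat G$ with $|S'| \le |S|$; by minimality of $S$ it is itself a minimum \fvs{}. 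Combining the two directions above with this last statement yields that a minimum \fvs{} of $G$ is a \fvs{} of $\hat G$ and a minimum \fvs{} of $\hat G$ can be chosen to be a \fvs{} of $G$, so both graphs share the same feedback-vertex-set number.

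The argument is routine given \cref{lem:subdivprescycles}; the only point needing a little care is the bookkeeping in the exchange step (that each replacement strictly reduces the count of subdivision vertices in $S$ while not increasing $|S|$), which guarantees termination.
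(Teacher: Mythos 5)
Your proof is correct and follows essentially the same route as the paper's: both directions use \cref{lem:subdivprescycles} to translate cycles between $G$ and $\hat G$, and the exchange argument for the minimum \fvs{} is identical in substance. The only superficial difference is that you replace a subdivision vertex $w$ by \emph{an endpoint} of $P(w)$ whereas the paper uses the \emph{unique neighbor} of $w$ in $V$; both are endpoints of $P(w)$ and any cycle through $w$ contains both, so the argument goes through either way.
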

\begin{proof}
	Let $S$ be a \fvs{} of $G$. Suppose $\hat{H} = \hat{G} - S$ is not a forest. Let $C$ be a cycle in $\hat{H}$. Then $C$ is a cycle in $\hat{G}$ that avoids $S$. By \cref{lem:subdivprescycles}, this cycle results from some cycle $C_0$ in $G$ by subdividing each of its edges twice. Hence, the cycle $C_0$ is disjoint from $S$ as well so it is a cycle in $G - S$ -- a contradiction. 
	
	Now let $\hat S$ be a \fvs{} of $\hat{G}$ that only uses vertices from $V$, then $\hat S$ is a \fvs{} of $G$ as well, since we can turn a cycle of $G - \hat S$ into a cycle of $\hat{G} - \hat S$ by subdividing each of its edges twice. 
	
	Finally, consider a minimum feedback vertex set $S$ of $\hat G$.
	If it does not contain any vertex of $V_0$, we are done. Otherwise, consider a vertex $w \in S \cap V_0$ and let $v \in V$ be the unique neighbor of $w$ in $V$. 
	Since $w$ has the degree of two in $\hat G$, every cycle containing $w$ also contains $v$. Therefore, the set $\bigl(S \setminus \{w\}\bigr) \cup \{v\}$ is a feedback vertex set of $\hat G$ of at most the same size as $S$. We can repeat this process until we obtain a feedback vertex set $S'$ of $\hat G$ that has at most the same size as $S$ (and hence, it is also a minimum feedback vertex set) and contains only vertices from $V$ as claimed.
\end{proof}

Now we show how to solve the \Pfvs{} problem for $G$ in time $\ostar(2^{\ctw})$. As we already mentioned before, Cygan et al.\ provided a \cnc{} formulation for this problem. In the following lemma, we summarize their result in a suitable for us form.
In this lemma, special elements $\Msymb$ and $\Fsymb$ occur. These are just special identifiers that allow every vertex to have two weights depending on how it occurs in a partial solution.
For a set~$U \subseteq \NNN$, a weight function $\omega: V \times \{ \Msymb, \Fsymb \} \to U$, and a subset $Y \subseteq V$, we use the notation $\omega(Y, \Msymb) = \sum_{v \in Y} \omega(v, \Msymb)$ and similarly, $\omega(Y, \Fsymb) = \sum_{v \in Y} \omega(v, \Fsymb)$.

\begin{lemma}[\cite{CyganNPPRW11}] \label{lem:app-cut-and-count-fvs} 
	Let $G = (V, E)$ be a graph, let $\mathcal B$ be a path decomposition of $G$, 
let $\totalweight = 4 n$, and let $S \subseteq V$. 
	Further, let $\omega\colon V\times\{ \Msymb, \Fsymb \}\to \bigl[W^*\bigr]$ be a weight function. 
	For numbers $A \in [n]_0, B \in [n]_0, C \in [n]_0, \weight \in \bigl[n \totalweight\bigr]_0$, with $\FFF^{A,B,C}_{\weight}$ we denote the family of all pairs $\bigl((X, M), (L, R)\bigr)$ with $X,M,L,R \subseteq V$ such that the following properties hold:
	\begin{itemize}
		\item $S \subseteq X$,
		\item $(L, R)$ is a consistent cut of $X$,
		\item $M \subseteq L$,		
		\item $|X| = A, \Bigl|E\bigl(G[X]\bigr)\Bigr| = B, |M| = C$,
		\item and $\omega(X, \Fsymb) + \omega(M, \Msymb) = \weight$.
	\end{itemize}
If there exists an algorithm $\AAA$ that given $G,S, \mathcal{B}, A, B, C, \weight$ computes the size of $\FFF^{A,B,C}_{\weight}$ modulo $2$ in time $\ostar\bigl(\beta(\mathcal{B})\bigr)$ for some computable function $\beta$, then there also exists a randomized algorithm that given a graph $G$, a vertex subset $S$ of $G$, a path decomposition $\mathcal{B}$, and an integer~$k$, solves the instance $(G, S, k)$ of the \Pfvsp{} problem in time $\ostar\bigl(\beta(\mathcal{B})\bigr)$. The algorithm does not give false positives and may give false negatives with probability at most $1/2$.	
\end{lemma}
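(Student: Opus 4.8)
The plan is to follow the cut-and-count recipe of Cygan et al.\ \cite{CyganNPPRW11} and assemble the randomized algorithm for \Pfvsp{} on top of the assumed counting oracle $\AAA$. First I would reformulate the problem: a set $Y \subseteq V \setminus S$ with $|Y| \le k$ and $G - Y$ a forest exists if and only if $G$ has an induced forest $X$ with $S \subseteq X$ and $|X| \ge n - k$ (take $Y = V \setminus X$). Hence it suffices to decide, separately for each target size $A \in \{n-k, \dots, n\}$, whether $G$ admits an induced forest containing $S$ on exactly $A$ vertices.

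Next I would invoke \cref{lem:forest-components}: a vertex set $X$ with $|X| = A$ and $\bigl|E(G[X])\bigr| = B$ induces a forest if and only if $\cc(G[X]) = A - B$ (recall $\cc(G[X]) \ge A - B$ always). The point of cut-and-count is that $G[X]$ has exactly $2^{\cc(G[X])}$ consistent cuts, so this number is divisible by $2^{A-B}$, with equality precisely for forests; the marker set $M \subseteq L$ of size $C = A - B$ is the device that ``divides out'' the factor $2^{A-B}$, so that, after fixing the weight parameter, the parity of $\bigl|\FFF^{A,B,C}_{\weight}\bigr|$ detects the existence of a forest $X$ with the prescribed number of vertices, edges, and weight while all contributions of non-forest vertex sets cancel modulo $2$. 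This parity statement is exactly the Feedback Vertex Set cut-and-count lemma of \cite{CyganNPPRW11}; the only additional ingredient here is the hard constraint $S \subseteq X$, which merely restricts the family $\FFF^{A,B,C}_{\weight}$ further and is compatible with the bijective arguments there, so it causes no complication.

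To pass from a parity oracle to an exact decision despite possible cancellations among distinct forests of the same type, I would apply the Isolation Lemma of \cite{MulmuleyVV87}: draw the weights $\omega(v, \Msymb)$ and $\omega(v, \Fsymb)$ independently and uniformly from $\bigl[\totalweight\bigr]$ with $\totalweight = 4n$ (the universe $V \times \{\Msymb, \Fsymb\}$ has size $2n$, so this range suffices); then with probability at least $1/2$ the minimum-weight solution object is unique, and for the corresponding value of $\weight$ the parity of $\bigl|\FFF^{A,B,C}_{\weight}\bigr|$ is odd. The algorithm then enumerates all $\poly(n)$ triples $(A, B, C)$ with $C = A - B$ and all $\poly(n)$ weights $\weight \in \bigl[n\totalweight\bigr]_0$, calls $\AAA$ on each, and accepts if and only if some call with $n - A \le k$ returns an odd value. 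This makes $\poly(n)$ calls to $\AAA$, each of cost $\ostar\bigl(\beta(\mathcal{B})\bigr)$, hence total running time $\ostar\bigl(\beta(\mathcal{B})\bigr)$; an odd count always witnesses a genuine forest, so there are no false positives, and a false negative can occur only when isolation fails, with probability at most $1/2$.

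The main obstacle is the parity statement in the second paragraph---that after dividing by $2^{A-B}$ the marker construction separates forests from non-forests and yields no spurious odd contributions. I would not reprove this but cite \cite{CyganNPPRW11}, checking only that imposing $S \subseteq X$ throughout (equivalently, never allowing the dynamic program that realises $\AAA$ to leave a vertex of $S$ out of $X$) leaves those arguments intact; the rest is routine bookkeeping of the parameters.
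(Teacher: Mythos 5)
Your proposal follows the same route as the paper: both defer the core cut-and-count parity argument to Cygan et al.\ \cite{CyganNPPRW11} (the paper states this lemma as cited and only supplies a remark), and both observe that the only deviation from the original FVS formulation is the hard constraint $S \subseteq X$ (the original uses $S \cap X = \emptyset$), which restricts the family $\FFF^{A,B,C}_{\weight}$ without affecting the key identity $\bigl|\{(L,R)\}\bigr| \equiv_2 2^{\cc(M, G[X])}$ or the Isolation Lemma wiring. Your summary of the parity mechanism (markers of size $C = A-B$ force $\cc(G[X]) = A-B$, hence $X$ a forest via \cref{lem:forest-components}, non-forests cancel mod $2$), the random weights over $V\times\{\Msymb,\Fsymb\}$ with $\totalweight = 4n$, and the polynomial enumeration of $(A,B,C,\weight)$ matches the intended proof.
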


\begin{remark}
	Observe that a vertex set $Y$ is a feedback vertex set of $G$ if and only if the subgraph $G - Y$ is an (induced) forest in $G$.
	In the above lemma, the set $X$ represents the set of vertices left after the removal of a potential feedback vertex set and therefore, should induce a forest. 
	The difference between our statement of this lemma and its original version from the \cnc{} paper is that they require the property $S \cap X = \emptyset$ (i.e., all vertices of $S$ must belong to the feedback vertex set) while we use $S \subseteq X$ (i.e., no vertex from $X$ belongs to the feedback vertex set). 
	Although these differences have an impact on the dynamic programming routine (which we provide explicitly next), it does not effect the correctness of the above lemma. Its proof relies merely on \cref{lem:forest-components} and the fact, that for fixed $(S, M)$, it holds that 
	\[
		\biggl|\Bigl\{\bigl((X, M),(L,R)\bigr) \in \FFF^{A,B,C}_{w}\Bigr\}\biggr|\equiv_2 2^{cc\bigl(M, G[X]\bigr)}
	\]
	where $cc\bigl(M, G[X]\bigr)$ denotes the number of connected components of $G[X]$ which do not contain any vertex from the set $M$. This number is odd if and only if all connected components contain an element of $M$ and in particular, there are at most $|M| = C$ connected components.
	We refer to \cite{CyganNPPRW11} for more details. 	
\end{remark}

	Our following lemma is crucial for the acceleration of the dynamic programming when parameterized by cutwidth. Although it works with a path decomposition, by \cref{lem:special-path-decomposition-copy} a linear arrangement of a graph yields a special path decomposition of its 2-subdivision, and we use it later.

\begin{lemma}\label{lem:algo-fvs-prime}
	\label{lem:app-fvs-alg} Let $G = (V, E)$ be a graph and let $\mathcal{B} = B_1, \dots, B_r$ be a nice path decomposition of~$G$. 
	Let $S \subseteq V$ and let $k$ be an integer.
	Let $W^* = 4n$ and let $\omega\colon V\times\{ \Msymb, \Fsymb \}\to \bigl[W^*\bigr]$ be a weight function. 
	Let $A \in [n]_0$, $B \in [n]_0$, $C \in [n]_0$, and $\weight \in [nW^*]_0$.
	Also let 
	\[
		h = \max\limits_{i \in [r]} |B_i \setminus S|.
	\]
	Then there exists an algorithm $\AAA$ that given the above input, computes the size of $\FFF^{A,B,C}_{\weight}$ modulo 2 in time $\ostar(r 2^{\pw(\mathcal{B})} 3^h)$.
\end{lemma}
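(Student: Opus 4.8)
The plan is to run the cut-and-count dynamic programming for \Pfvs{} of Cygan et al.~\cite{CyganNPPRW11} over the nice path decomposition $\mathcal B$, with the single modification that the ``forbidden'' set $S$ removes one state from every vertex of $S$; this is precisely what turns the usual $3^{\pw}$ bound into $\ostar(2^{\pw}3^{h})$. I would use the state set $\{0,1_L,1_R\}$, where $0$ means the vertex lies outside $X$ (i.e.\ it is deleted, in the feedback vertex set), while $1_L$ (resp.\ $1_R$) means the vertex lies in $X$ and on the side $L$ (resp.\ $R$) of the consistent cut. Since every $v\in S$ must satisfy $v\in X$, a vertex of $S$ may only take a state from $\{1_L,1_R\}$, whereas every other vertex may take any of the three states; call $\delta\colon B_i\to\{0,1_L,1_R\}$ \emph{valid} if $\delta(v)\neq 0$ for all $v\in B_i\cap S$. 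For $i\in[r]$, a valid $\delta$, and counters $\alpha,\beta,\gamma\in[n]_0$ and $\eta\in[nW^*]_0$, let $T_i[\delta,\alpha,\beta,\gamma,\eta]$ be the number of tuples $\bigl((X,M),(L,R)\bigr)$ with $X\subseteq V_i$, $M\subseteq L$, $(L,R)$ a consistent cut of $G_i[X]$, $V_i\cap S\subseteq X$, $|X|=\alpha$, $\bigl|E(G_i[X])\bigr|=\beta$, $|M|=\gamma$, $\omega(X,\Fsymb)+\omega(M,\Msymb)=\eta$, and agreeing with $\delta$ on $B_i$. Only entries indexed by valid $\delta$ are stored, since the rest are $0$.

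The transitions are the standard ones. At the leaf $B_1=\emptyset$ the only nonzero entry is at the all-zero counters. At an introduce bag $B_i=B_{i-1}\cup\{v\}$: giving $v$ the state $0$ (allowed only if $v\notin S$) copies the corresponding entry of $T_{i-1}$ unchanged; giving $v$ a state in $\{1_L,1_R\}$ requires that no neighbour of $v$ already present in $B_{i-1}$ sits on the opposite side, increments $\alpha$ by $1$ and $\eta$ by $\omega(v,\Fsymb)$, increments $\beta$ by the number of neighbours of $v$ in $B_{i-1}$ that lie in $X$ (each edge of $G[X]$ is counted once, at the introduction of its later endpoint), and — only for state $1_L$ — offers the additional choice of placing $v$ into $M$, which increments $\gamma$ by $1$ and $\eta$ by $\omega(v,\Msymb)$. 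At a forget bag $B_i=B_{i-1}\setminus\{v\}$ one sums the entries of $T_{i-1}$ over the states of $v$ (a forgotten vertex has no further neighbours, so $G_{i-1}=G_i$). Correctness is verbatim that of the \cnc{} recurrence: a routine induction shows each counted tuple decomposes uniquely according to the last bag operation, the only new point being that forbidding the state $0$ on $S$ is exactly the constraint $V_i\cap S\subseteq X$; the edge counter, the marking $M$, and the parity identity $\bigl|\{\,\cdot\,\}\bigr|\equiv_2 2^{cc(M,G[X])}$ are taken from \cite{CyganNPPRW11} via \cref{lem:forest-components}. Since $\mathcal B$ is nice, $B_r=\emptyset$ and $V_r=V$, so $T_r[\varnothing,A,B,C,\weight]$ equals $\bigl|\FFF^{A,B,C}_{\weight}\bigr|$; performing all arithmetic modulo $2$ outputs its parity.

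For the running time, all counters are polynomially bounded ($\alpha,\beta,\gamma\le n$ and $\eta\le nW^*=4n^2$), so there are $\poly(n)$ counter-tuples, and each entry is computed from a bounded number of earlier entries in $\poly(n)$ time. The only superpolynomial factor is the number of valid state assignments of a bag $B_i$, namely $2^{|B_i\cap S|}3^{|B_i\setminus S|}$, since vertices of $S$ have two states and all others three. Using $|B_i\cap S|=|B_i|-|B_i\setminus S|\le\pw(\mathcal B)+1-|B_i\setminus S|$ and $|B_i\setminus S|\le h$ gives
\[
	2^{|B_i\cap S|}3^{|B_i\setminus S|}=2^{|B_i|}(3/2)^{|B_i\setminus S|}\le 2^{\pw(\mathcal B)+1}(3/2)^{h}\le 2\cdot 2^{\pw(\mathcal B)}3^{h},
\]
and multiplying by the $r$ bags and the polynomial factors yields the claimed $\ostar\bigl(r\,2^{\pw(\mathcal B)}3^{h}\bigr)$.

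The main obstacle is not a single hard idea but the faithfulness of the adaptation: one must verify that forbidding the ``deleted'' state precisely on the vertices of $S$ (rather than on vertices forced \emph{into} the solution, as in a ``must-delete'' variant) matches the condition $S\subseteq X$ in the definition of $\FFF^{A,B,C}_{\weight}$, and that splitting each bag into its $S$-part and non-$S$-part is what produces the refined $3^{h}$ factor in place of $3^{\pw}$. Everything else — the edge counter, the marking, the consistent-cut checks, and the parity argument — is inherited unchanged from the \cnc{} algorithm of Cygan et al.
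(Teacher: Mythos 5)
Your proof is correct and follows the paper's own approach in every essential respect: the state set $\{0,1_L,1_R\}$, the restriction that state $0$ is forbidden on the vertices of $B_i\cap S$, the induction over introduce/forget bags, and the running-time accounting $2^{|B_i\cap S|}3^{|B_i\setminus S|}\le 2\cdot 2^{\pw(\mathcal B)}3^{h}$ all coincide with the paper. The only deviation is cosmetic: you offer the choice of placing $v$ into $M$ when $v$ is introduced with state $1_L$, whereas the paper constrains its intermediate tables to $M\subseteq X\setminus B_i$ and handles the marking at the forget bag — both are correct realizations of the same \cnc{} recurrence and agree at $i=r$ because $B_r=\emptyset$.
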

\begin{proof}
	For $i \in [r]$, let $V_i = \bigcup_{j \in [i]} B_i$ and $G_i = G[V_i]$.
	For $i \in [r]$, $a,b,c \in [n]_0$, and $w \in [nW^*]_0$, we define 
	\begin{align*}
		\mathcal{R}_i (a,b,c,w) = \biggl\{ &(X, M) \Bigm \vert X \subseteq V_i, S \cap V_i \subseteq X, |X| = a,
			\Bigl|E\bigl(G[X]\bigr)\Bigr| = b,\\
	& |M| = c, M \subseteq X \setminus B_i, \omega(X \times {\Fsymb}) + \omega(M \times {\Msymb}) = w\biggr\},\\
	\mathcal{C}_i (a,b,c,w) = \Bigl\{&\bigl((X, M),(X_L, X_R)\bigr) \Bigm \vert (X, M) \in \mathcal{R}_i(a,b,c,w), M \subseteq X_L,\\
					 &(X_L, X_R) \text{ is a consistent cut of $X$}\Bigr\}
	\end{align*}
	It follows from the definition of both sets, that 
	\[
		\bigl|\FFF^{A,B,C}_{\totalweight}\bigr| = \bigl|\mathcal{C}_r(A,B,C,\totalweight)\bigr|.
	\]
	In order to compute this value, we define the following tables and provide a dynamic programming algorithm to compute their values. For $i\in[r]$, $a,b,c\in[n]_0$, $w
	\in [n\totalweight]_0$, and $s \in \{0, 1_L, 1_R\}^{B_i}$ we define:
	\begin{align*}
	A_i (a,b,c,w,s) = \biggl|\Bigl\{&\bigl((X, M),(X_L, X_R)\bigr) \in \mathcal{C}_i(a,b,c,w)\Bigm \vert\\
	&\bigl(s(v) = 1_x \implies v \in X_x \text{ for } x \in \{L, R\}\bigr), \bigl(s(v) = 0 \implies v \notin X \bigr)\Bigr\}\biggr|.
	\end{align*}
	Observe that if for some vertex $w \in S \cap B_i$, we have $s(w) = 0$, then $A_i (a,b,c,w,s) = 0$ holds. 
	For this reason, we will neither consider such entries nor compute them explicitly. We will later see that this is crucial for the running time of the algorithm.	
	Also note that we have 
	\[
		\bigl|\mathcal{C}_r(A,B,C,W)\bigr| = A_r(A,B,C,W,\emptyset)
	\]
	where $\emptyset$ denotes the unique mapping from $B_r = \emptyset$ to $\{0, 1_L, 1_R\}$.
	Therefore, to compute the desired size of $\FFF^{A,B,C}_{\weight}$, it suffices to compute the entries of the table $A_r$.
	We will compute the entries of tables $A_i$ in the order of increasing $i \in [r]$.
	All entries that we do not set explicitly are treated as zero. We also do not store them explicitly to achieve better running time.

	Recall that $B_1 = \emptyset$. 
	So we first set $A_1(0, 0, 0, \emptyset) = 1$, clearly, the remaining entries of $A_1$ are zeroes. 
	Now we iterate over $i = 2,3,\dots, r$ as follows. 
	For a fixed $i \geq 2$, the bag $B_i$ of a nice path decomposition $\mathcal{B}$ is either an introduce or a forget bag.

	If $B_i$ is an introduce bag, let $v \in B_i \setminus B_{i-1}$ be the vertex \emph{introduced} at $B_i$. 
	For each state assignment $s \in \{0, 1_L, 1_R\}^{B_{i-1}}$, let $d_s = \Bigl|N_{G_i}(v) \cap s^{-1}\bigl(\{1_L, 1_R\}\bigr)\Bigr|$.
	Informally speaking, this is the number of neighbors of $v$ in a partial solution represented by $s$. 
	Note that by a property of a path decomposition, all neighbors of $v$ in $G_i$ must belong to $B_i$ since $v$ occurs in $B_i$ for the first time.
	Therefore, by ``adding $d_s$ to the number of edges of a partial solution'' in the dynamic programming below, we indeed count exactly the edges between $v$ and vertices of the current partial solution.
	Then for each $a, b, c \in [n]_0$ and $w \in [n\totalweight]_0$, we set: 
	\begin{alignat*}{5}
		&A_i\bigl(a, b, c, w, s[v\mapsto 0]\bigr) && = A_{i-1}&&(a, b, c, w, s).\\
	\end{alignat*}
Let us recall that we only do this if $s \notin S$ holds. Otherwise, the entry is zero.	
And for each $a \in [n-1]_0$, $b \in [n - d_s]_0$, $c \in [n]_0$, and $w \in \bigl[n\totalweight - \omega(v, \Fsymb)\bigr]$ we set:
	\begin{alignat*}{5}
		&A_i\bigl(a + 1, b + d_s,c,w + \omega(v, \Fsymb),s[v\mapsto 1_L]\bigr)
		&& = A_{i-1}&&(a, b, c, w, s)\\
		& && &&\cdot\bigl[N_{G_i}(v)\cap s^{-1}(1_R) = \emptyset\bigr],\\
		&A_i\bigl(a + 1, b + d_s,c,w + \omega(v, \Fsymb),s[v\mapsto 1_R]\bigr)
		&& = A_{i-1}&&(a, b, c, w, s)\\
		& && &&\cdot\bigl[N_{G_i}(v) \cap s^{-1}(1_L) = \emptyset\bigr].
	\end{alignat*}

Otherwise, we have that $B_i$ is a forget bag. So let $v \in B_{i-1} \setminus B_i$ be the vertex \emph{forgotten} at~$B_i$. For $a,b,c \in [n]_0$, $w\in[n\totalweight]_0$, and $s \in \{0, 1_L, 1_R\}^{B_{i}}$, we set the value $A_i(a, b, c, w)$ as:
	\begin{align*}
		&A_i(a, b, c, w) = \\
		&A_{i-1}\bigl(a,b,c,w,s[v\mapsto 0]\bigr) \\
		+&A_{i-1}\bigl(a,b,c,w,s[v\mapsto 1_R]\bigr)\\
		+&A_{i-1}\bigl(a,b,c,w,s[v\mapsto 1_L]\bigr)\\
		+&A_{i-1}\bigl(a,b,c-1,w-\omega(v, \Msymb),s[v\mapsto 1_L]\bigr) \cdot [c \geq 1] \cdot \bigl[w \geq \omega(v, \Msymb)\bigr]
	\end{align*}
	Note that in the last summand we possibly try to access undefined table entry. But this only can happen if $c < 1$ or $w < \omega(v, \Msymb)$ holds and in this case the entry is multiplied by an Iverson bracket equal to zero. Therefore, the whole summand is just equal to zero in this case and hence, it is well-defined.
	This last summand handles the addition of $v$ to the set $M$ of a partial solution: this is only allowed if $v$ is not in the current bag anymore, and in this case $v$ must be placed on the left side of a consistent cut (recall the definition of $\mathcal{C}_i$).
	Also recall that the first summand can only be non-zero if we have $v \notin S$.

	The correctness of the algorithm follows by induction over $i \in [r]$. 
	Crucially observe that for a vertex $v$ introduced in some bag $B_i$, all neighbors of $v$ in $G_i$ belong to $B_{i-1}$. 
	Therefore, every edge of some partial solution is counted exactly once, namely when its later introduced end-vertex is introduced.
	We refer to \cite{CyganFKLMPPS15} for more details on dynamic programming over nice tree decompositions.
	
	The running time can be bounded as follows. 
	For each bag $B_i$, we consider all mappings $s \in \{0, 1_L, 1_R\}^{B_i}$ (such that $s$ does not assign the value $0$ any vertex from $S$) and all reasonable (as described above) values $a,b,c \in [n]_0$, $w \in [n\totalweight]_0$, and we carry out a constant number of arithmetic operations using the entry $A_i[a, b, c, w, s]$. 
	Then the time spent on the bag $B_i$ can be bounded by 
	\[
		\ostar(2^{|B_i \cap S|} 3^{|B_i \setminus S|}) = \ostar(2^{\pw(\mathcal{B})} 3^h)
	\]
	since the values of $a, b, c, w$ are at most quadratic in $n$.
	Altogether, the table $A_r$ and hence, also the size of $\FFF^{A,B,C}_{\weight}$ can be computed in $\ostar(r 2^{\pw(\mathcal{B})} 3^h)$ as claimed.
\end{proof}

\begin{theorem}
	Given a graph $G = (V, E)$ together with a linear arrangement $\ell$ of $G$ of \ctwdth{} at most $\ctw$, and an integer $k$. There exists a randomized algorithm that solves the \Pfvs{} problem in time $\ostar(2^{\ctw})$. The algorithm does not give false positives and may give false negatives with probability at most $1/2$.	
\end{theorem}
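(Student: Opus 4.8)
The plan is to assemble the pieces developed in this subsection. First I would compute the $2$-subdivision $\hat G = (\hat V, \hat E)$ of $G$ together with $V_0 = \hat V \setminus V$, and use \cref{lem:special-path-decomposition-copy} to obtain, in polynomial time, a nice path decomposition $\mathcal{\hat B} = B_1, \dots, B_r$ of $\hat G$ of pathwidth at most $\ctw$ in which every bag contains at most two vertices of $V$ and with $r = \poly(n)$ bags. By \cref{lem:app-fvs-subdiv}, the instance $(G, k)$ of \Pfvs{} is a yes-instance if and only if $(\hat G, V_0, k)$ is a yes-instance of \Pfvsp{}: a feedback vertex set of $\hat G$ avoiding $V_0$ is one of $G$ and vice versa, and $\hat G$ always admits a minimum feedback vertex set disjoint from $V_0$. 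So it remains to solve the instance $(\hat G, V_0, k)$ of \Pfvsp{}.

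For that, I would invoke the \cnc{} reduction of \cref{lem:app-cut-and-count-fvs} with the forbidden set $S = V_0$: after sampling a suitable random weight function and iterating over all (polynomially many) admissible choices of the parameters $A, B, C, \weight$, it suffices to compute the size of $\FFF^{A,B,C}_{\weight}$ modulo $2$ for $\hat G$. This is precisely what the dynamic-programming algorithm of \cref{lem:app-fvs-alg} does when applied to $\hat G$, $S = V_0$, and $\mathcal{\hat B}$, in time $\ostar\bigl(r\, 2^{\pw(\mathcal{\hat B})} 3^{h}\bigr)$ with $h = \max_{i \in [r]} |B_i \setminus V_0|$. By the structural guarantee of \cref{lem:special-path-decomposition-copy} we have $h \le 2$, hence $3^h = \OO(1)$; together with $\pw(\mathcal{\hat B}) \le \ctw$ and $r = \poly(n)$ the running time is $\ostar(2^{\ctw})$. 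Feeding this back into \cref{lem:app-cut-and-count-fvs} yields a randomized $\ostar(2^{\ctw})$ algorithm for \Pfvsp{} on $(\hat G, V_0, k)$, hence for \Pfvs{} on $(G, k)$, with no false positives and false-negative probability at most $1/2$.

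Since all the heavy lifting is done by the cited lemmas, there is no genuinely difficult step; the one point that must be handled with care is recognising where the speed-up comes from. The consistency matrix in \eqref{equ:fvs-consist} has full rank $3$ over $\FF_2$, so the rank-based framework of \cref{app:general-approach} would only give $\ostar(3^{\ctw})$; the gain to $\ostar(2^{\ctw})$ is obtained by switching to the $2$-subdivision, where the subdivision vertices in $V_0$ are forbidden from entering the feedback vertex set and therefore contribute only the two states $\{1_L, 1_R\}$, while the at most two original vertices per bag still contribute the full three states $\{0, 1_L, 1_R\}$ — this is exactly the $2^{\pw} 3^h$ split in \cref{lem:app-fvs-alg}, which collapses because $\mathcal{\hat B}$ is tailored so that $h \le 2$. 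One should also recall the remark following \cref{lem:app-cut-and-count-fvs}: replacing the condition $S \cap X = \emptyset$ of the original \cnc{} formulation by $S \subseteq X$ does not affect correctness, since the argument only uses \cref{lem:forest-components} and the fact that the number of consistent cuts of a fixed $(X, M)$ is $2^{cc(M, G[X])}$, which is odd precisely when every connected component of $G[X]$ meets $M$.
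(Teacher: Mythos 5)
Your proposal follows the paper's proof essentially step by step: build the $2$-subdivision $\hat G$, extract the special nice path decomposition with at most two original vertices per bag via \cref{lem:special-path-decomposition-copy}, reduce to \Pfvsp{} on $(\hat G, V_0, k)$ via \cref{lem:app-fvs-subdiv}, and conclude by the cut\,\&\,count routine of \cref{lem:app-cut-and-count-fvs} together with the $\ostar(r\,2^{\pw}3^{h})$ dynamic program of \cref{lem:app-fvs-alg} with $h\le 2$. The added discussion of why the rank-$3$ consistency matrix would only give $\ostar(3^{\ctw})$ and of the sign convention $S\subseteq X$ versus $S\cap X=\emptyset$ is accurate but plays no new role in the argument.
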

\begin{proof}
	We start by building the graph $\hat G$ together with a nice path decomposition $\hat{\mathcal{B}} = B_1, \dots, B_r$ of $\hat G$ that satisfies \cref{lem:special-path-decomposition-copy} in polynomial time.
	By \cref{lem:app-fvs-subdiv}, the instance~$I = (G, k)$ of the \Pfvs{} problem is equivalent to the instance $I' = (\hat G, V_0, k)$ of the \Pfvsp{} problem. 
	By \cref{lem:algo-fvs-prime}, using the path decomposition $\hat{\mathcal{B}}$, the instance $I'$ can be solved in time $\ostar(r 2^{\pw(\hat B)} 3^h)$ where $h$ denotes the largest number of vertices outside $V_0$ in some bag.
	Now recall that by the properties from \cref{lem:special-path-decomposition-copy}, we have~$r \in \mathcal{O}\bigl(\ctw \poly(n)\bigr)$, $h \leq 2$, and $\pw(\hat B) \leq \ctw$. Therefore, the instance~$I'$ (and therefore,~$I$ as well) can be solved in time $\ostar(2^\ctw)$.
\end{proof}

In \cref{app:fvs-lb} we provide a matching lower bound and show that this algorithm is essentially optimal under SETH.
\section{Lower Bounds}\label{app:lb}
\subsection{Steiner Tree} \label{append:st}

When parameterized by treewidth, the \Pst{} problem can be solved in time $\ostar(3^{\tw})$~\cite{CyganNPPRW11}. So by
\cref{lem:ctw-ub}, it can also be solved in time $\ostar(3^{\ctw})$ when parameterized by \ctwdth{}. We prove that under SETH, this
problem cannot be solved in time $\ostar((3-\varepsilon)^{\ctw})$ for any positive $\varepsilon$ and
hence, the upper bound of $\ostar(3^{\ctw})$ is essentially tight.

\begin{quote}
	\Pst{}
	
	\textbf{Input}: A graph $G = (V, E)$, a set $T \subseteq V$ of \emph{terminals}, and an integer $k$.

	\textbf{Question}: Is there a set $S \subseteq V$ of cardinality at most $k$ such that $T \subseteq S$ and $G[S]$ is connected.
\end{quote}

\begin{theorem}\label{app:theo-st}
	Assuming SETH, there is no algorithm solving the \Pst{} problem in time $\ostar((3-\varepsilon)^{\ctw})$ for
    any positive real $\varepsilon$, even when a linear arrangement $\ell$ of $G$ of this width is given.
\end{theorem}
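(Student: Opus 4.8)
The plan is to prove the lower bound via a reduction from the SETH-hard problem $q$-SAT for a suitably large constant $q$, using a construction whose cutwidth is roughly $(\log_3 2)\cdot n + o(n)$ so that an $\ostar((3-\varepsilon)^{\ctw})$ algorithm for \Pst{} would yield an $\ostar((2-\delta)^n)$ algorithm for $q$-SAT, contradicting SETH. The natural template is the one used by Lokshtanov, Marx, and Saurabh for treewidth-parameterized lower bounds, adapted to the edge-separator setting as was done for other problems in this paper: group the $n$ Boolean variables into blocks of size $t$ (a constant to be chosen), and for each block maintain, in the ``information-carrying'' part of each cut, a gadget with exactly $3^t$ states that encode the $2^t$ assignments to that block; the relation $2^t \le 3^t$ (with room to spare) is what makes the trade-off work. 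Since \Pst{} has three DP states per vertex (``in the tree on side $L$'', ``in the tree on side $R$'', ``not in the tree''), each information-carrying vertex naturally contributes a factor $3$ to the width, which is exactly what we want.

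First I would design the core \emph{path gadget}: a long sequence of gadgets, one per clause of the formula, threaded through by ``wires'' that carry the current block-assignment. For each of the $n/t$ variable blocks there is one wire, realized so that along any cut it uses only $\mathcal{O}(1)$ edges and can be in one of $3^t$ meaningful states (with the remaining states provably suboptimal or inconsistent, enforced via terminals and forced edges so that a Steiner tree of the target size $k$ must pick them consistently). The total contribution to the cutwidth is then $(n/t)\cdot\mathcal{O}(1)$ from all wires plus $\mathcal{O}(1)$ local overhead, and by choosing $t$ large the effective base becomes $3^{t \cdot (1/t)} = 3$ per group — more precisely, one shows $\ctw(G) \le \frac{n}{t}\cdot\lceil\log_3 2^t\rceil + \mathcal{O}(1)$ roughly, which for $t\to\infty$ tends to $\frac{n\log 2}{\log 3}$, giving the desired contradiction. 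Second I would build the \emph{clause-checking gadgets}: for clause $c_j$ with literals from blocks $i_1,\dots,i_q$, a gadget that, reading the current states of the wires for those blocks, forces the Steiner tree to have a cheap connection routed through it iff at least one literal is satisfied, and an expensive (budget-violating) one otherwise. This is where the bulk of the work lies, and it is also the main obstacle: one must simultaneously (a) keep the gadget's footprint on every cut bounded by a constant, (b) make the budget $k$ count exactly so that satisfiability is equivalent to existence of a Steiner tree of size $\le k$, and (c) preserve connectivity correctly — the Steiner tree condition is global, so the gadget must ensure that the chosen vertices actually form a connected subgraph, typically by designing a ``backbone'' path that all gadgets hang off of and proving that any optimum solution must use the backbone.

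Third, I would verify the two directions of the reduction. For the forward direction, given a satisfying assignment, exhibit an explicit Steiner tree: pick, in each wire, the gadget vertices encoding the block's assignment; in each clause gadget, route through the cheap connection enabled by a satisfied literal; add the backbone; and count that the total size is exactly $k$. For the reverse direction, given a Steiner tree $S$ with $|S|\le k$, argue by the tight budget that every wire is in a ``clean'' state encoding a genuine block-assignment (a consistency argument propagated along the wire, exactly as consistency matrices are propagated in the algorithmic sections of this paper), that these local assignments are mutually consistent so they define a global assignment $\alpha$, and that each clause gadget having been satisfiable within budget forces $\alpha$ to satisfy the corresponding clause. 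Finally, I would record that $G$, together with a linear arrangement of the claimed width, is computable in polynomial time (the arrangement is read off directly from the block-then-clause layout), and assemble the contradiction: if \Pst{} were solvable in $\ostar((3-\varepsilon)^{\ctw})$, then choosing $t$ large enough relative to $\varepsilon$ makes $(3-\varepsilon)^{\ctw} \le (2-\delta)^n\cdot\poly$, refuting SETH. The delicate quantitative point, which I would handle carefully, is choosing the block size $t$ as a function of $\varepsilon$ (and then $q$ as a function of $t$ via the sparsification lemma) so that all the $o(n)$ and constant overheads in the cutwidth are absorbed.
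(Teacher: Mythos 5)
Your overall architecture matches the paper's: group the $n$ variables into constant-size blocks, represent each block's assignment via a bundle of roughly $\log_3 2$ times as many ternary ``wires'' (exploiting $2^{t_0}\le 3^t$), hang decoding and clause-checking gadgets off the wires, force connectivity with a backbone path whose vertices carry private degree-1 terminals, set a tight budget $k$ via a packing argument, and choose the block size as a constant depending on $\varepsilon$ so that $(3-\varepsilon)^{\ctw}\le(2-\delta)^n$. That is the right route, and it is the route the paper takes.

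Two issues, however, one minor and one substantial. The minor one: you write that each block is ``one wire'' that uses only $\mathcal{O}(1)$ edges in every cut yet can be in one of $3^t$ meaningful states. That is impossible — a feature crossing cuts with $\mathcal{O}(1)$ edges can only carry $3^{\mathcal{O}(1)}$ distinct pieces of information in a Steiner Tree dynamic program — and, if literally true, it would give $\ctw \in \mathcal{O}(n/t)$ and hence a sub-exponential SAT algorithm as $t$ grows. Your very next line, $\ctw(G)\le(n/t)\lceil t\log_3 2\rceil+\mathcal{O}(1)$, is the correct count and tacitly assumes each block is a bundle of $\lceil t\log_3 2\rceil$ constant-width wires (as in the paper); the earlier phrasing should be replaced. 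The substantial gap is in the reverse direction: you claim the tight budget forces every wire into a single ``clean'' state ``propagated along the wire exactly as consistency matrices are propagated''. The budget does not give you this. A Steiner tree of size exactly $k$ may legitimately change the state of a wire between consecutive gadgets — in the paper's construction the state is only shown to be monotone in the partial order $\scrD\preceq\scrO\preceq\scrC$, allowing up to two transitions per path. Without the $(2n'+1)$-fold replication of the clause columns together with the monotone-state lemma and a pigeonhole argument (at most $2n'$ transitions across all $n'$ paths, so some segment $r^*$ is transition-free and an assignment can be read off there and checked against every clause), the equivalence of the instances is not established. That replication plus monotone-state machinery is the load-bearing idea of the reverse direction, and it is absent from your plan.
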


\subparagraph*{Construction.}
    Let $d \in \NN$ be arbitrary but fixed and let $I$ be an instance of $d$-SAT. Let $n$ be the number of variables in $I$. Let $t_0 \in \NN$ be a constant depending on $\varepsilon$ only that we fix later, $s = \lceil n / t_0\rceil$, and $t = \lceil t_0 \log_3 2\rceil$. Let $n' = s \cdot t$. In the
    following, we will use the following indices: $i \in [s]$, $q \in [t]$, $r \in [2n'+1]$, and $j \in [m]$. These indices will always belong to these intervals so for shortness, we omit the domain when we use any of these indices.

    We construct an equivalent instance $\bigl(G = (V, E), T, k\bigr)$ of the \Pst\ problem as follows. 	
    Initially $G$ and $T$ are empty. We call a vertex in $T$ a terminal. When we say ``we add a 
    terminal'' or ``subdivide an edge by a terminal'', we mean that first we apply the
    operation and then add the new vertex to $T$. We denote the vertex that results from
    subdividing an edge $\{u, v\}$ by a terminal by $t(u, v)$.    
    First, we partition the variables of $I$ into $s$ groups of size at most $t_0$ each, and denote them
    by $U_1, \dots U_s$. For each $i$ (i.e., for each group of variables), we add a group of $t$ path-like
    structures $P_{i, 1}, \dots, P_{i, t}$. 
    Each path-like structure
    consists of a sequence of $(2n'+1)m$ copies of a gadget called ``path gadget'', which we describe later.
    For simplicity, we call the path-like structures \emph{paths}. For a fixed $i$, we call the set of paths
    $\{P_{i, 1}, \dots P_{i, t}\}$ the \emph{path group} corresponding to the variable group $U_i$. We partition every path $P_{i,q}$ into $2n'+1$ \emph{path segments} each consisting of $m$ consecutive path gadgets and we denote the $r$th path segment of $P_{i,q}$ with $W^r_{i,q}$.
    Let $X_{r, j}^{i, q}$
    denote the $j$th path gadget of the $r$th path segment on the path $P_{i,q}$. The \emph{$(r, j)$th column} $Q_{r,j}$ is the set of gadgets $X_{r, j}^{i, q}$ for all $i, q$. Similarly, for fixed $i, r, j$, the
    \emph{bundle} $B^i_{r,j}$ consists of all gadgets $X_{r, j}^{i, q}$ for all $q$. Finally, for fixed $r$, we call the set consisting of all $r$th path segments $W^r_{i,q}$ for all $i$ the \emph{$r$th segment}.

    In each path gadget, there will be two special vertices: the left-end vertex $v$ and the right-end vertex $v'$. From
    the right-end vertex of each path gadget (except for the last one on every path), we add an edge to the left-end
    vertex of the following path gadget on the same path. These edges give the path-like structure of these paths, since
    they can be seen as cuts of size one between the path gadgets. In other words, a path-like structure is a path in which each vertex is replaced by a path gadget.

    Next we add to $G$ a simple path $\rootpath$ consisting of new vertices, whose length and the order of vertices on it
    will be fixed later. We make each of its vertices adjacent to a new terminal. This way we ensure that any \Steinertree{} of $G$ contains
    all vertices of $\rootpath$. Now let $\rootpath'$ denote the set of vertices consisting of all vertices of the path $\rootpath$ together with all terminals attached to it. The path $\rootpath$ will ensure the connectivity of the desired \Steinertree{}. We will say that a vertex $u \notin \rootpath'$ of $G$ is \emph{root-connected}, if it is adjacent to a vertex in
    $\rootpath$ and we denote this vertex on $\rootpath$ by $\rootpar{u}$.  
    We will ensure that no two distinct root-connected vertices are adjacent to the same vertex in $\rootpath$ so that for a root-connected vertex $u$ the vertex $\rootpar{u}$ is unique and we call this vertex the 
    \emph{private} neighbor of $u$ in $\rootpath$. 
    For a linear arrangement $\ell$ of $G \setminus \rootpath'$, we can choose the order of the vertices on $\rootpath$ to match the order of their adjacent
    vertices on $\ell$. This allows us to extend $\ell$ to a linear arrangement of $G$ with only constant increase in the \ctwdth{} of the arrangement. We expand on that later.

    We also add two new root-connected vertices $g$ and $g'$ called \emph{guards} to $G$. We add a new terminal and make it adjacent only to $g$ and we add another new terminal and make it adjacent only to $g'$. Next, we add an edge between $g$ and the left-end vertex $v$ of the first path gadget $X^{i,q}_{1,1}$ of each path $P_{i,q}$. Similarly, we add an edge between $g'$ and the right-end vertex $v'$ of the last path gadget $X^{i,q}_{2n'+1,m}$ of each path $P_{i,q}$.

\newsavebox{\bigfigure}
    \begin{figure}[t]
        \centering
            \savebox{\bigfigure}{\centering \hbox{\includegraphics[width=.45\linewidth]{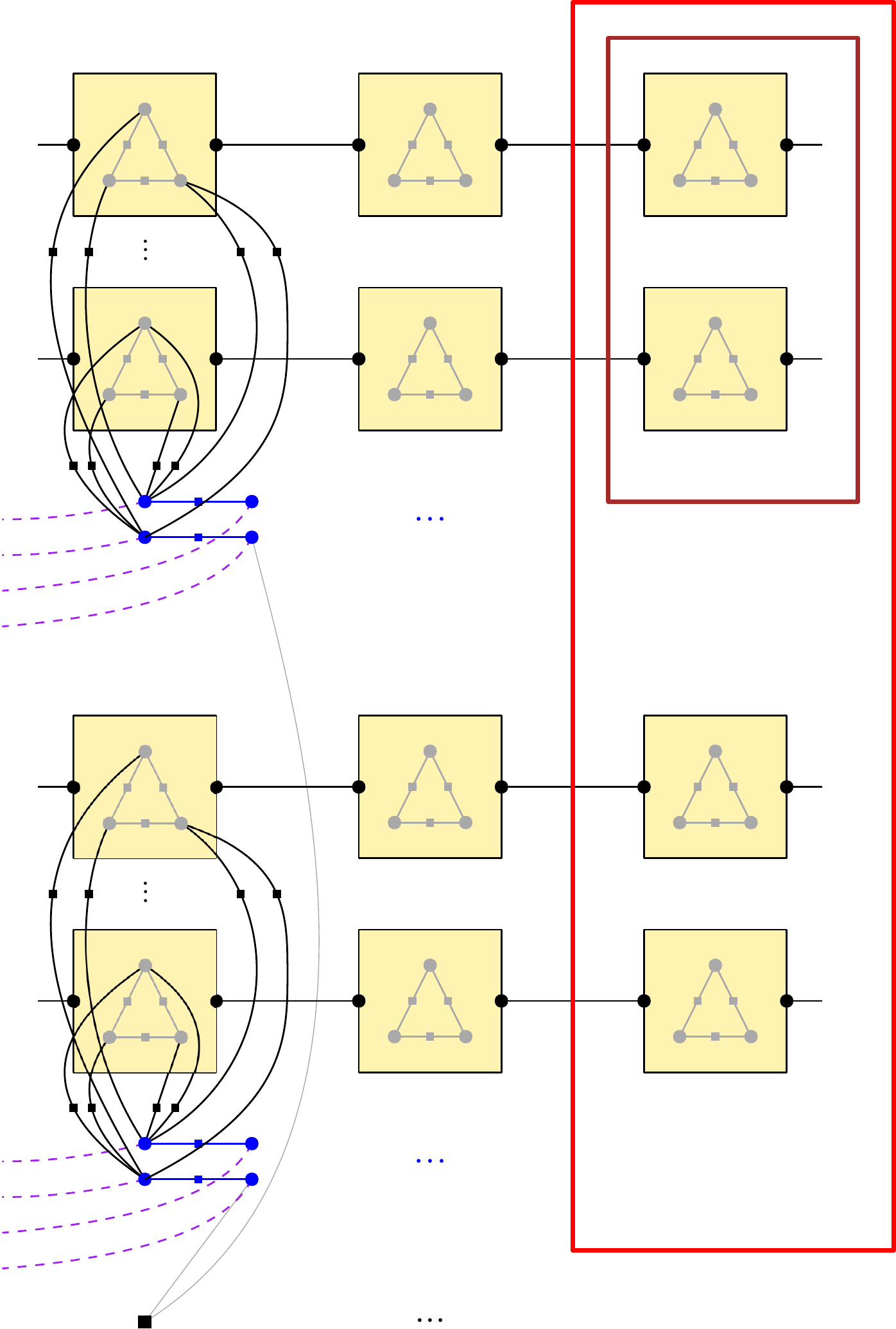}}}

        \begin{subfigure}[t]{.56\linewidth}
            \centering
            \usebox{\bigfigure}

            \caption{A sketch of the global structure of the graph $G$.            
            The yellow boxes correspond to path gadgets, where only the left-, right-end vertices and the clique vertices are drawn. A decoding gadget (depicted in blue) is attached to each bundle. Only the first one is drawn. It contains $3^t$ pairs of vertices corresponding to the different combinations of state assignments over a bundle. In brown, we surround one bundle, and in red a whole column.. A  terminal is attached to each column  and is only adjacent to vertices of the decoding gadgets of this column.\label{fig:pathgroup-app}}
        \end{subfigure}
    \hfill
        \begin{subfigure}[t]{.42\linewidth}
            {\raisebox{\dimexpr.6\ht\bigfigure-.6\height\relax}
                {\includegraphics[width=\linewidth]{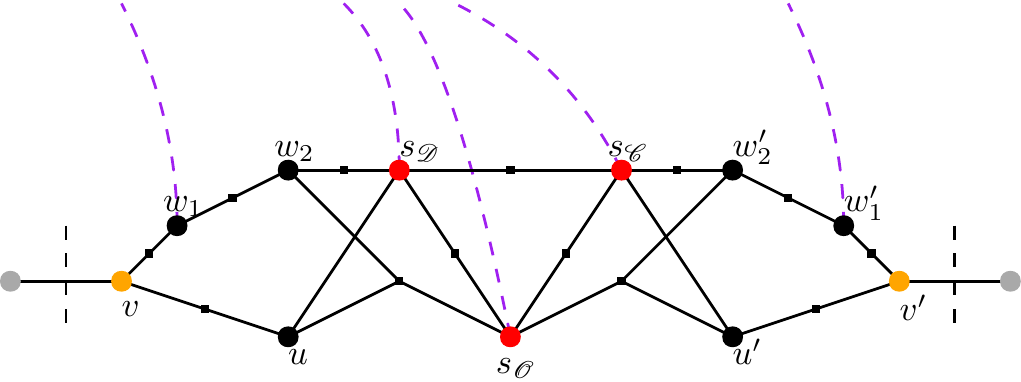}}}
            \caption{A path gadget with left- and right-end vertices are orange and clique vertices are red.
            The right-end vertex of the previous gadget and the left-end vertex of the following gadget are gray.
            \label{fig:pathgadget-app}}

        \end{subfigure}

        \caption{A path gadget (\subref{fig:pathgroup-app}), and a group of paths (\subref{fig:pathgadget-app})
        corresponding to a  variable group. Terminals depicted as small squares. Adjacencies to rootpath are depicted in purple.}
        \label{fig:main-st-app}
    \end{figure}

    Now we describe the structure of a single path gadget (see \cref{fig:main-st-app}~(b) for
    illustration). Each path gadget contains the left-end vertex $v$ and the right-end vertex $v'$ mentioned above. It also contains a clique on three vertices denoted $\verC, \verD,
    \verO$ and called the \emph{clique vertices}. All edges of this clique are subdivided by terminals. For simplicity, we still refer to it as the clique $K_3$. The clique vertices are root-connected. For a fixed path gadget, with $K'$, we denote the set consisting of clique vertices together with the terminals subdividing the edges between them.
    A path gadget also contains vertex $u$, an edge between $u$ and $v$, and an edge between $u$ and $\verD$. Similarly, it contains a
    vertex $u'$, an edge between $u'$ and $v'$, and an edge between $u'$ and $\verC$. It also contains a pair of adjacent vertices $w_1$ and $w_2$. 
    The vertex $w_1$ is root-connected. 
    There are also 
    an edge between $w_1$ and $v$ and an edge between $w_2$ and $\verD$. Similarly, there is a pair of adjacent vertices $w'_1$ and $w'_2$.
    The vertex $w'_1$ is root-connected. There are also
    an edge between $w'_1$ and $v'$ and an edge between $w'_2$ and $\verC$. 
    Further, it contains
    a new terminal adjacent to $u$, $w_2$, and $\verO$ and another new terminal adjacent to $u'$, $w'_2$, and $\verO$ by edges. Finally, the edges $\{v, u\}$, $\{v, w_1\}, \{w_1, w_2\}, \{w_2, \verD\}, \{v', u'\}, \{v', w'_1\}, \{w'_1, w'_2\}, \{w'_2, \verC\}$ are subdivided by terminals. If not clear from the context, we write the name of a path gadget after the name of a vertex in parentheses to indicate to which gadget it belongs, e.g., the vertex $v(X)$ is the left-end vertex of a path gadget $X$.
    If the gadget is clear from the context, we omit this clarification.

    Intuitively, the construction will have the following interpretation. 
    Consider an arbitrary \Steinertree{} of $(G, T)$.
    The intersection of a
    \Steinertree{} with a path gadget defines a \emph{state} of this gadget. 
    Ideally, we would like to (mostly) have the same states in consecutive path gadgets of the same
    path.
    Although this is not always the case, the sufficiently large number of copies of a path gadget on a path ensures that we can always find a sufficiently long subsequence of a path with this property.
    We will provide more details in proof of \cref{lem:st-lb-correctness-2}.
    By construction, at least two clique vertices of a path gadget belong to any \Steinertree{}: otherwise, there would be a terminal subdividing a clique edge that is isolated from the remainder of the \Steinertree{}.
    We will choose the value of $k$ (i.e., the size of a
    \Steinertree{}) in a tight way 
    to ensure that any \Steinertree{} picks exactly two vertices from each such clique. This allows us to identify a state of a path gadget in a \Steinertree{} with a clique vertex that does not belong to this \Steinertree{}. We call these states $\scrC, \scrD, \scrO$ respectively. 
    
    We attach a so-called decoding gadget (described next) to each bundle. Decoding gadgets connect path gadgets with clause
    gadgets (described later) and enforce certain state assignments to bundles, we
    expand on this later. Now we describe the construction of a decoding gadget $Y = Y^i_{r,
    j}$ attached to a bundle $B = B^i_{r,j}$. A decoding gadget $Y$ consists of an induced matching of size $3^t$ where all end-vertices of the matching are root-connected. 
    Now the matching contains one edge for each possible combination of states in the bundle.
    We subdivide each of these edges by a
    terminal. For simplicity, we still refer to subdivided edges as edges. With each state
    assignment $\sigma: \{X^{i, 1}_{r, j}, \dots X^{i, t}_{r,j}\} \to \{\scrC, \scrD, \scrO\}$ over the bundle $B$, we associate a unique edge $e_\sigma$ of $Y$. Let the
    end-vertices of $e_{\sigma}$ be $v_\sigma$ (called the \emph{left side}) and $u_\sigma$ (called
    the \emph{right side}). For each $q$, let $X = X^{i, q}_{r,j}$ be the $q$th path gadget of $B$ and let $x = \sigma(X)$ is the state assigned to this gadget by $\sigma$. For every clique vertex $s_{x'}, x' \in \{\scrC, \scrD, \scrO\} \setminus \{x\}$ of $X$, we add the edge $\{v_\sigma, s_{x'}\}$ and subdivide it by a terminal.
   After the description of a clause gadget, we will explain how a decoding gadget is connected to it. 
   
    Now we describe the structure of a clause gadget. Clause gadgets will represent the clauses of the formula $I$. For every $r, j$, a clause gadget $Z_{r, j}$ will be attached to the column $Q_{r,j}$. 
    By the choice of $t$, it holds that $2^{t_0} \leq 3^t$. So it is possible to ensure that each partial (truth-value) assignment of a variable group $U_i$ corresponds to a different combination of states over a bundle $B^i_{r, j}$. 
    To achieve this, we fix an arbitrary injective mapping $\Phi: 2^{[t_0]} \mapsto \{\scrC, \scrD, \scrO\}^{t}$ 
    that maps Boolean assignments of $t_0$ variables to state assignments of $t$ path gadgets. 
    Since every variable group $U_1, \dots, U_s$ contains at most $t_0$ variables, we can fix an ordering of variables in $U_i$ and interpret the truth-value assignments of $U_i$ as subsets of $[t_0]$ (namely, the indices of true variables). Therefore, for simplicity, with a slight abuse of notation, we will apply $\Phi$ to truth-value assignments of every $U_i$.
    Similarly, since every bundle contains exactly $t$ gadgets, an image of $\Phi$ is naturally interpreted as an assignment of states to this bundle.
    For any fixed $r, j$, the clause gadget $Z_{r, j}$ will represent the clause $C_j$. So the $r$th segment for some fixed $r$ contains a column $Q_{r, j}$ and a clause gadget $Z_{r, j}$ for each clause of the input. The clause gadget $Z_{r, j}$ will enforce that in any \Steinertree{} of the desired size, at least one bundle $B^i_{r, j}$ of this column is in a state
assignment corresponding to a partial assignment of $U_i$ satisfying $C_j$ (with respect to $\Phi$). 
A clause gadget $Z = Z_{r, j}$ consists of a new single terminal vertex $w$. For each $i$, let
    $\Pi^i_j$ be the set of all partial assignments of $U_{i}$ that satisfies $C_j$ (i.e., such assignments that at least one literal of $C_j$ corresponding to a variable from $U_i$ is true) and let $\Sigma^i_j = \Phi(\Pi^i_j)$.
    Informally speaking, $\Sigma^i_j$ is the set of state assignments to a bundle $B^i_{r, j}$ (for any $r$) that correspond to a partial assignment of $U_i$ satisfying $C_j$. Now for every edge $e_\sigma$ of the matching in the decoding gadget $Y^i_{r,j}$ with $\sigma \in \Sigma^i_j$, we add an edge between $w$ and $u_{\sigma}$ (the right side of $Y^i_{r,j}$). This concludes the construction of the graph $G$.

\subparagraph*{Budget.}
    First we fix the number of vertices on $\rootpath$ by counting the number of root-connected vertices (recall that every root-connected vertex has a private neighbor on $\rootpath$): 
    \begin{itemize}
        \item Every path gadget contains 5 root-connected vertices. There are $n'(2n'+1)m$ such gadgets, this gives rise to $5n'(2n'+1)m$ root-connected vertices in path gadgets.
    \item Every decoding gadget contains $2 \cdot 3^t$ root-connected vertices. There are $s(2n'+1)m$ decoding gadgets in the graph, so we get $2\cdot3^ts(2n'+1)m$ root-connected vertices in decoding gadgets in total.
    \item The guards $g$ and $g'$ are root-connected.
    \end{itemize}
    So the number of vertices on the path $\rootpath$ (equal to the number of root-connected vertices) is 
    \[
        k'' = 2 + (5n'+2 \cdot 3^ts)(2n'+1)m. 
    \]
Now we count the number of terminals in the graph:
    \begin{itemize}
        \item Every path gadget contains 13 terminals, so in total $13 n' (2n'+1)m$ over all path gadgets.
        \item Every decoding gadget contains $3^t$ terminals subdividing the matching edges. Further, each vertex on the left side of one of the $3^t$ edges of a decoding gadget is linked by edges subdivided by terminals to $2t$ vertices in path gadgets. 
        This gives rise to $3^t + 2t3^t = 3^t(2t+1)$ terminals per decoding
        gadget and hence, $3^t(2t+1)s(2n'+1)m$ in total. 
        \item Each clause gadget consists of a single terminal. Since there is one clause gadget per column, there are $(2n'+1)m$ such terminals. 
        \item The guards $g$ and $g'$ and all vertices on $\rootpath$ have private neighbors that are terminals. So there are $2+k''$ such terminals.
    \end{itemize}
    So totally, there are
    \begin{align*}
        k' &= 13n'(2n'+1)m + 3^t(2t+1)s(2n'+1)m+(2n'+1)m + 2 + k''\\
        &= (13n' + 3^t(2t+1)s+1)(2n'+1)m + 2 + k''        
    \end{align*}
    terminals in $T$. 

    Now we provide the budget (i.e., the size of the desired \Steinertree{}) $k$. We choose the budget $k$ matching a lower bound on the size of any \Steinertree{} of the graph $G$ spanning $T$. We define this lower bound using a so-called
    \emph{packing} $\mathcal{P}(G)$ of $G$. 
    A packing is a family of pairwise disjoint sets of vertices (also called \emph{components}) of $G$ such that each component $C$ is assigned a value $p(C)$ such that each \Steinertree{} contains at least $p(C)$ vertices from $C$. Since components are disjoint, the sum of values $p(C)$ over all components $C$ of a packing is a lower bound on the size of any \Steinertree{}.
    Moreover, if a \Steinertree{} of size $k$ exists, then it contains exactly $p(C)$ vertices from every component $C$.
    To construct $\PP(G)$, we will provide packings for certain parts of $G$ (e.g., path gadgets) and the build a union of them.
    
    The first component consists of all terminals in $G$. By definition, a \Steinertree{} must contain all these vertices so
    we set $p(T) = k'$. 
    Next we define a component $V(\rootpath) \cup \{g, g'\}$. Any \Steinertree{}
    of $(G, T)$ must contain all vertices in this component, since each of these vertices is
    adjacent to a private neighbor that is a degree-1 terminal. So we set $p\bigl(V(\rootpath) \cup \{g, g'\}\bigr) = k'' + 2$.
    
    For every path gadget $X = X_{r,j}^{i,q}$, we define a packing denoted by $\PP(X)$ as 
\[
	\mathcal{P}(X) = \bigl\{\{\verC, \verD, \verO\}, \{w_1, w_2\},\{u, v\}, \{w'_1,w'_2\}, \{u', v'\} \bigr\}.
\] 
    Note that this this indeed a packing since every component is either a clique $\{\verC, \verD, \verO\}$ subdivided by terminals or an edge subdivided by terminals. From each of these components, any \Steinertree{} contains at least all but one vertices: otherwise, there would exist an isolated terminal. Therefore, we set $p\bigl(\{\verC, \verD, \verO\}\bigr) = 2$ and $p(C) = 1$ for $C \in \mathcal{P}(X) \setminus \bigl\{\{\verC, \verD, \verO\}\bigr\}$. Hence, any \Steinertree{} must contain at least $6$ non-terminal vertices from each path gadget, i.e., at least $6n'(2n'+1)m$ such vertices in total.
   
   For every decoding gadget $Y = Y^i_{r, j}$, we define a packing denoted by $\PP(Y)$ as 
   \[ 
   	\mathcal{P}(Y) = \bigl\{\{u_{\sigma}, v_{\sigma}\} \bigm\vert \sigma \in \{\verC, \verD, \verO\}^{B^i_{r,j}}\bigr\}. 
   \]
   Simply speaking, we take the end-vertices of all $3^t$ matching edges of the decoding gadget.
We set
    $p\bigl(\mathcal{P}(Y)\bigr) = 3^t$: from each of the matching edges, a \Steinertree{} contains at least one end-vertex since otherwise the terminal subdividing this edge would be isolated.
    Totally, any \Steinertree{} contains at least $3^t s(2n'+1)m$ non-terminal vertices from all decoding gadgets. 
    
    Note that the packing provided above contains pairwise disjoint components. Therefore, their union $\PP(G)$ is a packing as well.
    Using the values of $p(C)$ for the components $C$ provided above, we obtain that any \Steinertree{} has the size of at least 
    \begin{align*}
        k := &k' + k'' + 2 + 6n'(2n'+1)m + 3^t s(2n'+1)m\\
        = &k' + k'' + 2 + (6n'+3^ts)(2n'+1)m. 
    \end{align*}
    In particular, every \Steinertree{} of size $k$ contains exactly $p(C)$ vertices from each component~$C$ of $\PP(G)$. 

	Finally, observe that since $t$ is a constant only depending on $\varepsilon$ (i.e., independent of $n$), the instance $(G, T, k)$ can be constructed in polynomial time.

    \begin{lemma}\label{lem:app-st-ctw}
    Let $(G, T, k)$ be the instance of \Pst\ arising from $I$ as described above.
    Then $G$ has the \ctwdth{} of at most $n' + \mathcal{O}(1)$.
    Moreover, a linear arrangement of $G$ of this cutwidth can be constructed in polynomial time from $I$.
	\end{lemma}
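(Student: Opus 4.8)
The plan is to first produce a linear arrangement $\ell$ of $G \setminus \rootpath'$ of cutwidth $n' + \mathcal{O}(1)$, and then to extend $\ell$ to all of $G$ at the cost of only a constant, using the interleaving of $\rootpath$ already announced in the construction.

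For $G \setminus \rootpath'$ I would order the columns $Q_{r,j}$ lexicographically by $(r,j)$; with this order the two path gadgets that are consecutive on any path $P_{i,q}$ lie in consecutive columns. The guard $g$ together with its private terminal is placed just before the first column and $g'$ with its terminal just after the last. Inside a column $Q_{r,j}$ I would lay out the $s$ bundles one after another: for bundle $i$ first the $t$ path gadgets $X^{i,1}_{r,j},\dots,X^{i,t}_{r,j}$ and then the decoding gadget $Y^i_{r,j}$, and finally, after all $s$ bundle blocks, the single clause vertex $Z_{r,j}$. Each path gadget and each decoding gadget has constant size, so it can be laid out with constant internal cutwidth; inside a path gadget I put $v$ first and $v'$ last.

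The cut analysis then splits the edges crossing any given cut into three groups. (i) The path-connecting edges (for the two extreme columns: the $g$- resp.\ $g'$-edges): for each of the $n'$ paths exactly one of its two incident inter-gadget edges at the current column crosses a cut lying between gadgets — the forward one if the current path gadget is already placed, the backward one otherwise — and for a cut strictly inside some path gadget the path through that gadget contributes either one such edge or only $\mathcal{O}(1)$ internal gadget edges; so this group has size at most $n'$. (ii) The edges internal to the current column: a decoding gadget $Y^i_{r,j}$ is adjacent only to the $t$ path gadgets of its own bundle, which gives $\mathcal{O}(1)$ edges (here $t$ and $3^t$ are constants depending only on $\varepsilon$), and the clause vertex $Z_{r,j}$ is adjacent to at most $d$ decoding gadgets, since $C_j$ has at most $d$ literals and hence involves at most $d$ of the variable groups $U_1,\dots,U_s$ (for all other $i$ we have $\Sigma^i_j=\emptyset$); so at most $\mathcal{O}(1)$ of these cross any cut. (iii) The internal edges of the at most one gadget the cut passes through: $\mathcal{O}(1)$. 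Summing the three groups gives $n'+\mathcal{O}(1)$. The step I expect to be the real obstacle is exactly item (ii): the clause vertex $Z_{r,j}$ sits at the far end of a column, so a priori the cut just before it would carry all of its incident edges; the bound only works because a $d$-clause touches only constantly many variable groups, so $\deg Z_{r,j}=\mathcal{O}(1)$ rather than $\Theta(s\cdot 3^t)=\Theta(n)$.

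Finally I would extend $\ell$ to $G$. The path $\rootpath$ has one vertex $\rootpar{u}$ for each root-connected vertex $u$, each $\rootpar{u}$ carrying a private degree-one terminal, and $\deg\rootpar{u}\le 4$. Order the vertices of $\rootpath$ so that they occur in the same order as the corresponding root-connected vertices occur in $\ell$, and form $\ell'$ from $\ell$ by inserting, immediately after each root-connected vertex $u$, the two vertices $\rootpar{u}$ and the terminal attached to $\rootpar{u}$. Then every edge $\{u,\rootpar{u}\}$ and every terminal edge of $\rootpath$ has length one and thus crosses exactly one cut, while the vertices of $\rootpath$ occur along $\ell'$ in path order, so at most one edge of $\rootpath$ crosses any cut. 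Hence every cut of $\ell'$ has at most $\mathcal{O}(1)$ more edges than the corresponding cut of $\ell$, and the cutwidth of $\ell'$ is $n'+\mathcal{O}(1)$. All of the above is explicit and, since $t=\mathcal{O}(1)$ and $G$ itself is built in polynomial time, $\ell'$ is computable in polynomial time.
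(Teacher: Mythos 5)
Your proposal is correct and takes essentially the same route as the paper: a column-wise (lexicographic in $(r,j)$) block layout with each bundle's path gadgets followed by its decoding gadget and the single clause vertex placed at the column's end, a case analysis of the cut into path edges (at most $n'$), decoding-gadget edges (constant, as $2t\cdot 3^t=\mathcal{O}(1)$), clause-gadget edges (constant, precisely because $\deg Z_{r,j}\le d\cdot 3^t=\mathcal{O}(1)$), and internal gadget edges ($\mathcal{O}(1)$), followed by inserting $\rootpath$ and its private terminals next to the corresponding root-connected vertices at an additional $\mathcal{O}(1)$ cost. The only stylistic deviation is that you insert $\rootpar{u}$ and its terminal just after $u$ while the paper inserts them just before; this has no effect on the bound.
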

    \begin{proof}
        Recall that $t$ is a constant independent of $n$.
        We describe a linear arrangement $\ell$ of $G$ of \ctwdth{} at most at most $n' + (2t+d)3^t
    + 5$. 
    First we will build a linear arrangement of the graph $G' = G \setminus
    \rootpath'$ that arises from $G$ after removing $\rootpath$ and the private terminals attached to it. For every path gadget $X = X^{i, q}_{r, j}$, we will now construct an ordering $\widetilde{O}(X)$ of its vertices. 
    The vertices of $X$ will appear in the following order in the linear arrangement:
    We start with the order $v, u, w_1, w_2, \verD, \verO, \verC, w'_2, w'_1, u', v'$. Then we add the terminal adjacent to $u$, $w_2$, and $\verO$ directly before $\verD$, and the terminal connected to $u', w'_2$ and $\verO$ directly after $\verC$ $\widetilde{O}(X)$. Finally, we add each terminal subdividing an edge at an arbitrary position between the end-vertices of this edge. Observe that $\widetilde{O}(X)$ is a linear arrangement of $G[X]$ of cutwidth at most~$4$. See \cref{fig:append-st-lin-arr} for graphical representation of this arrangement.
    
    Now similarly, for every decoding gadget $Y =
    Y^{i}_{r, j}$ we fix the ordering $\widetilde O(Y)$, we describe how this order is constructed from left to right. 
    We simply iterate through the matching edges and for a fixed such edge,
     we first add its left end, then the subdivision-terminal, and finally, its right end. Then $\widetilde{O}(Y)$ is a linear arrangement of $G[Y]$ of cutwidth at most~$1$. Even though these two fixed arrangements serve as a tight bound 
     on the cutwidth of the graph induced by a path gadget and a decoding gadget respectively, putting the vertices in an 
     arbitrary order would have sufficed, since any constant bound on the cutwidth of both gadgets would suffice, and 
     this is trivially the case, since both path gadgets and decoding gadgets have constant size.

    \begin{figure}
        \includegraphics[width=.9\linewidth]{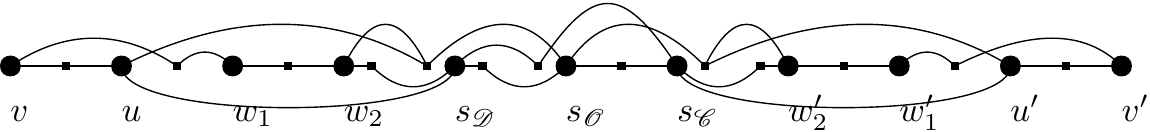}
        \caption{A graphical representation of the linear arrangement $\widetilde{O}(X)$.
        It can be seen from the figure, that the largest cut in the arrangement has size $4$.}
        \label{fig:append-st-lin-arr}
    \end{figure}

    Now we are ready to define a linear arrangement $\ell'$ of $G' = G \setminus \rootpath'$.
    We describe how it is created in left to right order.
    This linear arrangement is created ``column-wise'', i.e., we iterate through all pairs $(r, j)$ in lexicographic order. For fixed pair $(r, j)$, we proceed as follows. We iterate through all pairs $(i, q)$ in lexicographic order and add the vertices of the path gadget $X_{r,j}^{i, q}$ in the order given by $\widetilde O(X_{r,j}^{i, q})$. If $q = t$, then after adding the path gadget, we add the vertices of the decoding gadget $Y^i_{r,j}$ in the order given by $\widetilde O(Y^i_{r,j})$. If $(i, q) = (s, t)$, then after adding the decoding gadget, we add the single vertex of the clause gadget $Z_{r, j}$ and move to the next pair $(r, j)$.
    Observe that by construction:
    \begin{itemize}
    	\item The edges induced by different path gadgets do not overlap.
    	\item The edges induced by different decoding gadgets do not overlap.
    	\item No edge induced by a path gadget overlaps with an edge induced by a decoding gadget.
    \end{itemize}
    Therefore, the edges whose both end-vertices belong to the same gadget totally contribute at most four edges to any cut.
    
    After processing all pairs $(r, j)$, we add the guard $g$ at the beginning of the linear arrangement and then we add its private terminal to left of it.
     Similarly, we add the guard $g'$ at the end of the linear arrangement and then we add its private terminal to the right of it.
     This concludes the construction of a linear arrangement $\ell'$ of $G'$. 
     
     Now we will bound the \ctwdth{} of $\ell'$. 
     Here, writing about a cut, we always refer to the cuts of $\ell'$.
     We distinguish between the different types of cross-gadget edges (i.e., the edges whose end-vertices belong to different gadgets) in $G'$ and bound their number in any cut:
  \begin{itemize}
	\item There are no edges between the path gadgets in the same column. Each path gadget $X = X_{r,j}^{i,q}$ has one incident edge to the following path gadget on the same path (or to $g'$) and one incident edge to the path gadget preceding $X$ on this path (or to $g$). For a fixed path $P_{i,q}$, such edges do not overlap. There are $n'$ paths, so such edges totally contribute at most $n'$ to the size of any cut. 
	\item Each decoding gadget $Y^i_{r, j}$ contains $3^t$ vertices on the left side and each of them has two incident edges to each path gadget of the bundle it is attached to. So there are $2t3^t$ edges between the decoding gadget $Y^i_{r, j}$ and the bundle $B^i_{r, j}$ it is attached to. Since the vertices of the decoding gadget $Y^i_{r, j}$ lie directly after the vertices of the bundle $B^i_{r, j}$ in the linear arrangement, such edges incident to different decoding gadgets do not overlap.
	Therefore, such edges totally contribute at most $2t3^t$ to the size of any cut.
	\item A clause gadget $Z_{r, j}$ is a single vertex $w$ adjacent to some vertices on
	the right side of the matching edges of some decoding gadgets $Y^i_{r, j}$. 
	The clause $C_j$ contains at most $d$ literals so the variables in $C_j$ belong to at most $d$ groups of variables.
	Hence, $w$ has edges to right end vertices from at most $d$ decoding gadgets. Since any decoding gadget attached 
	contains $3^t$ matching edges, it has $3^t$ right end vertices. So $w$ is adjacent to at most $d3^t$ vertices. 
	Recall that all neighbors of $w$ belong to the column $Q_{r,j}$. 
	Since we constructed the linear arrangement ``columnwise'', the edges incident to different clause gadgets do not overlap. So such edges contribute at most $d3^t$ to the size of any cut. 
    \end{itemize}
    Altogether, we obtain the upper-bound of 
    \[
    	3 + n' + 2t3^t+d3^t = 3+n'+(2t+d)3^t
    \]
	 on the \ctwdth{} of
    $\ell'$. 
    
    Now we will extend $\ell'$ to a linear arrangement $\ell$ of $G$. Recall that the order of vertices on $\rootpath$ has not been fixed yet. First, we add the vertices in $\rootpath'$ to the linear arrangement as follows: We iterate through the vertices in $\ell'$ and for each root-connected vertex $v$, we insert its private neighbor $u = \rootpar{v}$ on $\rootpath$ 
    directly before $v$ into $\ell'$ and we insert the private terminal attached to $u$ directly before $u$. Thus, we obtain the desired linear arrangement $\ell$ of $G$.
    After that, we choose the order of the vertices on $\rootpath$ to coincide with their relative order in the constructed $\ell$.
    This way, we ensure that the edges of $\rootpath$ do not overlap on $\ell$ and contribute at most one edge to every cut. 
    Therefore, the addition of the vertices $\rootpath'$ increased the \ctwdth{} of the linear arrangement by at most 2: at most 1 by the edges from $\rootpath$ and at most one by edges from $V(\rootpath)$ to their private degree-1 terminals. So the \ctwdth{}
    of $\ell$ is at most $n' + (2t+d)3^t + 5$. 
    Note that together with $(G, T, k)$, this linear arrangement can be constructed from $I$ in polynomial time.
    \end{proof}

    \begin{lemma}
	If $I$ is satisfiable, then $G$ has a \Steinertree{} of size at most $k$ spanning all 
        vertices in $T$.
    \end{lemma}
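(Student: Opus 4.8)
The plan is to turn a satisfying assignment $\tau$ of $I$ into a \Steinertree{} $S$ of $(G,T)$ of size exactly $k$. For each group $i$ let $\pi_i=\tau_{|_{U_i}}$ and let $\sigma_i=\Phi(\pi_i)\in\{\scrC,\scrD,\scrO\}^t$; since $\Phi$ does not depend on $r$ or $j$, this same state assignment $\sigma_i$ will be used in every bundle $B^i_{r,j}$. I put into $S$: all of $T$, all vertices of $\rootpath$, and both guards $g,g'$ (these are forced into \emph{every} \Steinertree{} by their private degree-one terminals). For each path gadget $X=X^{i,q}_{r,j}$ I set $X$ into state $x:=\sigma_i(q)$, i.e.\ I add the two clique vertices $s_y$ with $y\neq x$, and from the four subdivided edges $\{w_1,w_2\}$, $\{u,v\}$, $\{w_1',w_2'\}$, $\{u',v'\}$ I add exactly one endpoint each, the choice depending on $x$ so that every subdividing terminal of $X$ keeps a neighbour in $S$ (one checks this is always possible --- e.g.\ in state $\scrD$ take $w_2,v,w_1',v'$, in state $\scrO$ take $w_1,u,w_1',u'$, using the unsubdivided edges from $u,u'$ to the clique). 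For each decoding gadget $Y=Y^i_{r,j}$ I add the right side $u_{\sigma_i}$ of the matching edge $e_{\sigma_i}$ and the left side $v_\sigma$ of every other matching edge $e_\sigma$, $\sigma\neq\sigma_i$.

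By construction $S$ meets every component $C$ of the packing $\PP(G)$ in exactly $p(C)$ vertices: all of $T$, all of $V(\rootpath)\cup\{g,g'\}$, exactly six non-terminal vertices of each path gadget (two from the clique triple, one from each of the four pairs), and exactly one endpoint of each of the $3^t$ matching edges of each decoding gadget. Since the components of $\PP(G)$ partition $V(G)$, this gives $|S|=\sum_C p(C)=k$, and $T\subseteq S$ is immediate, so it remains only to prove that $G[S]$ is connected.

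For connectivity I use that $\rootpath$ induces a connected subgraph of $G[S]$ and that every root-connected vertex of $S$ is adjacent in $G[S]$ to its private neighbour on $\rootpath$; hence it suffices to check that every vertex of $S$ lies in the $\rootpath$-component of $G[S]$, which is a finite case analysis over gadget types. Inside a path gadget in state $x$ the two chosen clique vertices are root-connected, and the state-dependent choice above guarantees that each of the $13$ terminals has a neighbour in $S$ and that each chosen non-root-connected vertex ($v,v',u,u',w_2,w_2'$, as applicable) is joined inside $G[S]$ either to one of these clique vertices --- through a subdividing terminal, through the unsubdivided edge at $u$ or $u'$, or through the terminal shared with the $\scrO$-clique vertex --- or, through $v,v'$ and the path-structure edges, to the guards $g,g'$; crucially all path gadgets on a single path $P_{i,q}$ carry the same state $\sigma_i(q)$, so these links do not break along the path. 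In a decoding gadget all chosen vertices are root-connected; the terminal subdividing a matching edge $e_\sigma$ is covered by its unique chosen endpoint, and a terminal subdividing an edge $\{v_{\sigma_i},s_{x'}\}$ --- where the left side $v_{\sigma_i}$ is not chosen --- is covered because $s_{x'}\in S$ precisely when $x'\neq\sigma_i(X)$, which is exactly the condition under which that edge exists. The only vertex whose neighbourhood is not met automatically is the terminal $w$ of a clause gadget $Z_{r,j}$, and this is where satisfiability enters: as $\tau$ satisfies $C_j$, some literal of $C_j$ is true under $\tau$, its variable lies in some group $U_{i'}$, so $\pi_{i'}\in\Pi^{i'}_j$ and hence $\sigma_{i'}=\Phi(\pi_{i'})\in\Sigma^{i'}_j$; by construction $w$ is then adjacent to $u_{\sigma_{i'}}\in S$. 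Thus $G[S]$ is connected, $T\subseteq S$, and $|S|=k$, as required.

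The main obstacle is the connectivity bookkeeping: one must fix, for each of the three states of a path gadget, the four endpoint choices so that no subdividing terminal is stranded (these are over-determined only in state $\scrO$, where the chosen vertices avoid both $v$ and $v'$ and the path ``spine'' is held together through $\rootpath$ rather than internally), and one must keep straight which subdividing terminals of a decoding gadget are covered by the chosen left side $v_\sigma$ and which by a chosen clique vertex $s_{x'}$; once these choices are pinned down the verification is routine.
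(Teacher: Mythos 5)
Your proposal is correct and matches the paper's own argument essentially verbatim: the same state sets per gadget (your endpoint choices coincide with the paper's $A_{\scrD}$, $A_{\scrC}$, $A_{\scrO}$), the same decoding-gadget selection ($u_\sigma$ iff $\sigma=\sigma_i$), the same tightness count against the packing $\PP(G)$, and the same connectivity chain through root-connected vertices with the clause terminal handled via satisfiability. The only very slight imprecision is the phrase ``to the guards $g,g'$'' — in states $\scrC$/$\scrD$ the stranded endpoint $v'$ or $v$ actually reaches the root-path through a root-connected vertex of the neighbouring gadget rather than necessarily through a guard — but this does not affect correctness.
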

    \begin{proof}
        For a path gadget $X$, we define the sets $A_{\scrD} = A_{\scrD}(X), A_{\scrC} = A_{\scrC}(X), A_{\scrO} = A_{\scrO}(X) \subseteq V(X)$ associated with the states $\scrC, \scrD, \scrO$ respectively as $A_{\scrD} = \{v, w_2, \verC, \verO, w'_1, v'\}$, $A_{\scrC} = \{v, w_1, \verD, \verO, w'_2, v'\}$, and $A_{\scrO} = \{u, w_1, \verD, \verC, w'_1, u'\}$.  We refer to \cref{fig:append-st-states} for graphical representation.

    \begin{figure}[t]
    \centering

    \begin{subfigure}[t]{.49\linewidth}
    \includegraphics[width=.99\linewidth]{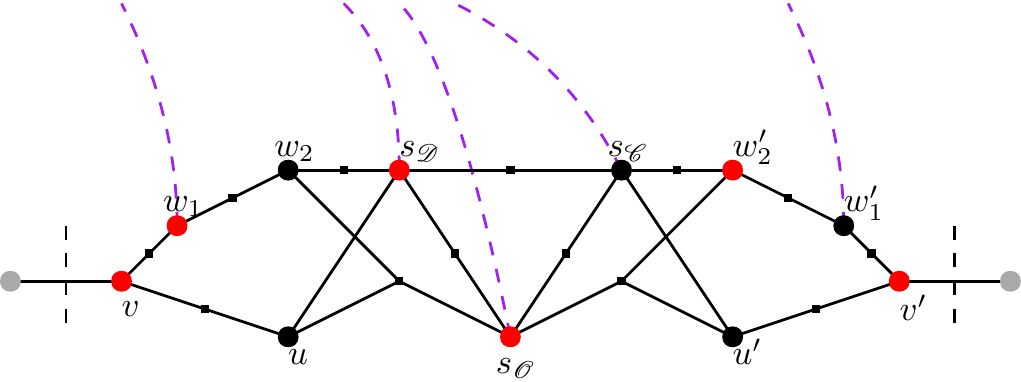}
    \caption{The set $A_{\scrC}$ corresponding to the state $\scrC$.}
    \label{app:fig-A-large-r}
    \end{subfigure}
    \hfill
    \begin{subfigure}[t]{.49\linewidth}
    \includegraphics[width=.99\linewidth]{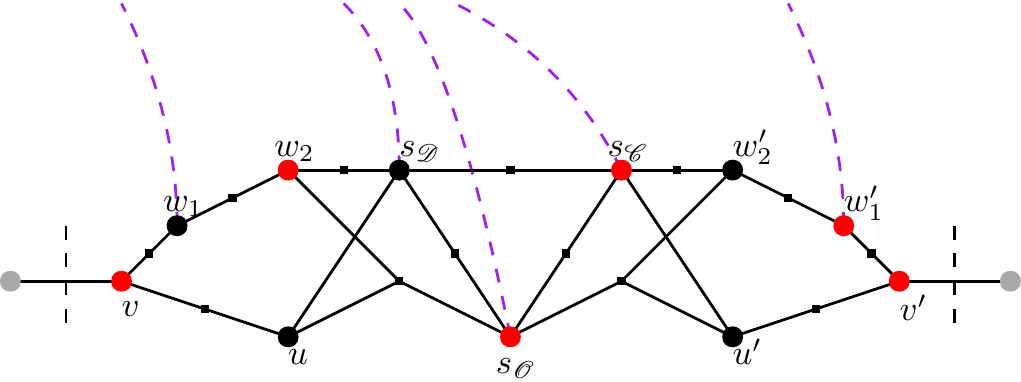}
    \caption{The set $A_{\scrD}$ corresponding to the state $\scrD$.}
    \label{app:fig-A-small-r}
    \end{subfigure}

    \begin{subfigure}[t]{.49\linewidth}
    \includegraphics[width=.99\linewidth]{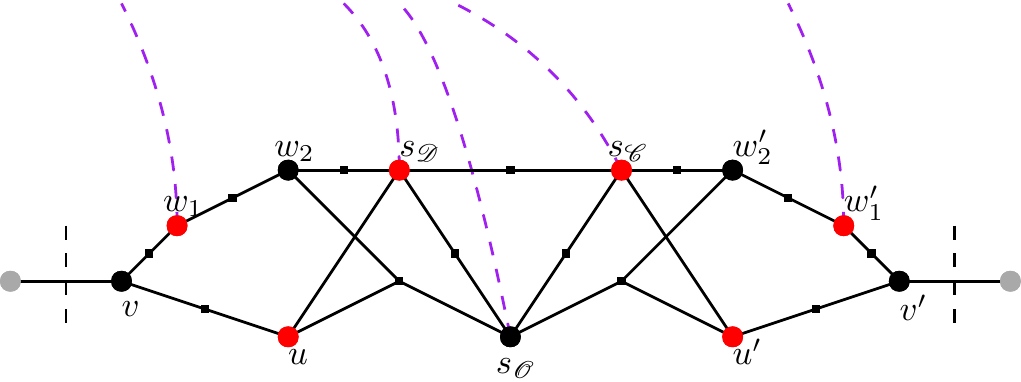}
    \caption{The set $A_{\scrO}$ corresponding to the state $\scrO$.}
    \label{app:fig-A-O}
    \end{subfigure}

    \caption{The sets $A_x, x\in\{\scrC, \scrD, \scrO\}$ are packings of vertices of a path gadget in a solution, corresponding to different state assignments. We depict the vertices in $A_x$ in red.}
    \label{fig:append-st-states}
    \end{figure}

        Let $\pi: \Var(I) \to \{0, 1\}$ be an
        assignment satisfying $I$. We will construct a Steiner tree $S$ of $G$ spanning $T$ of size $k$.
        Initially, $S$ is empty. For each $i$, we will add the following vertices from the $i$th path group and the decoding gadgets attached to it to $S$. 
        Consider the restriction $\pi_{|_{U_i}}$ of $\pi$ to the $i$th variable group $U_i$. Let $\delta^i = \Phi(\pi_{|_{U_i}})$ denote the state assignment for the path gadgets of a bundle corresponding to $\pi_{|_{U_i}}$ under $\Phi$. 
For each path $P_{i,q}$ and each path gadget $X^{i, q}_{r,j}$ on $P_{i,q}$ we proceed as follows. Let $x = \delta^i_q \in \{\scrC, \scrD, \scrO\}$ be the state assigned to this gadget in $\delta^i$.
        We add to $S$ all vertices of the set $A_x(X^{i,q}_{r,j})$ in $X^{i,q}_{r,j}$ and we say that $X$ has the state $x$.
        Note that the set $A_x$ only depends on $i$ and $q$ so the same set $A_x$ of vertices is added to $S$ for all path gadgets of the same path.
        In other words, we choose the same state for all path gadgets on the same path. 
        After that, for each $r,j$, we consider the decoding gadget $Y = Y^i_{r,j}$. For every matching edge $e_\sigma$ of $Y$, we add $u_\sigma$ to $S$, if $\sigma = \delta^i$ holds, and we add $v_\sigma$ to $S$ 
        otherwise. 
        Note that since there is one matching edge for each state assignment of $Y$, there exists exactly one matching edge $e_\sigma$ in $Y$ such that $\sigma = \delta^i$. 
        After all path and decoding gadgets are processed, we also add $V(\rootpath), g, g'$, and all terminals in the graph to $S$.

        Observe that for each component $C$ of the packing $\mathcal{P}(G)$, we have added exactly $p(C)$ vertices to $S$. So we have $k = |S|$. 
        The set $S$ contains all terminals. 
        So to prove that $S$ is the desired \Steinertree{}, it remains to show that $G[S]$ is connected. 
        Since all vertices of $\rootpath$ belong to $S$, it suffices to show that each other
        vertex in $S$ is reachable from $\rootpath$ in $G[S]$. 
        We call a vertex $v$ \emph{connected} if there is a path from $v$ to some vertex on $\rootpath$ that only uses vertices of $S$.
        This is clearly the case for the private terminals adjacent to vertices from $V(\rootpath)$. 
        This is also the case for $g$ and $g'$ and their private terminals, since the guards are root-connected. 
        This is also true for all vertices in decoding gadgets: indeed, the endpoints of
        matching edges are root-connected and for every matching edge, there is one
        end-vertex in $S$ so the subdividing terminal is also connected.

        Now consider an arbitrary path gadget $X = X_{r,j}^{i, q}$. Since the clique vertices and the vertices $w_1$ and $w'_1$ are root-connected, it suffices to prove that the remaining vertices of the gadget are connected as well. We distinguish the cases depending on the state of the gadget. 
        
        In case of $\scrC$, every vertex, other than $v'$ and the terminals adjacent to
        it, of $X \cap S$, is connected to a root-connected vertex from $S$ using only the vertices of $X \cap S$.
        This can be verified in \cref{app:fig-A-large-r}. Now we show that $v' = v'(X)$ (and hence, also the terminals
        adjacent to it) are connected too. If the gadget $X$ is the rightmost on a path, then $v'$
        is adjacent to $g'$ which is connected and in $S$. Otherwise, it is adjacent to the
        left-end vertex $v(X')$ of the next path gadget $X'$ on the same path. Since all gadgets of the same path are in the same state, $X'$ is also in the state $\scrC$. Therefore, the vertex $v(X')$ adjacent to $v'(X)$ is in $S$ and it is connected (as again can be verified in \cref{app:fig-A-large-r}). 
        
        Similarly, we consider the case of $\scrD$. 
        Every vertex, other than $v$ and the terminals adjacent to it, of $X \cap S$ is connected to a root-connected vertex from $S$ using only the vertices of $X \cap S$. This can be verified in \cref{app:fig-A-small-r}. Now we show that $v = v(X)$ (and hence, also the terminals adjacent to it) are connected too. If the gadget $X$ is the leftmost on a path, then $v$ is adjacent to $g$ which is connected and in $S$. Otherwise, it is
        adjacent to the right-end vertex $v'(X')$ of the previous path gadget $X'$ on the same path. 
        Since all gadgets of the same path are in the same state, $X'$ is also in the state $\scrD$. Therefore, the vertex $v'(X')$ adjacent to $v(X)$ is in $S$ and it is connected (as again can be verified in \cref{app:fig-A-large-r}).
        
        The last case is the state $\scrO$. Here, every vertex of $X \cap S$ is connected to a root-connected vertex from $S$ using only the vertices of $X \cap S$. This can also be verified in \cref{app:fig-A-O}.
    
        Now we show that the single terminal in a clause gadget is connected as well. 
        Consider an arbitrary clause gadget $Z_{r, j}$. Recall that it represents the clause $C_j$ and it is attached to the column $Q_{r, j}$.
        Since $\pi$ satisfies $I$, the clause $C_j$ contains a true literal, i.e., there exists an index $i^* \in [s]$ such that $\pi_{|_{U_{i^*}}}$ satisfies $C_j$. 
        Let then $\delta = \Phi(\pi_{U_{i^*}})$ be the state
        assignment of the bundle $B_{r,j}^{i^*}$ 
        and let $w$ be the terminal in $C_{r,j}$.
        Since $\pi_{|_{U_{i^*}}}$ satisfies $C_j$, there is an edge between $w$ and 
        the right end $u_{\delta}(Y^{i^*}_{r,j})$ of the matching edge corresponding to $\delta$ in the decoding gadget $Y^{i^*}_{r,j}$ of the bundle $B_{r,j}^{i^*}$. 
        By the construction of $S$ (recall $\delta = \Phi(\pi_{U_{i^*}})$), the vertex $u_{\delta}(Y^{i^*}_{r,j})$ belongs to $S$. 
        Therefore, the vertex $w$ is connected as well. Altogether, we have shown that all vertices in $S$ are connected and hence, the set $S$ is a \Steinertree{} of $G$ spanning $T$ of size $k$.
    \end{proof}

    \begin{lemma}\label{lem:st-lb-correctness-2}
        If $(G,T)$ has a \Steinertree{} of size $k$, then $I$ is satisfiable.
    \end{lemma}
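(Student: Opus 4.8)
The plan is to exploit that $k$ equals the lower bound furnished by the packing $\mathcal P(G)$: any \Steinertree{} $S$ of $(G,T)$ of size $k$ must contain exactly $p(C)$ vertices of each component $C$ of $\mathcal P(G)$. In particular $S$ contains all terminals, all of $V(\rootpath)$, both guards $g,g'$, exactly one endpoint of every matching edge of every decoding gadget, and exactly two of the three clique vertices of every path gadget. For a path gadget $X$ I will call the unique clique vertex of $X$ outside $S$ its \emph{state}, named $\scrC$, $\scrD$, or $\scrO$ accordingly. Since $G[S]$ is connected and every vertex of $S\setminus\rootpath'$ must reach $\rootpath$ inside $G[S]$, the state already pins down part of $S\cap V(X)$: if the state is $\scrC$ then the terminal subdividing $\{w'_2,\verC\}$ forces $w'_2\in S$, hence $w'_1\notin S$, hence (via the terminal on $\{v',w'_1\}$) $v'\in S$ and $u'\notin S$ — exactly the right-hand part of $A_{\scrC}$ — and symmetrically state $\scrD$ forces $\{v,w_2\}\subseteq S$ on the left end; in state $\scrO$, where $\verO\notin S$, the two terminals incident to $\verO$ similarly constrain both ends. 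The only freedom left inside a gadget is the choice on its ``non-propagating'' end among the $\{u,v\}$/$\{w_1,w_2\}$ (or primed) alternatives, and I will handle this bounded slack case by case.

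The core step is: along every path $P_{i,q}$, the sequence of its gadget states, in path order, is non-decreasing for the order $\scrD\prec\scrO\prec\scrC$, i.e.\ it has the shape $\scrD^a\scrO^b\scrC^c$. For this it suffices to forbid, for every pair of consecutive gadgets $X$ (left), $X'$ (right) on a path, the three decreasing adjacent state pairs $(\scrC,\scrD)$, $(\scrC,\scrO)$, $(\scrO,\scrD)$. Fixing such a pair, I look at the junction vertices $v'(X)$ and $v(X')$, joined by the path edge between the two gadgets. Because the state of $X$ is $\scrC$ or $\scrO$, either $v'(X)\notin S$, or $v'(X)\in S$ but its only neighbours inside $V(X)$ are two degree-one terminals, so that $v'(X)$ can reach $\rootpath$ only through $v(X')$; symmetrically $v(X')$ is either absent from $S$ or can reach $\rootpath$ only through $v'(X)$; moreover, in each of the three pairs at least one of $v'(X),v(X')$ is forced into $S$ by the previous paragraph. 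Going through the few admissible selections in both gadgets, every branch produces a connected component of $G[S]$ that avoids $V(\rootpath)$ — it consists of $v'(X)$ and/or $v(X')$ together with their pendant terminals — contradicting that $G[S]$ is connected and contains $V(\rootpath)$. (The guards, being root-connected, anchor the left end of the first gadget and the right end of the last gadget of a path, which is why only the three decreasing pairs are ruled out.) I expect this uniform but finicky case analysis, and in particular controlling the slack from the first paragraph, to be the main obstacle; it is the only place where the internal wiring of a path gadget is used.

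With monotonicity in hand, the state sequence of each path is constant except at its at most two strict increases, and each increase lies within at most one of the $2n'+1$ path segments. Over all $n'=s t$ paths this marks at most $2n'$ pairs $(\text{path},\text{segment})$ as ``impure'', so some segment $r^\ast\in[2n'+1]$ is impure for no path: for every path $P_{i,q}$, all $m$ gadgets $X^{i,q}_{r^\ast,j}$, $j\in[m]$, share one state $\tau(i,q)\in\{\scrC,\scrD,\scrO\}$. Hence, for every $i$ and every $j$, the bundle $B^i_{r^\ast,j}$ realizes the same state assignment $\sigma_i:=(\tau(i,1),\dots,\tau(i,t))$, independent of $j$. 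In each decoding gadget $Y^i_{r^\ast,j}$, $S$ contains exactly one end of each matching edge $\{u_\sigma,v_\sigma\}$; if $\sigma\neq\sigma_i$, pick $q$ with $\sigma(X^{i,q}_{r^\ast,j})\neq\tau(i,q)$: then the edge $\{v_\sigma,s_{\tau(i,q)}(X^{i,q}_{r^\ast,j})\}$ is present in $G$ and its clique endpoint is the omitted one, so its subdividing terminal forces $v_\sigma\in S$ and therefore $u_\sigma\notin S$. Thus $u_\sigma\in S$ forces $\sigma=\sigma_i$.

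Finally, the clause gadget $Z_{r^\ast,j}$ is a single terminal $w\in S$ whose $G$-neighbours are precisely the vertices $u_\sigma(Y^i_{r^\ast,j})$ with $i\in[s]$ and $\sigma\in\Sigma^i_j=\Phi(\Pi^i_j)$; connectivity of $G[S]$ puts one of them into $S$, so by the last paragraph $\sigma_i\in\Sigma^i_j$ for some $i$. As $\Phi$ is injective, $\sigma_i$ lies in the image of $\Phi$ and $\Phi^{-1}(\sigma_i)\in\Pi^i_j$ is a partial assignment of $U_i$ satisfying $C_j$. I then define $\pi\colon\Var(I)\to\{0,1\}$ by $\pi_{|_{U_i}}:=\Phi^{-1}(\sigma_i)$ whenever $\sigma_i$ is in the image of $\Phi$, and arbitrarily on the remaining variables; injectivity of $\Phi$ makes this well defined, and by the above every clause $C_j$ is satisfied by $\pi$ on some group $U_i$. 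Hence $\pi$ satisfies $I$, as required.
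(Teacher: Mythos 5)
Your proof is correct and follows the same high-level strategy as the paper: use the tight budget to force exactly $p(C)$ vertices in each packing component, define the state of a path gadget as the missing clique vertex, prove that along every path the state sequence is non-decreasing for $\scrD\preceq\scrO\preceq\scrC$, extract a ``pure'' segment $r^\ast$ by counting the at most $2n'$ state changes, and read off a satisfying assignment through the decoding and clause gadgets.

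The only place you deviate is in how you prove the monotonicity claim. The paper proves two push-forward implications, $\sigma(X)=\scrC\Rightarrow\sigma(X')=\scrC$ and $\sigma(X')=\scrD\Rightarrow\sigma(X)=\scrD$, each by chaining through the gadget to pin down the full state on the other side. You instead argue by contradiction, directly ruling out the three decreasing adjacent pairs $(\scrC,\scrD)$, $(\scrC,\scrO)$, $(\scrO,\scrD)$: from the constraints you show that in each such pair the junction vertices $v'(X)$ and $v(X')$ (whichever are present) become a connected piece that is cut off from $\rootpath$ (their gadget-internal neighbours $u',w'_1$ resp.\ $u,w_1$ are forced out of $S$, so the only way out is through each other, and for state $\scrO$ the case split you sketch closes both branches). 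This is a bit lighter since you never need to determine $\verD(X')$, $\verO(X')$, etc., but it is the same underlying mechanism — the packing saturation plus connectivity at the $v'$–$v$ junction — so I would not call it a genuinely different route. The final parts (finding $r^\ast$, reading $\sigma_i$ off the bundles, forcing $u_\sigma\in S\Rightarrow\sigma=\sigma_i$ via the subdivision terminals, and using the clause gadget to conclude $\sigma_i\in\Sigma^i_j$) match the paper exactly.
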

    \begin{proof}
        Let $S_0$ be a \Steinertree{} of $G$ spanning $T$ of size exactly $k$. 
        Recall that by the choice of $k$, such a \Steinertree{} contains exactly $p(C)$ vertices from every component $C$ of the packing $\mathcal{P}(G)$.
        In particular, we have $V(\rootpath) \subset S$.
        Let $S = S_0 \setminus T$ be the set of non-terminals in $S_0$. 
        We will construct an assignment $\pi$ satisfying $I$. Let
        $\sigma$ be the mapping that assigns to a path gadget $X^{i, q}_{r,j}$ its state defined by $S_0$.
        We recall that for a gadget $X = X^{i, q}_{r,j}$, the set $\{\verC, \verD, \verO\}$ belongs to the packing $\mathcal{P}(G)$ and we have $p\bigl(\{\verC, \verD, \verO\}\bigr) = 2$. Therefore, there exists exactly one $x \in \{\scrC, \scrD, \scrO\}$ with $s_x(X) \notin S$ and the state of $X$ in $S$ is defined as $x$. So $\sigma$ maps every gadget $X$ to the state $x$ defined like this.
        We define the linear-ordering $\preceq$ over the states as $\scrD \preceq \scrO \preceq \scrC$. We prove the following claim. 
        \begin{claim}
        		Let $i, q$ be arbitrary but fixed. Then for any two consecutive path gadgets $X$ and $X'$ on the path $P_{i,q}$, it holds that $\sigma(X) \preceq \sigma(X')$. 
        \end{claim}
        
        \begin{proof}
        We refer to \cref{fig:append-st-states} for illustration.
        Note that the statement is equivalent to proving that $\sigma(X) = \scrC$ implies $\sigma(X') = \scrC$ and $\sigma(X') = \scrD$ implies $\sigma(X) = \scrD$.
        
        So first, let $\sigma(X) = \scrC$. By definition of $\scrC$, we have $\verC(X) \notin S$. So $w'_2(X) \in S$ holds due to the terminal between them. Since $\{w'_1, w'_2\} \in
        \mathcal{P}(G)$ and $p(\{w'_1, w'_2\}) = 1$ hold, the set $S$ does not contain the vertex $w'_1(X)$. This, in turn, implies that $v'(X) \in S$ holds because of the terminal between them. Again, due to $\{u', v'\} \in \PP(G)$ and $p(\{u', v'\}) = 1$, it holds that
        $u'(X) \notin S$. 
        So we have $u'(X), w_1'(X) \notin S$, $v'(X) \in S$, and $V(\rootpath) \subset S$. Since $S$ is a \Steinertree{}, the subgraph $G[S]$ is connected. Therefore, there is a path $P^*$ from $v'(X)$ to a vertex from $V(\rootpath)$ in $G[S]$ not using the vertices $u'(X)$ and $w_1'(X)$. Hence, the path $P^*$ goes through $v(X')$ and in particular, we have $v(X') \in S$.
        Again, since $\{u, v\} \in \PP(G)$ and $p\left(\{u, v\}\right) = 1$, we have $u(X') \notin S$. Therefore, the path $P^*$ going to some vertex on $\rootpath$ must use the vertex $w_1(X')$, i.e., $w_1(X') \in S$.
        Because of $\{w_1, w_2\} \in \PP(G)$ and $p\left(\{w_1, w_2\}\right) = 1$, we have $w_2(X') \notin S$. And again, we must have $\verD(X') \in S$ due to the terminal between them.
        Similarly, due to the terminal between $u(X'), w_2(X') \notin S$ and $\verO(X')$, we must have $\verO(X') \in S$. Finally, since $\{\verC, \verD, \verO\} \in \PP(G)$ and $p\left(\{\verC, \verD, \verO\}\right) = 2$, we have $\verC(X') \notin S$ and hence, $\sigma(X') = \scrC$ as claimed.
 
        The second part of the claim holds due to the symmetry. At \cref{fig:main-st-app}~(a) it can be seen that the gadget admits a vertical symmetry: If we reflect it this way, then the
clique vertices $\verD$ and $\verC$, non-clique vertices (with respect to apostrophe), as well as the
roles of the previous and the next gadgets get swapped. 
For the sake of completeness, we still provide the proof.
So first, let $\sigma(X') = \scrD$. By definition of $\scrD$, we have $\verD(X') \notin S$. So $w_2(X) \in S$ holds due to the terminal between them. Since $\{w_1, w_2\} \in
        \mathcal{P}(G)$ and $p(\{w_1, w_2\}) = 1$ hold, the set $S$ does not contain the vertex $w_1(X')$. This, in turn, implies that $v(X') \in S$ holds because of the terminal between them. Again, due to $\{u, v\} \in \PP(G)$ and $p(\{u, v\}) = 1$, it holds that
        $u(X') \notin S$. 
        So we have $u(X'), w_1(X') \notin S$, $v(X') \in S$, and $V(\rootpath) \subset S$. Since $S$ is a \Steinertree{}, the subgraph $G[S]$ is connected. Therefore, there is a path $P^*$ from $v(X')$ to a vertex from $V(\rootpath)$ in $G[S]$ not using the vertices $u(X')$ and $w_1(X')$. Hence, the path $P^*$ goes through $v'(X)$ and in particular, we have $v'(X) \in S$.
        Again, since $\{u', v'\} \in \PP(G)$ and $p\left(\{u', v'\}\right) = 1$, we have $u'(X) \notin S$. Therefore, the path $P^*$ going to some vertex on $\rootpath$ must use the vertex $w'_1(X)$, i.e., $w'_1(X) \in S$.
        Because of $\{w'_1, w'_2\} \in \PP(G)$ and $p\left(\{w'_1, w'_2\}\right) = 1$, we have $w'_2(X) \notin S$. And again, we must have $\verC(X) \in S$ due to the terminal between them.
        Similarly, due to the terminal between $u'(X), w'_2(X) \notin S$ and $\verO(X)$, we must have $\verO(X) \in S$. Finally, since $\{\verC, \verD, \verO\} \in \PP(G)$ and $p\left(\{\verC, \verD, \verO\}\right) = 2$, we have $\verD(X) \notin S$ and hence, $\sigma(X) = \scrD$ as claimed.
\end{proof}
        
        So, we have proven that for any path $P_{i, q}$, from left to right, the states of the gadgets on it do not decrease along $\preceq$ in $\sigma$ .
        In particular, for any path $P_{i,q}$, if we consider its gadgets in left to right order, the state in $\sigma$ can change at most twice. Since there are $n'$ such paths, there might occur at most $2n'$ such state changes in total. Therefore, there exists an index $r^* \in [2n'+1]$ such that for every $i$ and $q$, the gadgets $X^{i, q}_{r^*, 1}, \dots, X^{i ,q}_{r^*, m}$ (i.e., the gadgets of the $r^*$th path segment of the path $P_{i,q}$) are in the same state in $\sigma$. 
        Let $\sigma_{r^*}$ be the restriction of $\sigma$ to the gadgets of some fixed column $Q_{r^*, j}$ (note that by the choice of $r^*$, the assignment $\sigma_{r^*}$ is independent of $j$).
        
        For every $i$, let $\sigma^i_{r^*}$ be the restriction of $\sigma_{r^*}$ to the bundle $B^i_{r^*, j}$.
        Recall that $U_1, \dots, U_s$ is a partition of $U$. Hence, we can define a truth-value assignment $\pi$ of $U$ by defining its restrictions $\pi_{|U_i}$ for every $U_i$.
        So for a variable group $U_i$, we set $\pi_{U_i} = \phi^{-1} (\sigma^i_{r^*})$ if $\sigma^i_{r^*}$ is in the image of $\phi$ (recall that $\phi$ is injective), or $\pi$ assigns the
        value $0$ to all variables in $U_i$ otherwise.
    
        Now we show that $\pi$ satisfies $I$. Consider an arbitrary clause $C_j$. Since the terminal $t$ in the clause gadget $Z_{r, j}$ must be connected to the remaining vertices of the \Steinertree{} $S$, there exists a decoding gadget $Y = Y^{\widetilde{i}}_{r^*, j}$ for some $\widetilde i \in[s]$ and a matching edge $\widetilde{e}$ in it such that for the right side side $\widetilde{u}$ of $\widetilde{e}$ we have that $\{\widetilde{u},t\}$ is an edge of $G$ and $\widetilde{u} \in S$ holds. Recall that for the left side $\widetilde{v}$ of $\widetilde{e}$ we have that $\{\widetilde{u}, \widetilde{v}\}$ belongs to the packing $\PP(G)$ with $p\bigl(\{\widetilde{u}, \widetilde{v}\}\bigr) = 1$. 
        Therefore, it holds that $\widetilde{v} \notin S$. 
        Let $\widetilde{\sigma}$ be the state assignment of the bundle $B = B^{\widetilde{i}}_{r^*, j}$ such that $e_{\widetilde{\sigma}} = \widetilde{e}$. Suppose that $\widetilde{\sigma} \neq \sigma_{|_B}$. So there exists an index $\widetilde{q} \in [t]$ such that $\widetilde{\sigma}(X^{\widetilde{i}, \widetilde{q}}_{r^*, j}) \neq \sigma_{|_B}(X^{\widetilde{i}, \widetilde{q}}_{r^*, j})$. 
        Let $X = X^{\widetilde{i}, \widetilde{q}}_{r^*, j}$ and let $x = \sigma(X) \in \{\scrC, \scrD, \scrO\}$.
        Then by definition of $\sigma$, we have that $s_{x}(X) \notin S$. 
        Because of $\widetilde{\sigma}(X) \neq x$, by construction of a decoding gadget, the left side $v_{\widetilde{\sigma}}$ has an incident edge subdivided by a terminal $\widetilde{t}$ with the other end-vertex being $s_x$. But then since both $s_x$ and $v_{\widetilde{\sigma}} = \widetilde{v}$ are not in $S$, the terminal $\widetilde{t}$ is isolated in $G[S]$ -- a contradiction to the fact that $S$ is a \Steinertree{}. So it holds that $\widetilde{\sigma}=\sigma_{|B}$.
        Recall that by construction of a clause gadget $t$, the existence of an edge $\{u_{\sigma_{|_B}},t\}$ implies that $\sigma_{|_B}$ is in the image of $\Phi$ and $\Phi^{-1}(\sigma_{|_B})$ satisfies $C_j$. Finally, recall that by construction of $\pi$, we have $\pi_{U_{\widetilde{i}}} = \Phi^{-1}(\sigma_{|_B})$. Therefore, $\pi$ satisfies $C_j$.
        Altogether, we obtain that $\pi$ is an assignment satisfying $I$.
    \end{proof}
   
\begin{proof}[Proof. (\cref{app:theo-st})]
  Suppose there exists an algorithm $\mathcal A$ that given a linear arrangement of the input graph of cutwidth $\ctw$ solves \Pst{} in time $\ostar((3-\varepsilon)^{\ctw})$ 
  for some positive real $\varepsilon$. 
  We show that there exists a positive real $\delta$ such that for any positive integer $d$, there exists an algorithm solving the $d$-SAT problem in time $\ostar\bigl((2-\delta)^n\bigr)$ contradicting SETH. 
  Note that $\log_3 (3-\varepsilon) < 1$ holds. 
  Let $0 < \sigma < 1$ be such that $1-\sigma = \log_3 (3-\varepsilon)$. 
  We choose $t_0 \in \NN$ to be some constant \emph{integer} strictly greater than $\frac{1-\sigma}{\sigma \log_3 2}$ (for example the smallest such integer). This is possible, since $\sigma$ is constant as well. For this value of $t_0$, we construct the instance $(G, T, k)$ and a linear arrangement of $G$ of cutwidth $n' + \mathcal{O}(1)$ as described above in polynomial time. 
  We apply $\mathcal A$ to $(G, T, k)$ and output its result. 
  Above, we have shown that the graph $G$ admits
  a \Steinertree{} of size at most $k$ spanning $T$ if and only if $I$ is a yes-instance so this algorithm is correct.
  Now we bound the running time
  of the algorithm. 
  It holds that
  \begin{align*}
    n' &= st = \left\lceil \frac{n}{t_0}\right\rceil t \leq \left(\frac{n}{t_0} +1\right)t \leq \frac{n t}{t_0} + t\\
    &\leq n\frac{ \lceil t_0 \log_3 2\rceil}{t_0} + t \leq n \frac{t_0 \log_3 2 + 1 }{t_0} + t\\
    &= n (\log_3 2 + 1/t_0) + t
  \end{align*}
  Hence we get an algorithm for the $d$-SAT problem running in time
  \[\ostar((3-\varepsilon)^{n(\log_3 2 +1/t_0) + t + \mathcal{O}(1)}) =
  \ostar((3-\varepsilon)^{n(\log_3 2 + 1/t_0)}).\]
  We claim that there is a positive value $\delta$ such that
  \[
  (3-\varepsilon)^{n(\log_3 2 + 1/t_0)} \leq (2-\delta)^n. 
  \]
  This would contradict SETH and hence, complete our proof. In order
  to prove this we show that $(3-\varepsilon)^{\log_3 2 + 1/t_0} < 2$ or in other words $(\log_3 2 + 1/t_0)
  \log_3(3-\varepsilon) < \log_3 2$. We have
  \begin{align*}
    &(\log_3 2 + 1/t_0) \log_3(3-\varepsilon) < \left( \log_3 2 + \frac{\sigma \log_3 2}{1-\sigma}\right) (1-\sigma)\\
    &= \frac{\log_3 2 \cdot (1-\sigma)+ \sigma \log_3 2}{1-\sigma}(1-\sigma) = \log_3 2,
  \end{align*}
  where the first strict inequality follows from the value of $t_0$.
\end{proof}

\subsection{Connected Dominating Set}\label{app:cds-lb}
In this section, we show that the \Pcds{} problem cannot be solved in time $\ostar\bigl((3-\varepsilon)^{\ctw}\bigr)$. This proves that our
upper bound of $\ostar(3^{\ctw})$ from \cref{app:cds} is essentially optimal. To achieve this, we slightly modify the reduction for the \Pst{} problem from \cref{app:theo-st} and show
that this results in a reduction for the \Pcds{} problem.
A \emph{dominating set} of a graph is a set of vertices such that for every vertex of the graph not in this set, at least one of its neighbors belongs to this set.

\begin{quote}
	\Pcds{}

	\textbf{Input}: A graph $G = (V, E)$ and an integer $k$.

	\textbf{Question}: Is there a dominating set $S \subseteq V$ of $G$ of cardinality at most $k$ such that $G[S]$ is connected.
\end{quote}

\begin{lemma}\label{lem:cdstost}
	Let $d$ be a positive integer and let $I$ be an instance of the $d$-\textsc{SAT} problem.
	Let $\bigl(G = (V, E), T, k\bigr)$ be the instance of \Pst{} resulting from $I$ using the construction from \cref{app:theo-st}. 
	The graph $\hat{G} = (V, \hat{E})$ on the same vertex set as $G$ is defined similarly to $G$ as follows.
	We proceed the same way as in the construction from \cref{app:theo-st} but:
	\begin{enumerate}
		\item Each time we split some edge $\{u, v\}$ by some terminal $w \in T$, we additionally add the edge $\{u, v\}$ so that we obtain a triangle $u, v, w$. \label{hatg:first}
		\item In every path gadget, we create the triangles $\verO, w_2, u$ and $\verO, w'_2, u'$ by adding the new edges $\{\verO, w_2\}$, $\{w_2, u\}$, $\{u, \verO\}$ and $\{\verO, w'_2\}$, $\{w'_2, u'\}$, $\{u', \verO\}$ .
		\label{hatg:second}
	\end{enumerate}
	Then the graph $\hat G$ admits a \cds{} of size at most $k - |T|$ if and only if $G$ admits a \Steinertree{} spanning $T$ of size at most $k$. 
	So the instance $\bigl(\hat G, k - |T|\bigr)$ of \pcds{} is equivalent to $I$ as well.
	Moreover, the graph $\hat G$ together with a linear arrangement of $\hat G$ of cutwidth at most $n' + \mathcal{O}(1)$ can be computed in polynomial time from $I$ (here $n'$ is the number of paths as in the construction of \cref{app:theo-st}).
\end{lemma}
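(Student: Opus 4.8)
The plan is to give a direct, size-controlled correspondence between \Steinertree{}s of $G$ spanning $T$ and connected dominating sets of $\hat G$, namely $S\mapsto S\setminus T$ and $D\mapsto D\cup T$; combined with the equivalence of $(G,T,k)$ and $I$ established in \cref{append:st}, this yields the equivalence of $(\hat G,k-|T|)$ and $I$ and hence both claims of the lemma. Three structural observations carry the argument. First, $T$ is an independent set in $G$, and therefore also in $\hat G$: every terminal either subdivides an edge between two non-terminals, or is a degree-one neighbour of a non-terminal, or is a clause terminal adjacent only to right sides of decoding gadgets, or is one of the two ``triple'' terminals of a path gadget adjacent only to $u,w_2,\verO$ resp.\ $u',w'_2,\verO$; and the two modifications defining $\hat E$ only add edges between non-terminals. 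Second, by the first modification every pair $\{u,v\}$ joined by a subdivided edge with subdividing terminal $w$ induces a triangle $\{u,v,w\}$ in $\hat G$, and by the second modification the triple terminals become the apex of triangles on $\{u,w_2,\verO\}$ resp.\ $\{u',w'_2,\verO\}$. Third, since $\PP(G)$ certifies that every \Steinertree{} has at least $k$ vertices, a \Steinertree{} of size at most $k$ has size exactly $k$ and contains exactly $p(C)$ vertices of each component $C$ of $\PP(G)$; in particular every non-terminal not forced into the tree (i.e.\ every non-terminal outside $\rootpath\cup\{g,g'\}$) lies in a component $C$ with $2\le|C|\le 3$ and $|C|-p(C)=1$, so it is the unique vertex of $C$ missing from the tree.

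For the direction $G\Rightarrow\hat G$: given such a \Steinertree{} $S$ I would set $D:=S\setminus T$, so $|D|=k-|T|$. To see that $\hat G[D]$ is connected I take, for any two vertices of $D$, a walk between them in the connected graph $\hat G[S]\supseteq G[S]$ and reroute it off $T$; since $T$ is independent every terminal on the walk is flanked by two non-terminals, a subdivision terminal or a triple terminal is bypassed by the corresponding triangle edge of $\hat G$, and a clause terminal is bypassed through $\rootpath\subseteq D$ because its two flanking vertices are root-connected. For domination: a terminal is not in $D$ but, lying in the connected graph $G[S]$ on more than one vertex, has a neighbour in $S$, which is a non-terminal and hence in $D$; a non-terminal $x\notin S$ is, by the packing-tightness observation, the unique vertex missing from its $2$- or $3$-element packing component $C$, and the edge or triangle that $\hat G$ places on $C$ makes $x$ adjacent to a vertex of $C\setminus\{x\}\subseteq D$.

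For the direction $\hat G\Rightarrow G$: given a connected dominating set $D$ of $\hat G$ with $|D|\le k-|T|$ I would set $S:=D\cup T\supseteq T$, so $|S|\le k$. The graph $\hat G[D\cup T]$ is connected because $\hat G[D]$ is connected and every terminal is dominated by $D$. Every edge of $\hat G$ that is not an edge of $G$ is either a shortcut $\{u,v\}$ of a subdivided edge with terminal $w\in T$, or one of the six triangle edges of the second modification, whose associated triple terminal $t$ lies in $T$; in either case the edge is replaced by a length-two path through that terminal, which lies in $S$. Hence $G[D\cup T]$ is connected too, so $S$ is the desired \Steinertree{}.

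Finally, for the linear arrangement I would reuse the arrangement of $G$ of \ctwdth{} $n'+\mathcal O(1)$ from the proof of \cref{lem:app-st-ctw} without moving any vertex (since $\hat G$ has the same vertex set): each new edge of $\hat G$ joins two vertices of a common path gadget or decoding gadget, or the two endpoints of a subdivided decoding-to-bundle edge, and in all cases the subdividing terminal was placed between these endpoints, so each new edge spans a window of constant length; since distinct path gadgets, distinct decoding gadgets, and distinct bundle/decoding-gadget regions occupy disjoint windows (exactly as in the proof of \cref{lem:app-st-ctw}), every cut grows by only $\mathcal O(1)$, and the construction is polynomial because $t$ is constant. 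The part I expect to be most delicate is the connectivity half of $G\Rightarrow\hat G$: the rerouting must handle all four flavours of terminal — in particular clause terminals are bypassed \emph{not} by a triangle edge (their neighbours are pairwise non-adjacent) but through $\rootpath$ — and the packing-tightness bookkeeping must be checked to exhaust every kind of non-terminal, namely the path-gadget vertices, the decoding-gadget matching endpoints, and the fact that clause gadgets contribute no non-terminal at all.
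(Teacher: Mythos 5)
Your proposal is correct and follows the same overall plan as the paper --- the correspondence $S \leftrightarrow S\setminus T$ / $D \leftrightarrow D\cup T$, plus reusing the linear arrangement from \cref{lem:app-st-ctw} --- but two of your local arguments are cleaner than the paper's. In the direction $\hat G \Rightarrow G$, the paper first shows that a \emph{minimum} connected dominating set of $\hat G$ avoids $T$ (a local exchange argument on each type of terminal), and only then forms $S\cup T$; you bypass this entirely by observing that $|D\cup T|\le|D|+|T|\le k$ regardless of whether $D$ meets $T$, and then showing directly that $D\cup T$ is connected in $G$ by replacing every $\hat G$-only edge by its length-two detour through the corresponding terminal of $T$. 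For the domination half of $G\Rightarrow\hat G$, the paper argues via the fact that the $\hat G$-neighborhoods of subdivision and triple terminals are cliques, handling $\rootpath$, guards, and clause terminals separately; you instead derive domination of every non-terminal outside $S$ from packing tightness (it is the unique missing vertex of its component $C$ of $\mathcal P(G)$, and $\hat G$ places an edge or triangle on $C$), which is a uniform argument that needs no case analysis over terminal types. Your connectivity argument for $\hat G[D]$ is the same rerouting the paper uses, and you correctly flag that clause terminals must be bypassed via $\rootpath$ rather than via triangle edges (their flanking decoding-gadget endpoints are root-connected but pairwise non-adjacent in $\hat G$). I see no gap; this is a valid and somewhat more streamlined proof.
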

\begin{proof}
	First of all, note that the definition of \pcds{} does not contain terminals, i.e., they are just regular vertices. 
	Nevertheless, we refer to vertices from $T$ as \emph{terminals} to simplify the notation in the proof.

	Next, note that $\hat G$ can also be constructed from $I$ in polynomial time. 
	Also observe that $\hat G$ results from $G$ merely by adding edges. First, let $S$ be a \emph{minimum} connected dominating set of $\hat G$ of size  at most $k - |T|$. 
	First, we claim that $S$ cannot contain vertices in $T$. 
	To prove this, suppose there is a vertex $w \in S\cap T$. 
	Recall that no two terminals are adjacent in $G$ or $\hat G$. 
	So since $\hat G[S]$ is connected, the set $S$ must contain some neighbor $v \in S \setminus T$ of $w$.
	We show that $S \setminus \{w\}$ is a \cds{} of $\hat G$ contradicting the assumption that $S$ is minimum \cds{}. 
	First, recall that by construction of the graphs $G$ and $\hat G$ the neighborhood $N_G(w) = N_{\hat G}(w)$ forms a clique of size one or two so every vertex in $N_{\hat G}(w)$ is dominated by $v$ and $S \setminus \{w\}$ is a dominating set of $\hat G$.
	It remains to prove that $\hat G\bigl[S \setminus \{w\}\bigr]$ is still connected. 
	We distinguish the cases depending on the degree of $w$ in $G$.
	If $w$ has degree one, then removing it from $\hat G[S]$ does not disconnect the graph. 
	If $w$ has degree two in $\hat G$, then this terminal was added to subdivide an edge $uv$ of $\hat G$ for some vertex $u$ and by the construction of $\hat G$, this graph also contains the edge $uv$. Thus, every path in $\hat G[S]$ using $w$ and not ending in $w$ uses a subpath $u, w, v$ or $v, w, u$. This subpath can be replaced by the edge $uv$ to obtain a path in $\hat G\bigl[S \setminus \{w\}\bigr]$ with the same end-vertices. 
	Otherwise, $w$ has degree three in $G$ (recall that by construction, no terminal has degree greater than three) so it is a terminal adjacent to the vertices $x, y, z$ of some path gadget where either $\{x, y, z\} = \{\mathcal{O}, w_2, u\}$ or $\{x, y, z\} = \{\mathcal{O}, w'_2, u'\}$ holds. 
	Further, recall that by the construction of $\hat G$, the vertices form a triangle.
	Now, an argument similar to the previous case can be applied as follows. Every path in $\hat G[S]$ using $w$ and not ending in $w$ uses a subpath $a, w, b$ for some $a \neq b \in \{x, y, z\}$. This subpath can be replaced by the edge $\{a, b\}$ to obtain a path in $\hat G\bigl[S \setminus \{w\}\bigr]$ with the same end-vertices. 
	So in every case, we obtain that $S \setminus \{w\}$ is a smaller \cds{} of $\hat{G}$ than $S$ -- a contradiction to the minimality of $S$. Therefore, it holds that $S \cap T = \emptyset$.
	
	Now we claim that $S \cup T$ is a \Steinertree{} of $G$. 
	Clearly it contains all terminals. We prove that $G[S \cup T]$ is connected. 
	Since $S$ is a dominating set of $\hat G$ with $S \cap T = \emptyset$ and for every vertex $t \in T$, we have $N_G(t) = N_{\hat G}(t)$, we obtain that $T \subseteq N(S)$, i.e., every vertex of $T$ has a neighbor in $S$ in the graph $G[S \cup T]$. Therefore, to show that $G[S \cup T]$ is connected, it suffices to prove that for all $s_1, s_2 \in S$, there is a path between them in $G[S \cup T]$.
	Recall that $\hat G[S]$ is connected. So to show that $G[S \cup T]$ is connected, it suffices to show that for any $s_1, s_2 \in S$ that are adjacent in $\hat G[S]$ (i.e., adjacent in $\hat G$), there is a path in $G[S \cup T]$ between them.
	This is trivially true, if $s_1$ and $s_2$ are adjacent in $G$. Otherwise, we have that $\{s_1, s_2\}$ is an edge of $\hat G$ but not of $G$. By construction, this is only possible if there is a terminal $w \in T$ adjacent to both $s_1$ and $s_2$. Then $s_1, w, s_2$ is a path between $s_1$ and $s_2$ in $G[S \cup T]$. So $S \cup T$ is indeed connected and therefore, it is a \Steinertree{} of $G$ spanning $T$. Moreover, we have
	\[
		|S \cup T| = |S| + |T| \leq \bigl(k - |T|\bigr) + |T| = k.
	\]
	So $S \cup T$ is a \Steinertree{} of the desired size.

	For the other direction consider \Steinertree{} $S'$ of $G$ spanning $T$ of size at most $k$ in $G$.
	We claim that $S = S' \setminus T$ is a \cds{} of $\hat G$. 
	For shortness, writing about a domination we always refer to the domination in the graph $\hat G$.
	First, we prove that $S$ is a dominating set. 
	We know that $G[S']$ is connected and it holds that $T \subseteq S'$. Since no two terminals are adjacent in $G$, each terminal has a neighbor in $S$ in the graph $G[S']$. 
	Since all edges present in $G$ are also present in $\hat G$, all vertices of $T$ are dominated by $S$. 
	By the construction of $\hat G$, for several types of terminals, the neighborhood of such a terminal forms a clique. Therefore, if such a terminal is dominated by $S$, then each of its neighbor is dominated as well. 
	First, such terminals are those added to subdivide some edge. And second, these are the terminals of degree 3 in path gadgets. Therefore, every vertex of $\hat G$ adjacent to at least one such terminal is dominated by $S$. It remains to prove that the remainder is dominated by $S$ as well.
	Further, recall that every vertex of $\rootpath$ has an adjacent terminal of degree 1 and hence, all vertices of $\rootpath$ belong to every \Steinertree{}, i.e., they also belong to $S$ and therefore, are dominated by $S$.
	Similarly, the guards $g$ and $g'$ also belong to $S$ due to their private terminals so they are dominated as well.
	And by construction, these cases already cover all vertices of $\hat G$. Therefore, $S$ is indeed a dominating set of $\hat G$.
	
	Now we prove that the graph $\hat G [S]$ is connected. 
	Recall that all vertices of the path $\rootpath$ belong to $S$.
	We show that each vertex in $S$ is connected to some vertex on $\rootpath$ in $G[T \cup S] = G[S']$ through a path $P$ that does not intersect clause gadgets. 
	Clearly, such path can not use a degree one terminal. 
	The remaining terminals (i.e., having degree larger than one and not belonging to any clause gadget) are of two types: either a terminal has been introduced to subdivide some edge or it is a terminal of degree three in some path gadget.
	In both cases, the neighborhood of such a terminal $t$ forms a clique in $\hat G$ and every subpath $a, t, b$ of a path in $G[S']$ can be replaced with the edge $\{a, b\}$ in $\hat G[S]$.
	So it indeed suffices to show that each vertex in $S$ is connected to some vertex on $\rootpath$ in $G[T \cup S]$ through a path $P$ that does not intersect clause gadgets. 
	Clearly, this is the case for all vertices of $\rootpath$ and all root-connected vertices of $S$.
	Next, let $x$ be a vertex in a path gadget that is not root-connected, and let $P$ be a path from $x$ to some vertex on $\rootpath$ in $G[S']$ (the path $P$ exists since $G[S']$ is connected and it contains all vertices of $\rootpath$). 
	If $P$ does not visit a clause-gadget, then we are done. 
	Otherwise, $P$ passes through a vertex that is a clause gadget. Let $w$ be the first such vertex on $P$ (closest to $x$). 
	By the construction of a clause gadget, the vertex $w$ must be preceded by a vertex $u = u_{\sigma}(Y)$ on the path $P$ for some state assignment $\sigma$ and a decoding gadget $Y$, since $w$ is only adjacent to such vertices. But since $u$ is root connected, we can define the path $P' = P[x, u]$, i.e. the subpath of $P$ from $x$ to $u$. By appending the private neighbor $b$ of $u$ on $\rootpath$ to $P'$, we get a path in $G[S']$ from $x$ to the vertex $b$ on $\rootpath$ that does not use any clause gadget.
	Hence, each vertex in $S$ is connected to some vertex on $\rootpath$ in $G[S']$ through a path $P$ that does not intersect clause gadgets and hence (as argued above), $\hat G[S]$ is connected. Therefore, $S$ is a connected dominating set of $\hat G$. And we also have
	\[
		|S| = |S'| - |T| \leq k - |T|.
	\]
	So $S$ is indeed the desired connected dominating set of $G$.
	
	Let $\ell$ be the linear arrangement of $G$ constructed in the proof of \cref{lem:app-st-ctw}. Recall that $V(G) = V(G')$ so $\ell$ is also a linear arrangement of $\hat G$. Recall that $\ell$ can be constructed from $I$ in polynomial time.
	Now we show that the cutwidth of $\hat G$ on $\ell$ is only larger by some constant than the cutwidth of $G$ on $\ell$.
	Observe that every edge of $E(\hat G) \setminus E(G)$ is of one of the following three types: either both end-vertices belong to the same path gadget, or to the same decoding gadget, or one end-vertex belongs to a path gadget and the other to a decoding gadget in the same column. 
	Since the vertices of distinct path gadgets are not interleaved in $\ell$ and each path gadget contains only a constant edges of $E(\hat G) \setminus E(G)$, the number of edges of each cut in $\ell$ is totally increased by at most a constant due to the edges of the first type. 
	The analogous argument applies to the edges of the second type since the vertices of distinct decoding gadgets are not interleaved in $\ell$ and every decoding gadget only has a constant (recall that $t$ in the construction is a constant) number of vertices.
	Similarly, for any edge of the third type, both end-vertices belong to the same column. Vertices of distinct columns are not interleaved in $\ell$ and every clause gadget has a constant (recall that $d$ and $t$ are constant in the construction) degree. Therefore, the size of every cut again increases by at most a constant. Recall that the cutwidth of $G$ on $\ell$ is $n' + \mathcal{O}(1)$ so the cutwidth of $\hat G$ on $\ell$ is $n' + \mathcal{O}(1)$ as well.
\end{proof}

\begin{theorem}
	Assuming SETH, there is no algorithm that solves the \Pcds{} problem in time $\ostar((3-\varepsilon)^{\ctw})$ for
    any positive real $\varepsilon$, even when a linear arrangement $\ell$ of the this width of the graph is given.
\end{theorem}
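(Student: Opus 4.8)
The plan is to mimic the proof of \cref{app:theo-st} almost verbatim, replacing the \Pst{} reduction by the \Pcds{} reduction supplied by \cref{lem:cdstost}. Suppose for contradiction that there is an algorithm $\mathcal{A}$ that, given a graph together with a linear arrangement of cutwidth $\ctw$, solves \Pcds{} in time $\ostar\bigl((3-\varepsilon)^{\ctw}\bigr)$ for some fixed $\varepsilon > 0$. Fix an arbitrary positive integer $d$; I would show that this yields an algorithm solving $d$-\Psat{} in time $\ostar\bigl((2-\delta)^n\bigr)$ for some $\delta > 0$ independent of $d$, contradicting SETH.

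First I would choose the auxiliary constant exactly as in the Steiner-tree proof: since $\log_3(3-\varepsilon) < 1$, pick $0 < \sigma < 1$ with $1 - \sigma = \log_3(3 - \varepsilon)$, and let $t_0 \in \NN$ be a constant integer strictly larger than $\frac{1-\sigma}{\sigma \log_3 2}$. Given a $d$-\Psat{} instance $I$ on $n$ variables, I would invoke \cref{lem:cdstost} to construct in polynomial time the graph $\hat G$, the integer $k - |T|$, and a linear arrangement of $\hat G$ of cutwidth $n' + \mathcal{O}(1)$, where $n' = st$ with $s = \lceil n/t_0 \rceil$ and $t = \lceil t_0 \log_3 2 \rceil$. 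The lemma guarantees that $\bigl(\hat G, k - |T|\bigr)$ is a yes-instance of \Pcds{} if and only if $I$ is satisfiable, so running $\mathcal{A}$ on this instance and returning its answer solves $I$ correctly.

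Next I would reproduce the running-time bound from the proof of \cref{app:theo-st} without change: the same estimate $n' \leq n(\log_3 2 + 1/t_0) + t$ holds, so $\mathcal{A}$ runs in time $\ostar\bigl((3-\varepsilon)^{n(\log_3 2 + 1/t_0) + t + \mathcal{O}(1)}\bigr) = \ostar\bigl((3-\varepsilon)^{n(\log_3 2 + 1/t_0)}\bigr)$, since $t$ and the additive constant depend only on $\varepsilon$ and $d$. Finally, the same inequality $(\log_3 2 + 1/t_0)\log_3(3-\varepsilon) < \log_3 2$ — which is precisely where the choice of $t_0$ is used — shows there is $\delta > 0$ with $(3-\varepsilon)^{n(\log_3 2 + 1/t_0)} \leq (2-\delta)^n$, contradicting SETH.

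The only genuinely nontrivial ingredient is \cref{lem:cdstost}, which is already established: it verifies both the equivalence of the instances and that adding the triangle edges to the Steiner-tree construction raises the cutwidth only by an additive constant. Given that lemma, the remaining argument is a line-by-line transcription of the proof of \cref{app:theo-st}, so I expect no new obstacle; the one point to be careful about is that the additive $\mathcal{O}(1)$ in the cutwidth (now also depending on $d$) is absorbed into the $\ostar$ notation, which is legitimate since $d$ is fixed throughout the argument and $t$ is a function of $\varepsilon$ alone.
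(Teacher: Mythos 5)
Your proposal is correct and matches the paper's own proof of this theorem essentially verbatim: both arguments invoke Lemma~\ref{lem:cdstost} to transfer the Steiner-tree construction to \Pcds{} and then lift the running-time analysis wholesale from the proof of Theorem~\ref{app:theo-st}. Incidentally, your version is slightly more careful on two small points where the paper has minor slips: you correctly solve for a connected dominating set of size $k - |T|$ (the paper's proof text says $k + |T|$, contradicting the lemma it cites), and you correctly note that $t_0$ depends only on $\varepsilon$ rather than on $d$ (the paper describes it as ``depending on $d$'').
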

\begin{proof}
	Assume $\mathcal A$ an algorithm that solves \Pcds{} in time
  $\ostar((3-\varepsilon)^{\ctw})$ for some positive real $\varepsilon$. We show that there exists a
  positive real $\delta$ such that for every positive integer $d$, there exists an algorithm solving
  the $d$-SAT problem in time $\ostar((2-\delta)^n)$. 
  This would contradict SETH. Let $d$ be arbitrary but fixed and let $t_0$ be the value depending on $d$ as given by the construction in \cref{app:theo-st}. 
  Let $I$ be an arbitrary instance of the $d$-SAT problem. We compute $\hat G$, $T$, and $k$ as described above. 
  We run the algorithm $\mathcal A$ to decide whether $\hat G$ admits a connected dominating set of size $k + |T|$ and output its answer. 
  By \cref{lem:cdstost} and \cref{append:st}, this algorithm outputs yes if and only if $I$ is satisfiable, i.e., it solves the $d$-SAT problem.
  This algorithm has the running time of $\ostar((3-\varepsilon)^{n' + \mathcal{O}(1)})$.   As shown in \cref{app:theo-st}, this can be upper bounded by $\ostar((2-\delta)^n)$ for some positive value $\delta$, contradicting SETH.
\end{proof}

\subsection{Connected Vertex Cover}
\label{app:cvc-lb} 
We show that the \Pcvc{} problem cannot be solved in time $\ostar\big((2-\varepsilon)^{\ctw}\big)$ for any positive value $\varepsilon$ assuming SETH. In \cref{thm:cvc-ub} we have provided an algorithm solving this problem in $\ostar(2^{\ctw})$ so this bound is essentially tight. 
A \emph{vertex cover} of a graph is a set of vertices such that every edge of the graph is incident to some vertex in this set.

\begin{quote}
	\textsc{Connected Vertex Cover}

	\textbf{Input}: A graph $G = (V, E)$ and an integer $k$.
	
	\textbf{Question}: Is there a vertex cover $S \subseteq V$ of cardinality at most $k$ such that $G[S]$ is connected.
\end{quote}

\begin{theorem}\label{theo:cvc}
	Assuming SETH, there is no algorithm that solves the \Pcvc{} problem in time $\ostar((2-\varepsilon)^{\ctw})$ for
    any positive real $\varepsilon$, even if a linear arrangement of $G$ of this width is given.
\end{theorem}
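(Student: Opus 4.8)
The plan is to reduce $d$-\Psat{} to \Pcvc{}. From a $d$-SAT instance $I$ with $n$ variables and $m$ clauses I would construct in polynomial time an equivalent \Pcvc{} instance $(G,k)$ together with a linear arrangement of $G$ of cutwidth $n+\mathcal{O}(1)$, where the hidden constant depends only on $d$. Feeding this to a hypothetical algorithm running in time $\ostar\bigl((2-\varepsilon)^{\ctw}\bigr)$ would solve $d$-SAT in time $\ostar\bigl((2-\varepsilon)^{n+\mathcal{O}(1)}\bigr)$; since $2-\varepsilon<2$, there is a $\delta>0$ (independent of $d$) with $(2-\varepsilon)^{n+c}\leq(2-\delta)^n$ for every fixed constant $c$ and all sufficiently large $n$, small $n$ being handled by brute force in constant time, so this would contradict SETH. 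The global shape of the construction follows the \Pst{} and \Pcds{} reductions above: wires assembled from many copies of a small gadget, clause gadgets, a connectivity backbone $\rootpath$, and a budget $k$ pinned down by a packing. The one essential difference is that each wire must carry only \emph{two} honest states rather than three — this matches the target base $2$ and means that, unlike in \cref{app:theo-st}, no block-grouping of variables is needed; one wire per variable already gives cutwidth $n+\mathcal{O}(1)$.

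In detail, for each variable I would create a \emph{wire}, a chain of $(n+1)m$ copies of a constant-size \emph{bit gadget}, consecutive copies being joined by a single edge so that the chain is genuinely path-like with inter-gadget cuts of size one. Each bit gadget is designed so that, under the tight budget, a connected vertex cover is forced to contain exactly one of two designated vertices of the gadget; this binary choice is the gadget's \emph{state}, intended to represent the truth value of the corresponding variable. As in \cref{app:theo-st} I would add a path $\rootpath$ carrying a private pendant edge at each of its vertices — forcing all of $V(\rootpath)$ into every vertex cover — and make every ``structural'' vertex that must always lie in the cover adjacent to a private vertex of $\rootpath$; this both forces connectivity of the cover and, by letting the ordering of $\rootpath$ follow the ordering of its neighbours along the arrangement, costs only $\mathcal{O}(1)$ in cutwidth. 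Two guard vertices join all left-end (respectively right-end) vertices of the wires.

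Next I would split each wire into $n+1$ \emph{segments} of $m$ consecutive bit gadgets each, forming columns, and attach to each column one constant-size \emph{clause gadget} per clause of $I$, linked to the wires appearing in that column (through subdivided edges or a small ``decoding'' matching exactly as in the \Pst{} construction) so that the clause gadget $Z_{r,j}$ can be completed within budget if and only if the states of the wires in column $j$ of segment $r$ encode an assignment satisfying clause $C_j$. Correctness has two directions. If $I$ is satisfiable, set each wire globally to the state of its variable, make the forced local choices in all gadgets, and verify that $G[S]$ is connected via $\rootpath$ and the guards; every packing component receives exactly its prescribed number of vertices, so $|S|=k$. For the converse I would prove a \emph{monotonicity lemma} analogous to the claim inside \cref{lem:st-lb-correctness-2}: along any single wire the state is monotone in the order $0\preceq 1$ — this follows from the interplay between the gadget structure and the connectivity of $G[S]$ — hence each wire changes state at most once, at most $n$ of the $n+1$ segments contain a change, and some segment $r^*$ is change-free; the constant states in $r^*$ then define an assignment $\pi$, and the clause gadgets of $r^*$ force $\pi$ to satisfy every clause of $I$. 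The budget $k$ itself would be defined, as in \cref{app:theo-st}, as the total value of a packing $\mathcal{P}(G)$ of pairwise-disjoint components (subdivided edges and small cliques inside the gadgets, each contributing its size minus one, plus the forced vertices such as $V(\rootpath)$ and the guards), so that a connected vertex cover of size exactly $k$ is necessarily ``tight'' in every gadget — this rigidity is precisely what makes the two-state behaviour and the monotonicity argument valid.

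For the cutwidth bound I would build the linear arrangement column by column — inside a column listing the $n$ bit gadgets of that column, each a constant-size block, then the clause gadgets — and only afterwards thread in the vertices of $\rootpath$ and the private pendants, placed right before their unique neighbours. A generic cut is then crossed by at most $n$ wire edges (one per wire), $\mathcal{O}(1)$ backbone and guard edges, $\mathcal{O}(1)$ intra-gadget edges, and $\mathcal{O}(d)$ clause-gadget edges, for a total of $n+\mathcal{O}(1)$ as required. The main obstacle is the simultaneous design of the bit gadget, the clause gadget, and the packing so that (i) honest solutions genuinely need exactly two states per wire, (ii) connectivity of the cover is routed only through $\rootpath$, and (iii) the clause gadget is of constant size yet a faithful test of clause satisfaction; once these are in place, both the monotonicity lemma and the closing inequality $(2-\varepsilon)^{n+\mathcal{O}(1)}\leq(2-\delta)^n$ are routine adaptations of the corresponding parts of \cref{app:theo-st}.
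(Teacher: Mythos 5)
The high-level scaffolding of your plan — one wire per variable, $n+1$ segments of $m$ columns each, a root-path forced into every solution, a tight budget pinned by a packing, a monotonicity lemma yielding a change-free segment $r^*$, and the final cutwidth-to-SETH bookkeeping — is indeed the skeleton of the paper's reduction. You also correctly observe that no block-grouping is needed for a base-$2$ bound. But two of your design decisions point in the wrong direction, and they are precisely the steps you flag as the ``main obstacle,'' so the proposal does not actually close the gap.

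First, you propose that the monotonicity lemma should be ``analogous to the claim inside \cref{lem:st-lb-correctness-2}'' and ``follows from the interplay between the gadget structure and the connectivity of $G[S]$.'' In the paper's CVC construction this is not a connectivity argument at all: each wire is a \emph{plain path} whose consecutive pairs are the bit positions, and monotonicity is forced by the \emph{vertex cover} constraint — if the first vertex of one pair and the second vertex of the next pair were both outside $S$, the single edge of the path joining them would be uncovered. This is much weaker machinery than what \Pst{} needs, and it is exactly why the ``bit gadget'' can be a trivial pair of adjacent path vertices rather than anything you have to engineer. Treating the bit gadget as a design problem whose two-state behaviour must be coaxed out of connectivity constraints is how you would end up with something as heavy as the \Pst{} gadget (which has three states, not two) and lose the factor you are trying to save. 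Second, you suggest wiring the clause gadget to the wires ``through subdivided edges or a small `decoding' matching exactly as in the \Pst{} construction.'' The CVC reduction has no decoding gadgets: the clause gadget $Z_{r,j}$ is simply a $d$-clique $Q_{r,j}$ whose $z$-th vertex is adjacent to the one path vertex encoding ``literal $b_{j,z}$ is false''; the tight budget forces $|Q_{r,j}\cap S|=d-1$, and the unique omitted clique vertex forces its path neighbour into $S$ (again by coverage), which is exactly the statement that the corresponding literal is true in the decoded assignment. Importing decoding matchings from \Pst{} would both overcomplicate the construction and make the cutwidth bookkeeping strictly harder without buying anything. So the proposal is a reasonable first sketch, but the two components you leave open would, if pursued along the lines you indicate, not land on the paper's (much simpler) gadgets, and the monotonicity lemma you invoke would be proved by the wrong mechanism.
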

\subparagraph*{Construction.}

    Let $d \in \NN$ be arbitrary but fixed and let $I$ be an instance of $d$-SAT.
    We may assume that every clause of $I$ contains exactly $d$ literals. Let $v_1, \dots, v_n$ be the variables and let $C_1, \dots, C_m$ be the clauses in $I$. For all $j \in [m]$ and $z \in [d]$, let $b_{j, z}$ denote the $z$th literal in $C_j$ and let $v_{j,z}$ denote the variable in this literal. Let $s = m \cdot d$. 
    In the following, we will use the following indices: $i \in [n]$, $r \in [n+1]$, $j \in [m]$, $h \in \{0, 1\}$, and $z \in [d]$. These indices will always belong to these intervals so for shortness, we omit the domain when we use any of these indices.
    For illustration of our construction, we refer to
    \cref{fig:cvc-lb}. We construct an equivalent instance $(G = (V, E), k)$ of the \Pcvc\ problem as follows.
    Initially $G$ is an empty graph. For each variable $v_i$, we add to $G$ one path 
    \begin{alignat*}{3}
    	P_i:=\: &u^i_{1,1,0}, u^i_{1,1,1}, u^i_{1,2,0},& &\dots,& &u^i_{1,m,1},\\
                &u^i_{2,1,0}, u^i_{2,1,1}, u^i_{2,2,0},& &\dots,& &u^i_{2,m,1},\\
                & & &\quad\vdots& & \\
                &u^i_{n+1,1,0}, u^i_{n+1,1,1}, u^i_{n+1,2,0}, & &\dots, & & u^i_{n+1,m,1}
    \end{alignat*}
consisting of $2m(n+1)$ new vertices. More formally, the path $P_i$ consists of vertices $u^i_{r, j, h}$ so that the
indices $(r, j, h)$ appear in lexicographic order. We partition each path into $m(n+1)$ pairs of
consecutive vertices, i.e., the $(r, j)$th \emph{pair} of the $i$th path consists of the vertices
$u^i_{r, j, 0}$ and $u^i_{r, j, 1}$. These are the \emph{first} vertex and the \emph{second} vertex
of a pair respectively.
    
    For each $j$ and $r$, we create a clique $Q_{r,j} = \{q_{r,j}^1, \dots, q_{r,j}^d\}$ consisting
    of $d$ new vertices. The clique $Q_{r, j}$ will only be adjacent to vertices from the $(r, j)$th
    pairs of some of the paths $P_1, \dots, P_n$. Namely, for each $j$, $r$, and $z$, we do the following. Recall that
    $v_{j,z}$ is the variable in the $z$th literal $b_{j, z}$ of $j$th clause $C_j$.
    Let $i$ be such that $v_i = v_{j, z}$.
    We add an edge from $q_{r,j}^z$ to $u^{i}_{r,j,0}$ if $b_{j, z}$ is a
    negative literal of $v_{j, z}$ and to $u^{i}_{r,j,1}$ otherwise.

    Next, we add a path $\rootpath$ consisting of $(n+1)m(2n+d)$ new vertices denoted with
   \[
   	\rootpath = w_{1,1,1}, \dots w_{1,1,2n+d}, w_{1,2,1}, \dots, 
   	w_{1, m, 2n+d}, \dots, w_{n+1, 1, 1}, \dots, 
   	w_{n+1,m,2n+d}.
   \]
   More formally, the vertices of $\rootpath$ are $w_{r, j, p}$ (for $p \in [2n+d]$) so that
    indices $(r, j, p)$ appear in lexicographic order. We call $\rootpath$ the \emph{root-path}.
    For each vertex $w_{r, j, p}$ of this path, we add a new vertex $w'_{r, j, p}$ which is adjacent
    only to $w_{r,j,p}$. Let $\rootpath'$ be the set of all such vertices $w'_{r, j, p}$. Now every
    connected vertex cover of the arising graph necessarily contains all vertices of $\rootpath$.
    For all $r$ and $j$ we do the following. First, for every $i$, we add an edge between $w_{r, j,
    2i-1}$ and $u^i_{r, j, 0}$, and an edge between $w_{r,j,2i}$ and $u^i_{r,j,1}$.
    Next, for every $z$, we add an edge between $w_{r,j,2n+z}$ and $q_{r,j,z}$.
    This concludes the construction of the graph $G$ (see \cref{fig:cvc-lb} for illustration).
    Clearly, the instance $(G, k)$ can be constructed in polynomial time.

\begin{figure}[t]
    \centering
    \includegraphics{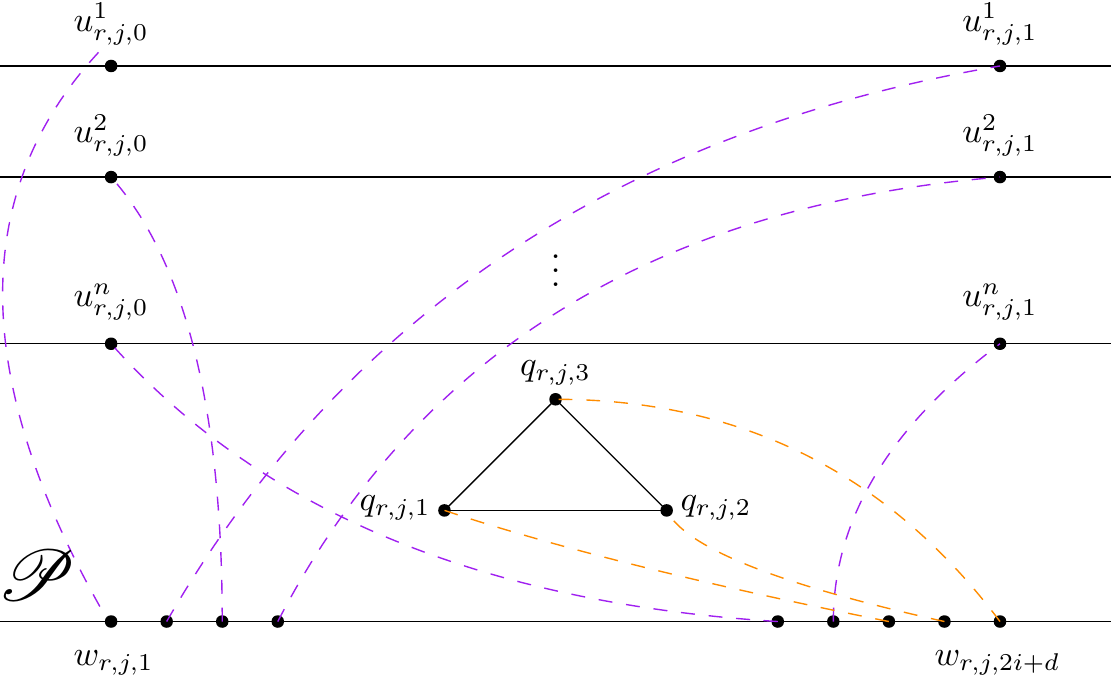}
    \caption{
	A part of the graph $G$ for a fixed pair of indices $r, j$, here $d = 3$.
    }
    \label{fig:cvc-lb}
\end{figure}

\subparagraph*{Budget.}    
    Recall that every connected vertex cover contains all vertices of $\rootpath$. Further, it
    contains at least half the vertices, i.e., at least $m(n+1)$ vertices, from a path $P_i$ for
    every $i$. Finally, it also contains at least $d - 1$ vertices of the clique $Q_{r,j}$ for every
    $r$ and $j$. Hence, any connected vertex cover of $G$ contains at least 
    \begin{equation}\label{eq:lb-on-cvc}
    	k := m(n+1)(2n+d) + m(n+1)n + m(n+1)(d-1)
    \end{equation}
    vertices. We fix the \emph{budget} (i.e., the size of the desired connected vertex cover) $k$ to this value matching the given lower bound. Hence, any connected
    vertex cover of size exactly $k$ consists of the following vertices: all vertices of $V(\rootpath)$, half of the vertices from each path $P_i$, and $d-1$ vertices from each clique $Q_{r,j}$.
    Clearly, the instance $(G, k)$ can be constructed in polynomial time.
    
 \begin{lemma}\label{lem:cvc-ctw}
    Let $(G, k)$ be the instance of \Pcvc\ arising from $I$ as described above.
    Then $G$ has the \ctwdth{} of at most $n + \mathcal{O}(1)$.
    Moreover, a linear arrangement of $G$ of this cutwidth can be constructed in polynomial time from $I$.
\end{lemma}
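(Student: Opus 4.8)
The plan is to exhibit a linear arrangement $\ell$ of $G$ of cutwidth $n + \mathcal{O}(1)$ built ``column by column'': I would process the pairs of indices $(r,j)$ (with $r \in [n+1]$, $j \in [m]$) in lexicographic order and place all vertices belonging to a fixed column $(r,j)$ in one contiguous block. Inside the block of column $(r,j)$ I would list, for $i = 1, \dots, n$ in order, the six vertices
\[
  u^i_{r,j,0},\; w_{r,j,2i-1},\; w'_{r,j,2i-1},\; u^i_{r,j,1},\; w_{r,j,2i},\; w'_{r,j,2i},
\]
and then append $q^1_{r,j}, \dots, q^d_{r,j}, w_{r,j,2n+1}, w'_{r,j,2n+1}, \dots, w_{r,j,2n+d}, w'_{r,j,2n+d}$. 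This arrangement has two features I would rely on: (i) for each $i$ the vertices of the path $P_i$ appear along $\ell$ in the order in which they occur on $P_i$ (columns are taken in lex order and $u^i_{r,j,0}$ precedes $u^i_{r,j,1}$), and likewise the vertices of $\rootpath$ appear along $\ell$ in their order on $\rootpath$; (ii) every pendant $w'_{r,j,p}$ sits next to $w_{r,j,p}$, and all vertices of column $(r,j)$ form a contiguous interval $I_{r,j}$ of $\ell$. Constructing $\ell$ is clearly polynomial.

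Next I would bound the number of edges of $G$ crossing an arbitrary cut of $\ell$, classifying the edges. By (i), for each $i$ the vertices of $P_i$ lying before the cut form a prefix of $P_i$, so at most one edge of $P_i$ crosses; this accounts for at most $n$ edges. For the same reason at most one edge of $\rootpath$ crosses, and by (ii) at most one pendant edge $\{w_{r,j,p}, w'_{r,j,p}\}$ crosses. The edges $\{w_{r,j,2i-1}, u^i_{r,j,0}\}$ and $\{w_{r,j,2i}, u^i_{r,j,1}\}$ join consecutive vertices of $\ell$ and hence add only a constant to any cut. Finally, every remaining edge is incident to a clique vertex $q^z_{r,j}$, and its other endpoint is another vertex of $Q_{r,j}$, a vertex $u^i_{r,j,h}$ of the $(r,j)$th pair of some path, or the vertex $w_{r,j,2n+z}$; in each case both endpoints lie in $I_{r,j}$, so such an edge can only cross a cut falling inside $I_{r,j}$. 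Since $|Q_{r,j}| = d$ is a constant, at any such cut at most $\binom{d}{2}$ clique-internal edges, at most $d$ clique-to-root-path edges and at most $d$ clique-to-path edges cross. Summing over all types, every cut of $\ell$ has size at most $n + \binom{d}{2} + 2d + \mathcal{O}(1) = n + \mathcal{O}(1)$, because $d$ is a fixed constant.

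The only delicate point — and the place where one could easily go wrong — is reconciling requirement (i), which is what keeps the dominant term at exactly $n$, with the requirement that all the ``small'' edges (pendant edges, the matching-like edges joining $\rootpath$ to the paths and to the cliques, and the clique edges) stay local; this is precisely why the within-column order interleaves each path pair with the two root-path vertices attached to it instead of listing the path pairs and the root-path vertices as separate blocks, and why the clique vertices are collected at the end of each column rather than spread out. Once the ordering is fixed, the cutwidth bound is a routine check over the constantly many edge types touching a single column.
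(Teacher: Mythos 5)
Your proposal is correct and follows essentially the same route as the paper: a column-by-column arrangement in lexicographic order of $(r,j)$, with each root-path vertex and its pendant inserted next to the path or clique vertex it is attached to, and the clique $Q_{r,j}$ collected at the end of each column, followed by the same case-by-case count of edge types that cross a cut. The only cosmetic difference is that you place $w_{r,j,p}$ and $w'_{r,j,p}$ immediately after rather than immediately before their non-root neighbor, which does not affect the $n + \mathcal{O}(1)$ bound.
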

\begin{proof}
    We emphasize that $d$ is a constant.
    We describe a linear arrangement $\ell$ of $G$ of \ctwdth{} at most $n + \binom{d}{2} + d + 2 = n + \mathcal{O}(1)$. Recall that a linear arrangement of a graph is just a total ordering of its vertices. The linear arrangement $\ell$ initially consists of $(n+1)$ consecutive segments, and each of them consists of $m$ consecutive subsegments.
    By that we mean that for every $r < n+1$ the vertices of the $r$th segment appear before the vertices of the $(r+1)$st segment; and for every $j < m$, the vertices of the $j$th subsegment of some segment appear before the vertices of the $(j+1)$st segment of the same segment.
    For fixed $r$ and $j$, the following vertices belong to the $j$th subsegment of the
    $r$th segment in order from left to right. First, it contains all vertices $u^i_{r,j,0}$ and
    $u^i_{r,j,1}$ in the order of increasing $i \in [n]$ (and $u^i_{r,j,0}$ occurs before
    $u^i_{r,j,1}$). After that, all vertices of $Q_{r,j}$ occur in arbitrary order.

    After all segments are created, we insert the vertices of $V(\rootpath) \cup V(\rootpath')$ into the linear arrangement as follows.
    For every vertex $v$ of $G \setminus (\rootpath \cup
    \rootpath')$, let $w \in \rootpath$ denote the unique neighbor of $v$ in $\rootpath$ and
    let $w' \in \rootpath'$ be the unique neighbor of $w$ in $\rootpath'$. We insert $w$
    directly before $v$ and $w'$ directly before $w$ into our linear arrangement.
    
    Now we prove that $\ell$ has the cutwidth of at most $n + \binom{d}{2} + d + 2$. 
    Here, writing about a cut, we always refer to the cuts of $\ell$.
    For each
    pair $u^i_{r,j,0}, u^i_{r,j,1}$ of vertices, there is at most one incident edge to the previous
    subsegment and at most one incident edge to the following subsegment (if such subsegments
    exist). So totally, there are at most $n$ edges between any two consecutive subsegments that
    belong to any of the paths $P_1, \dots, P_n$, and there are no such edges between non-consecutive subsegments. Further,
    all edges of paths $P_i$ whose both end-vertices belong to the same subsegment do not overlap.
    More formally, for any fixed $r$ and $j$, no two distinct edges from $\{u^1_{r,j,0},
    u^1_{r,j,1}\}$, $\{u^2_{r,j,0}, u^2_{r,j,1}\}$, $\dots$, $\{u^n_{r,j,0}, u^n_{r,j,1}\}$ overlap
    and all of these edges belong to the $j$th subsegment of the $r$th segment. Therefore, the edges
    of all paths $P_i$ contribute at most $n$ edges to every cut of the linear arrangement. Each
    clique $Q_{r,j}$ has $\binom{d}{2}$ edges inside the clique and $d$ incident edges to some of
    the paths $P_i$. This contributes at most $\binom{d}{2} + d$ edges to every cut:  
    All end-vertices of the edges incident to a clique $Q_{r,j}$ belong to the $j$th subsegment of the $r$th segment, therefore no two edges incident to different cliques overlap.
    By construction, the vertices of $\rootpath$ appear in the same order in
    $\rootpath$ and on $\ell$. Hence, the edges of $\rootpath$ do not overlap and they
    contribute at most one edge to every cut. Finally, the edges incident to $\rootpath'$ do not
    overlap as well so they also contribute at most one edge to every cut. Altogether, the
    linear arrangement $\ell$ has the cutwidth of at most $n + \binom{d}{2} + d + 2$. 
    Also note that together with $(G, k)$, this linear arrangement can be constructed from $I$ in polynomial time.    
\end{proof}

\begin{lemma}
    If $I$ is satisfiable, then $G$ has a \cvc{} of size at most $k$ spanning all 
    the vertices from $T$.
\end{lemma}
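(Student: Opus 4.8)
The plan is to convert a satisfying assignment $\pi\colon\{v_1,\dots,v_n\}\to\{0,1\}$ of $I$ into a connected vertex cover $S$ of $G$ of size exactly $k$, following the budget decomposition: $S$ will consist of all of $V(\rootpath)$, exactly one vertex from each pair of every path $P_i$, and exactly $d-1$ vertices of every clique $Q_{r,j}$.

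Concretely, first put every vertex of $\rootpath$ into $S$. For the paths, for each $i$ put the vertex $u^i_{r,j,\pi(v_i)}$ of every pair $(r,j)$ into $S$ (so for a true variable we take the ``second'' vertex of each pair, and for a false variable the ``first'' vertex). A one-line check shows that this consistent choice already covers $P_i$: the intra-pair edge $\{u^i_{r,j,0},u^i_{r,j,1}\}$ is incident to $u^i_{r,j,\pi(v_i)}$, and each inter-pair edge, which has the form $\{u^i_{r,j,1},u^i_{r',j',0}\}$, is incident to whichever of its endpoints is chosen (both pairs use the same index $\pi(v_i)$). For the cliques, for each clause $C_j$ fix a literal $b_{j,z}$ satisfied by $\pi$; then for every $r$ leave the vertex $q^z_{r,j}$ out of $S$ and put the other $d-1$ vertices of $Q_{r,j}$ into $S$. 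The key point is that the single clique-to-path edge at $q^z_{r,j}$ runs to $u^i_{r,j,0}$ if $b_{j,z}$ is a negative literal of $v_i=v_{j,z}$ and to $u^i_{r,j,1}$ if it is positive, and in either case $b_{j,z}$ being true under $\pi$ forces that very endpoint to equal $u^i_{r,j,\pi(v_i)}\in S$.

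It then remains to verify three things. Counting: $|S| = (n+1)m(2n+d) + n\cdot m(n+1) + (n+1)m(d-1)$, which is exactly $k$ by~\eqref{eq:lb-on-cvc}. Vertex cover: running through the edge types, the pendant edges $\{w_{r,j,p},w'_{r,j,p}\}$, the edges inside $\rootpath$, and all edges from $\rootpath$ to the paths and to the cliques are covered since $V(\rootpath)\subseteq S$; the path edges are covered as argued above; the intra-clique edges are covered because at most one vertex of each clique is missing from $S$; and the clique-to-path edges are covered by the satisfying-literal argument. Connectivity: $G[\rootpath]$ is a connected path contained in $S$, and every other vertex of $S$ — each chosen path vertex $u^i_{r,j,h}$, which is adjacent to $w_{r,j,2i-1}$ or $w_{r,j,2i}$, and each chosen clique vertex $q^z_{r,j}$, which is adjacent to $w_{r,j,2n+z}$ — has a neighbour in $V(\rootpath)$, so $G[S]$ is connected.

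Essentially all of this is mechanical; the only place that uses satisfiability is the choice of which clique vertex $q^z_{r,j}$ to drop and its synchronisation with the path selection, so that single observation is the conceptual core of the argument. (The phrase ``spanning all the vertices from $T$'' in the lemma statement is a leftover from the Steiner tree lemma; the \Pcvc{} instance carries no terminal set, so there is nothing further to check.)
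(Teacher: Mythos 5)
Your construction and verification match the paper's proof essentially line for line: the same choice of $S$ (all of $V(\rootpath)$, the $\pi(v_i)$-side of every pair on each $P_i$, and all of $Q_{r,j}$ except the clique vertex whose literal is made true), the same size count, the same connectivity-via-rootpath argument, and the same satisfiability-driven case analysis for the clique-to-path edges. Your aside about ``spanning all the vertices from $T$'' being a leftover from the Steiner tree reduction is also accurate; it plays no role in the statement being proved.
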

\begin{proof}
    Let $\pi \colon \{v_1, \dots, v_n\} \to \{0, 1\}$ be an assignment satisfying $I$. We construct
    a set $S \subseteq V$ as follows. For each $i$, we add all vertices $u_{r,j,\pi(v_i)}^i$ for all
    $r$ and $j$ to $S$. Since $\pi$ is a satisfying assignment, every clause contains a true
    literal, i.e., for every $j$, there exists an index $z^* \in [d]$ such that $\pi(b_{j, z^*}) =
    1$ holds. Then for all $r$ and all $z \neq z^* \in [d]$, we add the vertex $q_{r,j}^{z}$ to $S$.
    We also add all vertices in $V(\rootpath)$ to $S$. It is easy to verify that $|S| = k$ holds: we
    have taken $m(n+1)$ vertices from each path $P_i$, $d-1$ vertices from each clique $Q_{r,j}$,
    and all $m(n+1)(2n + d)$ vertices of $\rootpath$.
    
    Now we prove that $S$ is a connected vertex cover of $G$. Since all vertices of the path
    $\rootpath$ are in $S$, and every other vertex in the graph is adjacent to a vertex in
    $\rootpath$, the subgraph $G[S]$ is connected. So it remains to prove that $S$ is a vertex
    cover of $G$. Again, since all vertices of $\rootpath$ are in $S$, all edges incident to
    $V(\rootpath)$ are covered. Further, from each path $P_i$ we have picked every second vertex,
    hence the edges of $P_1, \dots, P_n$ are covered as well. From each clique $Q_{r,j}$, we have
    picked all but one vertex. Therefore, for any fixed $r,j$, all edges incident to vertices in $Q_{r,j}$ are covered by the
    vertices in $Q_{r, j}$, except for the edge between $q = q_{r,j}^{z^*}$ and a vertex $u = u_{r,
    j, x}^i$, where $v_i = v_{j, z^*}$ and $x \in \{0, 1\}$ depending on whether $b_{j, z^*}$ is a
    positive or a negative literal. Recall that $\pi$ satisfies $b_{j, z^*}$ by the choice of $z^*$.
    First assume that the vertex $q_{r,j}^{z^*}$ corresponds to a negative literal $b_{j, z^*}$ of a
    variable $v_i$. Then, it holds that $x = 0$ (by construction) and by the choice of $z^*$, we
    have $\pi(v_i) = 0$. So by construction, it holds that $u_{r,j,0}^{i} \in S$ and hence, the edge
    $\{q, u\}$ is covered. If $q_{r,j}^{z^*}$ corresponds to the positive literal $b_{j, z^*}$ of a
    variable $v_i$, then $x = 1$, and $\pi(v_i) = 1$. Hence, $u_{r,j,1}^i \in S$ and hence, $\{q,
    u\}$ is again covered. So the set $S$ is indeed a connected vertex cover of $G$ of size $k$.
\end{proof}

\begin{lemma}
    If $G$ has a \cvc{} of size $k$, then $I$ is satisfiable.
\end{lemma}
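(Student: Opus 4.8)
The plan is to reverse the budget computation to force the structure of any size-$k$ \cvc{}, and then extract a satisfying assignment from a segment that is not disrupted by the few places where a path may change its selected vertex. Let $S$ be a \cvc{} of $G$ of size at most $k$. First I would pin down $S$: since $S$ is connected and each pendant vertex $w'_{r,j,p}\in\rootpath'$ has degree one, its neighbour $w_{r,j,p}$ lies in $S$, so $V(\rootpath)\subseteq S$; each pair $\{u^i_{r,j,0},u^i_{r,j,1}\}$ is an edge, so $S$ contains at least one of its vertices; and each clique $Q_{r,j}$ needs at least $d-1$ of its vertices in $S$. As $V(\rootpath)$, the paths $P_i$, and the cliques $Q_{r,j}$ are pairwise disjoint, this yields $|S|\ge m(n+1)(2n+d)+m(n+1)n+m(n+1)(d-1)=k$, hence $|S|=k$ and equality holds everywhere: $S$ picks \emph{exactly one} vertex from each pair of each path $P_i$, \emph{exactly one} vertex of each clique $Q_{r,j}$ lies outside $S$, and $S$ contains no vertex of $\rootpath'$.

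Next I would analyse the path selections. Call a pair $\{u^i_{r,j,0},u^i_{r,j,1}\}$ in \emph{state $R$} if $u^i_{r,j,1}\in S$ and in \emph{state $L$} otherwise. For the between-pair edge $\{u^i_{r,j,1},u^i_{r,j',0}\}$ (with $(r,j')$ the lexicographic successor of $(r,j)$) to be covered while exactly one vertex per pair is chosen, the configuration ``$L$ immediately followed by $R$'' is impossible; hence, reading the $m(n+1)$ pairs of $P_i$ in order, the states form a block of $R$'s followed by a block of $L$'s, so $P_i$ switches state at most once. Grouping the pairs into the $n+1$ segments (segment $r$ consisting of the pairs $(r,1),\dots,(r,m)$ on every path), each path is non-constant on at most one segment, so the number of (path, segment) pairs on which the path is non-constant is at most $n<n+1$. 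By pigeonhole there is an index $r^*\in[n+1]$ on whose segment every path $P_i$ is state-constant. Define $\pi(v_i)=1$ if $P_i$ is in state $R$ on segment $r^*$ and $\pi(v_i)=0$ otherwise.

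Finally I would verify that $\pi$ satisfies every clause $C_j$. Fix $j$ and consider the clique $Q_{r^*,j}$; by the structure forced above, exactly one vertex $q^{z^*}_{r^*,j}$ is missing from $S$ for some $z^*\in[d]$. Its clique edges are covered by the other $d-1$ clique vertices and its edge to $\rootpath$ by the $\rootpath$-endpoint, so the only way its remaining incident edge $\{q^{z^*}_{r^*,j},u\}$ to its unique path-neighbour $u$ is covered is $u\in S$. By construction $u=u^i_{r^*,j,x}$ where $v_i=v_{j,z^*}$ and $x=1$ exactly when $b_{j,z^*}$ is the positive literal of $v_i$. Since $P_i$ is state-constant on segment $r^*$ and $u^i_{r^*,j,x}\in S$, this segment is in state $R$ if $x=1$ and in state $L$ if $x=0$; in both cases $b_{j,z^*}$ evaluates to true under $\pi$, so $C_j$ is satisfied. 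As $j$ was arbitrary, $\pi$ satisfies $I$.

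The only genuinely delicate point is the monotonicity ``$L$-then-$R$ is forbidden'' for path states together with the pigeonhole bookkeeping: one must check that a state-switch occurring exactly at a segment boundary leaves both adjacent segments constant, so that the bound of $n$ non-constant (path, segment) incidences is correct and $n+1$ segments suffice to find $r^*$. Everything else is the routine reverse of the budget count; connectivity of $S$ is used only to force $V(\rootpath)\subseteq S$.
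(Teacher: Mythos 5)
Your proof is correct and follows essentially the same approach as the paper: force the exact structure of $S$ from the tight budget, use the between-pair edges to show each path's selection pattern is monotone (at most one switch), apply pigeonhole over the $n+1$ segments to find an $r^*$ where every path is state-constant, and read off the assignment there and verify via the unique unpicked clique vertex. The boundary subtlety you flag is indeed harmless, since a switch exactly at a segment boundary leaves both adjacent segments constant, so the count of at most $n$ non-constant (path, segment) incidences holds.
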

\begin{proof}
    Let $S$ be a connected vertex cover of $G$ of size $k$. We will construct an assignment $\pi$
    satisfying $I$. Since the size of $S$ is equal to the lower bound on the size of a connected
    vertex cover of $G$ provided in \eqref{eq:lb-on-cvc}, $S$ must match this bound exactly for each
    part of $G$. In particular, it holds that $V(\rootpath)\subseteq S$, and $V(\rootpath') \cap S = \emptyset$. 
    Then from every 
    path $P_i$ exactly $m(n+1)$ vertices, i.e.\ the half of the vertices of the path, belong to
    $S$. And finally, for every $r$ and $j$, exactly $d-1$ vertices of the clique $Q_{r,j}$ belong to $S$.
    Note that for an arbitrary but fixed $i$, the set of pairs $\{u_{r,j,0}^i, u_{r,j,1}^i\}$ (for all $r$ and
    $j$) forms a matching of size $m(n+1)$ and all of these edges belong to the path $P_i$. Hence,
    from each of these pairs exactly one vertex is taken. Let $e_1, \dots, e_{(n+1) \cdot m}$ be the
    order in which these edges occur in $P_i$.
    
    The crucial observation is that if there would exist an index $z \in \bigl[(n+1)m - 1\bigr]$
    such that from $e_z$ the first vertex of the pair 
    and from $e_{z+1}$, the second
    vertex belongs to $S$, then the edge $e'$ that lies between $e_z$ and $e_{z+1}$ on $P_i$ would
    be not covered by $S$, which contradicts the assumption that $S$ is a vertex cover. So
    if for some $z \in \bigl[(n+1)m\bigr]$, the first vertex of $e_z$ belongs to $S$, then this also
    holds for every $z' \in \bigl[(n+1)m\bigr]$ such that $z \leq z'$.    
    Informally speaking, on each path $P_i$, there might happen at most one
    ``parity change''. Since there $n$ such paths $P_1, \dots, P_n$, at most $n$ parity changes
    could happen in total. Therefore, there exists an index $r^* \in [n+1]$ such that no parity
    change occurs in the $r^*$th ``interval'', i.e., for every $i \in [n]$, there exists an index
    $q(i) \in \{0, 1\}$ such that 
    \[
    	S \cap \bigl\{u_{r^*,j,x}^i \bigm\vert j\in[m],x\in \{0, 1\}\bigr\} = \bigl\{u_{r^*,j,q(i)}^i \bigm\vert j\in[m]\bigr\}.
    \]
    Then we define an assignment $\pi$ by setting  $\pi(v_i) = q(i)$ for each $i$.

    Now we show that $\pi$ satisfies $I$. Let $j$ be arbitrary but fixed. By the choice of $k$ (see
    \eqref{eq:lb-on-cvc}), we know that $|S \cap Q_{r^*, j}| = d - 1$ holds. Therefore, there exists
    an index $z^* \in [d]$ such that $q_{r^*, j}^{z^*} \notin S$. First, assume that $q_{r^*,
    j}^{z^*}$ corresponds to a negative literal $b_{j, z^*}$ of $v_i = v_{j, z^*}$.     
    Then there exists an edge $\{q_{r^*, j}^{z^*}, u_{r^*, j, 0}^i\}$ and it is covered by $S$ (recall that $S$ is a vertex cover). Due
    to $q_{r^*, j}^{z^*} \notin S$, we have $u_{r^*, j, 0}^{i} \in S$. By the construction of $\pi$,
    it holds that $\pi(v_{i}) = 0$ so $C_j$ is satisfied by $\pi$. 
    Now let $q_{r^*, j}^{z^*}$ correspond to a positive literal $b_{j, z^*}$ of $v_i = v_{j, z^*}$.  
    Then there exists an edge $\{q_{r^*, j}^{z^*}, u_{r^*, j, 1}^i\}$ and it is covered by $S$ (recall that $S$ is a vertex cover). Due
    to $q_{r^*, j}^{z^*} \notin S$, we have $u_{r^*, j, 1}^{i} \in S$. By the construction of $\pi$,
    it holds that $\pi(v_{i}) = 1$ so $C_j$ is satisfied by $\pi$. 
\end{proof}

\begin{proof}[Proof. (\cref{theo:cvc})]
	Suppose such $\varepsilon > 0$ exists, and let $\mathcal{A}$ be an algorithm that solves the
    \Pcvc{} problem in time $\ostar((2-\varepsilon)^{\ctw})$. Let $d$ be an arbitrary but fixed
    positive integer. Let $I$ be an instance of $d$-\textsc{SAT}. We build an instance $(G, k)$ of
    the \Pcvc{} problem equivalent to $I$ along with a linear arrangement of $G$ of cutwidth at most
    $n + \mathcal{O}(1)$ using the construction above in polynomial time. 
    After that, we run $\mathcal{A}$ on $(G, k)$ and output its answer. The
    correctness of the algorithm follows from the equivalence of $I$ and $(G, k)$. The described
    process runs in time $\ostar((2-\varepsilon)^{n + \mathcal{O}(1)}) = \ostar((2-\varepsilon)^n)$. So for every
    $d \in \NN$, we get an algorithm solving $d$-\textsc{SAT} in $\ostar((2-\varepsilon)^n)$
    contradicting SETH.
\end{proof}

\subsection{Odd Cycle Transversal}\label{app:oct-lb}

In this section, we show that the \Poct{} problem cannot be solved in time $\ostar\bigl((2 - \varepsilon)^{\ctw}\bigr)$ for any positive value $\varepsilon$. This proves that the $\ostar(2^{\ctw})$ algorithm from \cref{app:oct} is essentially tight.

\begin{quote}
	\Poct{}
	
	\textbf{Input}: A graph $G = (V, E)$ and an integer $k$.
	
	\textbf{Question}: Is there a subset $S \subseteq V$ of cardinality at most $k$ such that $G - S$ is bipartite.
\end{quote}

An instance of $d$-\textsc{NAE}-\textsc{SAT} is a pair $(U, C)$ where $U$ is a set of Boolean variables and $C$ is a set of clauses (each consists of at most $d$ literals). The question is whether there exists an assignment such that each clause contains a true as well as a false literal. 
We say that such an assignment \emph{NAE-satisfies} $(U, C)$.
Cygan et al.\ \cite{CyganDLMNOPSW16} have proven that the following equivalent formulation of SETH.

\begin{theorem}[\cite{CyganDLMNOPSW16}] \label{thm:hardness-of-nae-sat}
	The SETH is equivalent to the following statement: For every $\delta > 0$, there exists a constant $d \in \mathbb{N}$ such that $d$-\textsc{NAE}-\textsc{SAT} cannot be solved in time $\mathcal{O}\bigl((2 - \delta)^n\bigr)$ where $n$ denotes the number of variables. 
\end{theorem}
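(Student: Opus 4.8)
This reformulation is due to Cygan et al.~\cite{CyganDLMNOPSW16}, so the plan is really to recall how their argument goes. The whole equivalence rests on two elementary polynomial-time reductions between \Psat\ and $d$-\textsc{NAE}-\textsc{SAT}, each of which changes the number of variables by at most one and the clause width by at most one; substituting these into the definitions of SETH and of the target statement yields both implications.

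For the direction ``target statement $\Rightarrow$ SETH'', I would first observe that $d$-\textsc{NAE}-\textsc{SAT} is no harder than $d$-\Psat\ on the same variable set: an \textsc{NAE}-clause $(\ell_1 \vee \dots \vee \ell_k)$ is \textsc{NAE}-satisfied by an assignment exactly when that assignment satisfies both ordinary clauses $(\ell_1 \vee \dots \vee \ell_k)$ and $(\overline{\ell_1} \vee \dots \vee \overline{\ell_k})$. Hence, if SETH failed --- say some $\delta>0$ allowed solving $d$-\Psat\ in time $\mathcal{O}((2-\delta)^n)$ for \emph{every} $d$ --- then applying this transformation clause by clause would solve $d$-\textsc{NAE}-\textsc{SAT} in the same time for every $d$, contradicting the target statement.

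For the direction ``SETH $\Rightarrow$ target statement'', given $\delta>0$, SETH provides a width $d$ for which $d$-\Psat\ has no $\mathcal{O}((2-\delta)^n)$ algorithm. I would reduce $d$-\Psat\ to $(d+1)$-\textsc{NAE}-\textsc{SAT} by introducing one fresh variable $z$ and replacing each input clause $(\ell_1 \vee \dots \vee \ell_k)$ by the \textsc{NAE}-clause $(\ell_1 \vee \dots \vee \ell_k \vee z)$. The correctness statement to verify is that the original formula is satisfiable if and only if the new instance is \textsc{NAE}-satisfiable: a satisfying assignment extends by setting $z$ to false (each new clause then contains the false literal $z$ together with a true literal of the original clause); conversely, using that \textsc{NAE}-satisfaction is invariant under complementing all variables simultaneously, one may assume $z$ is false in an \textsc{NAE}-satisfying assignment and read off a satisfying assignment of the original formula. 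As the new instance has $n+1$ variables, an $\mathcal{O}((2-\delta)^m)$ algorithm for $(d+1)$-\textsc{NAE}-\textsc{SAT} (with $m$ the variable count) would yield an $\mathcal{O}((2-\delta)^{n+1}) = \mathcal{O}((2-\delta)^n)$ algorithm for $d$-\Psat, a contradiction; so $d' = d+1$ witnesses the target statement for this $\delta$.

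I do not anticipate a real obstacle. The only point that needs attention is the correctness of the second reduction --- precisely the ``assume $z$ is false'' step, which uses the global-complement symmetry of \textsc{NAE}-satisfaction and the fact that every input clause is non-empty, so the augmented \textsc{NAE}-clause has at least two literals. The remaining work is routine exponent bookkeeping: absorbing the constant factor $(2-\delta)$ and the additive constants in variable count and clause width.
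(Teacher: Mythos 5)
The paper states this result as a citation to Cygan et al.~\cite{CyganDLMNOPSW16} and provides no proof of its own, so there is no in-paper argument against which to compare yours. Your two reductions are the standard ones and both directions of your sketch are correct: replacing an \textsc{NAE}-clause by the ordinary clause together with its literal-wise negation preserves the variable set, so a fast algorithm for every clause width for \Psat{} would yield one for \textsc{NAE-SAT}; and adding a single fresh variable $z$ to every clause, then appealing to the global complementation symmetry of \textsc{NAE}-satisfaction to fix $z$ to false, reduces $d$-\Psat{} to $(d+1)$-\textsc{NAE-SAT} with only one extra variable. You correctly flag the one point that could go wrong (the need for the augmented clauses to have at least two literals, so the complement trick is meaningful). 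The only bookkeeping worth making fully explicit, which you only allude to, is the mismatch between the $\ostar$ in the paper's SETH statement and the plain $\mathcal{O}$ in the \textsc{NAE-SAT} statement: polynomial overhead in $n$ can always be absorbed by replacing $\delta$ with any slightly smaller $\delta' < \delta$, since $\mathrm{poly}(n) \cdot (2-\delta)^n \in \mathcal{O}\bigl((2-\delta')^n\bigr)$, and this quantifier-juggling is what makes both implications survive the constant-factor and single-variable loss of the reductions.
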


\begin{theorem}
	Assuming SETH, \Poct{} cannot be solved in time $\ostar((2-\varepsilon)^{\ctw})$ for any positive real $\varepsilon$, even if a linear arrangement of the input graph of width at most $\ctw$ is given with the input.
\end{theorem}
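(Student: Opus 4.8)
The plan is to reduce $d$-\textsc{NAE}-\textsc{SAT} to \Poct{} in the same spirit as our lower bound for \Pcvc{} in \cref{app:cvc-lb}, exploiting that the dynamic program for \Poct{} parameterized by cutwidth (after the edge-subdivision of \cref{app:oct}) carries only two meaningful states per cut-edge, namely the two colour classes of a proper $2$-colouring of $G-S$. Thus $d$-\textsc{NAE}-\textsc{SAT} — which, unlike $d$-\textsc{SAT}, has no ``third option'' analogous to placing a vertex into the deletion set — is the right starting point, and we invoke the formulation of SETH given in \cref{thm:hardness-of-nae-sat}. Given an instance $(U,C)$ with $n$ variables $v_1,\dots,v_n$ and $m$ clauses, each with exactly $d$ literals, I would construct in polynomial time a graph $G$, an integer $k$, and a linear arrangement $\ell$ of $G$ of cutwidth $n+\mathcal O(1)$ such that $G$ has an odd cycle transversal of size at most $k$ if and only if $(U,C)$ is NAE-satisfiable.

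For each variable $v_i$ I would add a long path $P_i$ threading through $m(n+1)$ \emph{slots}, organised into $n+1$ consecutive \emph{segments} of $m$ \emph{columns} each; the intended meaning is that in a proper $2$-colouring of $G-S$ the colour of $P_i$ at a slot encodes the truth value of $v_i$. For each column index $j$ and each segment index $r$ I would attach a clause gadget $Z_{r,j}$, of constant size, to the slot $(r,j)$ vertices of the (at most $d$) paths carrying the literals of $C_j$; the gadget would consist of a small number of odd cycles designed so that making $Z_{r,j}$ bipartite requires at least one deletion, and a single deletion suffices precisely when the incident literal-colours are \emph{not all equal}, i.e.\ when $C_j$ is NAE-satisfied by the colouring. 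The budget $k$ is then fixed to the exact lower bound on the size of any odd cycle transversal, obtained by summing these mandatory deletions (one per clause gadget) over all $Z_{r,j}$ together with any mandatory deletions of auxiliary guard/anchor gadgets — exactly the ``packing'' accounting style of \cref{append:st,app:cvc-lb}. Because a deleted vertex on $P_i$ flips the parity of the remaining colouring at most once and the tight budget leaves essentially no deletions for the variable paths, the usual parity-change argument (as in the \Pcvc{} proof) shows that some segment $r^\ast$ is parity-change-free, so its slot colours define a consistent assignment $\pi$; tightness of $S$ against the packing bound then forces $\pi$ to NAE-satisfy every clause, and conversely an NAE-satisfying assignment is translated into a size-$k$ transversal by colouring the paths accordingly and choosing the one ``cheap'' deletion inside each clause gadget.

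For the cutwidth bound I would order the vertices ``segment by segment, column by column'' exactly as in \cref{lem:cvc-ctw}: inside a column list the slot vertices of $P_1,\dots,P_n$ in order of $i$, then the $\mathcal O_d(1)$ vertices of that column's clause gadget, and finally insert each auxiliary anchor vertex directly next to the vertex it is attached to. Then any cut meets at most $n$ path edges of $P_1,\dots,P_n$ (each consecutive pair of slots on a path contributes one non-overlapping edge) plus $\mathcal O_d(1)$ further edges from the clause gadget and anchors of the current column, since edges belonging to different columns do not overlap; hence $\ctw(\ell)\le n+\mathcal O(1)$. Feeding a hypothetical $\ostar\bigl((2-\varepsilon)^{\ctw}\bigr)$ algorithm for \Poct{} into this reduction would solve $d$-\textsc{NAE}-\textsc{SAT} in time $\ostar\bigl((2-\varepsilon)^{n+\mathcal O(1)}\bigr)=\mathcal O\bigl((2-\delta)^n\bigr)$ for a fixed $\delta>0$, contradicting SETH by \cref{thm:hardness-of-nae-sat}.

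The main obstacle I expect is the design of the clause gadget so that it simultaneously (i) when the literal-colours are mixed, admits a proper $2$-colouring using only its one budgeted deletion for \emph{every} mixed pattern; (ii) when they are monochromatic, forces strictly more than one deletion no matter how the rest of $G$ is coloured; and (iii) has constant size and attaches to the variable paths without raising the cutwidth. Ruling out ``cheating'' transversals that delete path or slot vertices to sidestep a clause gadget — which is precisely why the $n+1$ repeated segments and the tight budget are needed — and verifying both directions of correctness against the packing bound is where I expect most of the work to lie.
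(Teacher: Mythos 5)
Your choice of $d$-\textsc{NAE-SAT} as the source problem is correct and is exactly what the paper does, and the column-wise linear arrangement does give cutwidth $n+\mathcal O(1)$; but the plan has a genuine gap exactly where you flag one, and the gap dissolves once you see that the clause gadget can simply be a clique. Concretely: for clause $C_j$ introduce a clique $Q_j$ on $|c_j|$ vertices, and attach $q_j^t$ by a pendant edge to $u_{i,j}^1$ or $u_{i,j}^2$ according to the sign of the $t$-th literal. Any OCT must delete at least $|c_j|-2$ vertices of $Q_j$ (any three of them form a triangle), so setting $k=\sum_j(|c_j|-2)$ forces the OCT to live entirely inside the cliques and leaves precisely two adjacent vertices of each $Q_j$. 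In a proper $2$-colouring of $G-S$, those two adjacent survivors must get opposite colours, hence their pendant path-neighbours must get opposite colours, hence the two corresponding literals are assigned opposite truth values: this is exactly NAE-satisfaction. If all literals of $C_j$ were equal, every pendant path-neighbour would have the same colour, forcing both surviving clique vertices to take the other colour — a conflict along the remaining clique edge. So the clique already satisfies all three of your requirements (i)–(iii) with no extra engineering.

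A second, less critical, point: because the budget leaves \emph{zero} deletions on the variable paths, there are no parity changes on any $P_i$, so the $n+1$ segment repetition and the parity-change/segment-selection argument you imported from \cref{app:cvc-lb} are unnecessary here. The paper's paths have length $2m$ (one pair per clause), not $2m(n+1)$, and no root-path or guard vertices are needed either — \Poct{} imposes no connectivity constraint, so the whole anchoring apparatus from the \Pcvc{} and \Pst{} reductions can be dropped. This keeps the construction and the packing argument substantially shorter without changing the cutwidth bound.
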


\begin{proof}
	For the sake of contradiction, suppose such $\varepsilon > 0$ and such algorithm exist. Let $0 < \delta < \varepsilon$ and let $d \in \mathbb{N}$ be a constant such that $d$-\textsc{NAE}-\textsc{SAT} cannot be solved in $\mathcal{O}((2 - \delta)^n)$ (exists by \cref{thm:hardness-of-nae-sat}). Let $\left(U, C\right)$ be an instance of $d$-\textsc{NAE}-\textsc{SAT}. Let $n = |U|$ and $m = |C|$. Further, let $U = \left\{v_1, \dots, v_n\right\}$, $C = \left\{c_1, \dots, c_m\right\}$, and for $j \in [m]$, let $c_j = \left\{l_j^1, l_j^2, \dots, l_j^{|c_j|}\right\}$ be the set of literals in $c_j$. 
	We may assume that every clause contains at least two literals: otherwise, the instance is trivially not satisfiable. 
	
	We construct an instance $(G, k)$ of \textsc{Odd Cycle Transversal} as follows. First, for every $i \in [n]$, we introduce a path 
	\[
		P_i = u_{i, 1}^1 u_{i, 1}^2 u_{i, 2}^1 u_{i, 2}^2 \dots u_{i, m}^1 u_{i, m}^2
	\]
	consisting of $2m$ new vertices. Later, we will choose the size of an odd cycle transversal in such a way that no vertex from these paths belongs to it. Since after the removal of an odd cycle transversal, the graph becomes properly 2-colorable, every path gets one of only two possible 2-colorings. The intuition behind it is that the color of the first vertex on $P_i$ determines the assignment of $v_i$.
	
\begin{figure}
	\center
	\includegraphics{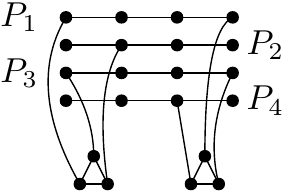}
	\caption{A graph constructed from an instance of $3$-\textsc{NAE}-\textsc{SAT} consisting of 4 variables and clauses $v_1 \lor \overline{v_2} \lor v_3$ and $\overline{v_1} \lor \overline{v_3} \lor v_4$.}
	\label{fig:oct-lb}
\end{figure}

	Further, for every $j \in [m]$, we introduce a set of new vertices $Q_j = \left\{q_j^1, \dots, q_j^{|c_j|}\right\}$ forming a clique. Note that every odd cycle transversal of the arising graph must contain at least $|c_j| - 2$ vertices of $Q_j$ since each three vertices of $Q_j$ form a triangle.

	Finally, for every $t \in \bigl[|c_j|\bigr]$, let $v_i$ be the variable appearing in the literal $l_j^t$. We add an edge $q_j^t r_j^t$ where
	\[
		r_j^t = 
		\begin{cases}
			u_{i, j}^1 & \mbox{if $l_j^t = v_i$,} \\
			u_{i, j}^2 & \mbox{if $l_j^t = \overline{v_i}$.}
		\end{cases}
		.
	\]
	Let $G$ denote the arising graph (see \cref{fig:oct-lb} for an example). Note that $G$ can be constructed in polynomial time. Finally, we define the size of the odd cycle transversal we are looking for as 
	\[
		k = \sum_{j = 1}^m \bigl(|c_j| - 2\bigr).
	\]
	
	\begin{lemma}
		If $(U, C)$ has an NAE-satisfying truth-value assignment, then $G$ admits an odd cycle transversal of size $k$.
	\end{lemma}
	
	\begin{proof}
		Let $\pi$ be an NAE-satisfying truth-value assignment of $(U, C)$. Then for every $j \in [m]$, there exist $1 \leq a_j, b_j \leq |c_j|$ such that 
		\[
			\pi\left(l_j^{a_j}\right) = 1, \pi(l_j^{b_j}) = 0. 
		\]
		We set 
		\[
			L_j = \Bigl\{q_j^{t} \Bigm\vert t \in \bigl[|c_j|\bigr] \setminus \{a_j, b_j\} \Bigr\}
		\]
		(note: $|L_j| = |c_j| - 2$) and 
		\[
			L = \bigcup_{j = 1}^{m} L_j
		\]
		(note: $|L| = k$). We now show that $L$ is an odd cycle transversal of $G$. For this purpose, we provide a proper 2-coloring $\beta$ of vertices of the graph $G - L$ with colors $0$ and $1$. First, for every $i \in [n]$:
		\begin{itemize}
			\item If $\pi(v_i) = 1$, for every $j \in [m]$, we set:
			\[
				\beta(u_{i, j}^1) = 1 \text{ and } \beta(u_{i, j}^2) = 0.
			\]
			\item If $\pi(v_i) = 0$, for every $j \in [m]$, we set:
			\[
				\beta(u_{i, j}^1) = 0 \text{ and } \beta(u_{i, j}^2) = 1.
			\]
		\end{itemize}
		In particular, this ensures that there are no conflicts along the edges of $P_1 , \dots, P_n$.
		Further, for every $j \in [m]$, we set 
		\[
			\beta(q_j^{a_j}) = 0 \text{ and } \beta(q_j^{b_j}) = 1.
		\]
		Note that the edges of $G\left[\bigcup\limits_{j=1}^{m} Q_j - L\right]$ induce a matching so there are no conflicts along these edges in $\beta$.
		Finally, observe that by construction, since $\pi(l_j^{a_j}) = 1$, we have $\beta(r_j^{a_j}) = 1$ so there is no conflict along the edge $q_j^{a_j} r_j^{a_j}$. Similarly, there is no conflict along the edge $q_j^{b_j} r_j^{b_j}$. Altogether, $\beta$ is a proper 2-coloring of $G-L$ and hence, $L$ is an odd cycle transversal of $G$ of size $k$.
	\end{proof}
	
	\begin{lemma}
		If $G$ admits an odd cycle transversal of size $k$, then $(U, C)$ has an NAE-satisfying assignment.
	\end{lemma}
	
	\begin{proof}
		Let $L$ be an odd cycle transversal of $G$ of size $k = \sum\limits_{j = 1}^m \bigl(|c_j| - 2\bigr)$ and let $\beta$ be a proper 2-coloring of vertices of the graph $G - L$ with colors $0$ and $1$.
		We define a truth-value assignment $\pi$ as
		\[
			\pi(v_i) = 
			\begin{cases}
				1 & \mbox{if $\beta(u_{i,1}^1) = 1$,} \\
				0 & \mbox{if $\beta(u_{i,1}^1) = 0$,} \\
			\end{cases}
		\]
		for every $i \in [n]$. We prove that $\pi$ NAE-satisfies $(U, C)$. 
		
		For every $j \in [m]$, let $L_j = L \cap Q_j$. Recall that the graph $G[Q_j]$ is a clique of size $|c_j| \geq 2$. Therefore, $|L_j| \geq c_j - 2$ since otherwise $G[Q_j - L]$ would contain a triangle. Then by the choice of $k$, for every $j \in [m]$, we have $|L_j| = c_j - 2$. In particular, for every $i \in [n]$, it holds that
		\[
			L \cap V(P_i) = \emptyset.
		\]
		Since $P_i$ is a path, there are only two ways how it can be colored in $\beta$:
		\begin{itemize}
			\item Either for all $j \in [m]$, we have $\beta(u_{i, j}^1) = 1$ and $\beta(u_{i, j}^2) = 0$. This holds exactly if $\pi(v_i) = 1$.
			\item Or for all $j \in [m]$, we have $\beta(u_{i, j}^1) = 0$ and $\beta(u_{i, j}^2) = 1$. This holds exactly if $\pi(v_i) = 0$.
		\end{itemize}
		
		For $j \in [m]$, it holds that $|Q_j - L| = 2$ and there is an edge between the two vertices of this set. Thereby, these vertices have different colors in $\beta$. So let $a_j, b_j \in \bigl[|c_j|\bigr]$ be such that $Q_j - L = \{q_j^{a_j}, q_j^{b_j}\}$ and $\beta(q_j^{a_j}) = 0, \beta(q_j^{b_j}) = 1$. 
		Since $\beta$ is proper, we have $\beta(r_j^{a_j}) = 1$ and $\beta(r_j^{b_j}) = 0$. Hence, by construction, $\pi(l_j^{a_j}) = 1$ and $\pi(l_j^{b_j}) = 0$ and $\pi$ NAE-satisfies the clause $c_j$. Since $j \in [m]$ is arbitrary, $\pi$ NAE-satisfies the clause $(U, C)$.
	\end{proof}

	\begin{lemma}
		The cutwidth of the graph $G$ is at most $n + \mathcal{O}(1)$.
	\end{lemma}
	
	\begin{proof}
		For $j \in [m]$, let the \emph{block} $B_j$ be defined as
		\[
			B_j = \left(\bigcup_{i = 1}^{n} \{u_{i,j}^1, u_{i,j}^2\}\right) \cup \left(\bigcup_{t = 1}^{|c_j|} \left\{q_j^t\right\}\right).
		\]
		Note that $B_1, B_2, \dots, B_m$ is a partition of $V(G)$.
		
		Consider a linear arrangement of $G$ such that for every $j \in [m-1]$, every vertex of $B_j$ occurs before every vertex of $B_{j+1}$ and the vertices of $B_j$ are ordered as:
		\[
			u_{1,j}^1, u_{2,j}^1, \dots, u_{n,j}^1, 
			q_j^1, q_j^2, \dots, q_j^{|c_j|},
			u_{1,j}^2, u_{2,j}^2, \dots, u_{n,j}^2.
		\]
		We show that this linear arrangement has the desired cutwidth.
		
		Observe that the end-vertices of every edge either belong to the same block or to two consecutive blocks. Now consider a cut before a vertex $v \in B_j$ for some $j \in [m]$. 
		We partition the edges of this cut into four types:
		\begin{enumerate}
			\item \label{item:type-1} $u_{i,j}^1 u_{i,j}^2$ for some $i \in [n]$,
			\item \label{item:type-2} $u_{i,j-1}^2 u_{i,j}^1$ for some $i \in [n]$ (exist only if $j \geq 2$),
			\item \label{item:type-3} $u_{i,j}^2 u_{i,j+1}^1$ for some $i \in [n]$ (exist only if $j \leq m-1$),
			\item \label{item:type-4} $q_j^t u_{i,j}^g$ for some $i \in [n]$, $t \in \bigl[|c_j|\bigr]$, $g \in [2]$,
			\item \label{item:type-5} and $q_j^{t_1} q_j^{t_2}$ for some $j \in [m]$, $t_1, t_2 \in \bigl[|c_j|\bigr]$.
		\end{enumerate}
		Observe that every cut contains at most ${\binom d 2} \in \mathcal{O}(1)$ edges of type \ref{item:type-5} and at most $d \in \mathcal{O}(1)$ edges of type \ref{item:type-4}. Now we make a case distinction to bound the number of edges of the first three types in the cut.	
		\begin{itemize}
			\item $u = u_{i,j}^1$ for some $i \in [n]$. Then the cut contains
			\begin{enumerate}
				\item $i - 1$ edges of type \ref{item:type-1},
				\item $n - (i - 1)$ edges of type \ref{item:type-2} if $j \geq 2$ and no such edges otherwise,
				\item and $0$ edges of type \ref{item:type-3}.
			\end{enumerate}
			
			\item $u = q_{j}^t$ for some $t \in \bigl[|c_j|\bigr]$. Then the cut contains
			\begin{enumerate}
				\item $n$ edges of type \ref{item:type-1},
				\item $0$ edges of type \ref{item:type-2},
				\item and $0$ edges of type \ref{item:type-3}.
			\end{enumerate}
			
			\item $u = u_{i,j}^2$ for some $i \in [n]$. Then the cut contains
			\begin{enumerate}
				\item $n - (i - 1)$ edges of type \ref{item:type-1},
				\item $0$ edges of type \ref{item:type-2},
				\item and $i - 1$ edges of type \ref{item:type-3} if $j \leq m-1$ and no such edges otherwise.
			\end{enumerate}
		\end{itemize}
		Thus, in every case, the cut contains at most $n$ edges of types \ref{item:type-1}, \ref{item:type-2}, and \ref{item:type-3}. So the graph $G$ indeed has cutwidth at most $n + \mathcal{O}(1)$.
	\end{proof}
	Altogether, the  instance $(G, k)$ of \textsc{Odd Cycle Transversal} can be constructed in polynomial time, it is equivalent to the instance $(U, C)$ of $d$-\textsc{NAE-SAT}, and $G$ has cutwidth of at most $n + \mathcal{O}(1)$. 
	Recall that $0 < \delta < \varepsilon$ and $d \in \mathbb{N}$ is a constant such that $d$-\textsc{NAE}-\textsc{SAT} cannot be solved in $\mathcal{O}((2 - \delta)^n)$.
	Hence, if \textsc{Odd Cycle Transversal} could be solved in $\ostar\bigl((2-\varepsilon)^{\ctw}\bigr)$, i.e., in $\mathcal{O}\bigl((2 - \varepsilon)^{\ctw}|V|^{\mathcal{O}(1)}\bigr)$, then $d$-\textsc{NAE-SAT} could be solved in $\mathcal{O}(2 - \delta)^n$ as follows.
	Given an instance $(U, C)$ of $d$-\textsc{NAE-SAT}, we first construct an equivalent instance $\bigl(G = (V, E), k\bigr)$ as described before in polynomial time and then solve it in $\mathcal{O}\bigl((2 - \varepsilon)^{n + \mathcal{O}(1)}|V|^{\mathcal{O}(1)}\bigr)$. Since $\delta < \varepsilon$ holds, this way we solve $d$-\textsc{NAE-SAT} in $\mathcal{O}\bigl((2 - \delta)^{n}\bigr)$ -- a contradiction to the choice of $d$. 
\end{proof} 

\subsection{Feedback Vertex Set}\label{app:fvs-lb}
Now we prove that assuming SETH, \Pfvs{} cannot be solved in time  $\ostar((2 - \varepsilon)^{\ctw})$ for any positive value $\varepsilon$. This proves the tightness of our $\ostar(2^{\ctw})$ upper bound from \cref{app:fvs-ub}. 
We achieve this with a very similar reduction to the one provided for the \Pcvc{} problem in \cref{app:cvc-lb}. In contrast to \pcvc{}, in \pfvs{} we seek to ``minimize the connectivity'' of the rest of the graph instead of ``maximizing the connectivity'' of the solution.
This imposes different kind of restrictions in the constructions. 
For example, instead of adding private neighbors to the root-path and forcing its vertices to belong to the solution, we seek to forbid solutions containing vertices on the root-path. 
Altogether, we get a construction that follows the same general scheme as for \Pcvc{} but with some problem-specific modifications.

\begin{quote}
	\Pfvs

	\textbf{Input}: An undirected graph $G = (V, E)$ and an integer $k$.

	\textbf{Question}: Is there a set $Y \subseteq V$ of cardinality at most $k$ such that $G - Y$ is a forest.

\end{quote}

\begin{theorem}\label{theo:fvs-lb}
	Assuming SETH, there is no algorithm that solves the \Pfvs{} problem in time $\ostar((2-\varepsilon)^{\ctw})$ for any positive real $\varepsilon$, even when a linear arrangement $\ell$ of $G$ of this width is given.
\end{theorem}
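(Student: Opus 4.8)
The plan is to reuse, essentially verbatim at the level of the global architecture, the reduction from $d$-\textsc{SAT} that \cref{app:cvc-lb} uses for \Pcvc{}, but to replace the vertex-cover gadgets by gadgets that force an \emph{optimal feedback vertex set}. From a $d$-\textsc{SAT} instance $I$ with $n$ variables $v_1,\dots,v_n$ and $m$ clauses $C_1,\dots,C_m$ I will construct in polynomial time an equivalent instance $(G,k)$ of \Pfvs{} together with a linear arrangement $\ell$ of $G$ of cutwidth $n+\mathcal{O}(1)$. Then a hypothetical algorithm running in time $\ostar((2-\varepsilon)^{\ctw})$ would, composed with this reduction, solve $d$-\textsc{SAT} in time $\ostar((2-\varepsilon)^{n+\mathcal{O}(1)})=\ostar((2-\varepsilon)^n)$ for every fixed $d$, contradicting SETH — exactly the concluding argument used in the proof of \cref{theo:cvc}.

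\textbf{Construction.} For each variable $v_i$ I build a long ``variable path'' $P_i$ subdivided into $m(n+1)$ consecutive \emph{pair}-gadgets indexed by $(r,j)$ with $r\in[n+1]$, $j\in[m]$; unlike in \Pcvc{}, each pair-gadget carries a small cycle (realised by one or two extra vertices, e.g.\ a triangle on the two ``pair'' vertices and a common neighbour), so that any feedback vertex set must delete at least one vertex from each pair-gadget, and the pair-gadgets are chained so that a deletion pattern that is not ``monotone'' along $P_i$ (i.e.\ changes parity more than once) leaves some cycle of $G$ intact. For each pair $(r,j)$ I add a clause clique $Q_{r,j}$ on $d$ vertices — contributing triangles, hence forcing at least $d-1$ deletions — attached to the pair-gadgets of the paths corresponding to the variables in $C_j$, using the same polarity-sensitive incidences as in \cref{app:cvc-lb}. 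Finally I add a root-path $\rootpath$ whose role is to \emph{close} the relevant cycles through the cliques and pair-gadgets; but in contrast to \Pcvc{}, where private degree-$1$ neighbours force $V(\rootpath)\subseteq$ solution, here I keep $\rootpath$ acyclic on its own and pick $k$ as the \emph{exact} sum of a packing of pairwise vertex-disjoint cyclic substructures (one per pair-gadget, plus $d-1$ per clique). This way a solution of size exactly $k$ cannot afford to delete any vertex of $\rootpath$ and must delete precisely the forced amount from each gadget — this is the ``$S\subseteq X$ versus $S\cap X=\emptyset$'' role reversal mentioned in the excerpt.

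\textbf{Correctness and cutwidth.} I then prove three lemmas parallel to those in \cref{app:cvc-lb}. (i) If $I$ is satisfiable, delete in each pair-gadget of $P_i$ the vertex dictated by $\pi(v_i)$ and in each $Q_{r,j}$ all but the vertex corresponding to a literal satisfying $C_j$; this set has size exactly $k$, and one checks that the residual graph is a forest by the same ``any surviving cycle would have to pass through an undeleted clique--path edge, which does not exist'' bookkeeping. (ii) If $G$ has a feedback vertex set $Y$ of size $k$, the tight budget forces $V(\rootpath)\cap Y=\emptyset$, exactly $d-1$ deletions per clique, and a monotone deletion pattern on each $P_i$; hence at most $n$ ``parity changes'' occur over all paths, so some segment index $r^{*}\in[n+1]$ is clean simultaneously on all paths, and setting $\pi(v_i)=q(i)$ accordingly, every clause $C_j$ must be satisfied, since otherwise the undeleted literal-vertex of $Q_{r^{*},j}$ together with its path neighbour and $\rootpath$ would close an uncut cycle. (iii) For the cutwidth, order the vertices block by block in the lexicographic $(r,j)$ order, with the $u^i$'s interleaved by increasing $i$ and each clique $Q_{r,j}$ and its $\mathcal{O}(1)$-size gadget material placed inside its block, then splice the root-path vertices (and any auxiliary gadget vertices) directly in front of their unique neighbours; every cut then carries at most $n$ edges from the variable paths plus $\mathcal{O}(d^{2})$ edges from cliques, root-path and gadget interiors, so $\ctw(\ell)\le n+\mathcal{O}(1)$.

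\textbf{Main obstacle.} The delicate part is the design of the pair-gadget: it must at once (a) be forced into $Y$ with exactly one deletion so that the budget count is tight, (b) enjoy the monotone-parity property along $P_i$ with no ``free'' flips, (c) have local cutwidth $\mathcal{O}(1)$ so that the only quantity in a cut scaling with $n$ is the $\le n$ variable-path edges, and (d) interact with the clause cliques and the root-path so that clause satisfaction is enforced \emph{exactly} — all while making $G-Y$ a \emph{forest} rather than merely bipartite or connected. Pinning down the packing and the budget $k$ so that ``no size-$k$ solution touches $\rootpath$ and every gadget is deleted minimally'' goes through, and verifying acyclicity of $G-Y$ in the forward direction (where for \Pcvc{} one only had to check edge coverage and connectivity), are the steps I expect to require the most care.
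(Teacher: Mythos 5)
Your global architecture is the same as the paper's: variable paths with one triangle per pair to force one deletion, clause cliques, a root-path whose vertices cannot be touched, and a packing-based exact budget realizing the $S\subseteq X$ versus $S\cap X=\emptyset$ role reversal. But two load-bearing gadget choices are left at "like \Pcvc" level, and both would actually break the reduction.

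\emph{(1) Clique--path interface.} You attach $Q_{r,j}$ to the paths "using the same polarity-sensitive incidences as in \cref{app:cvc-lb}", i.e.\ single edges. In \Pfvs{} a single edge is never a cycle, so if both $q_{r,j}^t$ and its path neighbour $u$ survive, nothing is violated and the clause constraint is not enforced. Your backward argument tries to close a cycle through $\rootpath$ back to the clique, but that requires the clique vertices to be root-connected; and if they were, then whenever the tight budget leaves two vertices of a clique (which it must), those two root-connected vertices plus a $\rootpath$-subpath would form an indestructible cycle, making every instance a no-instance. The paper avoids this by attaching $q_{r,j}^t$ to $u$ via a \emph{triangle} through a fresh vertex $y_{r,j}^t$: since $y_{r,j}^t$ belongs to no packing component and the budget is exact, $y_{r,j}^t\notin Y$ is forced, so $q_{r,j}^t\notin Y$ forces $u\in Y$. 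That is precisely the "covered edge" of \Pcvc\ transplanted to a cycle requirement, and the cliques are deliberately \emph{not} root-connected. \emph{(2) Clique size.} You claim a $K_d$ forces $d-1$ deletions; a minimum \fvs{} of $K_d$ has size only $d-2$ (two vertices and an edge remain). The paper uses $K_{d+1}$, including a dummy vertex $q_{r,j}^{d+1}$ with no neighbours outside the clique, so that exactly $d-1$ deletions are forced and, in the forward direction, the two survivors are the satisfied-literal vertex and the dummy. With $K_d$ two surviving vertices both carry a $T(q,u,y)$ triangle, and the forward direction already fails because only one of those two $u$'s is deleted. Once you add the triangle vertices $y_{r,j}^t$ and the dummy $q_{r,j}^{d+1}$, and exclude the cliques from $\rootpath$, the argument you sketch (monotone parity per path via $\rootpath$-closed cycles, a clean index $r^*$, clause enforcement via the surviving $q_{r^*,j}^{t^*}$) goes through as in the paper.
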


\subparagraph*{Construction.}
Let $d \in \NN$ be arbitrary but fixed and let $I$ be an instance of $d$-SAT.
    We may assume that every clause contains exactly $d$ literals. 
    Let $v_1, \dots, v_n$ be the variables and let $C_1, \dots, C_m$ be the clauses in $I$. For $j \in [m], t \in [d]$, let
	$b_{j, t}$ denote the $t$th literal in $C_j$, and let $v_{j, t}$ denote the variable in this
	literal. Let $s = m \cdot d$. 
	In the following, we will use the following indices: $i \in [n]$, $r \in [n+1]$, $j \in [m]$, $h \in \{0, 1\}$, $t \in [d]$, and $x \in \{0, 1\}$. These indices will always belong to these intervals so for shortness, we omit the domain when we use any of these indices.
    For illustration of our construction, we refer to
    \cref{fig:append-fvslb}.
	We construct an instance $(G = (V, E), k)$ of the \Pfvs\ problem
	as follows. Initially $G$ is an empty graph. For each variable $v_i$, we add to $G$ a path 
     \begin{alignat*}{3}
    	P_i:=\: &u^i_{1,1,0}, u^i_{1,1,1}, u^i_{1,2,0},& &\dots,& &u^i_{1,m,1},\\
                &u^i_{2,1,0}, u^i_{2,1,1}, u^i_{2,2,0},& &\dots,& &u^i_{2,m,1},\\
                & & &\quad\vdots& & \\
                &u^i_{n+1,1,0}, u^i_{n+1,1,1}, u^i_{n+1,2,0}, & &\dots, & & u^i_{n+1,m,1}
    \end{alignat*}
    consisting of $2m(n+1)$ new vertices. 
    More formally, the path $P_i$ consists of vertices $u^i_{r, j, h}$ so that the
indices $(r, j, h)$ appear in lexicographic order.
    We partition each path into $m(n+1)$ pairs of consecutive
    vertices, i.e.\ for $r \in [n+1]$, $j \in [m]$, and $i \in [n]$, the $(r, j)$th \emph{pair} of
    the $i$th path consists of the vertices $u^i_{r, j, 0}$ and $u^i_{r, j, 1}$. These are the first
    vertex and the second vertex of a pair respectively. For all $i$, $r$, $j$, we
    add the vertex $z^i_{r, j}$ and make it adjacent to $u^i_{r, j, 0}$ and $u^i_{r, j, 1}$ only.
    This turns each path into an alternating sequence of single edges and triangles. For simplicity with a slight abuse of notation, we still refer to this construction as a path $P_i$.

	For each $j$ and $r$, we create a clique $Q_{r,j} = \{q_{r,j}^1, \dots,
	q_{r,j}^{d+1}\}$ consisting of $d + 1$ new vertices. 
	The clique $Q_{r, j}$ will only be adjacent to vertices from the $(r, j)$th pairs of some of the paths $P_1, \dots, P_n$. 
	Namely, for each $j$, $r$, and $t$ we do the following. Let $i$ be such that $v_i =
	v_{j, t}$ and let $u = u^{i}_{r,j,0}$ if $b_{j, t}$ is a negative literal of $v_i$ and
	$u = u^i_{r,j,1}$ otherwise. We add an edge between $q_{r,j}^t$ and $u$. We
	also add a new vertex $y_{r,j}^{t}$ and make it adjacent to $q_{r, j}^t$ and $u$ so that these three vertices form a
	triangle. Note that the only neighbors of the vertex $q_{r, j}^{d+1}$ are the other vertices of the clique $Q_{r,j}$.

    Next, we add a path $\rootpath$ consisting of $(n+1)m \cdot 2n$ new vertices denoted with
    \[
        \rootpath = w_{1,1,1}, w_{1,1,2}, \dots, w_{1,1,2n}, w_{1,2,1}, \dots, 
        w_{1, m, 2n},
        \dots w_{n+1, 1, 1}, \dots, 
        w_{n+1,m,2n}.
    \]
    More formally, the vertices of $\rootpath$ are $w_{r, j, p}$ (for $p \in [2n]$) so that
    indices $(r, j, p)$ appear in lexicographic order.
   We call it the \emph{root-path}.
   For each $r$, $j$, and $i$, we add an edge $\{w_{r, j, 2i-1}, u^i_{r, j, 0}\}$ and an edge
   $\{w_{r,j,2i}, u^i_{r,j,1}$\}. Hence, each vertex $u^i_{r,j,h}$ on the path $P_i$ has an adjacent
   vertex $w_{r,j, 2i-1+h}$ on $\rootpath$. See \cref{fig:append-fvslb} for graphical representation of the construction.

   \begin{figure}[htpb]
   	\centering
   	\includegraphics[width=0.8\textwidth]{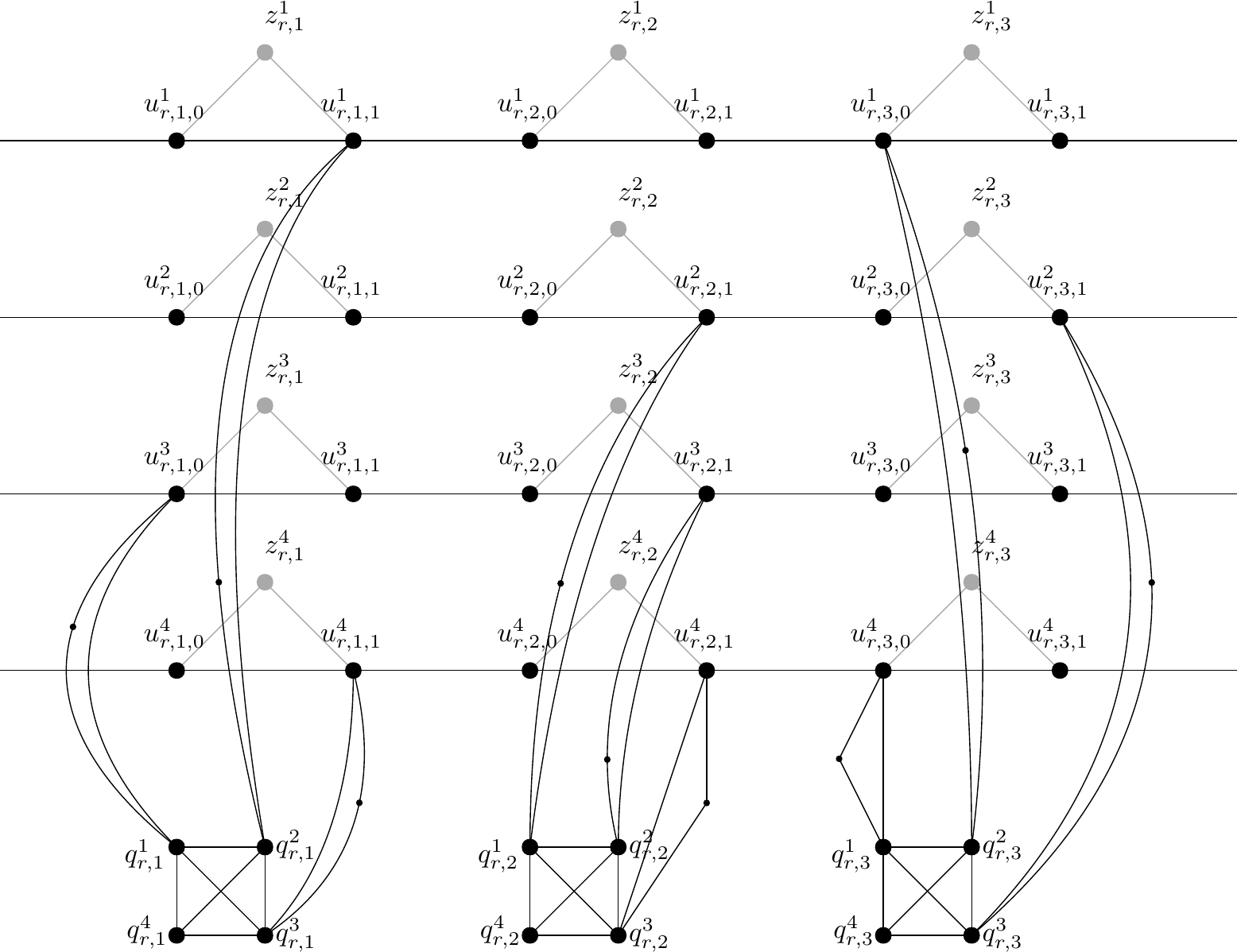}
   	\caption{
   	The $r$-th segment of the graph $G$ from arising in the reduction from an instance of $3$-SAT contains 4 variables and 3 clauses $v_1 \lor \overline{v_3} \lor v_4$, $v_2 \lor v_3 \lor v_4$, and $\overline{v_1} \lor v_2 \lor \overline{v_4}$.
    The path $\rootpath$ and its incident edges are omitted for readability. 
    }
   	\label{fig:append-fvslb}
   \end{figure}

\subparagraph*{Budget.}
For any fixed $i$, let $P'_i = V(P_i) \cup \bigl\{z^i_{r, j}\bigm\vert r\in[n+1], j\in[m]\bigr\}$. Since the subgraph induced by $P'_i$ contains $(n+1)m$ vertex-disjoint triangles, any \fvs{} must contain at least $(n+1)m$ vertices from $P'_i$.
Any \fvs{} must also contain at least $d-1$ vertices from each clique $Q_{r, j}$
of size $d+1$ (otherwise, there would remain a triangle). 
In total we get the lower bound of
\[
	n(n+1)m + (n+1) m (d-1) = (n+1)m(n+d-1)
\]
on the size of a \fvs{} of $G$. So we set $k = (n+1)m(n+d-1)$, matching this lower bound.
\begin{remark}\label{rem:fvs}
    Any \fvs{} of $G$ of size exactly $k$ consists of the following set of vertices: it contains exactly $(n+1)m$ vertices from each $P'_i$ and 
    exactly $d-1$ vertices from each clique $Q_{r,j}$. In particular, it does not include any vertex on $\rootpath$.
\end{remark}
Recall that $d$ is a constant so both $G$ and $k$ can be constructed in polynomial time from $I$.

\begin{lemma}\label{lem:fvs-ctw}
Let $(G, k)$ be the instance of \Pfvs\ arising from $I$ as described above.
    Then $G$ has the \ctwdth{} of at most $n + \mathcal{O}(1)$.
    Moreover, a linear arrangement of $G$ of this cutwidth can be constructed in polynomial time from $I$.
\end{lemma}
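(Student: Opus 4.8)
The plan is to mimic the cutwidth argument of \cref{lem:cvc-ctw} (for \Pcvc), adapting it to the extra auxiliary vertices $z^i_{r,j}$, $y^t_{r,j}$, and $q^{d+1}_{r,j}$ that turn pairs of $P_i$ into triangles and the selector sets into cliques of size $d+1$; recall that $d$ is a constant. First I would build a linear arrangement of the graph $G$ minus the vertices of $\rootpath$, organized into $n+1$ \emph{segments} (indexed by $r$), each split into $m$ \emph{subsegments} (indexed by $j$), so that all vertices of subsegment $(r,j)$ precede all vertices of subsegment $(r,j')$ whenever $j<j'$, and analogously for segments. Inside subsegment $(r,j)$ I would place, for increasing $i\in[n]$, the consecutive block $u^i_{r,j,0},\, z^i_{r,j},\, u^i_{r,j,1}$ (keeping each pair/triangle of $P_i$ together), and after these $n$ blocks I would append the $2d+1$ vertices $q^1_{r,j},\dots,q^{d+1}_{r,j},y^1_{r,j},\dots,y^d_{r,j}$ in arbitrary order.

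Next I would bound the cutwidth of this arrangement of $G-\rootpath$ by $n+\mathcal{O}(1)$ by a case distinction on the location of a cut. A cut inside the clique block of some subsegment $(r,j)$ is crossed by exactly one edge of each path $P_i$ (namely $\{u^i_{r,j,1},u^i_{r,j+1,0}\}$, or its analogue when $j=m$), contributing $n$, plus only edges incident to that one clique block — the $\binom{d+1}{2}$ clique edges and the at most $2d$ edges joining clique/triangle vertices $q^t_{r,j},y^t_{r,j}$ to path vertices — which is $\mathcal{O}(1)$. A cut inside the pair-blocks of subsegment $(r,j)$, say just before the block of index $i$, is crossed by the ``outgoing'' path edge of each $P_{i'}$ with $i'<i$ and the ``incoming'' path edge of each $P_{i'}$ with $i'\ge i$ (again $n$ edges in total), plus a constant number of edges incident to the two neighbouring clique blocks and to the current triangle. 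Since distinct path gadgets, distinct cliques, and distinct triangles occupy pairwise disjoint intervals of the arrangement, nothing else overlaps, so every cut of this partial arrangement has size $n+\mathcal{O}(1)$.

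Finally I would reinsert the $\rootpath$-vertices: each $w_{r,j,2i-1}$ directly before $u^i_{r,j,0}$ and each $w_{r,j,2i}$ directly before $u^i_{r,j,1}$. This places the vertices of $\rootpath$ in exactly their order along the path $\rootpath$, so the edges of $\rootpath$ are pairwise non-overlapping and contribute at most $1$ to every cut; moreover each edge $\{w_{r,j,p},u^i_{r,j,h}\}$ joins two almost consecutive positions, and no two such edges can cross the same cut, so they contribute at most $1$ more. All pair-blocks grew by only two vertices, hence the previous case analysis is unaffected and the cutwidth of the resulting arrangement $\ell$ of $G$ is $n+\mathcal{O}(1)$. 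As the whole construction, $\ell$ included, has polynomial size and is described explicitly, it is computable in polynomial time.

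The main obstacle is the bookkeeping in the second paragraph: one must verify that the ``long'' path edges between consecutive subsegments are the only unbounded contribution, i.e.\ that the clique-internal edges, the clique-to-path and triangle edges, and the later-inserted $\rootpath$-edges each contribute only $\mathcal{O}(1)$ to every cut. This relies on the facts that distinct gadgets (path triangles, cliques, the root path) occupy disjoint intervals of $\ell$ and that $d$ is a fixed constant, exactly as in the \Pcvc{} reduction.
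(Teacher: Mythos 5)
Your proposal is correct and follows essentially the same route as the paper's proof: you use the identical linear arrangement (pair/triangle blocks $u^i_{r,j,0},z^i_{r,j},u^i_{r,j,1}$ in order of $i$, then the clique block $Q_{r,j}\cup\{y^t_{r,j}\}$, arranged column-by-column, with root-path vertices interleaved just before their private neighbours), and the same accounting that path edges contribute at most $n$ to any cut while all gadget-local and root-path edges contribute $\mathcal{O}(1)$. The only slip is a minor undercount in the clique-block tally — you list the $\binom{d+1}{2}$ clique edges and the $2d$ edges to path vertices but omit the $d$ edges $\{q^t_{r,j},y^t_{r,j}\}$ lying inside the block — which does not affect the $n+\mathcal{O}(1)$ conclusion.
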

\begin{proof}
We emphasize that $d$ is a constant.
    We construct a linear arrangement $\ell$ of $G$ of width $n + {\binom{d+1}{2}} + 3d + 3 = n + \mathcal{O}(1)$. 
    Recall that a linear arrangement of a graph is just a total ordering of its vertices. The linear arrangement $\ell$ initially consists of $(n+1)$ consecutive segments, and each of them consists of $m$ consecutive subsegments.
    In other words, we build the linear arrangement from left to right as follows:
    For each $r$ from $1$ to $n+1$ we add the $r$th segment to the arrangement. We add the $r$th segment by adding the vertices of its $j$th subsegment for each $j$ from $1$ to $m$ one by one.
    For fixed $r$ and $j$, we add the vertices of the $j$th subsegment of the $r$th segment in the following order
    from left to right. First, add the vertices $u^i_{r,j,0}$, $z^i_{r,j}, u^i_{r,j,1}$ for all $i$ from $1$ to $n$. After that, we add all vertices of $Q_{r,j}$ and all vertices $y^t_{r,j}$ in an arbitrary order.
    When this process is finished, we add the
    vertices of $\rootpath$ to the linear arrangement as follows. For every vertex
    $u^i_{r,j,x}$ on a path $P_i$, let $w = w_{r, j, 2i-1+x}$ be the vertex on $\rootpath$ adjacent
    to $u^i_{r,j,x}$. We insert $w$ directly before $u^i_{r,j,x}$ into $\ell$.
	
    Now we prove that this $\ell$ has the cutwidth of at most $n + {\binom{d+1}{2}} + 3d + 3$.
    Here, writing about a cut, we always refer to the cuts of $\ell$.
    For any fixed $i$, the vertices of the path $P_i$ appear in the same order on $P_i$ and on $\ell$ so the edges of $P_i$ do not overlap.
    Hence, the paths $P_1, \dots, P_n$ totally contribute at most $n$ edges to any cut on $\ell$. 
    Since each vertex $z^i_{r, j}$ appears in $\ell$ between the vertices $u^i_{r, j, 0}$ and $u^i_{r,j,1}$ adjacent to it,
    and since no other vertex $u^{i'}_{r',j',x}$ or $z^{i'}_{r', j'}$ (for $i' \in [n]$, $j' \in [m]$, and $r' \in [n+1]$) appears between these three vertices, the edges incident to all vertices $z^i_{r, j}$ 
    do not overlap. Therefore, any cut of $\ell$ can contain at most one such edge. 
    Further, the vertices of $\rootpath$ appear in the same order on $\rootpath$ and on $\ell$. Hence, the edges of $\rootpath$ do not overlap so they can contribute at most one edge to a cut of $\ell$. Since each vertex of $\rootpath$ is added
    directly before the vertex $u^i_{r,j,x}$ adjacent to it, the edges between the vertices of
    $\rootpath$ and vertices $u^i_{r, j,x}$ do not overlap either, and add at most one to the
    \ctwdth{} of $\ell$.

    For fixed $r$ and $j$, let $Q'_{r, j} = Q_{r, j} \cup \bigl\{y^t_{r, j} \bigm\vert t \in [d]\bigr\}$. A clique $Q_{r, j}$ contains
    ${\binom{d+1}{2}}$ edges. It also has $d$ incident edges to vertices $y^t_{r,j}$ and $d$ incident edges to
    vertices $u^i_{r, j, h}$ for some $i$, $r$, $j$, and $h$. Each vertex $y^t_{r,j}$ is also adjacent to one
    vertex $u^i_{r, j, h}$ for some $i$, $r$, $j$, and $h$. So for fixed  there are totals in 
    \[
    	{\binom{d+1}{2}} + 2d + d = {\binom{d+1}{2}} + 3d
    \]
    edges incident to $Q'_{r, j}$. All 
    these edges run inside the $j$th subsegment of the $r$th segment. 
    Therefore, the edges incident to two different cliques do not overlap.
    So in total, the cutwidth of $\ell$ is bounded by 
    $n + {\binom{d+1}{2}} + 3d + 3$. 
    Also note that together with $(G, k)$, this linear arrangement can be constructed from $I$ in polynomial time.
\end{proof}

\begin{lemma}
    If $I$ is satisfiable, then $G$ has a \fvs{} of size at most $k$.
\end{lemma}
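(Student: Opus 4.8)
The plan is to convert a satisfying assignment $\pi$ of $I$ into a feedback vertex set $Y$ of $G$ that hits every forced triangle exactly as often as \cref{rem:fvs} demands, and then to check that the few kinds of subgraphs surviving in $G-Y$ are acyclic. First I would define $Y$: for every $i$, $r$, $j$ put the path vertex $u^i_{r,j,\pi(v_i)}$ into $Y$, and for every $r$, $j$ fix (using that $\pi$ satisfies $C_j$) an index $t_j\in[d]$ with the literal $b_{j,t_j}$ true under $\pi$ and put into $Y$ all vertices of the clique $Q_{r,j}$ except $q^{t_j}_{r,j}$ and $q^{d+1}_{r,j}$. Then each $P'_i$ contributes exactly $(n+1)m$ vertices (one per triangle $\{u^i_{r,j,0},u^i_{r,j,1},z^i_{r,j}\}$) and each clique contributes exactly $d-1$, so $|Y| = n(n+1)m + (n+1)m(d-1) = k$, and $Y$ avoids $\rootpath$ as well as every vertex $z^i_{r,j}$ and every vertex $y^t_{r,j}$.

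The heart of the verification is an orientation/parity bookkeeping. By construction the edge incident to $q^t_{r,j}$ other than the clique edges goes to $u^i_{r,j,x}$, where $v_i = v_{j,t}$ and $x=1$ (resp.\ $x=0$) if $b_{j,t}$ is a positive (resp.\ negative) occurrence of $v_i$; hence $b_{j,t}$ is true under $\pi$ if and only if $x = \pi(v_i)$, i.e.\ if and only if that neighbour of $q^t_{r,j}$ lies in $Y$. Two consequences follow. (1) For the surviving clique vertex $q^{t_j}_{r,j}$ the path neighbour $u^i_{r,j,x}$ lies in $Y$, so the triangle $\{q^{t_j}_{r,j}, y^{t_j}_{r,j}, u^i_{r,j,x}\}$ is broken; therefore the part of $Q'_{r,j}$ surviving in $G-Y$ and not attached to any surviving path vertex is exactly the three-vertex path $y^{t_j}_{r,j}\,q^{t_j}_{r,j}\,q^{d+1}_{r,j}$, a tree component disjoint from the rest. (2) A surviving path vertex $u^i_{r,j,1-\pi(v_i)}$ has both of its neighbours along the underlying path $P_i$ in $Y$ (this is exactly the step that uses the parity of $\pi(v_i)$), exactly one neighbour on $\rootpath$, a neighbour $z^i_{r,j}$ of degree one in $G-Y$, and every $y^t_{r,j}$ adjacent to it of degree one in $G-Y$; moreover, by the bookkeeping above, it is adjacent to no surviving clique vertex (the only candidate, $q^{t_j}_{r,j}$, has its path neighbour in $Y$, and $q^{d+1}_{r,j}$ has no path neighbours). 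Consequently the subgraph of $G-Y$ induced by $V(\rootpath)$ together with the surviving path vertices and their pendant neighbours is the path $\rootpath$ with pairwise non-adjacent pendant stars hanging off it, hence a forest; adding the disjoint clique remnants from (1) and the remaining isolated $y$-vertices keeps $G-Y$ acyclic. Thus $Y$ is a feedback vertex set of $G$ of size $k$.

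I do not anticipate a genuine obstacle here; the construction is small and the argument is almost entirely bookkeeping. The one delicate point is getting the orientations right — confirming that every surviving path vertex is isolated within $P_i$ and that ``$b_{j,t}$ true'' corresponds precisely to the $q^t$-neighbour being deleted — since a sign slip there would fail to break the triangles attached either to surviving clique vertices or to surviving path vertices, and $G-Y$ would then contain a cycle.
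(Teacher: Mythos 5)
Your construction of the feedback vertex set $Y$ is identical to the paper's set $S$ (pick $u^i_{r,j,\pi(v_i)}$ from each pair, and from each $Q_{r,j}$ all of $q^1_{r,j},\dots,q^d_{r,j}$ except $q^{t_j}_{r,j}$ for a chosen true literal $t_j$). The paper verifies acyclicity by iteratively peeling degree-at-most-one vertices (first the $z$-vertices, then $q^{d+1}_{r,j}$, then the surviving $q^{t_j}_{r,j}$ together with $y^{t_j}_{r,j}$, then the remaining $y$-vertices, then the remaining path vertices, leaving only $\rootpath$), whereas you give a direct structural description of $G - Y$ as a disjoint union of (i) the path $\rootpath$ with pendant stars from surviving path vertices and their pendant $z$- and $y$-neighbours, and (ii) three-vertex paths $y^{t_j}_{r,j}\,q^{t_j}_{r,j}\,q^{d+1}_{r,j}$ plus isolated $y$-vertices. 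Both verifications are correct and amount to the same bookkeeping; your ``$b_{j,t}$ true $\iff$ the path-neighbour of $q^t_{r,j}$ lies in $Y$'' observation is exactly the orientation check the paper performs case by case, and the key step that surviving path vertices have both $P_i$-neighbours deleted (because the path alternates parities) is present in both. So this is essentially the paper's proof presented in ``what survives'' rather than ``what to peel'' form.
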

\begin{proof}
    Given a satisfying assignment $\pi$ of $I$, we build a vertex set $S$ and then show that it is a feedback vertex set of size $k$. 
    Initially $S$ is empty.
    Then for each $i\in [n], r \in [n+1], j \in [m]$, we add the vertex $u^i_{r, j, 0}$ to $S$, if $\pi(v_i) = 0$, and we add $u^i_{r,j, 1}$ to $S$ otherwise. 
    So for every vertex $u_{r, j, h}^i$, we have $u_{r, j, h}^i \in S$ if and only if $\pi(v_i) = h$.
    Next, for each clause $C_j$, let $t^* \in [d]$ be such that $\pi$ satisfies $b_{j,t^*}$ (it exists since $\pi$ is a satisfying
    assignment of $I$). 
    For each $r$ and $t' \in [d] \setminus \{t^*\}$ we add all vertices $q_{r, j}^{t'}$ to $S$, i.e., we add all vertices of the clique $Q_{r,j}$ except for $q_{r, j}^{t^*}$ and $q_{r, j}^{d+1}$. 
    This concludes the construction of the set $S$. 
    By the choice of $k$, it holds that $|S| = k$.
    
    Now we show that $S$ is a feedback vertex set of $G$, i.e., we show that
    $G - S$ is a forest. 
    Observe that for a vertex $v$ of degree at most one in some graph $H$, the graph $H - \{v\}$ is a forest if and only if $H$ is a forest.
    So we will iteratively remove vertices of degree at most one from $G - S$ until we obtain a graph, where it is easy to see that it is a forest.
    First of all, we remove all vertices $z_{r, j}^i$ for all $i, r, j$. This can be done since exactly one of two neighbors of $z_{r, j}^i$ belongs to $S$. 
    Second, we remove all vertices $q_{r, j}^{d+1}$. This is allowed since from the $d$ neighbors of $q_{r, j}^{d+1}$, exactly $d-1$ belong to $S$.
    
    Now for every fixed $r, j$, we consider the unique vertex $q_{r, j}^{t^*} \notin S$ with $t^* \in [d]$. Let $u_{r, j, h^*}^{i^*}$ be its unique neighbor in $V(P_1) \cup \dots \cup V(P_n)$. By the construction of $S$, the literal $b_{j, t^*}$ satisfies $C_j$ in $\pi$. 
    Now we distinguish on how $v_{i^*}$ occurs in $C_j$.
    If $b_{j, t^*} = v_{i^*}$, then we have $h^* = 1$ by the construction of $G$ and $\pi(v_{i^*}) = 1$ (since $b_{j, t^*} = v_{i^*}$ satisfies $C_j$ in $\pi$). The property $\pi(v_{i^*}) = 1$ implies that by the construction of $S$, we have $u_{r, j, 1}^{i^*}\in S$.
    On the other hand, if $b_{j, t^*} = \overline{v_{i^*}}$, then we have $h^* = 0$ by the construction of $G$ and $\pi(v_{i^*}) = 0$ (since $b_{j, t^*} = v_{i^*}$ satisfies $C_j$  in $\pi$). The property $\pi(v_{i^*}) = 0$ implies that by the construction of $S$, we have $u_{r, j, 0}^{i^*}\in S$.
    Hence, in any case we have that $u_{r, j, h^*}^{i^*} \in S$.
    So now the vertex $q_{r, j}^{t^*}$ has exactly one neighbor in the current graph, namely the vertex $y^{t^*}_{r,j}$. Moreover, the vertex $y^{t^*}_{r,j}$ now only has the neighbor $q_{r, j}^{t^*}$. Therefore, we can remove all such vertices $q_{r, j}^{t^*}$ and $y^{t^*}_{r,j}$.
    Also for all $t \in [d] \setminus \{t^*\}$, we can remove the vertex $y^{t}_{r, j}$ since it originally had degree 2 in $G$ and one of its neighbors (namely, $q_{r, j}^t$) belongs to $S$. 
    
    Observe that until now we have already removed all clique vertices $q_{r, j}^{t}$, all vertices $y_{r, j}^t$, and all vertices $z^i_{r, j, h}$. Also recall that from every path $P_i$, either exactly the vertices $u^i_{r, j, 0}$ or exactly the vertices $u^i_{r, j, 1}$ (for all $r, j$) belong to $S$. Therefore, every vertex $u^i_{r, j, h} \notin S$ currently has exactly one neighbor, namely the one on $\rootpath$. So we can also remove the remaining vertices from $V(P_1) \cup \dots \cup V(P_n)$ as well. Altogether, the remaining vertices are exactly the vertices of $\rootpath$ which induce a path by construction and in particular, they induce a forest. Therefore, the graph $G - S$ also induces a forest as claimed.
\end{proof}

\begin{lemma}
    If $G$ has a \fvs{} $S$ of size $k$, then $I$ is satisfiable.
\end{lemma}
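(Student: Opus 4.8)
The plan is to start from a feedback vertex set $S$ of $G$ of size exactly $k$ and to reconstruct a satisfying assignment of $I$. First I would invoke \cref{rem:fvs}: since $k$ equals the sum of the lower bounds coming from the $(n+1)m$ vertex-disjoint triangles inside each $P_i'$ and from the $(n+1)m$ cliques $Q_{r,j}$ of size $d+1$, and all these vertex sets are pairwise disjoint, any $S$ of size $k$ must be tight everywhere. Concretely, $S$ contains exactly one vertex of each triangle $\{u^i_{r,j,0},u^i_{r,j,1},z^i_{r,j}\}$, exactly $d-1$ vertices of each clique $Q_{r,j}$, and no other vertices --- in particular no vertex of $\rootpath$ and no vertex $y^t_{r,j}$.

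Next I would fix a choice bit for every pair. The four vertices $u^i_{r,j,0}, u^i_{r,j,1}, w_{r,j,2i}, w_{r,j,2i-1}$ form a $4$-cycle (one path edge, the two edges joining the pair to $\rootpath$, and one edge of $\rootpath$); as $\rootpath\cap S=\emptyset$, the set $S$ must hit $\{u^i_{r,j,0},u^i_{r,j,1}\}$, hence $z^i_{r,j}\notin S$ and there is a unique $h^i_{r,j}\in\{0,1\}$ with $u^i_{r,j,h^i_{r,j}}\in S$. The key structural step is a parity-change lemma: if $h^i_{r,j}=0$ on some pair of $P_i$ while $h=1$ on the next pair, then the second vertex of the first pair and the first vertex of the second pair both avoid $S$, are joined by the single path edge between consecutive pairs, and each has its private neighbour on $\rootpath$; together with the subpath of $\rootpath$ between these two neighbours this produces a cycle in $G-S$ --- a contradiction. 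Hence along each $P_i$ the sequence of $h$-values is a block of $1$'s followed by a block of $0$'s, so it changes on at most one boundary between consecutive pairs. Since there are $n$ paths and $n+1$ segments (and a change on a cross-segment boundary charges to no segment), the pigeonhole principle yields a segment index $r^*\in[n+1]$ on whose internal boundaries no path changes; thus for every $i$ the value $h^i_{r^*,j}$ is independent of $j$, call it $q(i)$, and I would set $\pi(v_i)=q(i)$.

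Finally I would verify that $\pi$ satisfies every clause $C_j$. In $Q_{r^*,j}$ exactly two vertices survive $S$, and since $|Q_{r^*,j}|=d+1$ at least one surviving vertex is some $q^{t^*}_{r^*,j}$ with $t^*\in[d]$. By construction $q^{t^*}_{r^*,j}$, $y^{t^*}_{r^*,j}$ and $u^{i^*}_{r^*,j,h^*}$ form a triangle, where $v_{i^*}=v_{j,t^*}$ and $h^*=1$ exactly when $b_{j,t^*}$ is a positive literal; since neither $q^{t^*}_{r^*,j}$ nor $y^{t^*}_{r^*,j}$ lies in $S$, the third vertex must, so $h^{i^*}_{r^*,j}=h^*$ and hence $q(i^*)=h^*$. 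If $b_{j,t^*}=v_{i^*}$ then $\pi(v_{i^*})=h^*=1$; if $b_{j,t^*}=\overline{v_{i^*}}$ then $\pi(v_{i^*})=h^*=0$; either way $b_{j,t^*}$ is true under $\pi$, so $C_j$ is satisfied. As $j$ was arbitrary, $\pi$ satisfies $I$.

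I expect the main obstacle to be making the parity-change argument airtight: one must check that only the transition $0\to 1$ is forbidden (a transition $1\to 0$ makes the connecting path edge incident to $S$, so no cycle is created), and that limiting changes to at most one per path is exactly what feeds the pigeonhole over the $n+1$ segments --- the same bookkeeping as in the \Pcvc{} reduction of \cref{app:cvc-lb}, with cycles through $\rootpath$ playing the role that uncovered edges played there.
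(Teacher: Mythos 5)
Your proof is correct and follows the same route as the paper's: invoke the tightness of the packing to pin down $S$'s intersection with the triangles, cliques, and $\rootpath$; derive that each pair contributes exactly one $u$-vertex; show via a cycle through $\rootpath$ that the parity can change at most once per path (only the $0\to 1$ transition is obstructed); pigeonhole over the $n+1$ segments; and read off a satisfying assignment from the surviving clique vertex $q^{t^*}_{r^*,j}$ and its triangle with $y^{t^*}_{r^*,j}$. The only cosmetic differences are that you name the explicit 4-cycle $u^i_{r,j,0},u^i_{r,j,1},w_{r,j,2i},w_{r,j,2i-1}$ where the paper just says "these vertices form a cycle with a subpath of $\rootpath$", and you spell out the pigeonhole bookkeeping about cross-segment boundaries a bit more carefully.
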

\begin{proof}
   By \cref{rem:fvs}, the set $S$ consists of exactly one vertex from every triangle in $G\bigl[V(P_1) \cup V(P_n)\bigr]$ and exactly $d-1$ vertices from each clique $Q_{r, j}$.
   In particular, it does not contain any vertex of $\rootpath$.
   This implies that for any two consecutive vertices $u^i_{r, j, x}, u^i_{r', j', x'}$ on a path $P_i$, at least one of them belongs to $S$: these vertices form a cycle with a subpath of $\rootpath$. 
   This has the following implications. 
   First, for each pair $u^i_{r, j, 0}, u^i_{r, j, 1}$, exactly one of the vertices 
   belongs to the $S$. 
   In particular, no vertex $z^i_{r, j}$ and no vertex $y_{r, j}^t$ belongs to the solution.
   And second, for a path $P_i$, if there exists $r$ and $j$ such that $u^i_{r, j, 0}$ belongs to $S$, then for every $r' \in [n+1]$ and $j' \in [m]$ such that $(r, j) < (r', j')$ holds, we have $u^{i}_{r', j', 0} \in S$.
   Hence, on each path at most one parity change 
   occurs. 
   Since we
   have $n$ paths in total, at most $n$ such changes appear.
   
   Therefore, there exists an index $r^* \in [n+1]$ such that
   either all vertices $u^i_{r^*, j ,0}$ or all vertices $u^i_{r^*,j,1}$ belong to $S$. 
   We claim that the assignment $\pi$, that assigns the value of $0$ to $v_i$ if $u^i_{r, 1, 0} \in S$ and the value of $1$ otherwise, is a satisfying assignment of $I$. 
   
   Let $t^* \in [d]$ be such that $q_{r^*, j^*}^{t^*} \notin S$ holds (recall that $S$ contains $d-1$ vertices from $Q_{r,j}$ so such $t$ exists). 
   Let $x^* \in \{0,1\}$ and $i^* \in [n]$ be such that $v_{i^*} = v_{j^*, t^*}$ holds and $u^{i^*}_{r^*, j^*, x^*}$ is the vertex on $P_i$ adjacent to $q_{r^*, j^*}^{t^*}$. 
   Since $q_{r^*, j^*}^{t^*} \notin S$ and $y_{r^*, j^*}^{t^*} \notin S$, the vertex $u^{i^*}_{r^*, j^*, x^*}$ forming a triangle with this to vertices belongs to $S$. 
   By the choice of $r^*$, it then also holds that $u^{i^*}_{r^*, 1, x^*}$ belongs to $S$.
   Hence, we have $\pi(v_i) = x^*$ by the choice of $\pi$. 
   Also recall that by the construction of $G$, we have $x^* = 0$ if $b_{j, t^*} = \overline{v_{i^*}}$, and $x^* = 1$ if $b_{j, t^*} = v_{i^*}$. So $v_i$ satisfies $C_j$ in $\pi$ and $\pi$ is the desired satisfying assignment of $I$.
\end{proof}

\begin{proof}[Proof. (\cref{theo:fvs-lb})]
	Suppose such $\varepsilon > 0$ exists. 
	So let $\mathcal{A}$ be an algorithm that solves the \Pfvs{} problem in time $\ostar((2-\varepsilon)^{\ctw})$. 
	Let $d$ be an arbitrary but fixed positive integer. 
	Let $I$ be an instance of $d$-\textsc{SAT}. 
	First, in polynomial time, we compute an instance $(G, k)$ of the \Pfvs{} problem equivalent to $I$ along with a linear arrangement $\ell$ of $G$ of cutwidth at most $n + \mathcal{O}(1)$ using the above construction.
    After that, we run $\mathcal{A}$ on $(G, k)$ and output its answer. 
    The correctness of the algorithm follows from the equivalence of $I$ and $(G, k)$. 
    The described process then runs in time $\ostar\bigl((2-\varepsilon)^{n + \mathcal{O}(1)}\bigr) = \ostar\bigl((2-\varepsilon)^n\bigr)$. 
    So for every $d$, we get an algorithm solving $d$-\textsc{SAT} in $\ostar\bigl((2-\varepsilon)^n\bigr)$ contradicting SETH.
\end{proof}

\subsection{Connected Odd Cycle Transversal}\label{app:coct}
Cygan et al.\ \cite{CyganKN18} provided an $\ostar(4^{\tw})$ algorithm solving the \Pcoct{} problem. By \cref{lem:ctw-ub}, we can therefore solve this problem in $\ostar(4^{\ctw})$.
In this section we prove that under SETH the \Pcoct{} problem cannot be solved in time $\ostar\bigl((4-\varepsilon)^{\ctw}\bigr)$ for any positive $\varepsilon$ thus proving that the above algorithm is essentially optimal. 

\begin{quote}
	\Pcoct{}
	
	\textbf{Input}: A graph $G = (V, E)$ and an integer $k$.
	
	\textbf{Question}: Is there a subset $S \subseteq V$ of cardinality at most $k$ such that $G - S$ is bipartite and $G[S]$ is connected.
\end{quote}

\begin{theorem}\label{theo:coct}
	Assuming SETH, there is no algorithm that solves the \Pcoct{} problem in time $\ostar((4-\varepsilon)^{\ctw})$ for
    any positive real $\varepsilon$, even when a linear arrangement $\ell$ of $G$ of this width is given.
\end{theorem}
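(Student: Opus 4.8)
\noindent\textbf{Proof proposal for \cref{theo:coct}.}
The plan is to give a polynomial-time reduction from $d$-\textsc{SAT} to \Pcoct{} that, from an instance with $n$ variables, produces an equivalent instance whose graph has cutwidth $\lceil n/2\rceil + \mathcal{O}(1)$, together with a linear arrangement of this width. Combined with a hypothetical $\ostar((4-\varepsilon)^{\ctw})$ algorithm this solves $d$-\textsc{SAT} in time $\ostar\bigl((4-\varepsilon)^{\lceil n/2\rceil + \mathcal{O}(1)}\bigr) = \ostar\bigl((\sqrt{4-\varepsilon}\,)^{n}\bigr) = \ostar\bigl((2-\delta)^n\bigr)$ for a suitable $\delta>0$ (since $\sqrt{4-\varepsilon}<2$), contradicting SETH. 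The skeleton mirrors the reductions for \Pcvc{} (\cref{app:cvc-lb}) and \Pfvs{} (\cref{app:fvs-lb}): a \emph{root-path} $\rootpath$ serving as a connectivity backbone, a system of horizontal \emph{tracks} running through $\Theta(n)$ \emph{segments}, each subdivided into $m$ \emph{columns} (one per clause), and a \emph{clause gadget} attached to each column. The novelty needed to reach base $4$ rather than $2$ is that each track carries \emph{two} variables at once, exploiting that a boundary vertex in a \Pcoct{} solution has four relevant states: either it lies outside the solution $S$ and receives one of the two colours of a proper $2$-colouring of $G-S$, or it lies inside $S$ and is assigned one of the two sides of a consistent cut of $S$ used for the connectivity argument of Cygan et al.

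Concretely, I would pair up the variables (padding with a dummy variable if $n$ is odd), so there are $\lceil n/2\rceil$ pairs, and build one track gadget per pair. Each track gadget is designed so that in every \Pcoct{} solution meeting a tight budget $k$ — computed, as in the earlier reductions, as the exact sum of the lower bounds coming from a \emph{packing} of vertex-disjoint forced-odd-cycle (triangle) gadgets — the restriction of the solution to one segment of the track is in exactly one of four ``states'', bijectively associated with the four truth assignments of the corresponding variable pair. Two of these states put a designated port-vertex of the track outside $S$ (realised by the two proper $2$-colourings, as in the \Poct{} reduction of \cref{app:oct-lb}); the other two put it inside $S$ on one of the two sides of the consistent cut (realised as in the path gadgets of \cref{app:cvc-lb}). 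The track is built as an alternating chain of triangles so that, whatever the state, the same number of its vertices belongs to $S$, keeping the budget tight. The root-path vertices are forced into $S$ not by degree-one pendants (which, unlike for \Pcvc{}, do not force a vertex into an odd-cycle-transversal-type solution) but by attaching to each root-path vertex a bundle of triangles sharing only that vertex, so the cheapest way to destroy all of them is to take the shared vertex. A clause gadget for $C_j$ in column $(r,j)$ is connected, via short subdivided links, exactly to those track states corresponding to partial assignments of a variable pair that satisfy $C_j$, and is arranged so it can be absorbed into the bipartite remainder / connected part of $S$ within budget iff some incident track is in such a state. As in \cref{app:cvc-lb,app:fvs-lb}, along one track only a bounded number of state changes can occur, so over $\Theta(n)$ segments only $\mathcal{O}(n)$ changes occur in total and there is a ``clean'' segment $r^{*}$ on which every track is in a single state; reading off the induced assignment yields the equivalence of the two instances.

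For the cutwidth bound I would use the arrangement that lists the columns in lexicographic order of $(r,j)$, within a column lists each track's local vertices consecutively followed by the clique/clause vertices, and inserts each root-path vertex together with its triangle bundle immediately before the track vertex it is attached to — the same arrangement strategy as in \cref{lem:cvc-ctw} and \cref{lem:fvs-ctw}. Any cut then crosses at most one edge per track ($\lceil n/2\rceil$ edges), at most one edge of $\rootpath$, a constant number of edges from the root-path triangle bundles, and $\mathcal{O}(d)=\mathcal{O}(1)$ edges from the clique/clause gadgets (each clause touches at most $d$ tracks); hence the cutwidth is $\lceil n/2\rceil+\mathcal{O}(1)$, and everything is computable in polynomial time.

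The main obstacle is the design and verification of the four-state track gadget: it must (i) offer all four states at the \emph{same} minimum cost so the global budget stays tight, (ii) transmit its state rigidly between the columns of a segment, (iii) be compatible simultaneously with the bipartiteness requirement on $G-S$ and the connectivity requirement on $G[S]$ (a track in an ``inside $S$'' state must hook into the backbone $\rootpath$, while a track in an ``outside $S$'' state must not close an odd cycle), and (iv) expose its state to the clause gadgets without increasing the cutwidth beyond $\lceil n/2\rceil+\mathcal{O}(1)$. Making one gadget serve both the ``not-all-equal colouring'' role and the ``consistent cut / connectivity'' role, and checking that the packing defining $k$ is tight for precisely the intended solutions, is where the real work lies; the two directions of correctness and the routine cutwidth accounting then follow the template already used for \Pcvc{}, \Pfvs{}, and \Pst{}.
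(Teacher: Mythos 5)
Your high-level plan matches the paper's construction closely: pair the variables into $n'=n/2$ groups, build one path-like track of $(2n'+1)m$ column-gadgets per pair, force the root-path $\rootpath$ into the solution with private triangles, use a vertex-disjoint packing of triangles to pin the budget $k$, read off states via decoding gadgets feeding odd-cycle clause gadgets, arrange the graph column-by-column to get cutwidth $n'+\mathcal{O}(1)$, and finish with $\sqrt{4-\varepsilon}<2$. The pigeonhole on segments to find a clean index $r^*$ is also right. But you stop exactly at the step you flag as ``the main obstacle'': you do not actually give a four-state gadget nor argue that its states propagate rigidly, and your sketch of how those four states should be realised is off in a way that would make the construction fail as described.

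Specifically, you propose that two of the four states place the port vertex in $S$ and are distinguished by the two ``sides of a consistent cut of $S$.'' Consistent cuts are a tool from the cut-and-count \emph{algorithm}; in a lower-bound instance there is nothing that forces a solution to split into a particular cut, so no gadget can distinguish ``left side'' from ``right side'' in a COCT solution. The paper's two in-$S$ states $\scrD$ and $\scrC$ are distinguished not by a cut but by the \emph{direction} in which the port vertex attaches to the backbone (left vs.\ right), exactly as in the Steiner Tree reduction's $\scrD,\scrO,\scrC$ ordering; this is what yields the diamond poset $\scrD\preccurlyeq\{\scrW,\scrB\}\preccurlyeq\scrC$ and bounds the number of state changes along a track by two. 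More importantly, for the two out-of-$S$ states (your ``two proper $2$-colourings'') you give no mechanism to make the colour classes globally consistent. A proper $2$-colouring of $G-S$ is only defined up to swapping colours within each connected component, so without a global anchor a track could silently flip between your two ``colour'' states from one column to the next, breaking the bound on state changes. The paper introduces the colour-path $\colorpath$ together with the black/white clique vertices $\verB,\verW$, and uses carefully chosen single vs.\ double subdivisions of the edges $\{v,\verB\}$, $\{v,\verW\}$, $\{v',\verB\}$, $\{v',\verW\}$ precisely so that a $\scrB\!\to\!\scrW$ or $\scrW\!\to\!\scrB$ transition between consecutive gadgets would create an odd cycle. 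This colour-anchoring device (and the adjacent parity trick with subdivisions) is the essential missing idea in your proposal; without it or an equivalent, the ``outside $S$'' half of your four-state encoding is not rigid and the reduction does not go through.
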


\subparagraph{Construction.}
	Let $d \in \NN$ and let $I$ be an instance of $d$-SAT. Let $n$ be the number of variables of $I$ and $m$ be the number of clauses in $I$. 
	We may assume that $n$ is even: otherwise, we could add one new variable that does not appear in any clause.
	So let $n' = n / 2$.
	In the following, we will use the following indices: $i \in [n']$, $r \in [2n'+1]$, $j \in [m]$, $t \in [d]$, and $x \in \{0, 1\}$. These indices will always belong to these intervals so for shortness, we omit the domain when we use any of these indices.
	We partition the set of variables of $I$ into $n'$ groups $U_1, \dots U_{n'}$ of size two each. 
	For each group $U_{i}$, we add a path-like gadget $P_i$ described next and we call it a \emph{path} for simplicity. 
	The path $P_i$ is a sequence of $(2n' + 1)m$ copies of a fixed gadget $X$ called a \emph{path gadget} described later. Each path gadget $X$ will have two special vertices $v$ and $v'$ called the \emph{left-end} and the \emph{right-end} vertex, respectively. The right-end vertex of each path gadget of $P_i$ (except for the last one) is adjacent to the left-end vertex of the next path gadget on $P_i$. 
	Therefore, the path $P_i$ is a simple path of length $(2n' + 1)m$ each of whose vertices is blown up to a path gadget.
	The path $P_i$ is partitioned into $2n'+1$ groups of $m$ consecutive path gadgets each. For every $r$, the $r$th such group is called the $r$th \emph{path segment} of $P_i$ denoted by $S^i_r$. 
	For every $r$, we call the set $S_r = \bigl\{S^i_r \mid i \in [n']\bigr\}$ the \emph{$r$th segment}.
	Next, for every $j$, we denote the $j$th path gadget of $S^i_r$ with $X^i_{r,j}$.
	Finally, for every $r$, we call the set $Q_{r, j} = \bigl\{X^i_{r,j} \bigm\vert i \in [n]\bigr\}$ the $(r, j)$th \emph{column}.

	By \emph{adding a triangle} at a vertex $v$, we mean adding a pair of new vertices $w$ and $w'$ and creating
	a clique on vertices $v$, $w$, and $w'$. We denote this triangle by $T(v)$. The graph $G$ also contains two simple vertex-disjoint paths $\rootpath$ and $\colorpath$ called the \emph{root-path} and the \emph{color-path}, respectively. 
	These paths consist of new vertices whose number and order will be determined
	later. We add a triangle at each vertex $u$ of $\rootpath$ and with $\rootpath'$, we denote the set of vertices of these triangles. In particular, we have $V(\rootpath) \subseteq \rootpath'$. 
	These triangles ensure that if $G$ admits a \coct{} of size $S$ at most $k \in \NN$, then it also admits a \coct{} $S'$ of size at most $k \in \NN$ with $V(\rootpath) \subseteq S'$.
	We call a vertex \emph{root-connected} if it has a neighbor on $\rootpath$. Moreover, distinct root-connected vertices are meant to have distinct neighbors on $\rootpath$ and every root-connected vertex $v$ has exactly one neighbor on $\rootpath$ denoted by $\rootpar{v}$. To ensure that this is possible, we choose the length of $\rootpath$ to be equal to the number of root-connected vertices in $G$.
	
	The set $\blackset \subset V(\colorpath)$ is a subset of the vertices of $\colorpath$. We will choose the order vertices on $\colorpath$ such that the vertices of $\blackset$ will have odd positions, meanwhile the rest of the vertices will only separate the vertices of $\blackset$, i.e. the length of $\colorpath$ will be exactly $2|\blackset| -1$, where each other vertex belongs to $\blackset$ starting from the first vertex of $\colorpath$.
	Every vertex $w$ in $\blackset$ will be assigned to one so-called \emph{black} or \emph{white} vertex $v$ in such a way that this assignment is a bijection. We choose the size of $\blackset$ to be equal to the sum of the number of black and white vertices in $G$, also called \emph{colored} vertices, to ensure that such an assignment is possible. 
	In this case we also write $w = \colpar{v}$. 
	By making a vertex $v$ black, we mean adding a new vertex $w$ adjacent to $v$ and $\colpar{v}$.
	By making a vertex $v$ white, we mean adding an edge between $v$ and $\colpar{v}$.
	As we mentioned before, the order of vertices in $\rootpath$ and $\colorpath$ is not fixed yet. It will later be defined in such a way that the cutwidth of $G$ is upper bounded as desired.
	We also add two new root-connected vertices $g$ and $g'$ to the graph and we call them \emph{guards}. We add triangles at $g$ and $g'$. We add an edge between $g$ and the left-end vertex of the first path gadget $X^i_{1,1}$ on each path $P_i$. Similarly, we add an edge between $g'$ and the right-end vertex of the last path gadget $X^i_{2n'+1, m}$ on each path $P_i$.
	
	By adding a triangle between two vertices $u$ and $v$, we mean adding a new vertex $w$ and creating a clique on vertices $u$, $v$, and $w$. We denote this clique by $T(u, v)$. 
	Now we describe the structure of a path gadget. A path gadget $X$ initially consists of the left-end vertex $v$ and the right-end vertex $v'$ mentioned above. Then we add four root-connected vertices $\verB$, $\verW$, $\verC$, and
	$\verD$ and we add a triangle between each pair of these vertices. 
	We call these vertices the \emph{clique vertices} and denote them by $K = \{\verB, \verW, \verC, \verD\}$. We make $\verB$ black and $\verW$ white. Next, we add the vertices $u, w_1, w_2, u', w'_1, w'_2$ and make $w_1$ and
	$w'_1$ root-connected. 
	We add a triangle between each of the following pairs of vertices: $\{v,u\}$, $\{v, w_1\}$, $\{w_1,w_2\}$, $\{w_2, \verD\}$. Next we
	add the edges $\{u, \verD\}$, $\{u, \verW\}$, $\{u, \verB\}$, $\{w_2, \verW\}$, $\{w_2,
	\verB\}$, and $\{u, w_2\}$. We also add two edges $\{v, \verW\}$ and $\{v, \verB\}$ and subdivide them. Symmetrically,
	we add triangles between the pairs $\{v', u'\}, \{v', w'_1\}$, $\{w'_1, w'_2\}$, $\{w'_2,
	\verC\}$ of vertices, and we add the edges $\{u', \verC\}$, $\{u', \verW\}$, $\{u', \verB\}$, $\{w'_2,
	\verW\}$, $\{w'_2, \verB\}$, and $\{u', w'_2\}$. We also add the edges $\{v', \verW\}$ and $\{v',
	\verB\}$ but this time we subdivide each of them \textbf{twice}. This concludes the construction of a path
	gadget. See \cref{fig:append-coct-pg} for graphical representation.
	If not clear from the context, we write the name of a gadget after the name of a vertex in parentheses to indicate to which gadget it belongs, e.g., the vertex $v(X)$ is the left-end vertex of a path gadget $X$.
    If the gadget is clear from the context, we omit this clarification.
	Recall that between each pair of clique vertices of $X$ we have created a triangle. Therefore, for any such clique, every \coct{} of $G$ must contain at least three vertices among the clique vertices together with vertices added to create triangles between them. Moreover, any \coct{} containing exactly three vertices among these, must have all three vertices as clique vertices, since otherwise, an \oct{} must choose a triangle vertex between two unchosen clique vertices, which would make the \oct{} not connected. We will fix the size of the desired \coct{} in such a way that it can contains at most three vertices in each clique together with the vertices added to create triangles between its vertices. This allows us to identify the \emph{state} of $X$ in a \coct{} with the clique vertex not contained in it, defining the states $\scrB$, $\scrW$, $\scrC$, and $\scrD$ respectively.

	\begin{figure}[t]
		\centering
		\includegraphics[width=.9\linewidth]{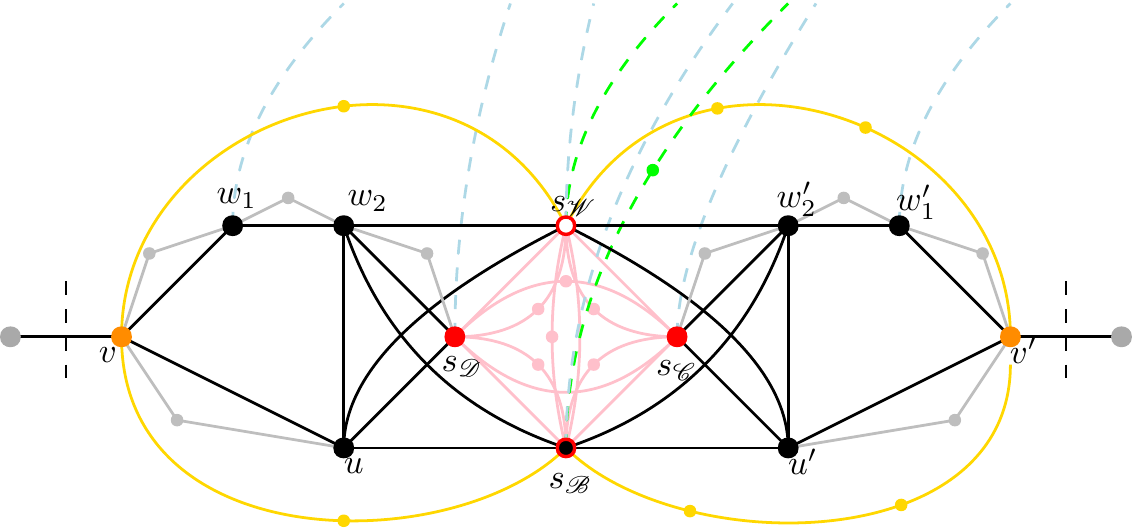}
		\caption{A graphical representation of a path gadget. 
		Clique vertices are depicted in red. 
		Edges from root-connected vertices to the root-path $\rootpath$ are depicted with dashed light blue rays.
		The subdivision-vertices and vertices added to create a triangle between two vertices are depicted smaller than normal vertices.
		Edges from colored vertices to the color-path $\colorpath$ are depicted with dashed green rays. 
		The vertical dashed lines mark the connections to the preceding and the following gadget.
		}
		\label{fig:append-coct-pg}
	\end{figure}

	We attach a decoding gadget $Y^i_{r, j} = Y$ (described next) to each path gadget $X^i_{r, j} = X$. The decoding gadget $Y$
	consist of eight root-connected vertices $\bigl\{u_x, v_x \bigm \vert x \in \{\scrB, \scrW, \scrC, \scrD\} \bigr \}$. For each $x \in \{\scrB, \scrW, \scrC, \scrD\}$, we say that $u_x$ and $v_x$ form a \emph{pair}, we call $v_x$ its \emph{left side} and $u_x$ its \emph{right side}, and we add a triangle $T(u_x, v_x)$ between them. Further, for every $x \in \{\scrB, \scrW, \scrC, \scrD\}$ and every $y = s_{x'}(X)$ for $x' \in \{\scrB, \scrW, \scrC, \scrD\} \setminus \{x\}$, we add the triangle $T(v_x, y)$ between $v_x$ and $y$.
	
	Finally, we describe the construction of a clause gadget. 
	We fix an arbitrary bijective mapping 
	$\phi: \{0, 1\}^2 \rightarrow
	\{\scrB, \scrW, \scrC, \scrD\}$ 
	that assigns a state to each partial assignment of two variables. 
	We also fix an ordering of variables in each group $U_i$: now with a slight abuse of notation, we can apply $\phi$ to partial assignments of each $U_i$.
	Clause gadgets represent the
	clauses of $I$. 
	For every $r, j$, we attach a clause gadget $Z_{r, j}$ to the column $Q_{r, j}$ as follows.
	Informally speaking, a clause gadget $Z_{r, j}$
	is merely a simple cycle through certain vertices of different decoding gadgets in the column $Q_{r,
	j}$ and possibly, through one more new vertex. 
	More formally, for $i \in [n']$, let $\Pi_i$ be the set of all partial assignments to the pair $U_i$ satisfying the clause
	$C_j$ and let $\Sigma_i = \phi(\Pi_i)$. 
	Finally, let $D_i = \bigl\{u_x(Y^i_{r,j}) \mid x \in \Sigma_i\bigr\}$ be the set of vertices on the right side of the decoding gadget $Y^i_{r, j}$ corresponding to states in $\Sigma_i$. Let $D
	= \bigcup_{i\in[n']} D_i$ and let $q_1, \dots, q_{|D|}$ be such that $D = \{q_1, \dots, q_{|D|}\}$.
	If $|D|$ is odd, we add the edges
	\[
		\{q_1, q_2\}, \{q_2, q_3\}, \dots, \{q_{|D|-1}, q_{|D|}\}, \{q_{|D|}, q_1\}
	\]
	so that the vertices of $D$ form a (simple) odd cycle.
	Otherwise, $D$ is even. In this case, we introduce a new vertex $w$ (called a \emph{parity-vertex}) and the edges
	\[
		\{q_1, q_2\}, \{q_2, q_3\}, \dots, \{q_{|D|-1}, q_{|D|}\}, \{q_{|D|}, w\}, \{w, q_1\}
	\]
	so that the vertices of $D \cup \{w\}$ form a (simple) odd cycle.
	In both cases, the created odd cycle is the clause gadget $Z_{r,j}$.
	This concludes the construction of the graph $G$ (strictly speaking, we still have not fixed the order of vertices on $\rootpath$ and $\colorpath$ but we will do that in the proof of \cref{lem:coct-lb-ctw} for the sake of clarity).
	
	Clearly, $G$
	can be constructed in time polynomial in the size of $I$.

	\subparagraph{Budget.} Now we fix the budget $k$ (i.e., the size of the desired \coct{}) of the resulting instance. 
	We choose the budget $k$ matching a lower bound on the size of any \coct{} of the graph $G$.
	We define this lower bound using a so-called \emph{packing} $\mathcal{P}(G)$ of $G$. 
       A packing is a family of pairwise disjoint sets of vertices (also called \emph{components}) of $G$ such that each component $C$ is assigned a value $p(C)$ such that each \coct{} contains at least $p(C)$ vertices from $C$. Since components are disjoint, the sum of values $p(C)$ over all components $C$ of a packing is a lower bound on the size of any \coct{}.
       Moreover, if a \coct{} of this size exists, then it contains exactly $p(C)$ vertices from every component $C$.
	First, for every path gadget $X$ in $G$ we define several components of $\mathcal{P}(G)$ denoted with $\mathcal{P}(X)$ as follows. Let $K'(X) = \bigcup_{x,y \in K(X)} T(x, y)$ be the set of clique vertices together with the triangles between the clique vertices. We let $\mathcal{P}(X) =$ $\bigl\{T(v, u), T(w_1, w_2),$ $T(v', u'), T(w'_1, w'_2), K'\bigr\}$. 
	Note that these four components are vertex-disjoint.
	First, we set $p(K') = 3$: indeed, $K'$ induces four triangles (i.e., odd cycles) and it is easy to see that it is impossible to cover them with less than three vertices.
	We also set $p(C) = 1$ for every $C \in \mathcal{P}(X)\setminus K'$: by construction, $C$ induces a triangle so each (connected) odd cycle transversal contains at least one vertex from $C$.	
	In total, we get $p(X) = 7$. Next, for every decoding gadget $Y$ in $G$, we define the packing $\mathcal{P}(Y) = \bigl\{T(u_x, v_x) \bigm \vert x \in \{\scrC,\scrB,\scrW,\scrD\}\bigr\}$, and set $p(C) = 1$ for every $C \in \mathcal{P}(Y)$: indeed, $C$ induces a triangle by construction. 
	So we get $p(Y) = 4$. 
	We also define the packing $\mathcal{P}(\rootpath') = \bigl\{T(v) \bigm \vert v\in V(\rootpath)\bigr\}$, and we set $p(C) = 1$ for all $c \in
	\mathcal{P}(\rootpath')$: the triangles in  $\mathcal{P}(\rootpath')$ are vertex-disjoint.
	Similarly, we define $\mathcal{P}\bigl (T(g) \cup T(g')\bigr) = \bigl\{T(g), T(g')\bigr\}$, and set $p\bigl(T(g)) = p(T(g')\bigr) = 1$.
	Note that the components we have defined are pairwise vertex-disjoint. Therefore, the union of all these components is a valid packing $\mathcal{P}(G)$. So to define the budget $k$, we sum up their $p(\cdot)$ values.
	There are $n' (2n' + 1) m$ path gadgets, $n' (2n' + 1) m$ decoding gadgets, and two guards. 
	Also, there are $2 + n'(2n'+1)m \cdot 6$ root-connected vertices (namely 2 guards and 6 root-connected vertices in each path gadget), i.e., there are $2 + n'(2n'+1)m \cdot 6$ vertices in $\rootpath$.
	Altogether, we define
	\[
		k = n'(2n'+1)m \cdot 7 + n'(2n'+1)m \cdot 4 + 2 + (2 + n'(2n'+1)m \cdot 6).
	\]
	
	\begin{lemma}\label{lem:coct-lb-ctw}
		Let $d, n$ be positive integers. Let $I$ be an instance of the $d$-SAT problem with $n$ variables and let $G$ be the graph resulting from $I$ as described above. Then $G$ has the \ctwdth{} of at most $n' + \mathcal{O}(1)$ (recall: $n' = n/2$).
	\end{lemma}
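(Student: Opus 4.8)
The plan is to exhibit, exactly along the lines of the proofs of \cref{lem:app-st-ctw,lem:cvc-ctw,lem:fvs-ctw}, an explicit linear arrangement $\ell$ of $G$ every cut of which is crossed by at most $n'+\mathcal{O}(1)$ edges. First I would fix, for each path gadget $X$, an arbitrary ordering $\widetilde O(X)$ of $V(X)$ and, for each decoding gadget $Y$, an arbitrary ordering $\widetilde O(Y)$ of $V(Y)$: since both gadgets have constant size (independent of $n$ and $m$), any such ordering has constant cutwidth, so the precise choice is irrelevant. Then I would build a linear arrangement $\ell'$ of the subgraph $G'$ obtained from $G$ by deleting $\rootpath'$, $\colorpath$, and the auxiliary vertices introduced when making a vertex black or white. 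The arrangement $\ell'$ is constructed \emph{column-wise}: iterate over all pairs $(r,j)$ in lexicographic order, and for a fixed $(r,j)$ first place, for $i=1,\dots,n'$, the vertices of $X^i_{r,j}$ in the order $\widetilde O(X^i_{r,j})$ immediately followed by the vertices of $Y^i_{r,j}$ in the order $\widetilde O(Y^i_{r,j})$, and finally place the parity vertex of $Z_{r,j}$ if it exists. The guard $g$ goes at the very beginning of $\ell'$ and $g'$ at the very end. Crucially, whenever a clause gadget $Z_{r,j}$ is realised as a cycle through $q_1,\dots,q_{|D|}$ (and possibly a parity vertex $w$), I would choose this cyclic order so that $q_1,\dots,q_{|D|}$ occur in $\ell'$ in exactly this order; this is possible because these are right-side vertices of the decoding gadgets of the column $Q_{r,j}$, which are laid out in increasing $i$.

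Next I would bound the cutwidth of $\ell'$ by classifying the edges crossing an arbitrary cut of $\ell'$:
\begin{itemize}
  \item Edges with both ends in one path gadget or one decoding gadget, and the edges between a path gadget $X^i_{r,j}$ and the decoding gadget $Y^i_{r,j}$ attached to it: since every gadget has constant size, is placed contiguously, and $Y^i_{r,j}$ directly follows $X^i_{r,j}$, these edge sets live in pairwise non-overlapping windows of $\ell'$ and so contribute only $\mathcal{O}(1)$ to any cut.
  \item Edges of the paths $P_1,\dots,P_{n'}$, including the edges joining the first (resp. last) path gadget of each $P_i$ to $g$ (resp. $g'$): for each fixed $i$ these edges form a path whose vertices are visited in order on $\ell'$, so at most one of them crosses any given cut; over all $n'$ paths this is at most $n'$ edges. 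In particular the $n'$ edges leaving $g$ are precisely the ``first'' path edges and the $n'$ edges entering $g'$ the ``last'' ones, so they do not stack on top of the internal path edges.
  \item Edges of a clause gadget $Z_{r,j}$: all of them lie inside the block of column $Q_{r,j}$, so clause gadgets of different columns never overlap; within the column the cyclic order of $q_1,\dots,q_{|D|}$ agrees with $\ell'$, hence the ``path part'' $q_1q_2\cdots q_{|D|}$ of the cycle contributes at most one edge to any cut and the at most two remaining closing edges add $\mathcal{O}(1)$.
\end{itemize}
Summing these contributions shows that $\ell'$ has cutwidth $n'+\mathcal{O}(1)$ (all $\mathcal{O}(1)$ terms depending only on $d$ and absolute constants).

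Finally I would extend $\ell'$ to a linear arrangement $\ell$ of $G$ by inserting, directly in front of each root-connected vertex $v$, its private neighbour $\rootpar{v}$ together with the two vertices of the triangle $T(\rootpar{v})$, and directly in front of each colored vertex $v$, its private neighbour $\colpar{v}\in\blackset$, the adjacent $\colorpath$-separator vertex, and, if $v$ is black, the extra vertex joining $v$ to $\colpar{v}$; the vertices of $\rootpath$ and of $\colorpath$ are then ordered to match the order in which their private neighbours were inserted (exactly as in the proof of \cref{lem:app-st-ctw}). This makes the edges of $\rootpath$ (resp. $\colorpath$) pairwise non-overlapping and likewise the edges from $\rootpath$ (resp. $\colorpath$) to their private vertices, so the insertion raises the cutwidth by only $\mathcal{O}(1)$, giving $\ctw(\ell)\le n'+\mathcal{O}(1)$. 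All steps are clearly polynomial time. The one genuinely non-template point — and the main obstacle — is the clause-gadget bookkeeping: a single clause gadget is a cycle through $\Theta(n')$ vertices spread over an entire column, so a careless placement would let $\Theta(n')$ of its edges cross one cut and destroy the bound; aligning the cycle with the linear arrangement so that its long part behaves like an in-order path is what keeps its contribution $\mathcal{O}(1)$. Everything else reduces to the edge-counting scheme already used in \cref{lem:app-st-ctw,lem:cvc-ctw,lem:fvs-ctw}.
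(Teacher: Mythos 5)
Your proposal is essentially the paper's construction (column-wise arrangement of path and decoding gadgets, guards at the extremes, $\rootpath$ and $\colorpath$ interleaved next to their private neighbours), and all the bounds you derive are correct. However, you have a factual misconception about the clause gadgets that makes your argument for them more elaborate than it needs to be. A clause gadget $Z_{r,j}$ is \emph{not} a cycle through $\Theta(n')$ vertices: the set $D = \bigcup_i D_i$ that it traverses contains only vertices $u_x(Y^i_{r,j})$ for those $i$ such that $U_i$ contains a variable appearing in $C_j$, and there are at most $d$ such $i$ (the clause has $\le d$ literals). Combined with $|\Sigma_i| \le 4$, this gives $|D| \le 4d$, a constant. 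The paper therefore simply observes that $Z_{r,j}$ has at most $4d+1$ edges in total, and since all of them live inside one column and columns form disjoint blocks on $\ell'$, clause-gadget edges contribute $\mathcal{O}(1)$ to every cut with no need to align the cycle's traversal order with $\ell'$. Your alignment trick is harmless and would still give an $\mathcal{O}(1)$ bound even if $|D|$ were large, but thinking it is "the main obstacle" and the "one genuinely non-template point" misidentifies where the construction's delicacy lies; the non-template part of this lower bound is really the color-path machinery and the packing/state analysis of the path gadgets, not the clause gadgets.
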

	\begin{proof}
		We describe a linear arrangement $\ell$ of $G$ of \ctwdth{} at most $n'+\mathcal{O}(1)$. We start with an empty arrangement $\ell$ and insert the vertices of $G$ into it from left to right as follows (i.e., a vertex is always inserted to the right end of $\ell$ unless explicitly stated otherwise). 
		We iterate over the pairs $(r,j)$ in the lexicographic order. 
		For a fixed column $Q_{r, j}$, we iterate through $i \in [n']$.
		For a fixed $i$, we first insert all vertices of $X^i_{r, j}$ (in arbitrary relative order) and then all vertices of $Y^i_{r, j}$ (in arbitrary relative order) to the very right of $\ell$.
		After processing all $i$, if the clause $C_j$ contains an even number of literals, we also add the vertex $w$ of the clause gadget in the column $Q_{r, j}$ to the end of $\ell$.
		After processing all pairs $(r, j)$, we add $g$ at the beginning of $\ell$ and $g'$ at the end of $\ell$.
		Now all vertices not belonging to $\rootpath$ and $\colorpath$ have been added to $\ell$.
		
		Recall that the order of vertices in both $\rootpath$ and $\colorpath$ are not fixed yet. We mentioned before that these orders are chosen in such a way that the cutwidth ``does not increase too much''. Now, after fixing the order of the remaining vertices in $\ell$, we can do this.
		First we order the vertices on $\rootpath$ according to the order of their
		root-connected neighbors, i.e. for two root-connected vertices $u, v$, the vertex $\rootpar{u}$ appears before $\rootpar{v}$ on $\rootpath$ if and only if $u$ appears before $v$ on $\ell$. 
		To insert $V(\rootpath)$ into $\ell$, for every root-connected vertex $u$, we add its neighbor $\rootpar{u}$ directly before $u$ on $\ell$ and we add the vertices of $T(\rootpar{u}) \setminus \{\rootpar{u}\}$ 	directly before $\rootpar{u}$. 
		Similarly, we order the vertices on $\colorpath$ in such a way that the following properties hold:
		\begin{enumerate}
			\item As mentioned before, the vertices of $\blackset$ appear at odd positions of $\colorpath$.
			\item The vertices of $\blackset$ appear in $\colorpath$ in such a way that 
			for two colored vertices $u, v$, the vertex $\colpar{u}$ appears before $\colpar{v}$ on $\colorpath$ if and only if $u$ appears before $v$ on $\ell$.
			\item As mentioned before, the rest of the vertices only separate the vertices of $\blackset$, so we add one such vertex between each two consecutive vertices of $\blackset$ on $\colorpath$.
		\end{enumerate}
		To insert $V(\colorpath)$ into $\ell$, for every colored vertex $u$, we proceed as follows. If $u$ is black, let $w$ denote the (unique) vertex adjacent to $u$ and $\blackset(u)$, then we insert the vertex $w$ directly before $u$ and $\blackset(u)$ directly before $w$ into $\ell$.
		If $u$ is white, we insert $\blackset(u)$ directly before $u$ into $\ell$. For a vertex $v \in V(\colorpath)\setminus \blackset$, let $u, u' \in \blackset$ be the vertices directly preceding and following it on $\colorpath$. We add $v$ on an arbitrary position between $u$ and $u'$ on $\ell$.
		
		Now all vertices of $G$ have been inserted into $\ell$.
		We show that the resulting linear arrangement $\ell$ has the \ctwdth{} of at most $n' + \mathcal{O}(1)$. 
		First, let us forget about the vertices of $\rootpath$ and $\colorpath$ and edges incident to them and let us denote with $\ell'$ the restriction of $\ell$ obtained this way. 
		Then, for every $(r, j)$ the vertex set $B_{r, j}$ consisting of $X_{r, j} \cup Y_{r, j}$ and (if exists) the vertex $w$ of the clause gadget in the column $Q_{r, j}$ forms a consecutive block in $\ell'$. 
		Therefore, for any $(r, j) \neq (r', j')$, no two edges induced by $E(B_{r, j})$ and $E(B_{r', j'})$, respectively overlap on $\ell'$ (and hence, on $\ell$ as well). 
		Since for any $(r, j)$, the set $B_{r, j}$ has only a constant size, all such edges totally contribute only a constant to the size of any cut in $\ell$.
		Further, the path gadgets of the same path $P_i$ appear in the same order in $P_i$ and in $\ell$. Therefore, the edges between different gadgets of the path $P_i$ do not overlap on $\ell$, i.e., two such edges can only overlap if they run along distinct paths. Since there are $n'$ such paths, all such edges totally contribute at most $n'$ to the size of any cut (we call these edges \emph{interpath} for a moment). 
		Similarly, the edges between $g$ (resp.\ $g'$) and the left-end (resp.\ right-end) vertices of the leftmost (resp.\ rightmost) gadgets of every path do not overlap with interpath edges. 
		Finally, the above-mentioned edges incident to $g$ do not overlap with the above-mentioned edges incident to $g'$.
		Therefore, in every cut of $\ell$, there are at most $n'$ (at most one per path $P_i$) edges of the last three types.
		Next, we consider a clause gadget $Z_{r,j}$ in some column $Q_{r, j}$. This gadget is an odd cycle using some vertices of some gadgets $X_{r,j}^i$ (and possibly an auxiliary vertex $w$). 
		The clause $C_j$ contains at most $d$ literals. Therefore, there are at most $d$ such $i$ where $Z_{r, j}$ contains a vertex of $X_{r, j}^i$.
		Further, recall that every path $P_i$ corresponds to two boolean variables of $I$. There are four partial assignments of these variables and hence, for every $i$, $Z_{r, j}$ contains at most $4$ vertices of $X_{r, j}^i$.
		Altogether, we obtain that $Z_{r, j}$ contains at most $4d$ vertices from path gadgets and possibly the vertex $w$. So $Z_{r, j}$ contains at most $4d+1$ edges. 
		Recall that by construction, the columns $Q_{r, j}$ form consecutive blocks in $\ell$ so the edges of distinct clause gadgets do not overlap. So the edges of clause gadgets totally contribute at most $4d+1$, i.e., a constant, to the size of every cut. 
		Altogether, the cutwidth of $\ell'$ is bounded by $n' + \mathcal{O}(1)$.
		Now we show that after adding the vertices of $\rootpath$ and $\colorpath$ to $\ell'$ (to obtain $\ell$), the cutwidth can increase by a constant only.
		
		For a root-connected vertex $v$, let $B_{\rootpath}(v)$ be the set defined as $B_{\rootpath}(v) = \{v\} \cup T(\rootpar{v})$.
		Similarly, for a black vertex $v$, let $w$ be the (unique) vertex adjacent to $v$ and $\blackset(v)$, then with $B_{\colorpath}(v)$ we denote the set $B_{\colorpath}(v) = \{v, w, \blackset(v)\}$. And for a white vertex $w$, with $B_{\colorpath}(v)$ we denote the set $B_{\colorpath}(v) = \{v, \blackset(v)\}$. 
		Let $x \in \{\rootpath, \colorpath\}$. 
		Since the vertices in $V(x)$ appear in the same order on $x$ and
		on $\ell$, the edges of $x$ do not overlap. Therefore, every cut of $\ell$ contains at most 1 edge of $x$.
		Further, for two different vertices $u, v$, where both sets $a = B_x(u), b = B_x(v)$ exist, by the construction of $\ell$, we have that the elements of $a$ and $b$ do not interleave in $\ell$, and hence a cut of $\ell$ can only contain edges induced by one of the sets $a$ and $b$. By construction, a cut of $\ell$ contains at most 2 edges induced by a set $B_x(v)$ when $x=\rootpath$ and at most one edge when $x = \colorpath$. Altogether, a cut of $\ell$ can contain at most one edge of $\rootpath$, one edge of $\colorpath$, at most two edges induced by some $B_{\rootpath}(v)$ and at most one edge induced by some $B_{\colorpath}(w)$ adding at most 5 to the total \ctwdth{} of the graph.
		Now it can be verified that we have taken into account all edges of $G$ and so every cut of $\ell$ contains at most $n' + \mathcal{O}(1)$ edges.
	\end{proof}

	\begin{remark}\label{rem:coct}
		Let $S$ be a \coct{} of $G$ of size $k$ and let $X = X_{r,j}^i$ be an arbitrary path gadget in $G$. 
		As argued before, $S$ contains exactly three clique vertices of $X$.
		Let $s_x$ be the unique clique vertex of $X$ not contained in $S$. 
		Hence, $x$ is the \emph{state} of the gadget $X$ (in $S$).
		Recall that for every $y \in \{\scrB, \scrW, \scrC, \scrD\} \setminus \{x\}$, there is a triangle $T(s_x, v_y)$ in $G$ where $v_y$ belongs to the decoding gadget $Y_{r,j}^i$. Since $S$ is a \coct{} of $G$ not containing $x$, and the third vertex of this triangle is only adjacent to $v_y$ and $s_x$, we can assume that the set $S$ contains $v_y$.
		By the choice of $k$, this implies that $S$ does not contain $u_y$. So from the right side of $Y_{r,j}$, only the vertex $u_x$ can be contained in $S$.
	\end{remark}

	\begin{lemma}
		If $I$ is satisfiable, then $G$ admits a COCT of size at most $k$.
	\end{lemma}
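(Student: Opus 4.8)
The plan is to mirror the completeness argument of the Steiner-Tree reduction (the lemma producing a small Steiner tree from a satisfiable $I$) and adapt it to the connected-odd-cycle-transversal setting. Given a satisfying assignment $\pi$ of $I$, I would build a vertex set $S$ that meets every component $C$ of the packing $\mathcal{P}(G)$ in exactly $p(C)$ vertices, so that $|S| = k$ is automatic, and then verify that $G[S]$ is connected and that $G - S$ is bipartite.

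\textbf{Construction of $S$.} For every variable group $U_i$ let $x_i = \phi(\pi_{|_{U_i}}) \in \{\scrB,\scrW,\scrC,\scrD\}$ and give the state $x_i$ to every path gadget of $P_i$. For each state $x$ I would fix once and for all, by an explicit figure analogous to \cref{fig:append-st-states}, a $7$-vertex subset $A_x(X) \subseteq V(X)$ of a path gadget $X$ consisting of the three clique vertices $K(X)\setminus\{s_x(X)\}$ together with exactly one representative of each of the four triangles $T(v,u)$, $T(w_1,w_2)$, $T(v',u')$, $T(w'_1,w'_2)$; then add $A_{x_i}(X)$ to $S$ for every path gadget $X$ of $P_i$. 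For each decoding gadget $Y^i_{r,j}$, add $v_y(Y^i_{r,j})$ for every $y \neq x_i$ and add $u_{x_i}(Y^i_{r,j})$, i.e.\ exactly one vertex of each triangle $T(u_y,v_y)$. Finally add $V(\rootpath)$, $g$, and $g'$ to $S$, and nothing else; in particular no vertex of $\colorpath$ and no vertex of any clause gadget. Counting against $\mathcal{P}(G)$ gives $|S| = k$.

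\textbf{Connectivity of $S$.} Since $\rootpath$ is a path contained in $S$, it suffices that every other vertex of $S$ reaches $\rootpath$ inside $G[S]$. The guards, all clique vertices, the vertices $w_1,w'_1$ of each path gadget, and all $u_x,v_x$ of each decoding gadget are root-connected, hence adjacent to $\rootpath\subseteq S$; and the chosen triangle-representatives inside each $A_{x_i}(X)$ are, by the choice of the sets $A_x$ (exactly as the sets $A_{\scrC},A_{\scrD},A_{\scrO}$ of the Steiner-Tree gadget are joined to the root-path), connected within $G[A_{x_i}(X)]$ to a clique vertex or to $w_1$ or $w'_1$. No cross-gadget edge is needed, so $G[S]$ is connected.

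\textbf{Bipartiteness of $G-S$.} This is the heart of the argument, handled by exhibiting an explicit proper $2$-colouring $\beta$ of $G-S$. First $2$-colour $\colorpath$ along itself; since $\blackset$ occupies the odd positions of $\colorpath$, all its vertices get one colour, say $0$, so every surviving black vertex (joined to its $\blackset$-partner by a path of length two) receives $0$ and every surviving white vertex (joined by an edge) receives $1$; in particular every surviving clique vertex $\verB$ gets $0$ and every surviving $\verW$ gets $1$. Next, for each path gadget $X$ with state $x_i$, the subgraph on $V(X)\setminus A_{x_i}(X)$ together with the subdivision vertices of the edges $\{v,\verB\},\{v,\verW\}$ (subdivided once) and $\{v',\verB\},\{v',\verW\}$ (subdivided twice) should be a forest whose only possibly pre-coloured leaves are $\verB\mapsto 0$ and $\verW\mapsto 1$; I would choose the $A_x$ precisely so that this forest extends consistently with $v$ and $v'$ getting opposite colours, and so that the induced colours of $v$ and $v'$ are the same for all four states. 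Colouring all gadgets of $P_i$ identically then makes the edges $\{v'(X),v(X')\}$ between consecutive gadgets bichromatic, and the edges to $g,g'$ are already deleted. The leftover of each decoding gadget is a union of pendant-stars at $v_{x_i}$ and at the surviving $u_y$'s, hence a forest, $2$-coloured freely; and each clause gadget $Z_{r,j}$ is a simple odd cycle, but since $\pi$ satisfies $C_j$ there is $i^*$ with $\pi_{|_{U_{i^*}}}$ satisfying $C_j$, i.e.\ $x_{i^*}\in\Sigma_{i^*}$, so $u_{x_{i^*}}(Y^{i^*}_{r,j})\in D$ lies in $S$ and $Z_{r,j}-S$ is a union of paths that can be $2$-coloured freely without any external constraint (the surviving $u$-vertices have no other constraining neighbour, as their incident root-path edges were deleted). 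Verifying that $\beta$ has no monochromatic edge then reduces to a finite check inside one path gadget and one decoding gadget, plus the remark that every edge of $G$ outside a single gadget is incident to $S$ or lies on $\colorpath$ or a clause gadget and was handled above.

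\textbf{Main obstacle.} The genuine work is the simultaneous design of the four sets $A_x(X)$ so that (i) $A_x(X)$ remains connected to a root-connected vertex, (ii) deleting $A_x(X)$ leaves a bipartite remainder with $\verB\mapsto 0$ and $\verW\mapsto 1$ forced, and (iii) the induced colours of the left-end $v$ and right-end $v'$ are state-independent, so that gluing gadgets along $P_i$ never creates a conflict. Property (iii) is exactly where the asymmetric single- versus double-subdivision of the $\verB$- and $\verW$-edges on the $v$- and $v'$-sides is used; once the four sets are pinned down by a figure, the size count, the connectivity check, and the bipartiteness check become routine finite verifications.
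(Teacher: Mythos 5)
Your construction of $S$ and the overall verification plan coincide with the paper's: the paper fixes the same $7$-vertex sets $S_\scrB,S_\scrW,S_\scrC,S_\scrD$ (its \cref{fig:append-coct-states} is exactly your $A_x$), adds the same decoding-gadget vertices, $V(\rootpath)$, $g$, $g'$, counts against the packing, checks connectivity gadget-by-gadget, and establishes bipartiteness partly by an explicit $2$-colouring of $G-S$ restricted to the colour-path plus $\scrW/\scrB$-gadgets. So the route is essentially the same.

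The one concrete flaw is your stated design goal (iii): you ask that the colours of $v$ and $v'$ be \emph{state-independent}, and you credit the asymmetric one-/two-fold subdivision with achieving this. That property is not achievable and is not what the construction delivers. Since no subdivision vertex of $\{v,\verW\},\{v,\verB\},\{v',\verW\},\{v',\verB\}$ lies in a packing component, a budget-$k$ solution removes none of them; if $\verW$ survives (state $\scrW$) it is forced to colour $\white$ via the colour-path, which forces $v$ to $\white$ (distance $2$) and $v'$ to $\black$ (distance $3$); if $\verB$ survives (state $\scrB$) it is forced to $\black$, giving $v=\black$ and $v'=\white$. So the colours of $v$ and $v'$ \emph{flip} between $\scrW$ and $\scrB$; what is state-invariant is only that $v$ and $v'$ receive \emph{opposite} colours, and gluing along $P_i$ works solely because all gadgets on a fixed $P_i$ are assigned the same state, so $\{v'(X_1),v(X_2)\}$ is always bichromatic. (In states $\scrC,\scrD$ both $v,v'$ lie in $S$ and the surviving vertices of the gadget are cut off from the colour path entirely since $\verW,\verB\in S$.) If you run your ``finite check'' you will find (iii) false and should replace it with this weaker, correct invariant; otherwise the argument as written does not go through. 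The rest of your outline (connectivity, clause gadgets hit because some $u_{x_{i^*}}$ lies in $S$, decoding-gadget remnants being forests) matches the paper.
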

	\begin{proof}
	Let $\pi$ be an assignment satisfying $I$ and for every $i$, let $\pi_{i}$ be the restriction of $\pi$ to the variable pair $U_i$.  Let $\phi(\pi_i)$ be the state of a path gadget to which the partial assignment $\pi_i$ is mapped. We define the following sets (See \cref{fig:append-coct-states}):
	\begin{itemize}
		\item $S_{\scrC} = \{\verW, \verB, \verD, v, v', w_1, w'_2\}$,
		\item $S_{\scrD} = \{\verW, \verB, \verC, v, v', w'_1, w_2\}$.
		\item $S_{\scrW} = \{\verB, \verC, \verD, u, u', w_1, w'_1\}$,
		\item $S_{\scrB} = \{\verW, \verC, \verD, u, u', w_1, w'_1\}$,
	\end{itemize}
	Now we describe a set $S$ of size $k$ and later we will show that it is a \coct{} of $G$.
	We start with an empty set and add the following vertices to $S$. First, we add all vertices from $V(\rootpath)$ and the vertices $g$ and $g'$. 
	Then, for each path gadget $X^i_{r, j}$ we add all vertices from $S_{\phi(\pi_i)}(X^i_{r, j})$. 
	We say that the path gadget $X^i_{r, j}$ is now in the state $\phi(\pi_i)$.
	Also from the decoding gadget $Y^i_{r, j}$ attached to it, we add the vertex $u_{\phi(\pi_i)}(Y^i_{r, j})$ and the vertices $v_{x'}(Y^i_{r, j})$ for all $x' \in \{\scrB, \scrW, \scrC, \scrD\} \setminus \phi(\pi_i)$. 
	This concludes the construction of the set $S$.
	Observe that for every component $C$ of the packing $\mathcal{P}(G)$, the set $S$ contains exactly $p(C)$ vertices, and it does not contain any further vertices. Therefore, we have $|S| = k$.

	\begin{figure}[t]
	\centering

	\begin{subfigure}[t]{.49\linewidth}
	\includegraphics[width=.99\linewidth]{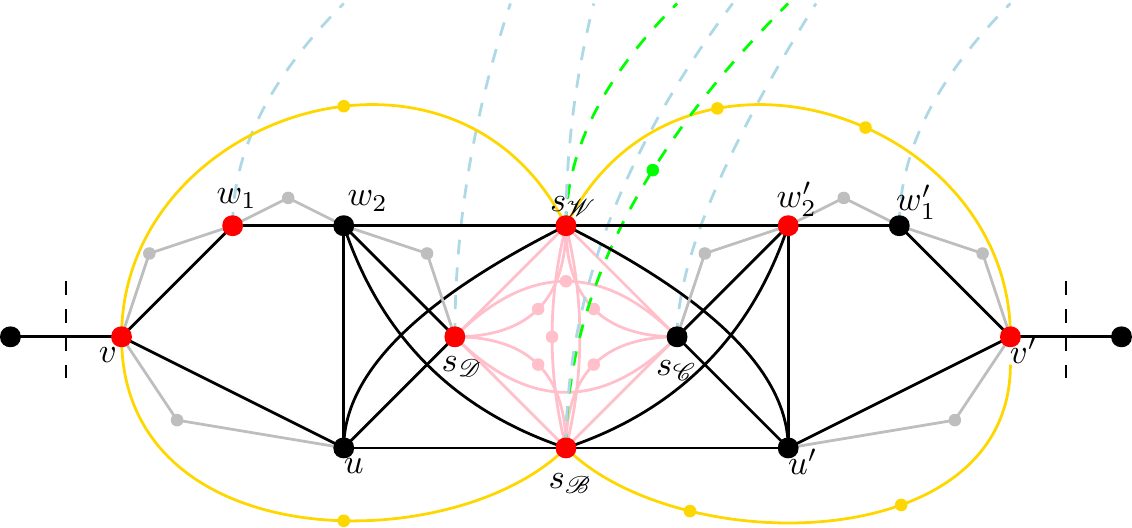}
	\caption{The set $S_{\scrC}$ corresponding to the state $\scrC$.}
	\end{subfigure}
	\hfill
	\begin{subfigure}[t]{.49\linewidth}
	\includegraphics[width=.99\linewidth]{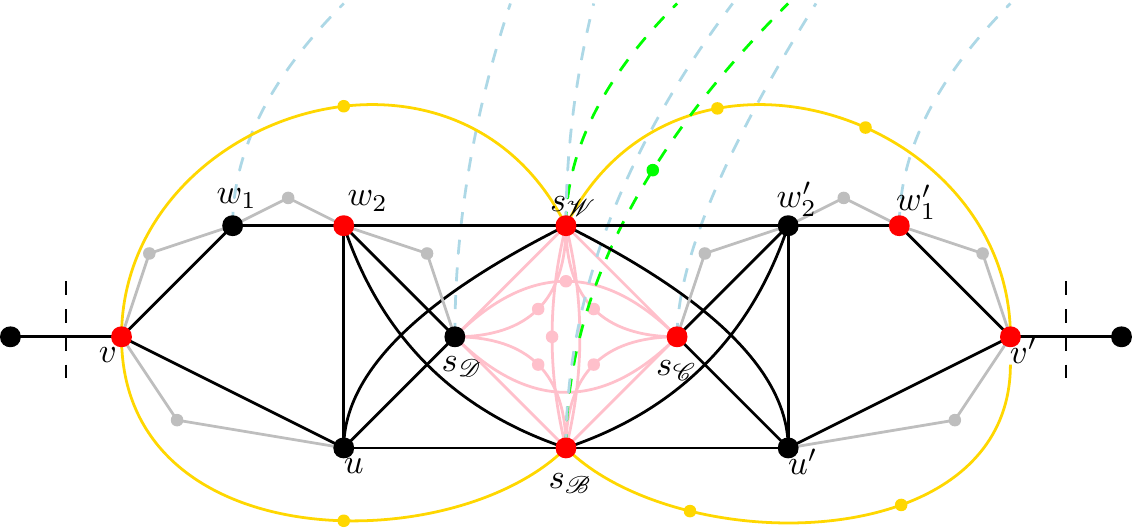}
	\caption{The set $S_{\scrD}$ corresponding to the state $\scrD$.}
	\end{subfigure}

	\begin{subfigure}[t]{.49\linewidth}
	\includegraphics[width=.99\linewidth]{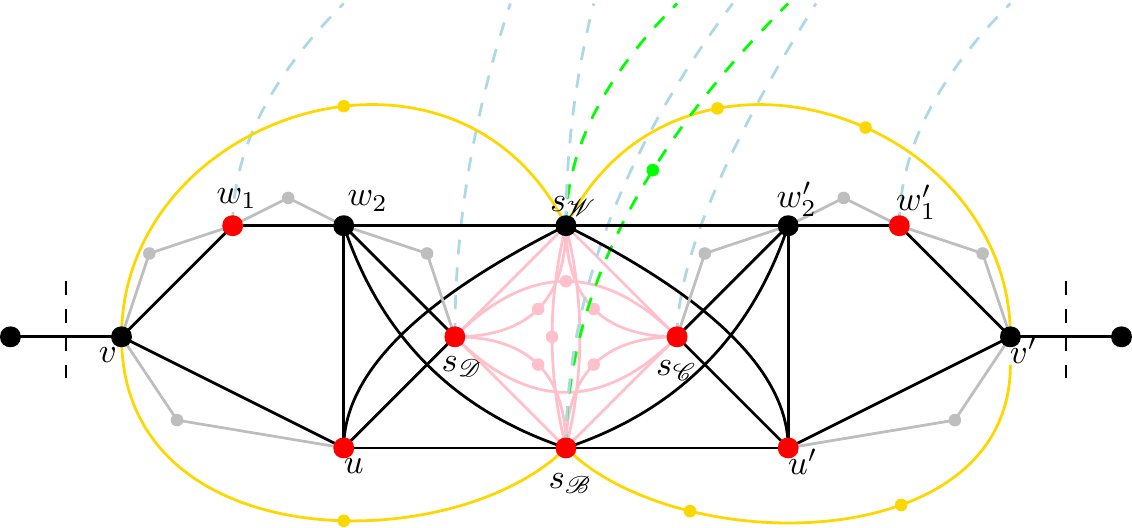}
	\caption{The set $S_{\scrW}$ corresponding to the state $\scrW$.}
	\end{subfigure}
	\hfill
	\begin{subfigure}[t]{.49\linewidth}
	\includegraphics[width=.99\linewidth]{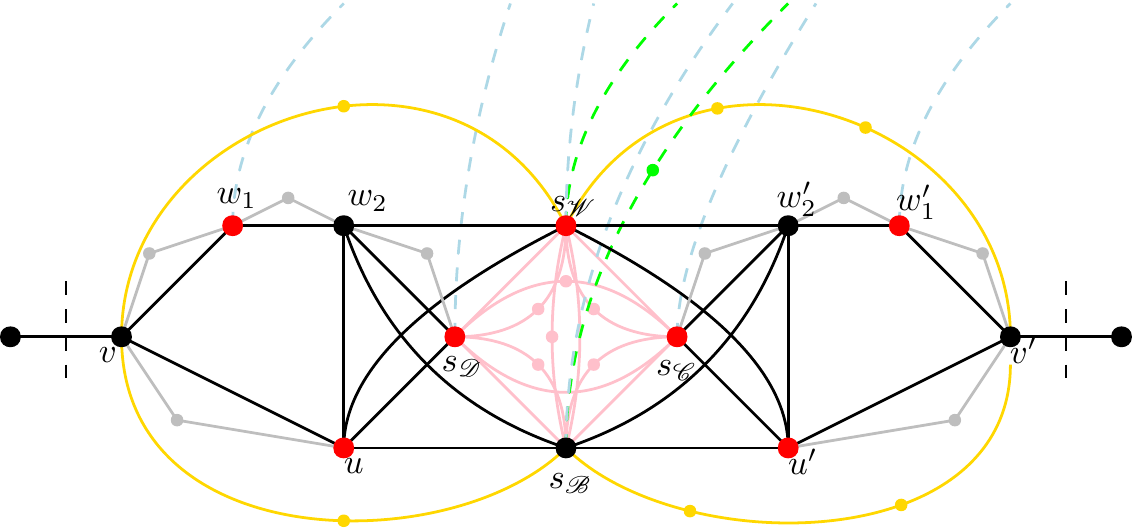}
	\caption{The set $S_{\scrB}$ corresponding to the state $\scrB$.}
	\end{subfigure}

	\caption{The sets $S_x, x\in\{\scrD, \scrC, \scrW, \scrB\}$ corresponding to different states of a path gadget. We depict the vertices of $S_x$ in red.}
	\label{fig:append-coct-states}
	\end{figure}
    
    To prove that $S$ is a \coct{} of $G$, we first show that it is an odd cycle transversal of $G$.
	First of all, observe that for every state $x \in \{\scrD, \scrC, \scrW, \scrB\}$, the set $S_x$ hits all odd cycles in any subgraph induced by a single path gadget (this can be verified in \cref{fig:append-coct-states}). Since for every path gadget, we have added a set $S_x$ (for some $x \in \{\scrD, \scrC, \scrW, \scrB\}$) of vertices of this path gadget to $S$, this set hits all odd cycles in any subgraph induced by a single path gadget.
	Next, recall that by construction, for every pair $u_x, v_x$ of every decoding gadget, exactly one of these vertices belongs to $S$. 
	Therefore, the set $S$ also hits all odd cycles in any subgraph induced by a single decoding gadget. 
	Now consider an arbitrary clause gadget $Z_{r, j} = Z$.
	Since $\pi$ is an assignment satisfying $I$, there is a pair of variables $U_{i(j)}$ (for some $i(j) \in [n'])$ such that the partial assignment $\pi_{i(j)}$ satisfies $C_j$. 
	So by the construction of the graph $G$, the vertex $u_{\phi(\pi_{i(j)})}(Y^{i(j)}_{r, j})$ belongs to $Z$. 
	Since the path gadget $X_{r, j}^{i(j)}$ is in the state $\phi(\pi_{i(j)})$, we also have $u_{\phi(\pi_{i(j)})}(Y^{i(j)}_{r, j}) \in S$ by the construction of $S$. 
	Therefore, the odd cycle $Z$ is hit by $S$.
	Altogether, we obtain that $S$ hits all odd cycles in every subgraph induced by a single gadget. Next we show that it also hits the remaining odd cycles in $G$.
	
	We observe the following. Vertices of decoding gadgets have the three following types of neighbors. 
	First, the vertices of the root-path $\rootpath$ (all belong to $S$). Second, vertices of decoding gadgets (and parity vertices of clause gadgets which only have decoding gadget as their neighbors). 
	And finally, vertices of the path gadget the corresponding decoding gadget is attached to. 
	Now suppose there is an odd cycle in $G - S$ that uses at least one vertex from a path gadget and at least one vertex from a decoding gadget.
	Then by the above observation, such a cycle needs to contain an edge from some vertex $a$ of a path gadget $X$ to some vertex $b$ of a decoding gadget $Y$.
	By construction of the graph $G$, this is only possible if $Y$ is attached to $X$, $a = s_x(X)$ (for some state $x$), and $b = v_y(Y)$ for some $y \neq x \in \{\scrB, \scrW, \scrC, \scrD\}$. But by the construction of the set $S$, exactly one of vertices $a$ and $b$ belongs to $S$ so such a cycle does not exist in $G - S$ -- a contradiction. 
	Therefore, any odd cycle in $G - S$ either does not use vertices from decoding gadgets at all or it only uses vertices of (possibly several) decoding gadgets and parity vertices. 
	
	\begin{figure}[t]
	\centering

	\begin{subfigure}[t]{.49\linewidth}
	\includegraphics[width=.99\linewidth]{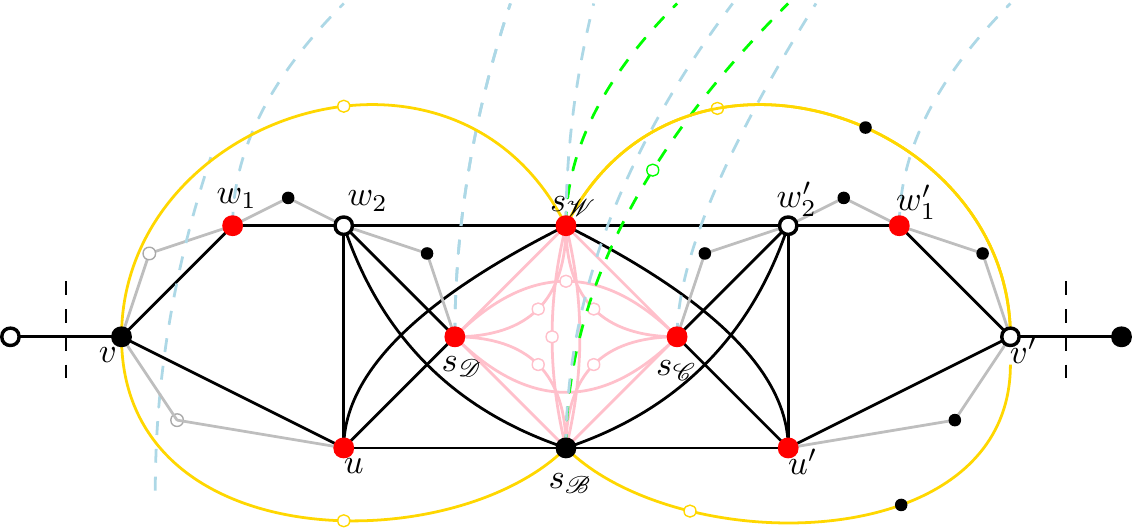}
	\end{subfigure}
	\hfill
	\begin{subfigure}[t]{.49\linewidth}
	\includegraphics[width=.99\linewidth]{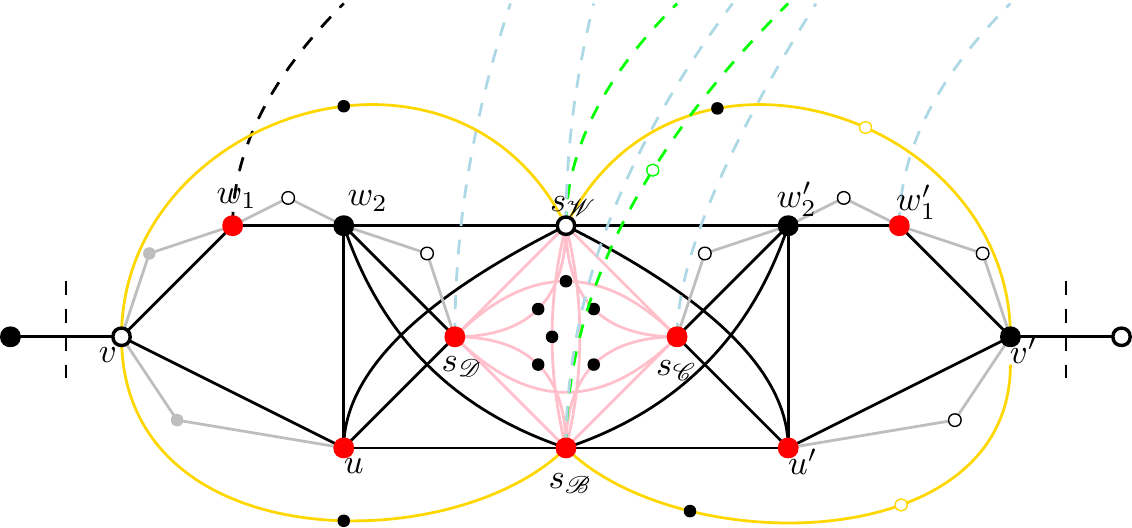}
	\end{subfigure}

	\caption{A proper 2-coloring of a path gadget in the state $\scrB$ (on the left) and $\scrW$ (on the right) after the removal of $S$. The vertices of $S$ are depicted in red and the 2-coloring of the remainder uses the colors black and white.}
	\label{fig:append-coct-states-coloring}
	\end{figure}

	First, suppose the former case occurs.
	Such a cycle can only use vertices of path gadgets, vertices from the color-path $\colorpath$, and subdivision-vertices between any black vertex $a$ and the corresponding vertex $\colpar{a}$ on the color-path, i.e. it contains no vertices from decoding gadgets and no vertex from $V(\rootpath) \cup \{g, g'\} \subseteq S$.
	Let us consider some vertex $a$ from a path gadget $X$ on this cycle. 
	Observe that $X$ can only be in the state $\scrW$ or $\scrB$ since otherwise the set of vertices of $X - S$ is separated from the remainder of $G - S$ (recall \cref{fig:append-coct-states} for illustration).
	Now we provide a proper 2-coloring with colors $\black$ and $\white$ of the subgraph of $G-S$ induced by the vertices of $\colorpath$ and vertices of all path gadgets in states $\scrW$ and $\scrB$.
	The existence of such a coloring would contradict the existence of an odd cycle in it.
	The vertices of path gadgets are colored in the natural way: If the state is $\scrW$, then we color the white vertex $\verW$ with $\white$ and extend the coloring to the remaining vertices outside $S$ to an ``almost unique'' proper 2-coloring (see \cref{fig:append-coct-states-coloring} on the right).
	And if the state is $\scrB$, then we color the black vertex $\verB$ with $\black$ and extend the coloring to the remaining vertices outside $S$ to an ``almost unique'' proper 2-coloring (see \cref{fig:append-coct-states-coloring} on the left). 
	We say ``almost unique'' since there is a set of (after the removal of $S$) isolated subdivision-vertices whose color does not matter.
	Note that black (resp.\ white) vertices are colored with $\black$ (resp.\ white) color in this construction.
	Next, for every white vertex $a$, we color the vertex $\colpar{a}$ with $\black$; and for every black vertex $a$, we color the vertex $\colpar{a}$ with $\black$ and the subdivision-vertex between $a$ and $\colpar{a}$ with $\white$.
	Note that this last operation only assigns color $\black$ to some vertices from $\blackset$. 
	Therefore, this coloring can naturally be extended to the whole color-path $\colorpath$ so that it is proper on $G[\colorpath]$: namely, we color all vertices in $\blackset$ with $\black$ and all vertices in $\colorpath \setminus \blackset$ with $\white$. 
	Now we claim that this 2-coloring is indeed proper on the above-mentioned subgraph.
	First, for edges having both end-vertices in the same path gadget, this property can be verified in \cref{fig:append-coct-states-coloring}.
	Second, for edges having both end-points in distinct path gadgets, we have that this edge connects the right-end point of some path gadget and the left-end of the next gadget on the same path.
	So these two gadgets are in the same state (by the construction of $S$) and hence, the colorings of this gadgets are equal. 
	So again in \cref{fig:append-coct-states-coloring}, it can again be verified that in both possible states, the end-vertices of this edge have different colors.
	Third, for edges such that at least one end-vertex belongs to the color-path $\colorpath$ or to a subdivision-vertex adjacent to some vertex on the color-path, the construction of the coloring directly ensures that the end-vertices have different colors.
	Finally, for the edges of $\colorpath$, this property holds by construction.
	Therefore, we obtain a proper 2-coloring of this subgraph so that no odd cycle of the former case can occur.

	Now we move to the latter case, where an odd cycles uses only vertices of decoding gadgets and parity vertices.
	Recall that for every pair $u_x, v_x$ from any decoding gadget, exactly one vertex belongs to $S$.
	Let $u_x, v_x$ be a pair in a decoding gadget $Y$, and let $w$ be the private vertex added to create the triangle $T(u_x, v_x)$. Note that $u_x$ separates $v_x$ and $w$ from all other vertices of all decoding gadgets and parity vertices in $G$. Hence, either $v_x \in S$ and $w$ has degree one in $G - S$, or $u_x \in S$, and both $v_x$ and $w$ are separated from all other vertices of path gadgets and parity vertices. In both cases, both $v_x$ and $w$ cannot be included in an odd cycle of $G - S$ over path gadgets and parity vertices only.
	Therefore, a simple odd cycle of this type in $G - S$ can only use vertices from the right side of decoding gadgets and parity vertices.
	Further, each vertex from $u_x$ on the right side of a decoding gadget $Y^i_{r,j}$ can be included in at most one clause gadget, namely $Z_{r,j}$. Hence, in $G$, $u_x$ is adjacent only to its other pair vertex $v_x$, the private neighbor $w$ added to create $T(v_x, u_x)$ and at most two other vertices on $Z_{r, j}$. Each parity vertex belongs also to a clause gadget $Z_{r, j}$ and is adjacent to exactly two vertices on this clause gadget. Since the vertices of different clause gadgets are disjoint, an odd cycle of $G - S$ over decoding gadgets and parity vertices can only be a clause gadget. But as we have already shown above, each such cycle is hit by $S$.
	Altogether, we obtain that $G - S$ contains no odd cycles, and hence, $S$ is an odd cycle transversal of $G$.

	Finally, we show that the subgraph $G[S]$ is connected. Since $V(\rootpath)
	\subseteq S$, it suffices to prove that there is a path from each
	vertex in $S$ to $\rootpath$ in $G[S]$. This is clearly the case for the root-path itself and for decoding gadgets, the guards $g$ and $g'$, and clique vertices in each
	path gadget since these vertices are root-connected. So it remains to prove that this is also true for vertices in path gadgets.
	In states $\scrW$ and $\scrB$, each vertex of a path gadget that belongs to $S$ is either root-connected itself or it has a root-connected neighbor in $S$ in the same path gadget (this can be verified in \cref{fig:append-coct-states}. 
	For states $\scrC$ (resp.\ $\scrD$), this is also true for all vertices but possibly $v'$ (resp.\ $v$). 
	However, we recall that all path gadgets on the same path are in the same state.
	So for the state $\scrC$ (resp.\ $\scrD$) and vertex $v'$ (resp.\ $v$), it holds that either the vertex $v$ (resp.\ $v'$) of the next (resp.\ previous) gadget on the same path or the guard $g'$ (resp. $g$) belongs to $S$ and for this vertex we already know that it is connected to the root-path $\rootpath$ in $G[S]$ (see \cref{fig:append-coct-states} for illustration again).
	Altogether, we obtain that $S$ is indeed a \coct{} of $G$ of size $k$.
	\end{proof}

	\begin{lemma}
		If $G$ has a \coct{} of size $k$, then $I$ is satisfiable.
	\end{lemma}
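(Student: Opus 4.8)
The plan is to mirror the backward direction of \cref{lem:st-lb-correctness-2} (the \Pst{} reduction), adapted to the four states of the path gadget. Let $S$ be a \coct{} of $G$ with $|S| = k$. Since the components of the packing $\mathcal{P}(G)$ are pairwise disjoint and $k = \sum_{C} p(C)$, the set $S$ contains \emph{exactly} $p(C)$ vertices from each component $C$ and nothing else. In particular $V(\rootpath) \subseteq S$, $g, g' \in S$, from every decoding gadget exactly one of $u_x, v_x$ lies in $S$ for each $x$, $S$ contains exactly three vertices of $K'(X)$ and exactly one vertex of each of $T(v,u), T(w_1,w_2), T(v',u'), T(w'_1,w'_2)$ per path gadget $X$, and $S$ contains no parity vertex, no subdivision vertex, and no triangle-vertex lying in a triangle between already-determined vertices. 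As noted in \cref{rem:coct}, this makes the state $\sigma(X) \in \{\scrB, \scrW, \scrC, \scrD\}$ (the unique clique vertex of $X$ outside $S$) well defined for every path gadget, and it guarantees that from the right side of the decoding gadget attached to $X$ only $u_{\sigma(X)}$ may belong to $S$.

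The core step is a consistency claim: along each path $P_i$, read from left to right, the state $\sigma$ changes at most twice. I would prove it by a local analysis of two consecutive gadgets $X, X'$ of $P_i$, going through the combinations of $(\sigma(X),\sigma(X'))$. Two ingredients do the work. First, connectivity of $G[S]$: in states $\scrC$ and $\scrD$ the left-end/right-end vertex that lies in $S$ (namely $v$ and $v'$) has, inside its own gadget, only subdivision vertices and non-chosen triangle vertices as candidate neighbours in $S$ on the "blocked" side, so it can reach $\rootpath$ only through the single edge joining it to the adjacent gadget, which forces that neighbour into $S$ as well. Second, $G - S$ must admit a proper $2$-colouring: here one uses that the edges $\{v,\verW\},\{v,\verB\}$ are subdivided once (even length) while $\{v',\verW\},\{v',\verB\}$ are subdivided twice (odd length), and that, via $\colorpath$, a black vertex outside $S$ is forced to colour $\black$ and a white vertex outside $S$ to colour $\white$. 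Combining these forbids the "wiggling" transitions (in particular $\scrW \leftrightarrow \scrB$ and any return once a gadget has entered state $\scrC$ or left state $\scrD$), so that the states along $P_i$ are monotone with respect to a partial order on $\{\scrB,\scrW,\scrC,\scrD\}$ whose longest chain has three elements, whence at most two changes per path. With $n'$ paths there are at most $2n'$ change locations, so at most $2n'$ of the $2n'+1$ segments can contain one; fix an index $r^{*}$ whose segment contains none. Then for every $i$ and all $j$, the gadgets $X^{i}_{r^{*},1}, \dots, X^{i}_{r^{*},m}$ share one state, so $\sigma(X^{i}_{r^{*},j})$ is independent of $j$; call it $x_i$.

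Given the clean segment, the rest follows exactly as in the other reductions. Define an assignment $\pi$ of $I$ by $\pi|_{U_i} = \phi^{-1}(x_i)$, which is well defined because $\phi\colon\{0,1\}^2 \to \{\scrB,\scrW,\scrC,\scrD\}$ is a bijection. Fix a clause $C_j$. By construction the clause gadget $Z_{r^{*},j}$ is an odd cycle in $G$, so $S$ contains one of its vertices; its only vertices are the parity vertex $w$ (if present) and the vertices $u_x(Y^{i}_{r^{*},j})$ with $x \in \Sigma_i$. Since $S$ contains no parity vertex, it contains some $u_x(Y^{i}_{r^{*},j})$ with $x \in \Sigma_i$, and by \cref{rem:coct} this forces $x = \sigma(X^{i}_{r^{*},j}) = x_i$. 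Hence $x_i \in \Sigma_i = \phi(\Pi_i)$, i.e.\ $\pi|_{U_i} = \phi^{-1}(x_i) \in \Pi_i$ satisfies $C_j$, so $\pi$ satisfies $C_j$. As $j$ was arbitrary, $\pi$ satisfies $I$.

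The main obstacle is the consistency claim. Compared with the three-state gadget of \cref{lem:st-lb-correctness-2}, the case analysis must simultaneously exploit the "$v$/$v'$ is isolated inside its own gadget" phenomenon for states $\scrC,\scrD$ (a connectivity argument) and the parity of the $v$–$\verW/\verB$ versus $v'$–$\verW/\verB$ subdivided paths together with the colours forced through $\colorpath$ (an \oct{} argument) in order to rule out exactly the transitions that would allow unbounded oscillation; verifying that this leaves only monotone behaviour along each path, and carrying out the bookkeeping so that a clean segment survives among the $2n'+1$ segments, is the delicate part.
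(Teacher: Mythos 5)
Your proposal takes essentially the same route as the paper: the packing-budget argument forces exactly $p(C)$ vertices per component so the state $\sigma(X)\in\{\scrB,\scrW,\scrC,\scrD\}$ is well-defined and $\rootpath,g,g'\subseteq S$; you then impose a partial order on states (the paper uses $\scrD\preccurlyeq\scrW\preccurlyeq\scrC$ and $\scrD\preccurlyeq\scrB\preccurlyeq\scrC$, with longest chain of length three, exactly as you describe) and argue monotonicity along each path via connectivity (forcing $\scrC$ to propagate right and $\scrD$ to propagate left) together with a $2$-colouring argument using the single-versus-double edge subdivisions and the colour constraints forced through $\colorpath$ (ruling out $\scrW\leftrightarrow\scrB$); this gives at most $2n'$ state changes, hence a clean segment $r^*$, and the clause-gadget odd cycles plus \cref{rem:coct} finish the argument exactly as you outline. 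You correctly flag that the consecutive-gadget case analysis is the delicate part and leave it as a sketch, whereas the paper carries it out explicitly for each of the four transitions, but there is no gap in the overall strategy.
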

	\begin{proof}
	Let $S$ be an \coct{} of $G$. 
	We show how to construct an assignment satisfying $I$ from it.
	We make some observations concerning the structure of $S$. 
	Recall that by the construction of the packing $\mathcal{P}(G)$, the value $k$ is a lower bound on the size of any \coct{} of $G$. 
	Therefore, if we remove at least one vertex from $S$, then it is not a \coct{} of $G$ anymore. We will strongly rely on this property.
	We also recall that by the choice of $k$, the set $S$ contains exactly $p(C)$ vertices from every component $C$ of the packing $\mathcal{P}(G)$ and no further vertices.
	First, we may assume that $S$ contains no vertices added to create a triangle between two vertices for the following reason.
	If a vertex $v \in S$ was added to create a triangle between some vertices $v_1$ and $v_2$ (which are the only neighbors of $v$), then due to the connectivity of $G[S]$, the set $S$ also contains the vertex $v_q$ for some $q \in [2]$. 
	But then since $v$ has the degree of 2 in $G$ and there is an edge $\{v_1,v_2\}$ in $G$, the set $S \setminus \{v\}$ is a \coct{} of $G$ of smaller size -- a contradiction.
	We also assume that the set $S$ does not contain any vertex added to create a triangle at some vertex $v$.
	If a vertex $w$, added to create a triangle at some vertex $v$, was in $S$, then since the packing $\mathcal{P}(G)$ contains a component $T(v)$ with $p\bigl(T(v)\bigr) = 1$, the set $S$ contains neither $v$ nor the third vertex of that triangle which makes $w$ isolated in $G[S]$ -- a contradiction to the connectivity of $G[S]$.
	Further, the set $S$ does not contain vertices of path gadgets that have been introduced to subdivide some edge twice and parity vertices: none of such vertices belongs to a component of $\mathcal{P}(G)$.
    Also recall that every vertex of $V(\rootpath) \cup \{g, g'\}$ has a triangle at it and this triangle is a component of $\mathcal{P}(G)$.
    So by the above observation, we have $V(\rootpath) \cup \{g, g'\} \subseteq S$. 
    Similarly, recall that for every path gadget $X$, the set $K'(X)$ of all clique vertices and subdivision-vertices between them forms a component of $\mathcal{P}(G)$ with $p\bigl(K'(X)\bigr) = 3$. By the above observation, none of the subdivision-vertices is contained in $S$. So exactly three clique-vertices of $X$ belong to $S$. For the unique clique-vertex $x \notin S$ of $X$, we say that $X$ is in the state $x$.  
    
    Now we claim that there exists an index $r^* \in [2n'+1]$ such that for 
	every $i$ and every $j_1, j_2 \in [m]$, the path gadgets $X^i_{r^*, j_1}$ and $X^i_{r^*, j_2}$ have the same state in $S$. 
	Before proving this claim, let us show how it concludes the proof.
	For every $i$, let $x_i$ denote the state of $X^i_{r^*, 1}$ in $S$.
	We define an assignment $\Pi$ in such a way that for every $i$, it assigns the the value $\phi^{-1}(x_i)$ to the variable group $U_i$. Since $U_1, \dots, U_{n'}$ is a partition of the variable set, $\Pi$ is well-defined.
	Consider an arbitrary $j$.
	Recall that the clause gadget $Z_{r^*, j}$ is an odd cycle by construction. 
    By the observation above, the set $S$ does not contain a parity vertex of $Z_{r^*,j}$ (if such a vertex exists). So the connected odd cycle transversal $S$ contains some other vertex $u^*$ from it, i.e. by the construction of $Z_{r^*, j}$, there must exist a decoding gadget $Y^{i^*}_{r^*, j}$ (for some $i^* \in [n']$) and a vertex $u_{x^*}(Y^{i^*}_{r^*, j}) = u^*$ (for some $x^* \in \{\scrB, \scrW, \scrC, \scrD\}$) in it such that $u_{x^*} \in S$ and the partial assignment $\phi^{-1}(x^*)$ of $U_{i^*}$ satisfies $C_j$. 
	By \cref{rem:coct}, the property $u_{x^*} \in S$ implies that the set $S$ defines the state $x^*$ in $X^{i^*}_{r^*,j}$. 
	Recall that $S$ defines the state $x_i$ in $X^i_{r^*, 1}$.
	By the choice of $r^*$, the states of gadgets $X^i_{r^*, 1}$ and $X^i_{r^*, j}$ are the same, i.e., $x^* = x_i$. Therefore, $\pi_i = \phi^{-1}(x_i) = \phi^{-1}(x^*)$ (and hence, also $\pi$) satisfies $C_j$. 
	Since $j$ was chosen arbitrarily, all clauses of $I$ are indeed satisfied by $\pi$ and $I$ admits a satisfying assignment.

	Now it remains to prove the above claim.
	For this, we first consider a partial ordering $\preccurlyeq$ on the set of states defined as $\scrD \preccurlyeq \scrW \preccurlyeq \scrC, \scrD \preccurlyeq \scrB \preccurlyeq \scrC$.
	Let $X_1$ and $X_2$ be two consecutive path gadgets on some path and let $x_1$ and $x_2$ be their states in $S$, respectively.
	We show that $x_1 \preccurlyeq x_2$ holds.
	
	First, suppose that $x_1 = \scrC$ holds. We then show that $x_2 = \scrC$ holds as well.
	So we have $\verC(X_1) \notin S$. 
	We recall that triangles are odd cycles and hence, the set $S$ hits every triangle.
	Due to the triangle between $\verC$ and $w_2'$, it then holds that $w_2'(X_1) \in S$. By the construction of the packing $\mathcal{P}(G)$, we have $w_1'(X_1) \notin S$. Due to the triangle between $w_1'$ and $v'$, we have $v'(X_1) \in S$. Again, by the construction of $\mathcal{P}(G)$, we have $u'(X_1) \notin S$. Since $G[S]$ is connected and $v'(X_1) \in S$, and $w_1'(X_1), u'(X_1) \notin S$, the vertex $v'(X_1)$ must be connected to the remainder of the graph $G[S]$ via $v(X_2)$, i.e., $v(X_2) \in S$.
	Again, by the construction of $\mathcal{P}(G)$, we have $u(X_2) \notin S$. Since $v'(X_1)$ and $v(X_2)$ still need to be connected to the remainder of $G[S]$, it then holds that $w_1(X_2) \in S$ and (by $\mathcal{P}(G)$) also $w_2(X_2) \notin S$. Then due to the triangle between $w_2$ and $\verD$, we have $\verD(X_2) \in S$. 
	There are also triangles $u, w_2, \verW$ and $u, w_2, \verB$. Together with $u(X_2), w_2(X_2) \notin S$, we obtain $\verW(X_2), \verB(X_2) \in S$. Finally, $\verD(X_2), \verW(X_2), \verB(X_2) \in S$ and the construction of $\mathcal{P}(G)$ imply that $\verC(X_2) \notin S$, i.e., the state of $X_2$ is $x_2 = \scrC$ as claimed. 
	
	Very similarly, we show that if $x_2 = \scrD$ holds, then we also have $x_1 = \scrD$. 
	So let $\verD(X_2) \notin S$.
	Due to the triangle between $w_2$ and $\verD$, we have $w_2(X_2) \in S$. So due to packing $\mathcal{P}(G)$, we also have $w_1(X_2) \notin S$. Triangle between $w_1$ and $v$ implies $v(X_2) \in S$ and $\mathcal{P}(G)$ also implies $u(X_2) \notin S$. Since $G[S]$ is connected, the vertex $v(X_2)$ is connected to the remainder of this subgraph. Recall that $w_1(X_2), u(X_2) \notin S$. So $v(X_2)$ must be connected through $v'(X_1)$ to the remainder of the graph, i.e., we have $v'(X_1) \in S$. By $\mathcal{P}(G)$, we have $u'(X_1) \notin S$ and due to the connectivity of $v'(X_1)$ with the remainder of $G[S]$, we then need to have $w_1'(X_1) \in S$. The packing $\mathcal{P}(G)$ then implies $w_2'(X_1) \notin S$. Due to the triangle between $w_2'$ and $\verC$, we get $\verC(X_1) \in S$. The triangles $w_2', u', \verW$ and $w_2', u', \verB$ imply $\verB(X_1), \verW(X_1) \in S$. Finally, $\verC(X_1), \verB(X_1), \verW(X_1) \in S$ and the packing $\mathcal{P}(G)$ imply $\verD(X_1) \notin S$. So the state of $X_1$ is $x_1 = \scrD$ as claimed.
	
	Now we show that no transition from $x_1 = \scrB$ to $x_2 = \scrW$ occurs.
	So let $x_1 = \scrB$ and $x_2 = \scrW$. We then have $\verB(X_1), \verW(X_2) \notin S$.
	Due to the triangle $\verB, w'_2, u'$, one of the vertices $u'(X_1)$ and $w'_2(X_1)$ must be in $S$.
	First, suppose that $u'(X_1) \notin S$ holds. Then we have $w'_2(X_1) \in S$. 
	Due to $\mathcal{P}(G)$, we have $w'_1(X_1) \notin S$. Due to the triangle between $w_1'$ and $v'$, we have $v'(X_1) \in S$ and by $\mathcal{P}(G)$, we also have $u'(X_1) \notin S$.
	Since $G[S]$ is connected and $v'(X_1) \in S$, $w_1'(X_1), u'(X_1) \notin S$, the vertex $v'(X_1)$ is connected to the remainder of the graph $G[S]$ via $v(X_2)$, i.e., $v(X_2) \in S$.
	Again, by the construction of $\mathcal{P}(G)$, we have $u(X_2) \notin S$. Since $v'(X_1)$ and $v(X_2)$ still need to be connected to the remainder of $G[S]$, it then holds that $w_1(X_2) \in S$ and (by $\mathcal{P}(G)$) also $w_2(X_2) \notin S$. Then due to the triangle between $w_2$ and $\verD$, we have $\verD(X_2) \in S$. 
	There is also a triangle $u, w_2, \verW$. Together with $u(X_2), w_2(X_2) \notin S$, we obtain $\verW(X_2) \in S$ -- a contradiction to $x_2 = \scrW$.

	So we may assume that $u'(X_1)$ belongs to $S$. Then due to $\pack$, the vertex $v'(X_1)$ does not belong to $S$. 
	First, suppose that $v(X_2) \in S$ holds. By $\pack$, we then have $u(X_2) \notin S$. Together with $v'(X_1) \notin S$, the connectivity of $G[S]$ now implies that $w_1(X_2) \in S$ and due to $\pack$, also $w_2(X_2) \notin S$.
	Now recall that $x_2 = \scrW$, i.e., $\verW(X_2) \notin S$. Then the triangle $w_2, u, \verW$ in $X_2$ is not hit by $S$ -- a contradiction. 
    So it holds that $v(X_2) \notin S$.
    Due to triangles between $v$ and $w_1$ and between $v$ and $u$, we have $w_1(X_2), u(X_2) \in S$. Due to $\pack$, we obtain $w_2(X_2) \notin S$.
	Since $S$ is an \oct{} of $G$, the graph $G - S$ admits a proper 2-coloring. 
	Consider such a fixed coloring with colors $\black$ and $\white$. 
	We may assume that $\verB(X_1)$ is colored $\black$. 
	Recall that by the construction of $\mathcal{P}(G)$, no vertex of the path $\colorpath$ belongs to $S$.
	Also recall that $\verB$ is black so the vertex $\colpar{\verB(X_1)} \in \blackset$ must have the color $\black$ as well. 
	Since $\colorpath$ is a path, and the vertices of $\blackset$ are exactly the vertices of $\colorpath$ on odd positions, all vertices of $\blackset$ are colored with $\black$, and all vertices of $V(\colorpath) \setminus \blackset$ are colored $\white$.
	Therefore, the white vertex $\verW(X_2)$ having a neighbor $\colpar{\verW(X_2)} \in \blackset$ is colored $\white$. 
	Since the edge $\{v',\verB\}$ is subdivided twice and $\verB(X_1)$ is colored $\black$, the vertex $v'(X_1)$ must be colored $\white$.
	On the other hand, since the edge $\{\verW, v\}$ is subdivided once and $\verW(X_2)$ is colored $\white$, the vertex $v(X_2)$ must also be colored $\white$. So the edge $\{v'(X_1), v(X_2)\}$ has two end-vertices both colored $\white$ -- a contradiction to a proper 2-coloring of $G - S$. Hence, no transition from $x_1 = \scrB$ to $x_2 = \scrW$ occurs.
	
	The transition from $x_1 = \scrW$ to $x_2 = \scrB$ also does not occur by symmetry.
	However, for the sake of completeness, we provide the complete proof of this statement.
	Let $x_1 = \scrW$ and $x_2 = \scrB$. We then have $\verW(X_1), \verB(X_2) \notin S$.
	Due to the triangle $\verB, w_2, u$, one of the vertices $u(X_2)$ and $w_2(X_2)$ must be in $S$.
	First, suppose that $u(X_2) \notin S$ holds. Then we have $w_2(X_2) \in S$. 
	Due to $\mathcal{P}(G)$, we have $w_1(X_2) \notin S$. 
	Due to the triangle between $w_1$ and $v$, we have $v(X_2) \in S$ and by $\mathcal{P}(G)$, we also have $u(X_2) \notin S$.
	Since $G[S]$ is connected and $v(X_2) \in S$, $w_1(X_2), u(X_2) \notin S$, the vertex $v(X_2)$ is connected to the remainder of the graph $G[S]$ via $v'(X_1)$, i.e., $v(X_1) \in S$.
	Again, by the construction of $\mathcal{P}(G)$, we have $u'(X_1) \notin S$. Since $v(X_2)$ and $v'(X_1)$ still need to be connected to the remainder of $G[S]$, it then holds that $w_1'(X_1) \in S$ and (by $\mathcal{P}(G)$) also $w'_2(X_1) \notin S$. Then due to the triangle between $w'_2$ and $\verC$, we have $\verC(X_1) \in S$. 
	There is also a triangle $u', w'_2, \verW$. Together with $u'(X_1), w'_2(X_1) \notin S$, we obtain $\verW(X_2) \in S$ -- a contraction to $x_1 = \scrW$.
	
	So we may assume that $u(X_2)$ belongs to $S$. Then due to $\pack$, the vertex $v(X_2)$ does not belong to $S$. 
	First, suppose that $v'(X_1) \in S$ holds. By $\pack$, we then have $u'(X_1) \notin S$. Together with $v(X_2) \notin S$, the connectivity of $G[S]$ now implies that $w'_1(X_1) \in S$ and due to $\pack$, also $w'_2(X_1) \notin S$.
	Now recall that $x_1 = \scrB$, i.e., $\verB(X_1) \notin S$. Then the triangle $w'_2, u', \verB$ in $X_2$ is not hit by $S$ -- a contradiction. 
    So it holds that $v'(X_1) \notin S$.
    Due to triangles between $v'$ and $w'_1$ and between $v'$ and $u'$, we have $w'_1(X_1), u'(X_1) \in S$. And due to $\pack$, we obtain $w'_2(X_1) \notin S$.
	Since $S$ is an \oct{} of $G$, the graph $G - S$ admits a proper 2-coloring. 
	Consider such a fixed coloring with colors $\black$ and $\white$. 
	We may assume that $\verW(X_2)$ is colored $\white$. 
	Recall that by the construction of $\mathcal{P}(G)$, no vertex of the path $\colorpath$ belongs to $S$.
	Also recall that $\verW$ is white so the vertex $\colpar{\verW(X_2)} \in \blackset$ must have the color $\black$. 
	Since $\colorpath$ is a path, and the vertices of $\blackset$ are exactly the vertices of $\colorpath$ on odd positions, all vertices of $\blackset$ are colored with $\black$, and all vertices of $V(\colorpath) \setminus \blackset$ are colored $\white$.
	Therefore, the black vertex $\verB(X_1)$ having a neighbor $\colpar{\verB(X_1)} \in \blackset$ is colored with $\black$. 
	Due to the edge $\{v,\verB\}$ subdivided once and $\verB(X_2)$ colored with $\black$, the vertex $v(X_2)$ is also colored with $\black$.
	On the other hand, due to the edge $\{\verW, v'\}$ subdivided twice and $\verW(X_1)$ colored with $\white$, the vertex $v'(X_1)$ must be colored with $\black$. So the edge $v'(X_1) v(X_2)$ has two end-vertices colored with $\black$ -- a contradiction to a proper 2-coloring of $G - S$. Hence, no transition from $x_1 = \scrW$ to $x_2 = \scrB$ occurs.
	
    Altogether, we have shown that for the states $x_1$ and $x_2$ of two consecutive path gadgets on the same path, we have $x_1 \preccurlyeq x_2$. Hence, along any fixed path, the state can change at most twice. There are $n'$ paths so totally, at most $2n'$ state changes occur. Therefore, there exists an index $r* \in [2n'+1]$ such that ``no state change occurs on the $r^*$th segment'', or formally: for every $i$ and every $j_1, j_2 \in [m]$, it holds that the states of $X^i_{r^*, j_1}$ and $X^i_{r^*, j_2}$ are the same in $S$. As explained before, this concludes the proof.
\end{proof}

\begin{proof}[Proof. (\cref{theo:coct})]
	Suppose there exists an algorithm $\mathcal A$ that given a linear arrangement of the input graph of cutwidth $\ctw$ solves \Pcoct{} in time $\ostar((4-\varepsilon)^{\ctw})$ for some positive real $\varepsilon$. 
	Given an instance $I$ of the $d$-SAT problem, we construct an equivalent instance $(G, k)$ of \Pcoct{} together with a linear arrangement $\ell$ of $G$ of cutwidth at most $n' + \mathcal{O}(1)$ in polynomial time.
	Then we run $\mathcal A$ on $G$ and output its answer. 
	The equivalence of instances implies the correctness of this algorithm solving $d$-SAT.
	The running time of the algorithm is bounded by $\ostar((4-\varepsilon)^{\ctw(G)})$. It holds that 
	\begin{align*}
		(4-\varepsilon)^{\ctw} &= (4-\varepsilon)^{n'+\mathcal{O}(1)} = (4-\varepsilon)^{\frac{n}{2} + \mathcal{O}(1)}\\
		&\leq c \sqrt{4-\varepsilon}^n < (2-\delta)^n
	\end{align*}
	for some constant $c$ and a positive value $\delta$ contradicting SETH.
\end{proof}

\section{Conclusion}
\label{sec:conclusion}
We have initiated the study of the exact complexity of hard connectivity
problems parameterized by cutwidth (under SETH) and we provided tight bounds for
six problems, namely \Pcvc, \Pcds, \Poct, \Pfvs, \Pst, and \Pcoct. 

One specific question that remains open is the exact complexity of \Phc\
parameterized by cutwidth (also open for treewidth). For pathwidth, it is known that $(2 +
\sqrt{2})$ is the optimal base assuming SETH \cite{CyganKN18}.
For more general questions, recall that cutwidth is an edge-separator analogue of pathwidth.
It would be interesting to study tight bounds for connectivity problems when parameterized by edge-separator analogues of treewidth such as tree-cut
width \cite{Wollan15}, edge-treewidth \cite{abs-2112-07524}, and edge-cut width
\cite{BrandCGHK22}.

\bibliography{lipics-paper.bib}

\begin{thebibliography}{10}

\bibitem{BjorklundHKK07}
Andreas Bj{\"{o}}rklund, Thore Husfeldt, Petteri Kaski, and Mikko Koivisto.
\newblock Fourier meets {M}{\"{o}}bius: fast subset convolution.
\newblock In David~S. Johnson and Uriel Feige, editors, {\em Proceedings of the
  39th Annual {ACM} Symposium on Theory of Computing, San Diego, California,
  USA, June 11-13, 2007}, pages 67--74. {ACM}, 2007.
\newblock \href {https://doi.org/10.1145/1250790.1250801}
  {\path{doi:10.1145/1250790.1250801}}.

\bibitem{BodlaenderCKN15}
Hans~L. Bodlaender, Marek Cygan, Stefan Kratsch, and Jesper Nederlof.
\newblock Deterministic single exponential time algorithms for connectivity
  problems parameterized by treewidth.
\newblock {\em Inf. Comput.}, 243:86--111, 2015.
\newblock \href {https://doi.org/10.1016/j.ic.2014.12.008}
  {\path{doi:10.1016/j.ic.2014.12.008}}.

\bibitem{BrandCGHK22}
Cornelius Brand, Esra Ceylan, Robert Ganian, Christian Hatschka, and Viktoriia
  Korchemna.
\newblock Edge-cut width: An algorithmically driven analogue of treewidth based
  on edge cuts.
\newblock In Michael~A. Bekos and Michael Kaufmann, editors, {\em
  Graph-Theoretic Concepts in Computer Science - 48th International Workshop,
  {WG} 2022, T{\"{u}}bingen, Germany, June 22-24, 2022, Revised Selected
  Papers}, volume 13453 of {\em Lecture Notes in Computer Science}, pages
  98--113. Springer, 2022.
\newblock \href {https://doi.org/10.1007/978-3-031-15914-5\_8}
  {\path{doi:10.1007/978-3-031-15914-5\_8}}.

\bibitem{CurticapeanLN18}
Radu Curticapean, Nathan Lindzey, and Jesper Nederlof.
\newblock A tight lower bound for counting {H}amiltonian cycles via matrix
  rank.
\newblock In Artur Czumaj, editor, {\em Proceedings of the Twenty-Ninth Annual
  {ACM-SIAM} Symposium on Discrete Algorithms, {SODA} 2018, New Orleans, LA,
  USA, January 7-10, 2018}, pages 1080--1099. {SIAM}, 2018.
\newblock \href {https://doi.org/10.1137/1.9781611975031.70}
  {\path{doi:10.1137/1.9781611975031.70}}.

\bibitem{CyganDLMNOPSW16}
Marek Cygan, Holger Dell, Daniel Lokshtanov, D{\'{a}}niel Marx, Jesper
  Nederlof, Yoshio Okamoto, Ramamohan Paturi, Saket Saurabh, and Magnus
  Wahlstr{\"{o}}m.
\newblock On problems as hard as {CNF-SAT}.
\newblock {\em {ACM} Trans. Algorithms}, 12(3):41:1--41:24, 2016.
\newblock \href {https://doi.org/10.1145/2925416} {\path{doi:10.1145/2925416}}.

\bibitem{CyganFKLMPPS15}
Marek Cygan, Fedor~V. Fomin, Lukasz Kowalik, Daniel Lokshtanov, D{\'{a}}niel
  Marx, Marcin Pilipczuk, Michal Pilipczuk, and Saket Saurabh.
\newblock {\em Parameterized Algorithms}.
\newblock Springer, 2015.
\newblock \href {https://doi.org/10.1007/978-3-319-21275-3}
  {\path{doi:10.1007/978-3-319-21275-3}}.

\bibitem{CyganKN18}
Marek Cygan, Stefan Kratsch, and Jesper Nederlof.
\newblock Fast {H}amiltonicity checking via bases of perfect matchings.
\newblock {\em J. {ACM}}, 65(3):12:1--12:46, 2018.
\newblock \href {https://doi.org/10.1145/3148227} {\path{doi:10.1145/3148227}}.

\bibitem{CyganNPPRW11}
Marek Cygan, Jesper Nederlof, Marcin Pilipczuk, Michal Pilipczuk, Johan M.~M.
  van Rooij, and Jakub~Onufry Wojtaszczyk.
\newblock Solving connectivity problems parameterized by treewidth in single
  exponential time.
\newblock In Rafail Ostrovsky, editor, {\em {IEEE} 52nd Annual Symposium on
  Foundations of Computer Science, {FOCS} 2011, Palm Springs, CA, USA, October
  22-25, 2011}, pages 150--159. {IEEE} Computer Society, 2011.
\newblock \href {https://doi.org/10.1109/FOCS.2011.23}
  {\path{doi:10.1109/FOCS.2011.23}}.

\bibitem{GroenlandMNS22}
Carla Groenland, Isja Mannens, Jesper Nederlof, and Krisztina Szil{\'{a}}gyi.
\newblock Tight bounds for counting colorings and connected edge sets
  parameterized by cutwidth.
\newblock In Petra Berenbrink and Benjamin Monmege, editors, {\em 39th
  International Symposium on Theoretical Aspects of Computer Science, {STACS}
  2022, March 15-18, 2022, Marseille, France (Virtual Conference)}, volume 219
  of {\em LIPIcs}, pages 36:1--36:20. Schloss Dagstuhl - Leibniz-Zentrum
  f{\"{u}}r Informatik, 2022.
\newblock \href {https://doi.org/10.4230/LIPIcs.STACS.2022.36}
  {\path{doi:10.4230/LIPIcs.STACS.2022.36}}.

\bibitem{HegerfeldK22}
Falko Hegerfeld and Stefan Kratsch.
\newblock Towards exact structural thresholds for parameterized complexity.
\newblock In Holger Dell and Jesper Nederlof, editors, {\em 17th International
  Symposium on Parameterized and Exact Computation, {IPEC} 2022, September 7-9,
  2022, Potsdam, Germany}, volume 249 of {\em LIPIcs}, pages 17:1--17:20.
  Schloss Dagstuhl - Leibniz-Zentrum f{\"{u}}r Informatik, 2022.
\newblock \href {https://doi.org/10.4230/LIPIcs.IPEC.2022.17}
  {\path{doi:10.4230/LIPIcs.IPEC.2022.17}}.

\bibitem{ImpagliazzoP01}
Russell Impagliazzo and Ramamohan Paturi.
\newblock On the complexity of k-{SAT}.
\newblock {\em J. Comput. Syst. Sci.}, 62(2):367--375, 2001.
\newblock \href {https://doi.org/10.1006/jcss.2000.1727}
  {\path{doi:10.1006/jcss.2000.1727}}.

\bibitem{ImpagliazzoPZ01}
Russell Impagliazzo, Ramamohan Paturi, and Francis Zane.
\newblock Which problems have strongly exponential complexity?
\newblock {\em J. Comput. Syst. Sci.}, 63(4):512--530, 2001.
\newblock \href {https://doi.org/10.1006/jcss.2001.1774}
  {\path{doi:10.1006/jcss.2001.1774}}.

\bibitem{JansenN18}
Bart M.~P. Jansen and Jesper Nederlof.
\newblock Computing the chromatic number using graph decompositions via matrix
  rank.
\newblock In Yossi Azar, Hannah Bast, and Grzegorz Herman, editors, {\em 26th
  Annual European Symposium on Algorithms, {ESA} 2018, August 20-22, 2018,
  Helsinki, Finland}, volume 112 of {\em LIPIcs}, pages 47:1--47:15. Schloss
  Dagstuhl - Leibniz-Zentrum f{\"{u}}r Informatik, 2018.
\newblock \href {https://doi.org/10.4230/LIPIcs.ESA.2018.47}
  {\path{doi:10.4230/LIPIcs.ESA.2018.47}}.

\bibitem{Kinnersley92}
Nancy~G. Kinnersley.
\newblock The vertex separation number of a graph equals its path-width.
\newblock {\em Inf. Process. Lett.}, 42(6):345--350, 1992.
\newblock \href {https://doi.org/10.1016/0020-0190(92)90234-M}
  {\path{doi:10.1016/0020-0190(92)90234-M}}.

\bibitem{Lampis20}
Michael Lampis.
\newblock Finer tight bounds for coloring on clique-width.
\newblock {\em {SIAM} J. Discret. Math.}, 34(3):1538--1558, 2020.
\newblock \href {https://doi.org/10.1137/19M1280326}
  {\path{doi:10.1137/19M1280326}}.

\bibitem{LokshtanovMS18}
Daniel Lokshtanov, D{\'{a}}niel Marx, and Saket Saurabh.
\newblock Known algorithms on graphs of bounded treewidth are probably optimal.
\newblock {\em {ACM} Trans. Algorithms}, 14(2):13:1--13:30, 2018.
\newblock \href {https://doi.org/10.1145/3170442} {\path{doi:10.1145/3170442}}.

\bibitem{LokshtanovMS182}
Daniel Lokshtanov, D{\'{a}}niel Marx, and Saket Saurabh.
\newblock Slightly superexponential parameterized problems.
\newblock {\em {SIAM} J. Comput.}, 47(3):675--702, 2018.
\newblock \href {https://doi.org/10.1137/16M1104834}
  {\path{doi:10.1137/16M1104834}}.

\bibitem{abs-2112-07524}
Lo{\"{\i}}c Magne, Christophe Paul, Abhijat Sharma, and Dimitrios~M. Thilikos.
\newblock Edge-treewidth: Algorithmic and combinatorial properties.
\newblock {\em CoRR}, abs/2112.07524, 2021.
\newblock URL: \url{https://arxiv.org/abs/2112.07524}, \href
  {http://arxiv.org/abs/2112.07524} {\path{arXiv:2112.07524}}.

\bibitem{MarxSS21}
D{\'{a}}niel Marx, Govind~S. Sankar, and Philipp Schepper.
\newblock Degrees and gaps: Tight complexity results of general factor problems
  parameterized by treewidth and cutwidth.
\newblock In Nikhil Bansal, Emanuela Merelli, and James Worrell, editors, {\em
  48th International Colloquium on Automata, Languages, and Programming,
  {ICALP} 2021, July 12-16, 2021, Glasgow, Scotland (Virtual Conference)},
  volume 198 of {\em LIPIcs}, pages 95:1--95:20. Schloss Dagstuhl -
  Leibniz-Zentrum f{\"{u}}r Informatik, 2021.
\newblock \href {https://doi.org/10.4230/LIPIcs.ICALP.2021.95}
  {\path{doi:10.4230/LIPIcs.ICALP.2021.95}}.

\bibitem{MulmuleyVV87}
Ketan Mulmuley, Umesh~V. Vazirani, and Vijay~V. Vazirani.
\newblock Matching is as easy as matrix inversion.
\newblock {\em Comb.}, 7(1):105--113, 1987.
\newblock \href {https://doi.org/10.1007/BF02579206}
  {\path{doi:10.1007/BF02579206}}.

\bibitem{PiecykR21}
Marta Piecyk and Pawel Rzazewski.
\newblock Fine-grained complexity of the list homomorphism problem: Feedback
  vertex set and cutwidth.
\newblock In Markus Bl{\"{a}}ser and Benjamin Monmege, editors, {\em 38th
  International Symposium on Theoretical Aspects of Computer Science, {STACS}
  2021, March 16-19, 2021, Saarbr{\"{u}}cken, Germany (Virtual Conference)},
  volume 187 of {\em LIPIcs}, pages 56:1--56:17. Schloss Dagstuhl -
  Leibniz-Zentrum f{\"{u}}r Informatik, 2021.
\newblock \href {https://doi.org/10.4230/LIPIcs.STACS.2021.56}
  {\path{doi:10.4230/LIPIcs.STACS.2021.56}}.

\bibitem{PilipczukW18}
Michal Pilipczuk and Marcin Wrochna.
\newblock On space efficiency of algorithms working on structural
  decompositions of graphs.
\newblock {\em {ACM} Trans. Comput. Theory}, 9(4):18:1--18:36, 2018.
\newblock \href {https://doi.org/10.1145/3154856} {\path{doi:10.1145/3154856}}.

\bibitem{GeffenJKM20}
Bas A.~M. van Geffen, Bart M.~P. Jansen, Arnoud A. W.~M. de~Kroon, and Rolf
  Morel.
\newblock Lower bounds for dynamic programming on planar graphs of bounded
  cutwidth.
\newblock {\em J. Graph Algorithms Appl.}, 24(3):461--482, 2020.
\newblock \href {https://doi.org/10.7155/jgaa.00542}
  {\path{doi:10.7155/jgaa.00542}}.

\bibitem{Wollan15}
Paul Wollan.
\newblock The structure of graphs not admitting a fixed immersion.
\newblock {\em J. Comb. Theory, Ser. {B}}, 110:47--66, 2015.
\newblock \href {https://doi.org/10.1016/j.jctb.2014.07.003}
  {\path{doi:10.1016/j.jctb.2014.07.003}}.

\end{thebibliography}
\end{document}